\pdfoutput=1

\documentclass{ut-thesis}



\usepackage[usenames]{color}
\usepackage{epsfig}
\usepackage{amssymb}
\usepackage{amsmath}
\usepackage{amsfonts}

\usepackage[noend]{algorithmic}




\def\qed{$\Box$}

\newcommand {\dx}{DescribeX}
\newcommand {\xp}{XPlainer}
\newcommand {\dxp}{DescribeX-Eclipse}
\newcommand {\xpe}{XPlainer-Eclipse}

\newcommand {\out}[1]{}

\newtheorem{algorithm}{Algorithm}[chapter]

\newtheorem{proposition}{Proposition}[chapter]
\newtheorem{lemma}{Lemma}[chapter]
\newtheorem{corollary}{Corollary}[chapter]
\newtheorem{definition}{Definition}[chapter]
\newtheorem{example}{Example}[chapter]

\newtheorem{proof}{Proof}[chapter]

\degree{Doctor of Philosophy} \department{Computer Science}
\gradyear{2008}
\author{Flavio Rizzolo}
\title{\dx: A Framework for Exploring and Querying XML Web Collections}


%
\setcounter{tocdepth}{2}


\begin{document}


\begin{preliminary}

\maketitle


\begin{abstract}

The nature of semistructured data in web collections is evolving.
Even when XML web documents are valid with regard to a schema, the
actual structure of such documents exhibits significant variations
across collections for several reasons: an XML schema may be very
lax (e.g., to accommodate the flexibility needed to represent
collections of documents in RSS\footnote{\small \tt{
http://www.rss-specifications.com/}} feeds), a schema may be large
and different subsets used for different documents (e.g., this is
common in industry standards like UBL\footnote{\small \tt{
http://oasis-open.org/committees/ubl/}}), or open content models
may allow arbitrary schemas to be mixed (e.g., RSS extensions like
those used for podcasting). A schema alone may not provide
sufficient information for many data management tasks that require
knowledge of the actual structure of the collection.

Web applications (such as processing RSS feeds or web service
messages) rely on XPath-based data manipulation tools. Web
developers need to use XPath queries effectively on increasingly
larger web collections containing hundreds of thousands of XML
documents. Even when tasks only need to deal with a single
document at a time, developers benefit from understanding the
behaviour of XPath expressions across multiple documents (e.g.,
what will a query return when run over the thousands of hourly
feeds collected during the last few months?). Dealing with the
(highly variable) structure of such web collections poses
additional challenges.

This thesis introduces \dx, a powerful framework that is capable
of {\em describing} arbitrarily complex XML summaries of web
collections, providing support for more efficient evaluation of
XPath workloads. \dx\ permits the declarative description of
document structure using all axes and language constructs in
XPath, and generalizes many of the XML indexing and summarization
approaches in the literature. \dx\ supports the construction of
heterogenous summaries where different document elements sharing a
common structure can be declaratively defined and refined by means
of path regular expressions on axes, or \emph{axis path regular
expression} (AxPREs). \dx\ can significantly help in the
understanding of both the structure of complex, heterogeneous XML
collections and the behaviour of XPath queries evaluated on them.

Experimental results demonstrate the scalability of \dx\ summary
\emph{refinements} and \emph{stabilizations} (the key enablers for
tailoring summaries) with multi-gigabyte web collections. A
comparative study suggests that using a \dx\ summary created from
a given workload can produce query evaluation times orders of
magnitude better than using existing summaries. \dx's light-weight
approach of combining summaries with a file-at-a-time XPath
processor can be a very competitive alternative, in terms of
performance, to conventional fully-fledged XML query engines that
provide DB-like functionality such as security, transaction
processing, and native storage.

\end{abstract}


\begin{dedication}

\begin{flushright}
\emph{To my parents,} \\ \emph{Ofelia and Juan Carlos}
\end{flushright}

\end{dedication}

\newpage


\begin{acknowledgements}

This thesis is the culmination of my graduate studies at the
University of Toronto. It goes without saying that no scientific
work can be carried out in isolation, and this one is no
exception.  I would like to thank here all the people who helped
me along the way.

First and foremost, I wish to express my utmost gratitude to
Alberto Mendelzon and Ren\'ee Miller, both of whom have been a
source of inspiration for many years. They were also the Alpha and
the Omega of my graduate studies: Alberto was my advisor during my
master's and the early years of my PhD, and also the first one to
suggest the idea of a framework upon which this thesis is based;
Renee supervised the final stages of the work and made sure that
all the pieces fitted together. Both provided me with support and
guidance well beyond the call of duty. They really made this work
possible.

Special thanks go to Jos\'e Mar\'ia Turull Torres and Alejandro
Vaisman, great friends and mentors. Jos\'e Mar\'ia introduced me
to the fascinating world of scientific research and encouraged me
to pursue graduate studies. I could never thank him enough for his
guidance in the first steps of my research career. Alejandro's
encouragement and insight were always invaluable, especially
during the most difficult times of my PhD (he was my thesis
advisor in disguise for many years). I consider them my academic
role models for their integrity and professionalism.

I would like to thank the members of my PhD committee, Kelly
Lyons, Thodoros Topaloglou, and John Mylopoulos, for their
insightful comments, and Frank Tompa for his thorough external
appraisal. I also wish to acknowledge the contribution of Mariano
Consens, who helped in the development of some core ideas of this
thesis.

I am deeply indebted to the administrative staff of the Department
of Computer Science for assisting me in so many different ways.
Joan Allen and Linda Chow deserve a special mention.

I am also grateful to the Department of Computer Science, the
Natural Sciences and Engineering Research Council of Canada, and
the IBM Center for Advanced Studies for their generous financial
support at different times over my years of graduate studies.

I want to express my heartfelt appreciation to my many friends in
Toronto and abroad, especially to my best officemate, Attila, and
to my musical buddies, Diego, Danny, Fabricio, Gustavo and Santi;
and in general to all those that had to put up with me for years:
Adriana, Adrian, Alberto, Alejandra, Christian, Ceci, Carlos,
Clau, Fernanda, Fernando, Frank, Jime, Lily, Lore, Mara, Pachi,
Patricia, Rosana, Sebastian and Vivi. Each in their one way made
my life more meaningful and enjoyable.

I am also in debt with my parents, Ofelia and Juan Carlos, for
helping me to become who I am. I owe them much, and regret that I
missed my best opportunities to repay. I dedicate this thesis to
them.

Above all, I have to thank a thousand times to my wife, Mariana,
the person without whom this thesis could have never been written.
Her support and unconditional love are beyond words. It is hard to
know who I would be without her; I hope never to have the occasion
to find out.

\end{acknowledgements}

\tableofcontents

\listoftables

\listoffigures

\end{preliminary}



\chapter{Introduction} \label{section:Intro}

XML is widely used as a common format for web accessible data
(e.g., hypertext collections like Wikipedia) as well as for data
exchanged among web applications (e.g., blogs, news feeds,
podcasts, web services messaging). This data is often referred to
as \emph{semistructured} for the lack of a clear separation
between data and metadata it represents: tags (metadata) and
content (data) are mixed together in the same XML file.

The vast majority of software tools used for managing XML rely on
XPath~\cite{W3C:XPath/02} as the core dialect for XML querying.
Hence, web developers use XPath queries for many of the tasks
involved in the processing of XML collections. Such collections
are normally handled one document at a time, whether the document
is an individual RSS\footnote{\small \tt{
http://www.rss-specifications.com/}} file (used by content
distributors to deliver to subscribers frequently updated content
over the Web), a single SOAP\footnote{\small \tt{
http://www.w3.org/TR/soap/}} message, or a Wikipedia article in
XHTML.

Even when XML collections have a schema (which can be either a
DTD\footnote{\small \tt{ http://www.w3.org/TR/REC-xml}} or an XML
Schema\footnote{\small \tt{ http://www.w3.org/TR/xmlschema-1/}}),
the actual structure present in each document may exhibit
significant variations for several reasons. First, schemas can be
very lax. One reason for this is the extensive use of the
\verb|<xsd:choice>| construct in XML schemas, which allows
optional elements to occur any number of times, including zero.
Such a construct is very common in RSS for instance. Second, a
schema can be very large and only subsets are actually used in a
given instance. This is the situation with several industry
specific standards that contain hundreds of elements, such as
UBL\footnote{\small \tt{ http://oasis-open.org/committees/ubl/ } }
or HR-XML\footnote{\small \tt{ http://hr-xml.org }}. UBL and
HR-XML are standard libraries of XML schemas that support a
variety of business processes. UBL is designed to handle supply
chain transactions such as purchase orders, shipping notices, and
invoices, whereas HR-XML contains schemas for human resource
management such as resumes, payroll information, and benefits
enrollment. Finally, a schema can be extended by using the
\verb|<xsd:any>| XML Schema construct, which allows arbitrary
content from other schemas to appear under a given element. Such a
construct enables different user communities to pick and choose
how to combine schemas. Consequently, it provides great
flexibility, but makes it harder to determine the structure of the
documents that actually appear in a given collection. Examples of
the \verb|<xsd:any>| extensions can be found in a wide variety of
industry standards, including RSS, UBL and HR-XML. For instance,
the UBL standard permits a contractor to represent invoice
documents that include HR-XML TimeCard elements for the contractor
employee's time and expenses. The actual structure of invoice
collections will vary significantly across contractors and
customers. If an enclosing messaging schema is used, even the UBL
and HR-XML fragments in the document can be replaced by other
invoicing and time billing schemas. In these scenarios, schemas
alone are insufficient for understanding the structure (metadata)
of the documents in the collection for either writing or
optimizing XPath evaluation.

A developer working with this type of collection faces several
challenges. She must learn enough about the structure present in
the XML collection to be able to write meaningful XPath queries.
She must also develop an understanding of how the XPath
expressions behave across different documents in the collection.
Even when a task deals with a single document at a time, the
developer needs to extrapolate the behaviour of queries over a
single document across the entire collection over which the task
may be repeatedly applied. In this context, understanding the
actual metadata of a web collection can be a significant barrier,
even for collections validated against a schema.

\begin{figure*}
    \centering
        \includegraphics{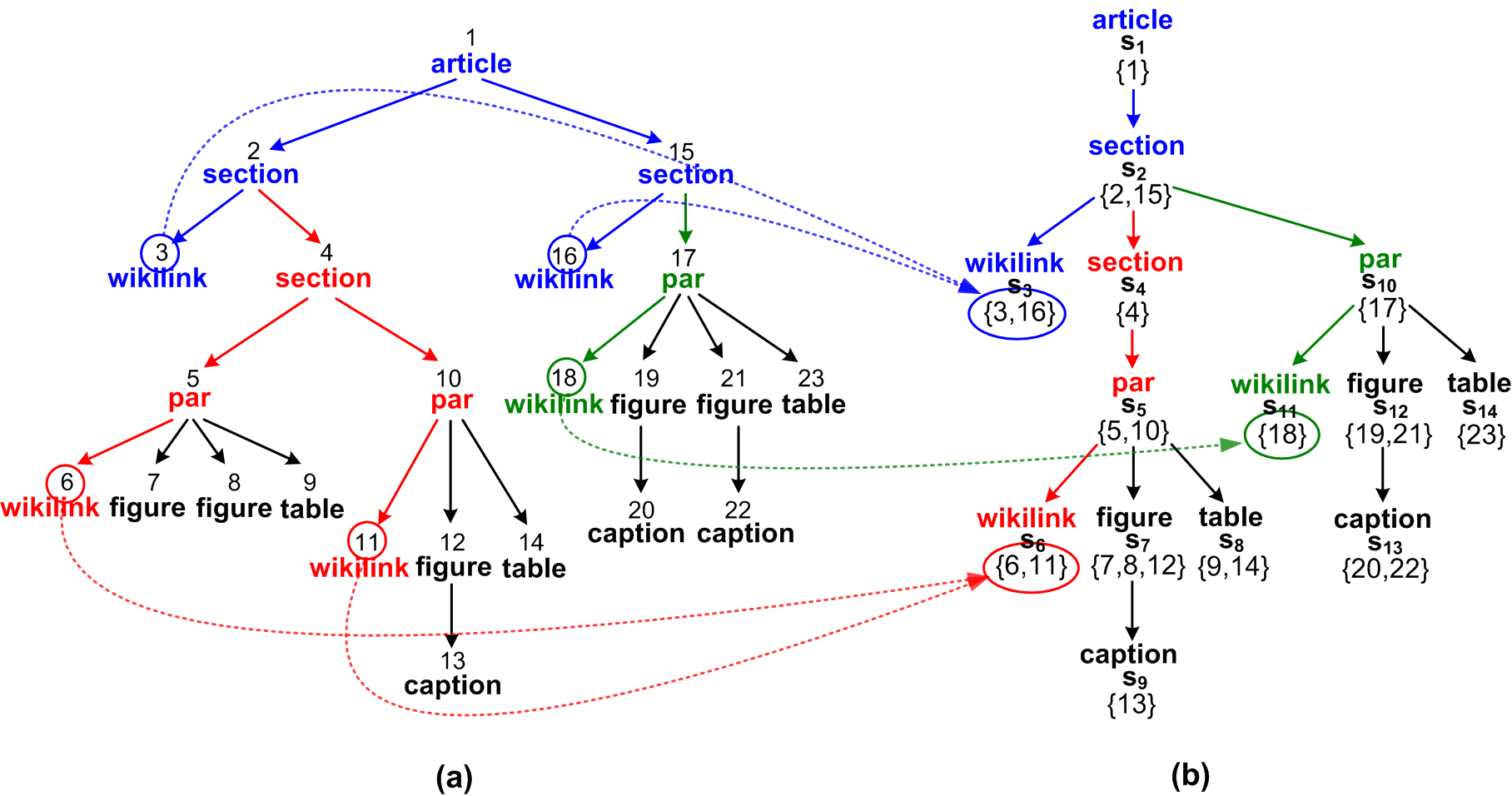}
    \caption{Wikipedia document graph (a) and its incoming path summary (b) }
    \label{fig:smallSummary}
\end{figure*}

XML structural summaries are graphs representing relationships
between sets of XML elements (i.e., extents). Unlike schemas,
which prescribe what may and may not occur in an instance,
summaries provide a description of the metadata that is actually
present in a given collection. Figure \ref{fig:smallSummary} (a)
shows the instance graph of a Wikipedia sample document in which
nodes correspond to XML elements in the document. Nodes have an id
and are labeled with their element names. The structure in Figure
\ref{fig:smallSummary} (b) is a typical summary that groups
together instance nodes with the same incoming label paths. In
such a summary, two nodes that have the same incoming label path
from the root belong to the same \emph{extent} (sets located below
each summary node in the figure). For instance, \emph{wikilink}
elements appear at the end of three different label paths --
\emph{article.section.wikilink} (in blue),
\emph{article.section.section.wikilink} (in red), and
\emph{article.par.wikilink} (in green). Consequently, there are
three different \emph{wikilink} nodes in the summary, one with
extent $\{3, 16\}$, another with extent $\{6,11\}$, and a last one
with extent $\{18\}$. Extents can also be viewed as
\emph{mappings} between instance (document) nodes and summary
nodes -- represented in the figure by dashed arrows linking
\emph{wikilink} nodes in the document graph (left) and
\emph{wikilink} extents in the incoming path summary (right). An
edge $(s_i, s_j)$ in the summary means that at least one node in
the extent of $s_i$ is the parent of at least one node in the
extent of $s_j$. For instance, an edge from node $s_7$ to $s_9$
means that some \emph{figure} elements within \emph{par} have
\emph{caption} elements, but not necessarily all of them have (for
this document only node 12 has a \emph{caption} element).

Describing metadata in semistructured collections was a major
motivation in one of the earliest summary proposals in the
literature \cite{NUWC97,GW97}. Since then, research on summaries
has focused on query processing, making summaries one of the most
studied techniques for query evaluation and indexing in XML (and
other semistructured) data models
\cite{MS99,KBNK02,KSBG02,QLO03,BCF+05}, as well as for providing
statistics useful in XML query estimation and optimization
\cite{PG06b}.

Most of the existing summary proposals define all extents using
the same criteria, hence creating {\em homogeneous} summaries.
These summaries are based on common element paths (in some cases
limited to length $k$), including incoming paths (e.g.,
representative objects \cite{NUWC97}, dataguides \cite{GW97},
1-index \cite{MS99}, ToXin \cite{RM01}, A(k)-index \cite{KSBG02}),
both incoming and outgoing paths (e.g., F\&B-Index \cite{KBNK02}),
or sequences of outgoing paths (e.g., Skeleton \cite{BCF+05}). The
few examples of {\em heterogeneous} summaries that can
adapt/change their structure based on a dynamic query workload
(e.g., APEX \cite{CMS02}, D(k)-index \cite{QLO03}, XSKETCH
\cite{PG06b}) compute the extents from statistics and workload
information.

However, none of these proposals can help us to find elements
based on order and cardinality criteria. Consider again the
instance in Figure \ref{fig:smallSummary}. What are the \emph{par}
elements that contain two \emph{figures}? How many \emph{section}
elements contain a \emph{figure} with \emph{caption} next to a
\emph{table}? How many of those contain more than one
\emph{figure}? These are questions that cannot be answered with
any of the summaries mentioned above.

Moreover, since these proposals are algorithmically defined, it is
hard to determine how they can be used together for processing
today's increasingly heterogeneous and large web collections
effectively. Specifically, the summary information is not defined
declaratively, limiting the ease with which these summaries can be
used within standard data management tasks.

In this thesis, we propose a novel approach for flexibly
summarizing the structure of metadata actually present in an XML
collection. We introduce \dx, a framework that supports
constructing \emph{heterogenous} summaries, where each set in the
partition can be defined by means of path regular expressions on
axes, or \emph{axis path regular expression} (\emph{AxPRE}, for
short). AxPREs provide the flexibility necessary for
\emph{declaratively} defining complex mappings between instance
nodes and summary nodes capable of expressing order and
cardinality, among other properties. Each AxPRE can be specified
by the user or obtained from any expression in the complete XPath
language (all the axes, document order, use of parenthesis, etc.).
Given an arbitrary XPath expression posed by the user, \dx\ can
create a partition defined by an AxPRE that captures exactly the
structural commonality expressed by a query. AxPRE summaries have
a unique capability that makes them suitable for describing the
structure of XML collections: they are the first summaries capable
of declaratively defining and refining the summary extents using a
powerful language. In addition, \dx\ summaries express
relationships between instance nodes that go beyond the
traditional parent-child (e.g., next sibling, following,
preceding, etc.). Last but not least, \dx\ captures most summary
proposals in the literature by providing a declarative definition
for them for the first time.

This thesis argues that \dx\ can significantly help not only in
the understanding of the structure of large collections of XML
documents, but also in the evaluation of XPath queries posed on
them. In fact, \dx\ summaries can also be used to significantly
speed up (and scale up) XPath evaluation over existing
file-at-a-time tools, enabling fast exploration of the results of
XPath workloads on large collections. The experimental results
demonstrate that using a summary created from a given workload can
produce query evaluation times that are two orders of magnitude
better than using existing summaries (in particular, summaries on
incoming paths like 1-index \cite{MS99}, APEX \cite{CMS02},
A(k)-index \cite{KSBG02}, and D(k)-index \cite{QLO03}). The
experiments also validate that \dx\ summaries allow file-at-a-time
XPath processors to be a competitive light-weight alternative (in
terms of performance) to conventional DB-like XML query engines
supporting additional functionality such as security, transaction
processing, and native storage.

\dx\ also has applications to helping a user write and understand
XPath queries on large XML collections. Several software tools
have been developed to help XPath users debug query expressions
(e.g., Oxygen XML Editor\footnote{\small \tt{
http://www.oxygenxml.com/}}, Altova XMLSpy\footnote{\small \tt{
http://www.altova.com/}}, etc.) A recent research project includes
a tool, \xp -Eclipse \cite{CLR07}, that provides visual
explanations of XPath expressions. An explanation returns
precisely the nodes in a document that contribute to the answer, a
useful debugging technique. However, the main limitation of
traditional XPath debugging tools in the context of large XML
collections is that they provide debugging mechanisms only for a
\emph{single} document. Understanding queries over collections
containing thousands of documents (or even 650,000 documents, like
in the Wikipedia XML Corpus \cite{wikipediaxml:2006}) using these
tools can be an impractical and very time-consuming task. \dx\
provides an important foundation on which such a large-scale XML
collection understanding tool could be built, as evidenced by the
\dxp\ tool presented in Appendix \ref{sec:debugExample}.

\section{Major contributions}

This thesis identifies the growing need for describing the
structure of web collections (encoded in XML) using mechanisms
that go beyond providing one or more schemas. We propose the use
of highly customizable summaries that represent the actual
structure of metadata labels as used in a given collection. The
following are the major contributions of this thesis.

\subsection{AxPRE summaries}

AxPRE summaries rely on the novel concept of a \emph{summary
descriptor} (\textbf{SD}). Traditional summaries consist of a
labeled graph that describes the label paths in the instance
(which we call an SD graph) together with an \emph{extent}
relation between summary nodes and sets of instance nodes. An SD
incorporates three key original features:

\paragraph{A description of the \emph{neighborhood} of a node expressed
by path regular expressions on axes (i.e., binary relations
between nodes), \emph{AxPREs} for short (Chapter
\ref{sec:axpres}).} AxPREs are evaluated on an \emph{axis graph},
which is an abstract representation of the XPath data model
\cite{W3C:XPath/02} extended with edges that represent XPath axis
binary relations. Edges are labeled by axis names and nodes are
labeled by element or attribute names (including namespaces), or
by new labels defined using XPath.

Given an axis graph $\mathcal{A}$, an AxPRE $\alpha$ applied to a
node $v$ in $\mathcal{A}$ returns an \emph{AxPRE neighbourhood} of
$v$ which provides a description of the subgraph local to $v$ that
satisfies $\alpha$. The AxPRE neighbourhood of $v$ by $\alpha$ is
computed by intersecting the automaton constructed from the axis
graph and the automaton accepting the language generated by the
AxPRE and all its prefixes.

The AxPRE neighbourhood of a node $v$ is used to determine to
which equivalence class $v$ belongs. That is, if two nodes in
$\mathcal{A}$ have \emph{similar} AxPRE neighbourhoods (i.e. they
cannot be distinguished by $\alpha$), they belong to the same
equivalence class. This way, an AxPRE can be used to define a
\emph{partition} of nodes in $\mathcal{A}$ in which each set is
the \emph{extent} of a node $s$ in the SD. The notion of
similarity we use is the familiar notion of
bisimulation~\cite{PT87}.

The use of AxPREs neighbourhoods supports the definition of
summaries that go beyond the traditional parent and child
hierarchical relationships covered by the abundant literature on
summaries \cite{GW97,MS99,KBNK02,KSBG02,PG02a,QLO03,BCF+05,PG06b}.
In particular, AxPREs can describe \emph{heterogeneous} SDs, i.e.,
SDs described by multiple AxPREs.

\paragraph{An \emph{extent expression} (\textbf{EE}) capable of computing
precisely the set of elements in the extent of a given SD node
(Chapter \ref{sec:acyclic}).} Since an AxPRE $\alpha$ is used to
compute by bisimulation an entire partition, we can say that all
sets in the partition share the same AxPRE $\alpha$. Thus, AxPREs
cannot be used to uniquely identify each equivalence class
(extent) in such partition (unless the partition contains only one
set).

For a large class of neighbourhoods, it is possible to precisely
characterize the extent of an SD node $s$ with a new type of
expression we call \emph{extent expression} (EE, for short). The
EE $e_s$ of $s$ with AxPRE $\alpha$ is generated from the
\emph{bisimilarity contraction} of the $\alpha$ neighbourhoods of
the elements in the extent of $s$. (Recall that all nodes in the
extent of $s$ have bisimilar AxPRE neighborhoods.) Thus, we pick
any element in the extent of $s$, compute its $\alpha$
neighbourhood, and then compute its bisimilarity contraction. The
\emph{representative neighbourhood} thus obtained is guaranteed to
be bisimilar to all neigbourhoods in the extent of $s$. A
representative neighbourhood provides the sequence of axis
compositions and labels that will appear in the EE that computes
the extent of $s$. EEs can be expressed in XPath and function like
virtual views (see Chapter \ref{sec:changingSDswithxpath}).

\paragraph{The notion of \emph{AxPRE refinements} of SD nodes
(Chapter \ref{sec:NeighbourhoodStabilization}).} Exploring
collections of XML documents typically requires knowledge of the
metadata present in the collection. SDs provide a descriptive tool
for representing metadata as SD graphs. The description provided
by a node in the SD can be changed by an operation that modifies
its AxPRE and thus its AxPRE neighbourhood. This operation is
called an \emph{AxPRE refinement} of an SD node. Refinement refers
to applying summarization to selectively produce more or less
detailed SDs.

The notion of refinement is well-known in the XML literature
\cite{PT87}. Intuitively, two nodes in the same equivalence class
may be refined into different classes, and two nodes from
different classes will always be refined into separate classes. An
SD node can be refined by changing its AxPRE definition. This
produces SDs that are tailored to the exploration needs of the
user. Using successive node refinements, SD nodes can be refined
to produce SDs that provide a more detailed description of the
data.

Previous proposals perform global refinements on the entire SD
graph \cite{KBNK02,KSBG02} or local refinements based on
statistics or workload \cite{QLO03,HY04,PG06b}, without the
ability to define the refinement declaratively. In contrast, we
can precisely characterize the neighbourhood considered for the
refinement with an AxPRE \cite{CRV08}.

The notion of refinement is tightly related to that of
\emph{stabilization}. An edge stabilization determines the
partition of an extent into two sets based on the participation of
the extent nodes in the axis relation the edge represents.

\subsection{Refinement lattice} We show the existence of a
hierarchical relationship between summaries and provide a concise
description of the hierarchy within the \dx\ framework based on a
\emph{refinement lattice}. A refinement lattice describes a
refinement relationship between entire summaries.

The \dx\ refinement lattice provides a mechanism for capturing
earlier summary proposals, and understanding how those proposals
relate to each other and to richer SDs that were never previously
considered in the literature (see Chapter \ref{sec:lattice}). Each
node in the lattice corresponds to a homogeneous SD defined by an
AxPRE. The top (coarsest) summary of the lattice corresponds to
the label SD where each node is partitioned by label, and the
bottom (finest) summaries of the lattice each corresponds to a
distinct combination of axes.

\subsection{System implementation}

In Chapter~\ref{Section:Implementation}, we present the
implementation of the \dx\ summarization engine for interactively
creating and refining AxPRE summaries given large collections of
XML documents. Chapter~\ref{Section:Experiments} provides
experimental results that validate the performance of the
techniques employed by \dx.

The engine uses Berkeley DB Java Edition\footnote{{\small
\tt{http://www.oracle.com/technology/products/berkeley-db/je/index.html}}}
to store and manage indexed collections, and supports an arbitrary
XPath processor for the evaluation of XPath expressions. A visual
interactive tool based on the \dx\ framework, \dxp\ (see Appendix
\ref{sec:debugExample}), was developed as an
Eclipse\footnote{{\small \tt{http://www.eclipse.org/}}} plug-in.
In addition to the \dx\ summarization engine presented in this
thesis, \dxp\ provides retrieval and visualization tools
implemented by other colleagues \cite{ACSR08}.

Our experiments (employing gigabyte XML collections) provide
strong evidence of the advantages of using \dx\ to build and
exploit summaries for exploration and XPath query evaluation.
These results demonstrate that the simple mechanism of accessing a
summary extent employed by the \dx\ implementation yields speedup
factors of over two orders of magnitude over commercial and open
source implementations.

\subsection{Answering queries using extents} For evaluating
a query using an SD, we need to find the SD nodes that participate
in the answer. Since our framework relies on XPath EEs for
defining the extents, the problem of answering queries using
extents is related to that of XPath containment
\cite{XPathContainment:2004}.

\dx\ can derive AxPREs from queries and use them to change the
description provided by the SD. Since AxPREs describe only
structural constraints and XPath queries may contain predicates on
values, extents resulting from AxPRE manipulation rarely provide
the exact answer without further filtering. The main reason for
this is that the addition of an XPath value predicate either
reduces the size of the answer or leaves the answer unchanged.
Thus, \dx\ finds first the SD nodes that participate in the answer
(i.e., those whose extents contain at least part of the answer),
then evaluate the entire expression on them and take the union of
the results to get the \emph{exact} answer (see Chapter
\ref{sec:candidates}). The experimental results provided in
Chapter \ref{sec:queryevalwithSDs} considerably expand the
preliminary results presented in \cite{CR07}.

\section{Motivating example: exploring RSS feeds with summaries and XPath queries} \label{sec:workloadExample}

This section walks through a concrete example to illustrate how
\dx\ summaries can help developers perform collection-wide
exploration and XPath query evaluation.

Consider a developer, Sue, who has to implement a web application
that retrieves RSS feeds from several content providers to produce
an aggregated meta-feed. The feed may span several days or weeks,
and there might be more than one item in the feed per day.
Figure~\ref{fig:RSS} shows the instances of two sample RSS feeds
represented as \emph{axis graphs}.

\begin{figure}
    \centering
        \includegraphics{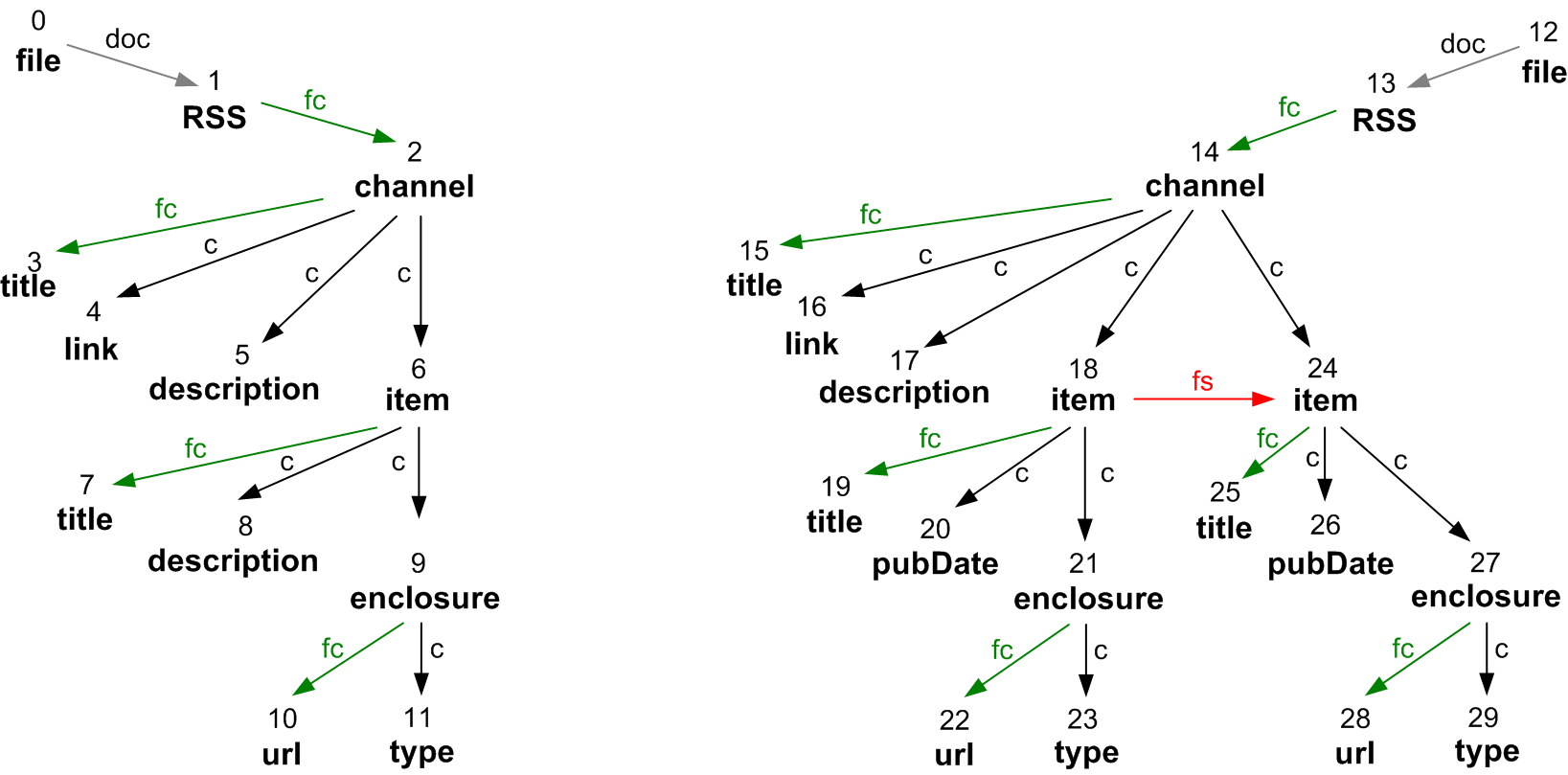}
    \caption{Axis graphs of RSS feed samples}
    \label{fig:RSS}
\end{figure}

An axis graph can display selected binary relations between
elements in an XML document tree, like $doc,$ $c,$ $\mathit{fs},$
and $\mathit{fc}$ shown in the figure (shorthands for XPath axes
\emph{document}, \emph{child}, \emph{following-sibling}, and for
the derived axis \emph{firstchild}, respectively). The semantics
of these axes is straightforward: the edge from element $6$ to $7$
labeled $\mathit{fc}$ means that $7$ is the first child of $6$ in
document order, and the edge from element $18$ to $24$ labeled
$\mathit{fs}$ means that $24$ is a following sibling of $18$ in
document order. For simplicity, even though every first child is
also a child, we do not draw the $c$ edge between two nodes when
an $\mathit{fc}$ edge exits between them. Being binary relations,
axes have inverses, e.g., the inverse of $c$ is $p$ (shorthand for
\emph{parent}) and the inverse of $\mathit{fs}$ is $ps$ (shorthand
for \emph{preceding-sibling}). These inverses are not shown in the
figure.

Using \dx, Sue can create a \emph{summary descriptor} (SD for
short) like the one shown on Figure \ref{fig:smallRSS-combined}
(a). This \emph{label SD}, created from the two feeds in Figure
\ref{fig:RSS}, partitions the elements in the feeds by element
name. For example, SD node $s_6$ represents all the $item$
elements in the two documents, $\{6,18, 24\}$ (this set is called
the \emph{extent} of $s_6$).

\begin{figure}
    \centering
        \includegraphics{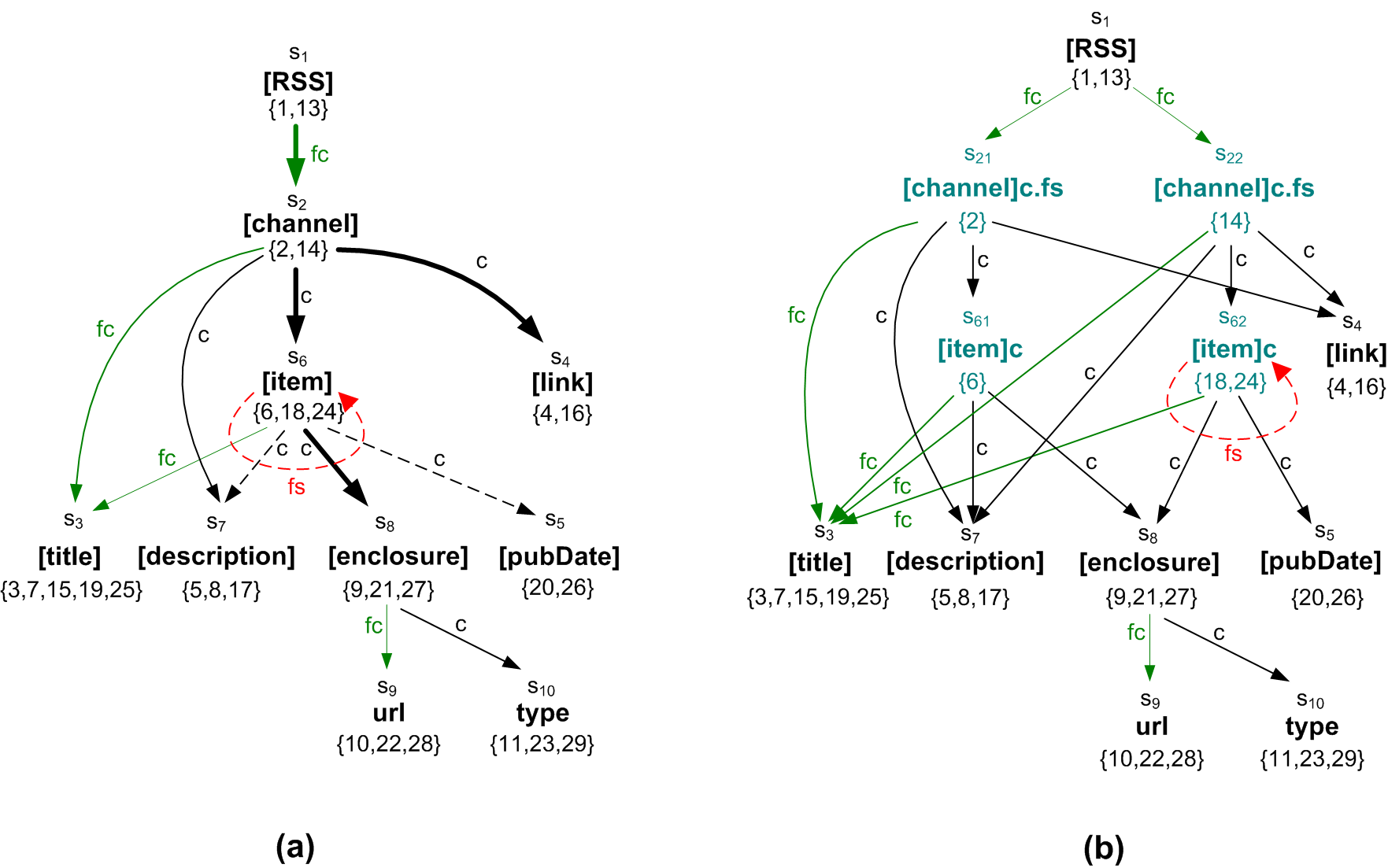}
    \caption{Label SD (a), and heterogeneous SD (b) of the RSS feed samples}
    \label{fig:smallRSS-combined}
\end{figure}

An SD edge is labeled by the axis relation it represents. For
instance, edge $(s_6,s_5)$ is labeled by $c$, which means that
there is a $c$ axis relation between elements in the extent of
$s_6$ and $s_5$. Figure \ref{fig:smallRSS-combined} (a) shows
three kinds of edges, depending on properties of the sets that
participate in the axis relation: dashed, regular, and bold.
Dashed edges, like $(s_6,s_5)$ labeled $c$, mean that some element
in the extent of $s_6$ has a child in the extent of $s_5$. Regular
edges, like ($s_6,s_3$) labeled $\mathit{fc}$, mean that every
element in the extent of $s_6$ has a first child in the extent of
$s_3$. Finally, bold edges, like $(s_6,s_8)$ labeled $c$, mean
that every element in the extent of $s_8$ is a child of some
element in the extent of $s_6$ and that every element in the
extent of $s_6$ has some child in the extent of $s_8$.

From the label SD Sue learns that \emph{channel} elements in the
collection always contain \emph{title}, \emph{link},
\emph{description}, and \emph{item} subelements. However, the
structure of \emph{item} elements may vary. An \emph{item} in the
two sample feeds always includes \emph{title} and \emph{enclosure}
elements, but may contain any combination of \emph{description}
and \emph{pubDate} elements. Note that the label SD does not
provide information on exactly which combinations actually appear.
At this point Sue has two options:

\begin{enumerate}
\item She can interactively \emph{refine} the SD node $s_2$ in the
label SD in order to learn how many different types of channels
exist in the collection (i.e., how many subsets of \emph{title},
\emph{enclosure}, \emph{description} and \emph{pubDate} are
present within \emph{item} elements). \item Since she already
knows that some \emph{item} elements have a \emph{pubDate} from
the label SD and she is interested in channels that contain such
items, she can write query Q1 to retrieve them.

\begin{center}
{\small \texttt{Q1 = /rss/channel[item[pubDate]] } }
\end{center}

\end{enumerate}

Sue can now decide either to run Q1 using the current SD or to
make \dx\ \emph{adapt} the current SD to Q1. If she picks the
former option, \dx\ finds the only SD node that contains a
superset of the answer ($s_2$) and runs Q1 on its entire extent.
If Sue chooses the latter option, \dx\ changes the SD by
partitioning the single \emph{channel} node $s_2$ in Figure
\ref{fig:smallRSS-combined} (a), which represents all channels in
the collection, into two \emph{channel} nodes: one with a
\emph{pubDate} within their \emph{item} elements and another
without a \emph{pubDate} ($s_{22}$ and $s_{21}$ in Figure
\ref{fig:smallRSS-combined} (b), respectively). Both SDs can be
used to evaluate query Q1, but notice this latter refinement (the
SD of Figure~\ref{fig:smallRSS-combined} (b)) will yield a more
efficient evaluation.

Summaries in \dx\ are defined and manipulated via AxPREs. AxPREs
describe the \emph{neighbourhood} of the elements in a given
extent. A neighbourhood of an element $v$ for an AxPRE $\alpha$ is
the subgraph local to $v$ that matches $\alpha$. For instance, the
$p^*$ AxPRE describes the neighbourhood of $v$ containing all
label paths from $v$ to the root, $c^*$ all label paths from $v$
to the leaves, and $\mathit{fc}.ns^*$ the sequence of $v$'s child
labels. AxPREs can also be derived from a query in order to adapt
an SD to it. For example, the $[channel].c.c$ AxPRE of node
$s_{21}$ in Figure \ref{fig:smallRSS-combined} (b) was derived
from Q1 and describes the neighbourhood of \emph{channel} elements
with common outgoing label paths of length $2$ (more on this in
Chapter \ref{sec:axpres}). Sue could have written the
$[channel].c.c$ herself had she wanted to \emph{refine} the
\emph{channel} node $s_2$ according to the substructure of the
\emph{channel} elements in the extent of $s_2$ (since she knows
from the label SD that the variability within \emph{channel}
elements may only come from \emph{description} and \emph{pubDate}
within \emph{item} subelements, the $c.c$ AxPRE representing
outgoing label paths of length $2$ suffices).

Suppose further that Sue is also interested in \emph{item}
elements containing both \emph{title} and \emph{enclosure}
subelements, but she does not know whether such items exist in the
collection and, if they do, how common they are. In addition, she
wants those items to be part of a series (i.e., to belong to
\emph{channel} elements that contain more than one \emph{item}
element, as done in feeds for podcasts published daily).
Therefore, she writes another query:

\begin{center}
{\small \texttt{Q2 = /rss/channel[item/following-sibling::item]}}

{\small \texttt{
\textcolor[rgb]{0.50,0.50,0.50}{[not(pubDate=../item[1]/pubDate)]}/item[title][enclosure]
} }
\end{center}

Q2 contains structural (in black) and non-structural (in grey)
XPath constructs. The expression that results from removing all
non-structural constraints is called the \emph{structural
subquery} of Q2. A structural subquery provides insight into the
behaviour of the entire query and can be used by \dx\ to refine an
SD.

As with Q1, Sue can decide to either evaluate Q2 on the current SD
(the label SD with the refined \emph{channel} node) or to add Q2
to the workload and make \dx\ adapt the current SD. Assuming she
chooses the second option, the system partitions the \emph{item}
node $s_6$ from Figure \ref{fig:smallRSS-combined}(a) into the
nodes $s_{61}$ and $s_{62}$ in Figure
\ref{fig:smallRSS-combined}(b) that describe the structure of the
collection with respect to the workload including Q2 and Q1. Note
that the extent of node $s_{62}$ is exactly the answer to the
structural subquery of Q2, and thus a superset of the answer of
Q2. The elements in this extent are called \emph{candidate
elements}. Hence, by adapting the SD to the structural subquery,
\dx\ has considerably reduced the search space for computing the
entire query.

In a document-at-a-time approach to query evaluation, adapting an
SD to a workload can reduce the number of documents on which
queries in the workload need to be evaluated, potentially yielding
a significant speedup (see Chapter~\ref{Section:Experiments}).
That is, after adapting the SD to a given query $Q$, \dx\ can
evaluate $Q$ only on those documents (called \emph{candidate
documents}) that are guaranteed to provide a non-empty answer for
the structural subquery of $Q$. Those candidate documents that do
contain an answer for the entire query are called \emph{answer
documents}.

It is important to note that \dx\ can recognize two kinds of
channels with different structure beyond the elements directly
contained by them, a capability not available using DTD's (unless
channel elements are renamed, which is not a possibility when the
original DTD or the instances cannot be modified). In particular,
proposals to infer a DTD from an instance (such as
\cite{BNST06,GGRS+03}) by suggesting (general, but succinct)
regular expression from the strings of child elements, do not help
to identify the two kinds of channels as done above. For instance,
the DTD expression {\small \texttt{<!ELEMENT channel (title, link,
description, item)>}} can be inferred for the \emph{channel}
elements occurring in the instances shown in Figure~\ref{fig:RSS}.
However, a DTD can only give a rule for the children of
\emph{channel}, there is no mechanism for giving rules relating
\emph{channel} elements to their grandchildren (or any other
elements farther away). In contrast, the AxPRE summary in
Figure~\ref{fig:smallRSS-combined} (b) can distinguish between a
\emph{channel} containing an \emph{item} with a \emph{pubDate}
element from those that contain a \emph{description}, and also
between \emph{item} elements that belong to a multi-item
\emph{channel} from single-item ones.

As we will show in this thesis, \dx\ is not only more expressive
than DTD's and XML Schemas, but also more expressive than other
summary proposals making it a robust foundation for managing large
document collections.

\section{Organization}

This thesis is structured as follows. Chapter~\ref{sec:related}
gives an overview of the large body of related work in the
literature. Chapter~\ref{sec:axpres} introduces the \dx\
framework, including the AxPRE language and some basic notions
such as neighbourhood, bisimilarity, and summary descriptor (SD).
Chapter~\ref{sec:lattice} revisits some of the related work
discussed in Chapter~\ref{sec:related} and explains how they can
be captured by the \dx\ framework and how \dx\ offers significant
new functionality. Chapter~\ref{sec:refinements} presents two new
operations, AxPRE refinement and stabilization, for declaratively
changing the description provided by an SD using AxPREs.
Refinement and stabilization are central to the use of summaries
for both structure understanding and query processing.
Chapter~\ref{sec:changingSDswithxpath} introduces a novel
mechanism to characterize an SD node with an XPath expression
whose evaluation returns exactly the elements in the extent. It
also discusses how to compute AxPRE refinements and stabilizations
with XPath expressions and how to evaluate XPath queries using
\dx\ summaries. Chapter~\ref{Section:Implementation} describes the
implementation of the \dx\ summarization engine for creating and
manipulating SDs of XML collections.
Chapter~\ref{Section:Experiments} provides experimental results,
using gigabyte size XML collections, that validate the performance
of the techniques employed by our framework. We conclude in
Chapter~\ref{section:conc} by presenting some future research
issues. In addition, Appendix~\ref{Section:XPathLanguage} provides
a concise definition of the formal semantics of XPath 1.0, and
Appendix~\ref{sec:debugExample} presents a visual interactive tool
built on top of the \dx\ summarization engine.

\chapter{Related work} \label{sec:related}

In this chapter, we discuss contributions from the literature on
structural summaries and other areas related to our work, such as
path summaries for object-oriented data, hierarchical encodings,
answering XML queries using views, and validating summaries.

\section{Structural summaries}

The large number of summaries that have been proposed in recent
years clearly establishes the value and usefulness of these
structures for describing semistructured data, assisting with
query evaluation, helping to index XML data, and providing
statistics useful in XML query optimization.

Most of the summary proposals in the literature define synopses of
predefined subsets of paths in the data. They construct a labeled
graph that represents relationships between sets of XML elements.
Examples of such summaries are region inclusion graphs (RIGs)
\cite{CM94}, representative objects (ROs)\cite{NUWC97}, dataguides
\cite{GW97}, reversed dataguides \cite{LS00}, 1-index, 2-index and
T-index \cite{MS99}, and more recently, ToXin \cite{RM01},
A(k)-index \cite{KSBG02}, F-Index, B-index, and F\&B-Index
\cite{KBNK02}. Dataguides and ROs group nodes into sets according
to the label paths incoming to them (each node may appear more
than once in the dataguide if the document instance is not just a
tree). RIGs, 1-index, T-index, ToXin, F\&B-Index, and F+B-Index,
on the other hand, partition the data nodes into equivalence
classes (called \emph{extents} in the literature) so that each
node appears only once in the summary. The partition is computed
in different ways: according to the node labels (RIGs), the label
paths incoming to the nodes (1-index, ToXin, A(k)-index), the
label paths going out from the nodes (reversed dataguides), or
label paths both incoming and outgoing (F\&B-Index and F+B-Index).
The length of the paths in the summary also varies: ToXin, 1-index
and F\&B-Index/F+B-Index summarize paths of any length, whereas
A(k)-index is a synopsis of paths of a fixed length. Updates to
structural summaries have been studied in \cite{KBNS02} and
\cite{YHSY04}.

RIGs were one of the first summaries proposed in the literature,
introduced in the context of region algebras \cite{CM94,YT03}.
Dataguides \cite{GW97} group nodes in a rooted data graph into
sets called \emph{target sets} according to the label paths from
the root they belong to. Since the label paths form a language,
its deterministic finite automaton (DFA) is used as a more concise
representation of the label paths. The construction of a dataguide
from a data graph is equivalent to the conversion of a NFA (the
XML tree) into a deterministic finite automaton (the dataguide)
\cite{NUWC97}.

An index family was presented in \cite{MS99} (1-index, 2-index,
and T-index). Like dataguides, the 1-index summarizes root-to-leaf
paths. In the 1-index, the nodes of a XML tree are partitioned
into equivalence classes according to the label paths they belong
to. Since the 1-index extents constitute a partition of the XML
nodes, the number of 1-index nodes can never be bigger than the
XML tree. The extreme case is the one in which every XML node
belongs to a separate equivalence class (which is in fact the data
instance). The 1-index partition is computed by using
\emph{bisimulation}~\cite{PT87}.

Based also on bisimulation, the A(k)-index was introduced in
\cite{KSBG02}. The construction of the summary is based on
$k$-bisimilarity (bisimilarity computed for paths of length k).
Thus, the A(0)-index creates the partition based on the labels of
the nodes (0-bisimilarity), and the A(h)-index uses
$h$-bisimilarity which creates the partition based on incoming
label paths of length $h$.

Another index family was introduced in \cite{KBNK02}. The
F\&B-Index construction uses bisimulation like the 1-index, but
applied to the edges and their inverses in a recursive procedure
until a fix-point. With this construction, the F\&B-Index's
equivalence classes are computed according to the incoming and
outgoing label paths of the nodes. The same work introduces the
F+B-index, which applies the recursive procedure only twice, once
for the edges and another reversing the edges. Both F\&B-Index and
F+B-index are special cases of the BPCI(k,j,m) index, where $k$
and $j$ controls the lengths of the paths  and $m$ the iterations
of the bisimulation on the edges and their inverses.

ToXin consists of three index structures: the ToXin schema, the
path index, and the value index. The ToXin schema is equivalent to
a strong dataguide. The path index contains additional structures
that keep track of the parent-child relationship between
individual nodes in different extents. A recent proposal,
TempIndex \cite{MRV04} extends ToXin with the temporal dimension
in order to speed-up path queries on a temporal XML data model.
TempIndex summarizes incoming paths that are \emph{valid}
continuously during a certain time interval and is part of the
TSummary framework \cite{RV08}.

Based on the A(k)-index, a recent proposal \cite{FGWG+07} defines
partitions of paths, rather than nodes, called P(k)-partitions --
where $k$ is the maximum length of the paths summarized. This work
also introduces an algebraization of the navigational core of
XPath in order to define XPath fragments that can be coupled to
P(k)-partitions for fast evaluation of queries in the fragments.
Since this proposal is based on navigational XPath, it supports
only expressions containing composition of $parent$, $ancestor$,
$child$, and $descendant$ axes. In contrast, \dx\ can be used to
evaluate expressions in the complete XPath language (with all the
axes, functions, use of parenthesis, etc.).

Other summaries are augmented with \emph{statistical information}
of the instance for selectivity estimation, including
path/branching distribution (XSKETCH \cite{PG02a,DPGM04}), value
distributions (XCLUSTER \cite{PG06}), and additional statistical
information for approximate query processing (TREESKETCH
\cite{PGI04}).

A few \emph{adaptive} summaries like APEX \cite{CMS02}, D(k)-index
\cite{QLO03}, and M(k)-index \cite{HY04} use dynamic query
workloads to determine the subset of incoming paths to be
summarized. APEX is a summary of frequently used paths that
summarizes incoming paths to the nodes and adapts to changes in
the workload by changing the set of path considered in the
synopsis. That is, instead of keeping all paths starting from the
root, it maintains paths that have some ``support'' (i.e., paths
that appear a number of times over a certain threshold in the
workload). The workload APEX considers are expressions containing
a number of $child$ axis composition that may be preceded by a
$descendant$ axis, without any predicate. APEX summarizes incoming
paths to the nodes and adapts to changes in the workload by
changing the set of paths summarized. D(k)-index and M(k)-index,
in contrast, summarize variable-length paths based on both the
workload and local similarity (the length of each path depends on
its location in the XML instance).

There has been almost no work on summaries that capture the node
ordering in the XML tree: the only proposals we are aware of are
the early region order graphs (ROGs) \cite{CM94} and the Skeleton
summary \cite{BCF+05} that clusters together nodes with the same
subtree structure. Skeleton has additional structures that store
relationships between individual nodes that belong to different
equivalence classes.

In contrast to these proposals, \dx\ is capable of declaratively
defining complex mappings between instance nodes and summary nodes
for expressing order, cardinality, and relationships that go
beyond the traditional parent-child (e.g., next sibling,
following, preceding, etc.) In addition, \dx\ provides a
declarative definition for the first time for most of the
proposals discussed above (for more details on how \dx\ captures
other structural summaries see Chapter \ref{sec:lattice}).

\section{Path summaries for OO data}

We can trace the origin of structural summaries for XML to the
OODB community. This community has been quite active in the past
in the area of path summaries for object-oriented data. Examples
are path indexes \cite{Ber94}, access support relations
\cite{KM90}, and join index hierarchies \cite{XH94}. All three
proposals materialize frequently traversed paths in the database.
Access support relations are designed to support joins along
arbitrary reference chains leading from one object instance to
another. They also support collection-valued attributes by
materializing frequently traversed reference chains of arbitrary
length. Access support relations are a generalization of the
binary join indices originally proposed for the relational model
\cite{Val87}. One fundamental difference with respect to join
indices, however, is that rather than relating only two relations
(or object types), access support relations materialize access
paths of arbitrary length.

A path index can materialize the same class of paths as an access
support relation. It stores the sequence of nodes (objects) that
define a given path. In contrast, a join index hierarchy
constructs hierarchies of join indices to optimize navigation via
a sequence of objects and classes. A join index stores the pairs
of identifiers of objects of two classes that are connected via
logical relationships. Since all these OODB approaches are based
on the paths found in the OO schema, they can only be adapted to
XML documents for which either a DTD or an XML Schema is present.
In contrast, \dx\ permits summarization of collections without any
schema.

\section{Hierarchical encodings}

We should mention that, in addition to the use of summaries, query
evaluation can be facilitated by \emph{encoding} the hierarchical
structure of an XML instance. \emph{Node encoding} evaluations use
some sort of interval encodings \cite{SK85} to label each node
with its positional information within the XML instance. This
positional information is used by join algorithms to efficiently
reconstruct paths and label paths. Recent proposals for node
encoding evaluations are region algebras \cite{CM94,YT03}, path
joins (XISS) \cite{LM01}, relative region coordinates
\cite{KYU01}, structural joins \cite{AKJP+02,CVZ+02}, holistic
twig joins \cite{BKS02,JWLY03}, partition-based path joins
\cite{LM03}, XR-Tree \cite{JLWO03}, PBiTree \cite{WJLY03,VMT04},
extended Dewey encoding for holistic twig joins \cite{LLCC05}, and
FIX \cite{ZOIA06}, a feature-based indexing technique.

\emph{Structural encoding} proposals are based on mapping the XML
tree structure into strings and use efficient string algorithms
for query processing. Since the size of each string grows with the
length of the encoded path, many approaches use some sort of
compression to offset this overhead. Examples of those are Index
Fabric \cite{CSF+01}, tree signatures \cite{ADR+04}, materialized
schema paths \cite{BW03}, PathGuides \cite{CYWY03}, and tree
sequencing (ViST \cite{WPFY03}, PRIX \cite{RM04}, and NoK
\cite{ZKO04}). These encodings can be used in conjunction with
structural summaries to improve query evaluation performance. In
fact, the availability of summaries can be of great assistance to
an XML optimizer \cite{AttilaVLDB:2005}.

\dx\ uses an interval encoding derived from \cite{SK85} in which
each element in the collections is represented by its start and
end positions (the character offset from the beginning of the
document they belong to).

\section{Answering XML queries using views}

Another area closely related to summarization is answering queries
using views. As in traditional database systems, the performance
of XML queries can be improved by rewriting them using caching and
materialized views containing information relevant to the
computation of the query. A recent contribution in this area
includes a framework for XPath view materialization and query
containment \cite{BOB+04} that uses value and structure indexes on
views. Another framework was proposed in \cite{MS05} for
maintaining a semantic cache of XPath query results as
materialized views used to speed-up query processing. Other work
has considered the problem of deciding the existence of a query
rewriting and finding a minimal rewriting using XPath views
\cite{XO05}, and computing maximal contained rewriting for tree
pattern queries (a core subset of XPath) \cite{LWZ06}.

For XQuery, query rewriting poses additional challenges. One of
them is that queries may be nested. Another challenge comes from
the mix of list, bag and set semantics supported by XQuery, which
makes testing equivalence more difficult. In this context, there
has been some work on query rewriting for nested XQuery queries
using nested XQuery views \cite{ODPC06}. A recent contribution for
extended tree patterns views (a subset of XQuery) proposes
containment and equivalent rewriting strategies based on a
dataguide enhanced with integrity constraints \cite{ABMP07}. This
proposal considers only queries described by tree patterns.

We must point out here that most of the work in this area could be
applied to our framework to expand the query evaluation techniques
we present in Chapter \ref{sec:changingSDswithxpath}.

\section{Validating summaries}

DTDs\footnote{\small \tt{ http://www.w3.org/TR/REC-xml}} and XML
Schemas\footnote{\small \tt{ http://www.w3.org/TR/xmlschema-1/}}
are proposals used for validation and verification of XML
documents. A DTD is a context-free grammar and an XML Schema is a
typed definition language. Both are schemas in the database sense,
and thus describe classes of documents and constrain their
structure. However, they provide only a limited description of the
instances that satisfy them and no mechanism to locate specific
instance fragments. In contrast, summaries are constructed for a
particular instance and consequently provide a tighter description
of the data. They also contain the necessary information for
locating the instance fragments they describe. DTDs and XML
Schemas can be used to constrain the construction of summaries but
they are no substitute for them. Moreover, summaries can be
constructed even when DTDs and XML Schemas are not present.

\begin{figure}
    \centering
        \includegraphics{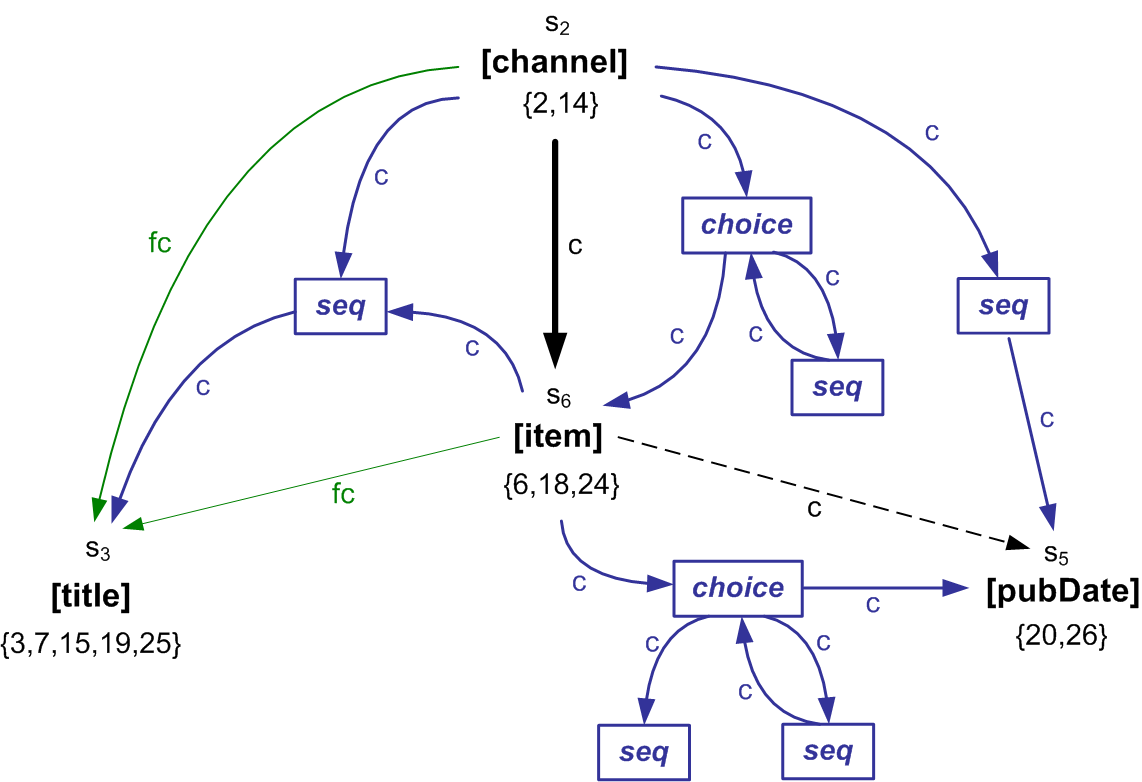}
    \caption{Label SD fragment with XML graph schema model annotations}
    \label{fig:RSS-label-relax}
\end{figure}

In addition to describing an instance, \dx\ summaries could
potentially be used for prescribing or constraining the data by
adding schema constructs. Figure \ref{fig:RSS-label-relax} shows a
fragment of the label SD from Figure \ref{fig:smallRSS-combined}
(a) annotated with XML graph schema constructs \cite{MS07} in
blue. These constructs, which contain choice and sequence nodes
(and others not shown in the figure), are able to express XML
schema languages like DTDs, XML Schemas, and Relax
NG\footnote{\small \tt{
http://www.oasis-open.org/committees/relax-ng/spec-20011203.html/}}.
(For a survey on XML schema languages see \cite{MLMK05}.) The SD
of Figure \ref{fig:RSS-label-relax} represents channels that
contain exactly one title, one description and one or more items
that contain themselves one title and a sequence of zero or more
description elements. In the figure, choice and sequence nodes are
used to represent the number of occurrences of an element, which
can be zero, one, or unbounded. The DTD corresponding to the
elements that appear in Figure \ref{fig:RSS-label-relax} is the
following:

{\small \texttt{<!ELEMENT channel (title, item+, description)>}}

{\small \texttt{<!ELEMENT item (title, description?)>}}

{\small \texttt{<!ELEMENT title (\#PCDATA)>}}

{\small \texttt{<!ELEMENT pubDate (\#PCDATA)>}}

In an SD, schema annotations have to be consistent with instance
descriptions. For example, the existential edge $(s_6, s_5)$ is
compatible with an schema permitting any number of occurrences of
$s_5$ (even zero). In contrast, the same edge is incompatible with
an schema requiring at least one $s_5$ element because the dashed
edge allows some items to have no descriptions.

This is just an example of how schema constructs can be integrated
with \dx\ summaries. There are many other ways of approaching the
subject, but we do not consider it further in this thesis.

\vspace{.2in}

This chapter provided a discussion of related work on structural
summaries and four other areas close to our work: path summaries
for OO data, hierarchical encodings, answering XML queries using
views and validating summaries. In the next chapter, we begin
introducing one of the major contributions of this thesis, the
\dx\ framework. We will show how \dx\ generalizes and extends both
structural and path summaries, and how \dx\ summaries can be used
in query processing.

\chapter{AxPRE summaries}\label{sec:axpres}

This chapter provides an overview of the \dx\ framework. The
framework includes a powerful language based on \emph{axis path
regular expressions} (AxPREs) for describing each set in a
partition of instance nodes (extents). AxPREs provide the
flexibility necessary for declaratively specifying the mapping
between instance nodes and summary nodes for a given collection.
These AxPRE mappings are capable of expressing order and
cardinality, among other properties. AxPREs are evaluated on a
graph (called \emph{axis graph}) in which nodes are XML elements
and edges are binary relations between them. Hence, AxPREs can be
viewed as path regular expressions on binary relations. These
relations include all XPath axes and additional ones that can be
expressed in XPath.

Extents are defined using a novel approach: selective bisimilarity
applied to subgraphs described by AxPREs (i.e., \emph{AxPRE
neighbourhoods}). This particular use of bisimulation supports the
definition of summaries that go beyond the traditional parent and
child hierarchical relationships covered by the abundant
literature on summaries. Intuitively, nodes that have bisimilar
subgraphs ``around'' them (i.e., neighbourhoods) belong to the
same extent. For instance, \dx\ can define extents containing only
nodes with the same set of outgoing label paths matching a given
sequence of axes. Neighbourhoods are a key mechanism in the
declarative definition of \dx\ summaries.

\section{A regular expression language on axes}

In this section, we introduce the AxPRE language for describing
neighbourhoods in an SD. For representing an XML instance, \dx\
uses a labeled graph model called an \emph{axis graph}.

\begin{definition}[Axis Graph]
An axis graph $\mathcal{A} = (\mathit{Inst},$ $Axes,$ $Label,$
$\lambda)$ is a structure where $\mathit{Inst}$ is a set of nodes,
$Axes$ is a set of binary relations $\{ E_1^\mathcal{A}, \ldots,
E_n^\mathcal{A} \}$ in $\mathit{Inst} \times \mathit{Inst}$ and
their inverses, $Label$ is a finite set of node names, and
$\lambda$ is a function that assigns labels in $Label$ to nodes in
$\mathit{Inst}$. Edges are labeled by axis names. \qed
\end{definition}

An axis graph is an abstract representation of the XPath data
model \cite{W3C:XPath/02} extended with edges that represent XPath
binary relations between elements. It can also include additional
axes, such as $\mathit{fc}$ (where $\mathit{fc} := child::*[1]$),
$\mathit{ns}$ (where $\mathit{ns} :=
following\mbox{-}sibling::*[1]$), \emph{id-idrefs} or any binary
relation that can be expressed in XPath. When representing an XML
instance, axis graph nodes are labeled by element or attribute
names (including namespaces).

\begin{figure}
    \centering
        \includegraphics{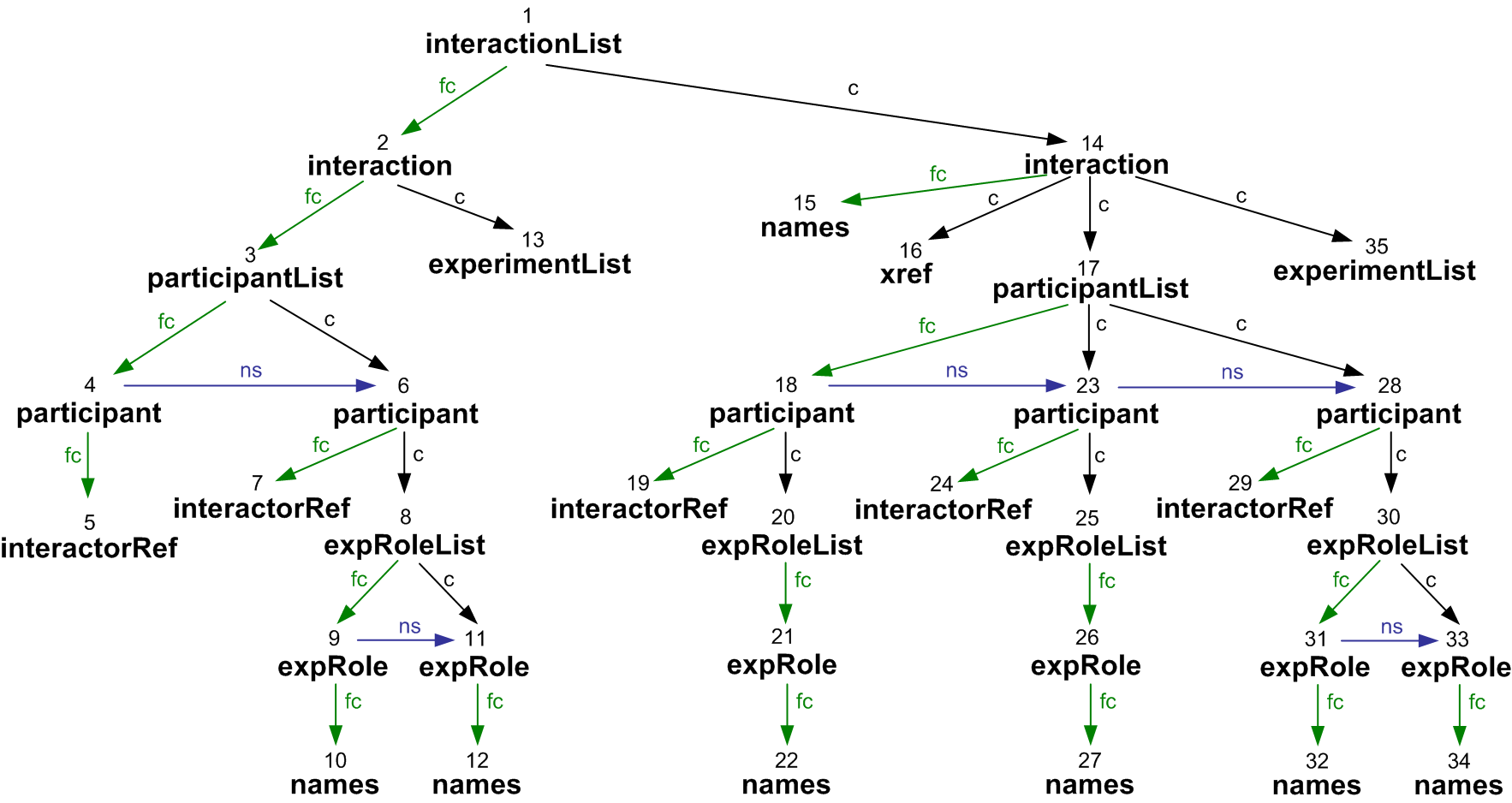}
    \caption{The axis graph of two PSI-MI interactions}
    \label{fig:PSIMI}
\end{figure}

\begin{example}[PSI-MI Axis Graph]
Figure \ref{fig:PSIMI} shows an axis graph for our running
example, which is a sample of a protein-protein interaction (PPI)
dataset in PSI-MI\footnote{\small
\tt{http://psidev.sourceforge.net/mi/xml/doc/user/} } format.
PSI-MI stands for Proteomics Standards Initiative
Molecular-Interaction and is the de-facto model for PPI used by
many molecular interaction databases such as
BioGRID\footnote{\small \tt{http://www.thebiogrid.org/} }, Human
Protein Reference Database (HPRD)\footnote{\small
\tt{http://www.hprd.org/} }, and IntAct\footnote{\small
\tt{http://www.ebi.ac.uk/intact/} }. The PSI-MI XML schema has a
large number of optional elements to allow flexibility, with the
result that PSI-MI data can be very heterogeneous. Since different
databases use different fragments of the schema, finding common
instance patterns and understanding schema usage can be
challenging \cite{SCKT07}.

Each interaction consists of an experimentList element with all
the experiments in which the interaction has been determined, a
participantList element with the molecules that participate in the
interaction and some optional elements like the name of the
interaction and a reference (xref) to an interaction database.
Each participantList contains two or more participants, which are
the molecules participating in the interaction. A participant
element contains a description of the molecule, either by
reference to an element of the interactorList, or directly in an
interactor element. In addition, each participant contains a list
of all the roles it plays in the experiments (e.g., bait, prey,
neutral, etc.)

Note that, for the sake of clarity, we have omitted many edges
depicting relations that actually exist. For example, the
$\mathit{fc}$ ($firstchild$) relation is included in the $c$
($child$) relation, so any $\mathit{fc}$ edge is also a $c$ edge.
The inverses of each relation are not shown in the figure, e.g.,
for each $c$ relation, a $p$ ($parent$) relation exists (since
$p=c^{-1}$). \qed
\end{example}

An AxPRE gives a declarative description of a partition of
elements in an SD, something not provided by any other proposal in
the literature.

In an axis graph we define paths and label paths as usual. We call
a path defined on edges an \emph{axis path}, and the string
resulting from the concatenation of its labels is an \emph{axis
label path}.

\begin{definition} [Axis Path and Axis Label Path] \label{def:path}
Let $\mathcal{N}$ be a connected subgraph of an axis graph
$\mathcal{A}$, and $v,v_n$ be two nodes in $\mathcal{N}$ such that
there is a path $p=(v, axis_1$, $v_1, axis_2$, $\ldots, axis_n,
v_n)$ from $v$ to $v_n$. The \emph{axis path} of $p$ is the string
$ap = axis_1$.$axis_2. \dots$ .$axis_n$. The \emph{axis label
path} of $p$ is the string $\lambda(p) =
axis_1[\lambda(v_1)].axis_2[\lambda(v_2)]. \ldots.$ $axis_n$
$[\lambda(v_n)]$. \qed
\end{definition}

\begin{example}
Consider the axis graph of Figure \ref{fig:PSIMI}. Two of the
paths from node $6$ to $11$ are $p=(6, c, 8, \mathit{fc}, 9, ns,
11)$ and $p'=(6, c, 8, c, 11)$. Their axis paths are
$ap=c.\mathit{fc}.ns$ and $ap'=c.c$, respectively. Finally, the
axis label paths of $p$ and $p'$ are $\lambda(p)=$
$c[expRoleList].$ $\mathit{fc}[expRole].$ $ns[expRole]$ and
$\lambda(p')=$ $c[expRoleList].$ $c[expRole]$, respectively. \qed
\end{example}

\begin{definition}[Axis Path Regular Expressions] \label{def:axpre}
An \emph{axis path regular expression} (AxPRE) is an expression
generated by the grammar
\[ E \longleftarrow axis \mid axis[B(l)]
    \mid (E \mid E)  \mid (E)^* \mid E.E \mid \epsilon \mid [B(l)] \]
where $axis \in Axes$ and $\epsilon$ is the symbol representing
the empty expression. \qed
\end{definition}

Definition \ref{def:axpre} describes the syntax of path regular
expressions on the binary relations (labeled edges) of the axis
graph including node label tests. The function $B(l)$ is a boolean
function on a label $l \in Label$ that supports elaborate tests
beyond just matching labels.

An AxPRE defines a pattern we want to find in an instance. We need
a way of computing all occurrences of such pattern in an axis
graph -- each occurrence will be called a neighbourhood. We do
this by computing an automaton for the AxPRE, another for the axis
graph, and then taking the intersection. Finally, a summary will
group nodes with similar patterns together into an extent (\dx\
uses bisimulation as the notion of similarity).

The AxPRE semantics (Definition \ref{def:axpreSemantics}) is given
by the notion of \emph{AxPRE neighbourhood} of a node (Definition
\ref{def:neighbourhood}). In order to compute an AxPRE
neighbourhood we need first to define an automaton from the axis
graph. Such an automaton will have two states for each node in the
axis graph, one named $head$ and the other $tail$. In addition,
edges in the graph will be represented as transitions between
$tail$ and $head$ states, and node labels as transitions between
$head$ and $tail$ states.

\begin{figure}
    \centering
        \includegraphics{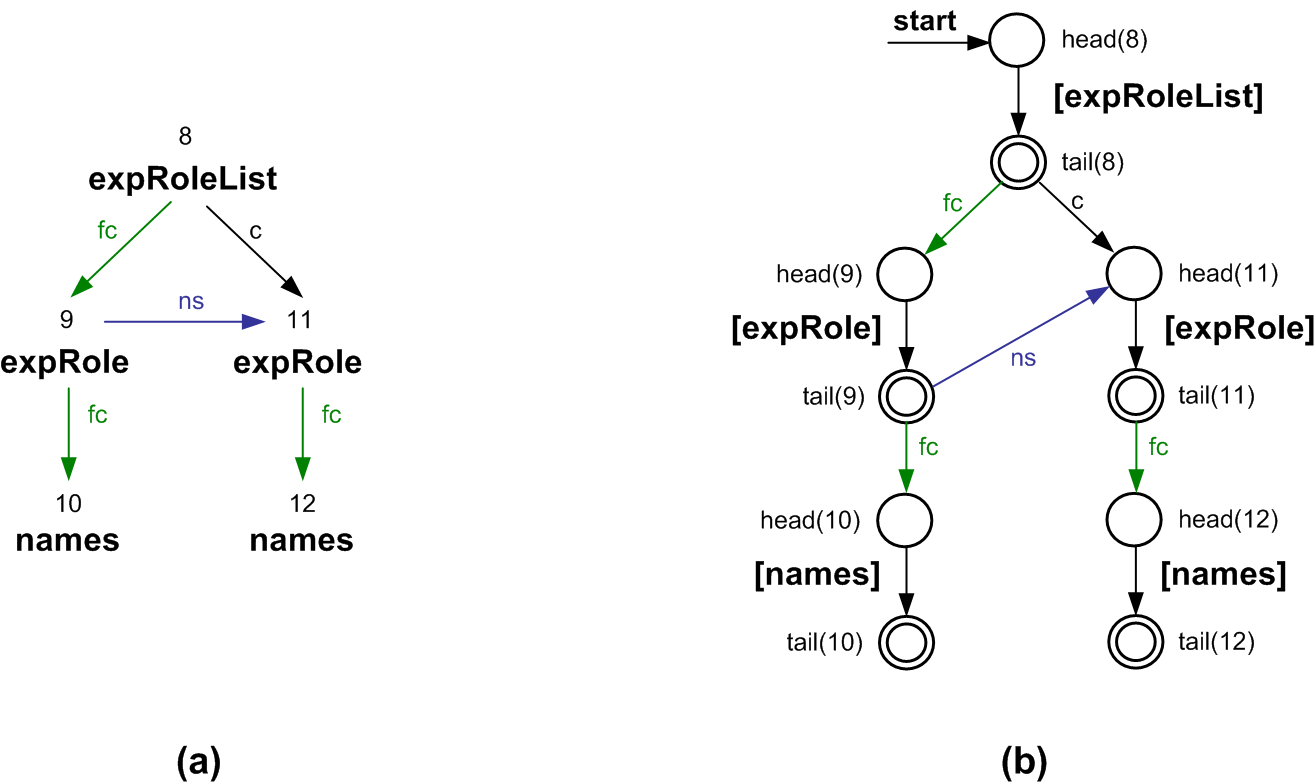}
    \caption{Axis graph fragment from node 8 (a) and its automaton $\mathcal{M}_\mathcal{A}(8)$ (b)}
    \label{fig:axisgraphauto}
\end{figure}

\begin{definition}[Axis Graph Automaton] \label{def:axisAutomaton}
Let $\mathcal{A} = (\mathit{Inst},$ $Axes,$ $Label,$ $\lambda)$ be
an axis graph and $v$ a node in $\mathcal{A}$. The \emph{axis
graph automaton of} $\mathcal{A}$ \emph{from} $v$,
$\mathcal{M}_\mathcal{A}(v)=\{Q, \Sigma, \delta, q_0, F \}$, is an
automaton \cite{book:HopcroftU79} defined as follows:
\begin{itemize}
\item For each node $w \in Inst$ there is a state $head(w) \in Q$,
a state $tail(w) \in Q$ and a transition $\delta(head(w),
[\lambda(w)])=tail(w)$; \item For each edge $(w_i,w_j)$ labeled
$axis$ in $\mathcal{A}$ there is a transition
$\delta(tail(w_i),axis)=head(w_j)$; \item All $tail(w)$ states in
$Q$, $w \in Inst$, are final states in $F$, and $head(v)$ is the
initial state $q_0$.
\end{itemize}
\qed
\end{definition}

\begin{example}
Consider node $8$ of our running example. Figure
\ref{fig:axisgraphauto} shows on the left hand side a fragment of
the axis graph that contains node 8. The axis graph automaton from
node $8$ (on the right hand side of the figure) has $head(8)$ as
initial state and all $tail$ states as final. Each node in the
axis graph fragment is unfolded into a head and a tail states in
the automaton and its label is represented by a transition between
them. Consider node $11$ with label $expRole$ that has $ns$ and
$c$ incoming edges and a $\mathit{fc}$ outgoing edge in the axis
graph. In the automaton, $11$ is represented by a $head(11)$ state
that has $ns$ and $c$ incoming transitions and an outgoing
transition $[expRole]$ to $tail(11)$. The outgoing $\mathit{fc}$
edge is translated into a $\mathit{fc}$ transition from $tail(11)$
to the head state of the corresponding node, which is $12$. \qed
\end{example}

An automaton can be obtained from an AxPRE following the usual
Thompson's construction for regular expressions with a minor
change to the basis steps to account for AxPRE semantics (which
require accepting all prefixes of the language). The language
accepted by the so called \emph{AxPRE automaton} thus constructed
will always be prefix-closed. (A language $L$ is said to be
prefix-closed if, given any word $l \in L$, all prefixes of $l$
are also in $L$ \cite{book:HopcroftU79}.)

\begin{figure}
    \centering
        \includegraphics{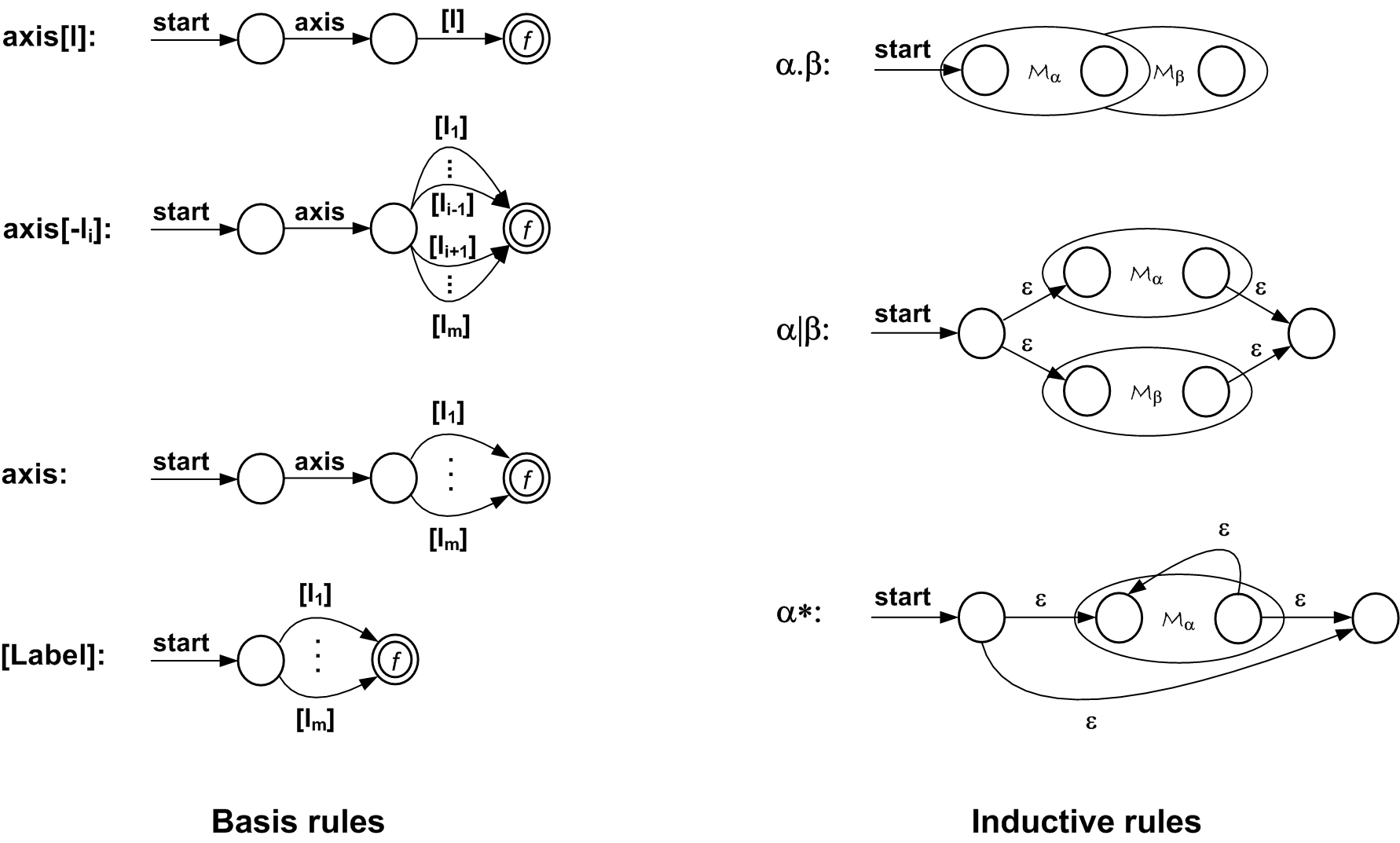}
    \caption{Rules of the modified Thompson's construction}
    \label{fig:thompson}
\end{figure}

\begin{definition}[AxPRE Automaton] \label{def:AxPREAutomaton}
Let $\alpha$ be an AxPRE. The \emph{AxPRE automaton} of $\alpha$
is an automaton $\mathcal{M}_\alpha$ obtained from $\alpha$ with a
modified Thompson's construction \cite{book:HopcroftU79} for
accepting all prefixes (Figure \ref{fig:thompson}), in which only
the final states of the basis rules are kept as final in the
resulting automaton (the inductive rules for concatenation,
disjunction and Kleene closure do not mark any additional state as
final). The transition function $\hat{\delta}(q_\alpha,axis)$
returns the states that can be reached by an $axis$ transition
after following an arbitrary number (possibly zero) of $\epsilon$
transitions. \qed
\end{definition}

\begin{figure}
    \centering
        \includegraphics{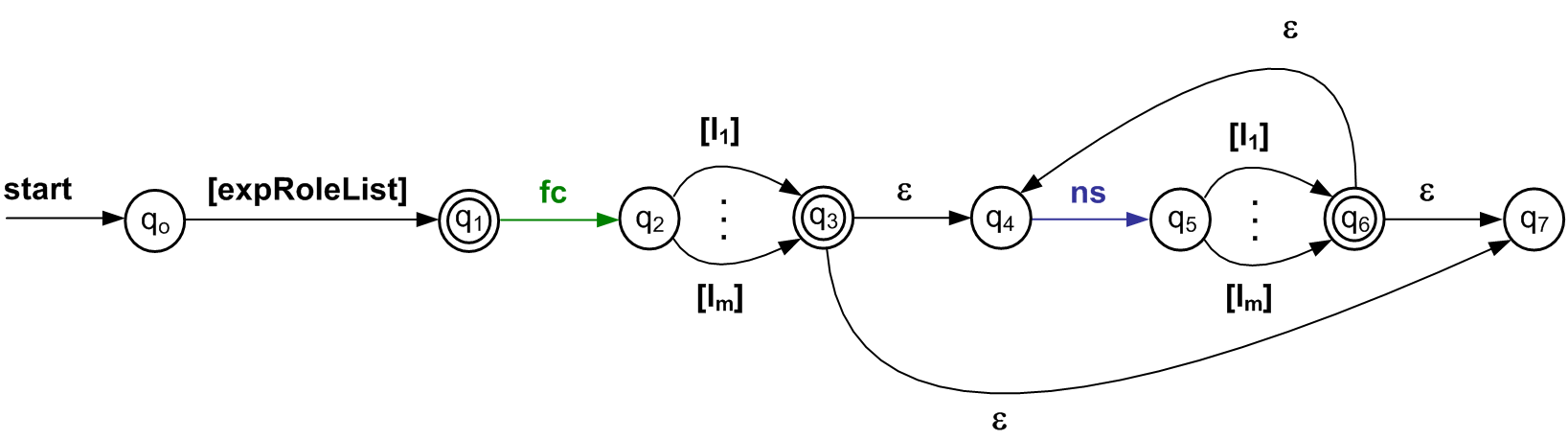}
    \caption{AxPRE automaton $\mathcal{M}_{[expRoleList].\mathit{fc}.ns^*}$}
    \label{fig:AxpreAuto}
\end{figure}

\begin{example}
Consider the AxPRE $[expRoleList].\mathit{fc}.ns^*$ and its
automaton in Figure \ref{fig:AxpreAuto}. The application of rule
$axis[l]$ of the modified Thompson's construction creates states
$q_0$, $q_1$ and the $[expRoleList]$ transition between them. The
application of rule $axis$ creates $q_2$, $q_3$, $q_5$, $q_6$, and
the $[l_1], \ldots, [l_m]$ transitions from $q_2$ to $q_3$ and
from $q_5$ to $q_6$ (there is one transition $[l_i]$ for each
string in $Label$). The final automaton is obtained by applying
the concatenation and Kleene closure rules. \qed
\end{example}

An automaton for the intersection of two languages can be
constructed by taking the product of the automata for the two
languages \cite{MW95,Yann90}.

\begin{definition}[Intersection Automaton] \label{def:intersection}
Let $\mathcal{M}_\mathcal{A}(v)$ be the automaton of an axis graph
$\mathcal{A}$ from a node $v$, and $\mathcal{M}_\alpha$ be the
automaton of an AxPRE $\alpha$. The \emph{intersection automaton}
$\mathcal{M}_\mathcal{A}(v) \cap \mathcal{M}_\alpha$ is an
automaton in which states are pairs $(q_\mathcal{A},$ $q_\alpha)$
consisting of a state $q_\mathcal{A} \in
\mathcal{M}_\mathcal{A}(v)$ and a state $q_\alpha \in
\mathcal{M}_\alpha$, and there is a transition
$\delta((q_\mathcal{A},$ $q_\alpha),\mathcal{X})=(q'_\mathcal{A},$
$ q'_\alpha )$ if there are transitions
$\delta(q_\mathcal{A},\mathcal{X})=q'_\mathcal{A}$ in
$\mathcal{M}_\mathcal{A}(v)$ and
$\hat{\delta}(q_\alpha,\mathcal{X})=q'_\alpha$ in
$\mathcal{M}_\alpha$, where $\mathcal{X}$ is either an axis or a
label. A state $\langle q_\mathcal{A}, q_\alpha \rangle$ is final
(initial) if both $q_\mathcal{A}$ and $q_\alpha$ are final
(initial). \qed
\end{definition}

\begin{figure}
    \centering
        \includegraphics{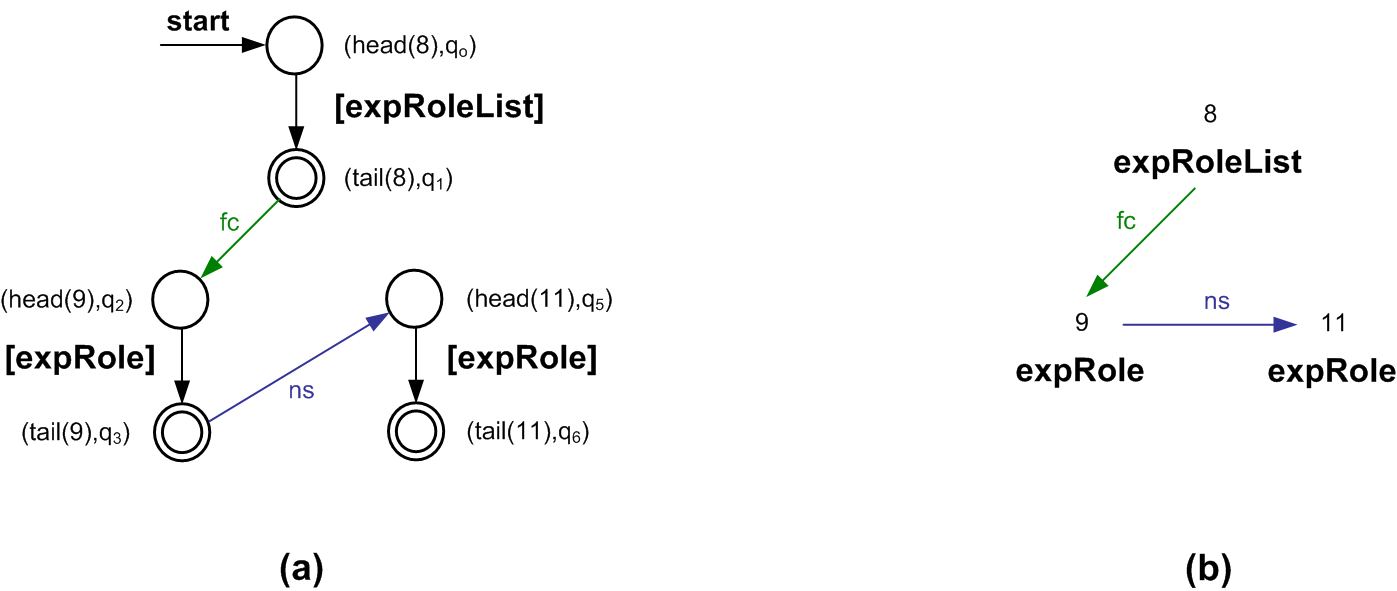}
    \caption{Intersection automaton $\mathcal{M}_\mathcal{A}(8) \cap \mathcal{M}_{[expRoleList].\mathit{fc}.ns^*}$ (a) and resulting AxPRE neighbourhood $\mathcal{N}_{[expRoleList].\mathit{fc}.ns^*}(8)$ (b)}
    \label{fig:intersection}
\end{figure}

The machinery introduced in Definitions \ref{def:axisAutomaton}
through \ref{def:intersection} is required for computing AxPRE
neighbourhoods of nodes in the axis graph. The neighbourhood of a
node $v$ by $\alpha$ can be obtained by taking the intersection
between the axis graph automaton from $v$ and the AxPRE automaton
of $\alpha$, and then converting the resulting automaton to an
axis graph fragment as described in Definition
\ref{def:neighbourhood}.

\begin{definition}[AxPRE Neighbourhood of a Node] \label{def:neighbourhood}
Let $\mathcal{A} = (\mathit{Inst},$ $Axes,$ $Label,$ $\lambda)$ be
an axis graph, $v$ a node in $\mathcal{A}$, $\alpha$ an AxPRE, and
$\mathcal{M}_\mathcal{A}(v) \cap \mathcal{M}_\alpha$ the
intersection automaton of $\mathcal{M}_\mathcal{A}(v)$ and
$\mathcal{M}_\alpha$. The \emph{AxPRE neighbourhood} of $v$ by
$\alpha,$ denoted $\mathcal{N}_{\alpha}(v),$ is the subgraph of
$\mathcal{A}$ defined as follows:
\begin{itemize}
\item For each transition
$\delta((head(w),q_\alpha),l)=(tail(w),q'_\alpha)$, where
$(tail(w),q'_\alpha)$ is a final state, there is a node $w$ with
label $l$ in $\mathcal{A}$;  \item For each transition
$\delta((tail(w_i),q_\alpha),axis)=(head(w_j),q'_\alpha)$, where
$(tail(w_i),q_\alpha)$ is a final state, there is an edge
$(w_i,w_j)$ labeled $axis$ in $\mathcal{A}$.
\end{itemize}
\qed
\end{definition}

\begin{example}[AxPRE Neighbourhood of a Node]
Consider node 8 of our running example.  The intersection
automaton $\mathcal{M}_\mathcal{A}(8)$ $\cap$
$\mathcal{M}_{[expRoleList].\mathit{fc}.ns^*}$ is depicted in
Figure \ref{fig:intersection} (a).  States are labeled by pairs
$(q_\mathcal{A}, q_\alpha)$, where $q_\mathcal{A}$ is a state in
automaton $\mathcal{M}_\mathcal{A}(8)$ and $q_\alpha$ is a state
in automaton $\mathcal{M}_{[expRoleList].\mathit{fc}.ns^*}$. The
intersection has been computed following Definition
\ref{def:intersection}. The figure shows only the states that have
some incoming or outgoing transition. Note that transition $c$
between $tail(8)$ and $head(11)$ is not part of the intersection
because $\mathit{fc}$ is the only outgoing transition from $q_1$
in $q_\alpha$.

Figure \ref{fig:intersection} (b) shows the AxPRE neighbourhood of
node 6, $\mathcal{N}_{[participant]c.\mathit{fc}.ns^*}(6)$,
obtained by converting the intersection automaton to an axis graph
fragment as described in Definition \ref{def:neighbourhood}. Note
that transitions from $(head(v),\ldots)$ to $(tail(v),\ldots)$ in
the intersection are node labels in the AxPRE neighbourhood and
that transitions from $(tail(v),\ldots)$ to $(head(w),\ldots)$ are
edge labels (axes) in the neighbourhood.

Consider now the five $[participant].c.\mathit{fc}.ns^*$
neighbourhoods depicted in Figure \ref{fig:5-neighbourhoods}.
Neighbourhood (a) matches a prefix of the AxPRE
($[participant].c$) whereas (b) through (e) match the entire AxPRE
but with a different number of iterations in the Kleene closure
for $ns$: 1 for (b) and (e), and 0 for (c) and (d). \qed
\end{example}

\begin{figure}
    \centering
        \includegraphics{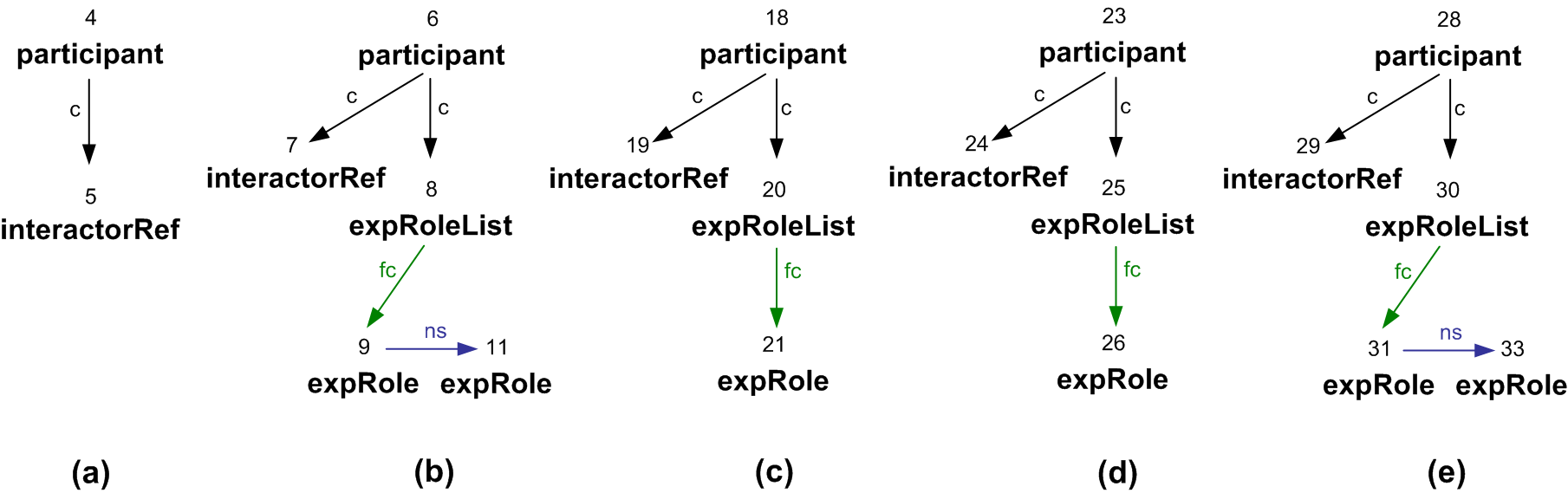}
    \caption{All $[participant].c.\mathit{fc}.ns^*$ neighbourhoods}
    \label{fig:5-neighbourhoods}
\end{figure}

We formalize next the notion of AxPRE semantics based on AxPRE
neighbourhoods.

\begin{definition} [AxPRE Semantics] \label{def:axpreSemantics}
Let $\mathcal{A} = (\mathit{Inst},$ $Axes,$ $Label,$ $\lambda)$ be
an axis graph and $v$ a node in $\mathcal{A}$. The evaluation of
an AxPRE $\alpha$ on $v$ returns the AxPRE neighbourhood of $v$ by
$\alpha$. \qed
\end{definition}

\section{Neighbourhoods and bisimulation}

AxPRE neighbourhoods allow us to define a notion of similarity
between nodes in an axis graph. The idea underlying \dx\ is that
nodes with similar AxPRE neighbourhoods will be grouped together.
In particular, \dx\ uses the familiar concept of \emph{labeled
bisimulation} applied to AxPRE neighbourhoods, formalized by
Definition \ref{def:bisimulation}.

\begin{definition}[Labeled Bisimulation and Bisimilarity] \label{def:bisimulation}
Let $\mathcal{N}_{\alpha}(v_0)$ and $\mathcal{N}_{\beta}(w_0)$ be
two AxPRE neighbourhoods of an axis graph $\mathcal{A} =
(\mathit{Inst},$ $Axes,$ $Label,$ $\lambda)$, such that
$Axes_{\alpha} \subseteq Axes$ and $Axes_{\beta} \subseteq Axes$.
A \emph{labeled bisimulation} between $\mathcal{N}_{\alpha}(v_0)$
and $\mathcal{N}_{\beta}(w_0)$ is a symmetric relation $\approx$
such that for all $v \in \mathcal{N}_{\alpha}(v_0)$, $w \in
\mathcal{N}_{\beta}(w_0)$, $E_i^{\alpha} \in Axes_{\alpha}$, and
$E_i^{\beta} \in Axes_{\beta}$: if $v \approx w,$ then
$\lambda(v)=\lambda(w)$; if $v \approx w,$ and $\langle v, v'
\rangle \in E_i^{\alpha}$, then $\langle w, w' \rangle \in
E_i^{\beta}$ and $v' \approx w'$. Two nodes $v \in
\mathcal{N}_{\alpha}(v_0)$, $w \in \mathcal{N}_{\beta}(w_0)$ are
\emph{bisimilar}, in notation $v \sim w$, iff there exist a
labeled bisimulation $\approx$ between $\mathcal{N}_{\alpha}(v_0)$
and $\mathcal{N}_{\beta}(w_0)$ such that $v \approx w$. Similarly,
two neighbourhoods $\mathcal{N}_{\alpha}(v_0)$ and
$\mathcal{N}_{\beta}(w_0)$ are \emph{bisimilar}, in notation
$\mathcal{N}_{\alpha}(v_0) \sim \mathcal{N}_{\beta}(w_0)$, iff
$v_0 \sim w_0$. \qed
\end{definition}

Definition \ref{def:bisimulation} captures outgoing label paths
from the nodes. Bisimulation provides a way of computing a double
homomorphism between graphs. The widespread use of bisimulation in
summaries is motivated by its relatively low computational
complexity properties. The bisimulation contraction of a labelled
graph can be done in time $O(m \log n)$ (where $m$ is the number
of edges and $n$ is the number of nodes in a labelled graph) as
shown in \cite{PT87}, or even linearly for acyclic graphs, as
shown in \cite{DPP04}. Using bisimulation also allows us to
capture all the existing bisimulation-based proposals in the
literature (Chapter \ref{sec:lattice}).

\begin{example} \label{exa:bisimulation}
Let us consider the nodes $6$ and $18$ in the axis graph of Figure
\ref{fig:PSIMI}. Their $[participant].c.\mathit{fc}.ns^*$
neighbourhoods are depicted in Figure \ref{fig:5-neighbourhoods}
(b) and (c), respectively. Based on Definition
\ref{def:bisimulation}, we can define a labeled bisimulation
$\approx$ between nodes $7$ and $19$ because they have the same
labels and they do not have outgoing edges. For the same reasons
we have $11 \approx 21$. However, it is not possible to define a
labeled bisimulation between $9$ and $21$ because, even though
they have the same labels, $9$ has one outgoing edge whereas $21$
does not. Thus, $9 \not \approx 21$. This prevents us from
defining a label bisimulation between $8$ and $20$ because they
each have only one outgoing $\mathit{fc}$ edge, but to nodes $9$
and $20$, which are not bisimilar. Therefore,  $8 \not \approx
20$. Similarly, $6 \not \approx 18$ because they have edges with
the same labels ($c$) to nodes that are not bisimilar ($8$ and
$20$). Consequently, neighbourhoods (b) and (c) of Figure
\ref{fig:5-neighbourhoods} are \emph{not} bisimilar.

In contrast, let us compare now nodes $6$ and $18$ but with
respect to their $[participant].c^*$ neighbourhoods, which are
depicted in Figure \ref{fig:other-5-neighbourhoods} (b) and (c),
respectively. In this case we can have $9 \approx 21$ and $11
\approx 21$ because all of them are leaves and have the same
label. Therefore, $8 \approx 20$ because the outgoing edges from
$8$ go to nodes $9$ and $11$, which are bisimilar to the target
node ($21$) of the only outgoing edge from $20$. Thus, $6 \approx
18$ because they have edges with the same labels ($c$) to nodes
that in this case are bisimilar ($7 \approx 19$ and $8 \approx
20$). Consequently, neighbourhoods (b) and (c) of Figure
\ref{fig:other-5-neighbourhoods} are in fact bisimilar. \qed
\end{example}

\begin{definition} [AxPRE Bisimilarity] \label{def:axprebisimilarity}
Let $\mathcal{A} = (\mathit{Inst},$ $Axes,$ $Label,$ $\lambda)$.
When two nodes $v_0$ and $w_0$ in $\mathcal{A}$ have bisimilar
neighbourhoods by the same AxPRE $\alpha$, that is
$\mathcal{N}_{\alpha}(v_0) \sim \mathcal{N}_{\alpha}(w_0)$, we say
that $v_0$ and $w_0$ are \emph{AxPRE bisimilar} by $\alpha$ or
$\alpha$\emph{-bisimilar}, in notation $v_0  \sim^{\alpha} w_0$.
\qed
\end{definition}

\begin{example} \label{exa:axprebisimilarity}
Consider again the neighbourhoods in Figure
\ref{fig:5-neighbourhoods}. Nodes $6$ and $18$ have non-bisimilar
$[participant].c.\mathit{fc}.ns^*$ neighbourhoods and thus $6 \not
\sim^{\alpha} 18$, where AxPRE
$\alpha=[participant].c.\mathit{fc}.ns^*$. However, if we consider
now their $[participant].c^*$ neighbourhoods, which are bisimilar,
then $6 \sim^{\alpha'} 18$ for AxPRE $\alpha'= [participant].c^*$.
\qed
\end{example}

\begin{figure*}
    \centering
        \includegraphics{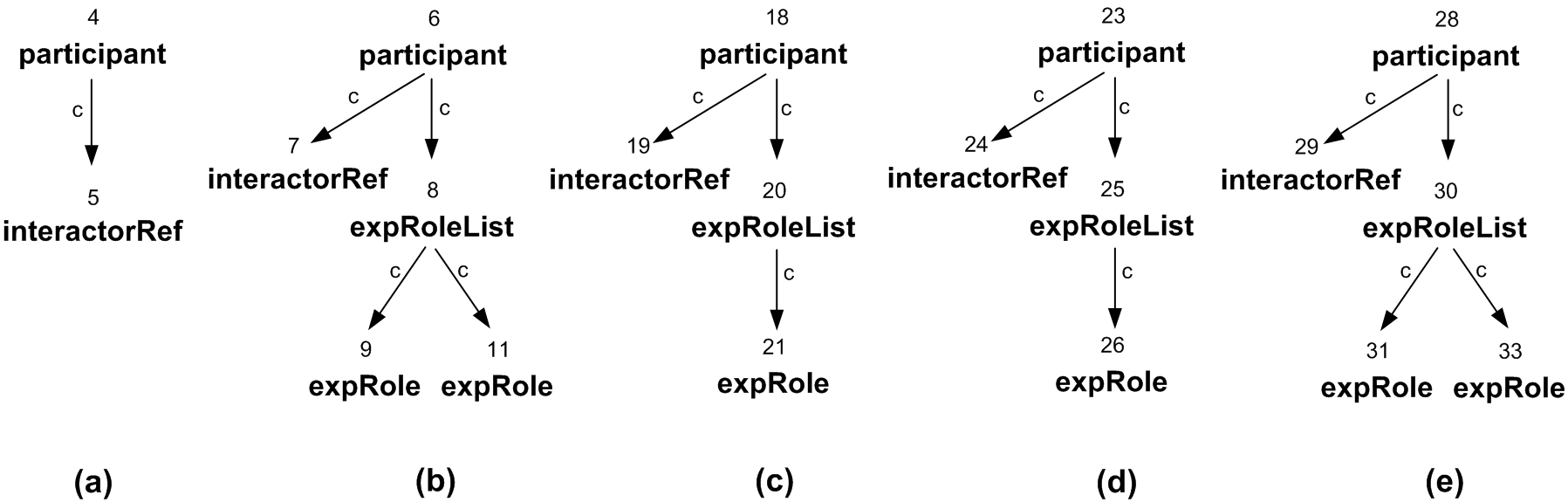}
    \caption{All $[participant].c^*$ neighbourhoods}
    \label{fig:other-5-neighbourhoods}
\end{figure*}

AxPRE bisimilarity is used for defining partitions of an axis
graph. Intuitively, a so called \emph{AxPRE partition} assigns two
nodes $v$ and $w$ in an axis graph to the same class if their
AxPRE neighbourhoods by a given $\alpha$ are bisimilar. This is
formalized by Definition \ref{def:AxPREPartition}.

\begin{definition} [AxPRE Partition] \label{def:AxPREPartition}
Let $\mathcal{A} = (\mathit{Inst},$ $Axes,$ $Label,$ $\lambda)$ be
an axis graph and $\alpha$ an AxPRE. An \emph{AxPRE partition} of
$\mathit{Inst}$ by $\alpha$, denoted $\mathcal{P}_\alpha$, is a
set of pairwise disjoint subsets of $\mathit{Inst}$ whose union is
$\mathit{Inst}$ defined as follows: two nodes $v,w \in
\mathit{Inst}$ belong to the same set $P_\alpha^i \in
\mathcal{P}_\alpha$ iff $v  \sim^{\alpha} w$.
\end{definition}

\begin{definition} [Positive Classes] \label{def:positiveClasses}
Let $\mathcal{A} = (\mathit{Inst},$ $Axes,$ $Label,$ $\lambda)$ be
an axis graph, $\alpha$ an AxPRE and $P_\alpha^\emptyset = \{ v
\in \mathit{Inst} \; | \; \mathcal{N}_{\alpha}(v) = \emptyset \}$
the set of the empty neighbourhoods in the AxPRE partition of
$\mathit{Inst}$ by $\alpha$. Then, $\mathcal{P}^+_\alpha =
\mathcal{P}_\alpha - P_\alpha^\emptyset$ is the set of
\emph{positive classes} of $\mathcal{P}_\alpha$. \qed
\end{definition}

Since all nodes that have an empty AxPRE neighbourhood belong to
the same equivalence class, $\mathcal{P}_\alpha$ and
$\mathcal{P}^+_\alpha$ differ in at most one set.

\begin{example}
Consider the AxPRE partitions by $[l_1], \ldots, [l_n]$, where
$l_1, \ldots, l_n$ are the different node labels that appear in
the axis graph, have one positive class each because each
neighbourhood represents a different node label. (Note that the
$n$ different positive classes do not overlap.) Moreover, the
union of those $n$ sets (each coming from a different partition)
also constitute a partition of $\mathit{Inst}$. In contrast, if we
take only a proper subset of $m$ node labels, $m < n$, the $m$
positive classes of the resulting AxPRE partitions do not
constitute a partition because their union does not have all nodes
in $\mathit{Inst}$. \qed
\end{example}

Given an AxPRE, the positive classes plus one additional class for
the empty neighbourhood forms a partition. If we have another
AxPRE whose positive classes fall exclusively within this empty
neighbourhood class, then these two AxPREs may be used together to
summarize an axis graph. We are interested in sets of AxPREs whose
positive classes define a partition of $\mathit{Inst}$, which is
formalized next.

\begin{definition} [Positive Partition] \label{def:positivePartition}
Let $\mathcal{A} = (\mathit{Inst},$ $Axes,$ $Label,$ $\lambda)$ be
an axis graph. A set $\mathbb{A} = \{\alpha_1, \ldots, \alpha_n
\}$ of AxPREs defines a \emph{positive partition} of
$\mathcal{A}$, denoted $\mathcal{P}_\mathbb{A}$, iff $\bigcup_i
\mathcal{P}^+_{\alpha_i}$ is a partition of $\mathit{Inst}$. \qed
\end{definition}

The intuition behind the notion of positive partition from a set
of AxPREs $\mathbb{A} = \{ \alpha_1, \ldots, \alpha_n \}$ can be
explained as follows. We know, by Definition
\ref{def:positivePartition}, that each $\alpha_i$ in $\mathbb{A}$
defines an AxPRE partition which has positive classes and a unique
empty neighbourhood class. In order for the set $\mathbb{A}$ to
define a positive partition, the empty neighbourhood class of
$\alpha_i$ has to be further partitioned by some $\alpha_j$ in
$\mathbb{A}$. In other words, when the entire set $\mathbb{A}$ is
considered, every node that belongs to the empty neighbourhood of
some $\alpha_i$ also belongs to some positive class of some
$\alpha_j$.

\begin{example}[Positive Partition]
Positive partitions play a key role in our framework. This
requires a thorough understanding of the semantics of the AxPREs,
and the partitions they define. We discuss now some particular
cases of our running example of Figure \ref{fig:PSIMI}.

Let us consider first the AxPRE $\epsilon$, which evaluated on each axis
graph node will produce as many different neighborhoods as there
are different labels in the axis graph (each neighbourhood
containing a single node). Since all nodes with bisimilar
neighbourhoods will belong to the same class, if there are $n$
different labels in the axis graph the $\epsilon$ positive
partition will contain $n$ classes (Figure \ref{fig:PSIMI-label}
shows below each SD node the sets of the partition for our running
example). The same positive partition can be obtained with the set
of expressions $\mathbb{A} = \{[l_1], \ldots, [l_n]\}$, where
$l_1, \ldots, l_n$ are all the different node labels that appear
in the axis graph. In our running example, the set of expressions
equivalent to $\epsilon$ would contain $[interaction]$,
$[participant]$, etc.

Let us consider now the AxPRE $[participant].$ The partition by
$[participant]$ is obtained as follows: for each node in the axis
graph, we compute the AxPRE neighbourhood corresponding to
$[participant]$, and all nodes with bisimilar neighbourhoods
(i.e., all nodes that are $[participant]$-bisimilar) will belong
to the same class. Thus, the partition will consist of two
classes: one containing all the nodes $v$ such that
$\lambda(v)=participant,$ which is the set $\{4,6,18,23,28\}$ (the
positive class), and the other one with the remaining nodes (the
empty neighbourhood class). On the other hand, the $[\neg
participant]$ partition will create as many classes as nodes $v$
with labels $\lambda(v) \neq participant$ exist in
$\mathit{Inst}$. In our running example, the $[\neg participant]$
partition will have nine positive classes (one per label different
from ``participant'') whereas all nodes with ``participant'' label
will belong to the empty neighbourhood class. The two AxPREs
$[participant]$ and $[\neg participant]$, when put together,
define a positive partition with ten classes (one for each label).
\qed
\end{example}

\section{Describing summaries with AxPREs}

In the previous sections, we have introduced the basic machinery
we need to define \emph{summary descriptor} (SD, for short). An SD
is defined from an axis graph and a set of AxPREs. Intuitively, an
SD consists of an axis graph in which each node has associated an
AxPRE and a set in its AxPRE partition, and whose edges represent
axis relationships between those sets.

\begin{definition}[Summary Descriptor]\label{def:SD}
Let $\mathcal{A} = (\mathit{Inst},$ $Axes,$ $Label,$ $\lambda)$ be
an axis graph of an instance. A \emph{summary descriptor} (SD for
short) of $\mathcal{A}$ is a structure $\mathcal{D}_\mathbb{A} =
(\mathbb{A}, \mathcal{G}, axpre, extent)$ that consists of:
\begin{itemize}
\item a set $\mathbb{A} = \{ \alpha_1, \ldots, \alpha_n \}$ of
AxPREs such that $\mathcal{P}_\mathbb{A}$ is a positive partition
of $\mathcal{A}$ by $\mathbb{A}$; \item an axis graph
$\mathcal{G}=(Sum, Axes^{\mathcal{D}}, Label,
\lambda^{\mathcal{D}})$, called \emph{SD graph}, representing axis
relationships between nodes in the sets (extents) of the positive
partition $\mathcal{P}_\mathbb{A}$ where:
\begin{itemize}
\item $Sum$ is a set of nodes; \item $Axes^{\mathcal{D}}$ is a set
of binary relations $\{ E_1^\mathcal{D}, \ldots, E_n^\mathcal{D}
\}$ in $Sum \times Sum$ such that there is a tuple $\langle
s_j,s_k \rangle$ in $E_i^\mathcal{D}$ iff $\exists E_i^\mathcal{A}
\in Axes,\exists v \in extent(s_j), \exists w \in extent(s_k)
\wedge \langle v,w \rangle \in E_i^\mathcal{A}$ (edges are labeled
by axis names); \item $Label$ is the set of node labels from
$\mathcal{A}$; \item $\lambda^{\mathcal{D}}$ is a function that
assigns labels in $Label$ to nodes in $Sum$.
\end{itemize}
\item a bijective function $axpre$ that assigns AxPREs from
$\mathbb{A}$ to nodes in $Sum$; \item a bijective function
$extent$ that assigns a set from the positive partition
$\mathcal{P}_\mathbb{A}$ to each node in $Sum$ (the set assigned
is called the extent of the node).
\end{itemize} \qed
\end{definition}

An SD has some particular characteristics. The set $\mathbb{A}$
uniquely defines the extents of the SD, and therefore its nodes,
for any particular axis graph instance. In other words, given an
axis graph $\mathcal{A}$ and the set $\mathbb{A}$ we can create
the SD of $\mathcal{A}$ by $\mathbb{A}$. On the other hand, not
any set of AxPREs define a positive partition and thus an SD. The
first SDs we can distinguish are those that are defined by a
unique AxPRE from those that have a multi-AxPRE definition. We
denote the former ones as \emph{homogeneous} SDs because all their
nodes are defined uniformly. Homogeneous SDs are the most common
in the summary literature (e.g., dataguides \cite{GW97}, 1-index
\cite{MS99}, ToXin \cite{RM01}, A(k)-index \cite{KSBG02},
F\&B-Index \cite{KBNK02}, Skeleton \cite{BCF+05}). SDs defined by
multiple AxPREs are called \emph{heterogeneous}.

\begin{definition}[Homogeneous and Heterogeneous SDs]
When the extents of all nodes in a SD $\mathcal{D}$ are defined
with the same AxPRE $\alpha$ (i.e., $|\mathbb{A}|=1$), we say that
the corresponding SD is \emph{homogeneous}. In this case we say
that $\mathcal{D}$ is an $\alpha$ SD. In contrast, if at least two
different nodes are defined with different AxPREs (i.e.,
$|\mathbb{A}| > 1$) we have a \emph{heterogeneous} SD. \qed
\end{definition}

\begin{proposition} \label{prop:labelSD}
Given an axis graph $\mathcal{A}$, and a set $\mathbb{A}$ of
AxPREs. If each $\alpha_i \in \mathbb{A}$ contains only AxPREs of
the form $[l],l\in Label$ different from each other, such that
there is an AxPRE for each label in $\mathcal{A},$ then
$\mathbb{A}$ defines an \emph{heterogeneous} SD. Such an SD is
denoted \emph{label SD}. \qed
\end{proposition}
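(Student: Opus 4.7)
The plan is to unfold the definitions in order: first characterize the AxPRE neighbourhood induced by a single label test $[l]$, then compute the positive class $\mathcal{P}^+_{[l]}$, next verify that the union of these positive classes across $\mathbb{A}$ is a partition of $\mathit{Inst}$, and finally appeal to the cardinality of $\mathbb{A}$ to conclude heterogeneity.

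First I would analyze $\mathcal{N}_{[l]}(v)$ for an arbitrary node $v$. The AxPRE automaton $\mathcal{M}_{[l]}$ is the two-state machine produced by the basis rule of the modified Thompson's construction with a single label transition $[l]$. Intersecting it with $\mathcal{M}_\mathcal{A}(v)$ forces the one transition $(head(v),[\lambda(v)],tail(v))$ to be matched by the $[l]$ transition of $\mathcal{M}_{[l]}$, which is possible iff $\lambda(v)=l$. By Definition~\ref{def:neighbourhood}, $\mathcal{N}_{[l]}(v)$ therefore equals the singleton subgraph $\{v\}$ when $\lambda(v)=l$ and is empty otherwise.

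Next I would compute $\mathcal{P}^+_{[l_i]}$. Any two nodes $v,w$ with $\lambda(v)=\lambda(w)=l_i$ have neighbourhoods that are isolated nodes carrying the same label, and the symmetric relation that pairs them is trivially a labeled bisimulation (Definition~\ref{def:bisimulation}); hence $v\sim^{[l_i]}w$. Conversely, if either label differs from $l_i$ the corresponding neighbourhood is empty, placing the node in $P_{[l_i]}^{\emptyset}$ rather than in $\mathcal{P}^+_{[l_i]}$. Therefore $\mathcal{P}^+_{[l_i]}$ consists of exactly one class, namely $\{v\in \mathit{Inst}\mid \lambda(v)=l_i\}$.

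Now I would show that $\bigcup_i \mathcal{P}^+_{[l_i]}$ is a partition of $\mathit{Inst}$. Since $\lambda$ assigns a unique label to each node, the classes coming from distinct $[l_i]$'s are pairwise disjoint, and because every label occurring in $\mathcal{A}$ is covered by some $\alpha_i\in\mathbb{A}$ by hypothesis, every node belongs to some class. This is exactly Definition~\ref{def:positivePartition}, so $\mathcal{P}_\mathbb{A}$ is a positive partition; the bijective functions $axpre$ and $extent$ required by Definition~\ref{def:SD} are then determined by the assignment $[l_i]\mapsto \{v\mid \lambda(v)=l_i\}$, so $\mathcal{D}_\mathbb{A}$ is a well-defined SD. Finally, whenever $\mathcal{A}$ uses more than one distinct label we have $|\mathbb{A}|=|Label|>1$, so by definition the resulting SD is heterogeneous. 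The only subtlety, and the sole ``obstacle,'' is the corner case of an axis graph with a single label; there the construction still yields an SD but it is homogeneous rather than heterogeneous, so the statement should be read under the implicit assumption that $|Label|>1$, which is the standard situation for the label SDs discussed throughout the thesis.
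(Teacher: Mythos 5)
Your proof is correct and follows essentially the same route as the paper's: show each $[l_i]$ yields a single positive class consisting of all nodes labeled $l_i$, observe that these classes are pairwise disjoint and cover $\mathit{Inst}$ because every label is represented in $\mathbb{A}$, and conclude a positive partition (hence an SD) is defined. Your extra unpacking of $\mathcal{N}_{[l]}(v)$ via the automaton intersection and your remark that heterogeneity strictly requires $|Label|>1$ (a corner case the paper's terse proof does not address) are fine additions but do not change the argument.
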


\begin{proof}
It is easy to see that if $\mathbb{A}$ contains all the labels in
$\mathcal{A}$, each AxPRE $[l]$ will create a positive class
labeled $P_l$ associated to a different SD node $s_l$ such that
all nodes in $\mathcal{A}$ with label $l$ will belong to the
extent of $s_l.$ Since $\mathbb{A}$ contains all the labels in the
document, the set $P = \bigcup_i \mathcal{P}^+_{\alpha_i}$ will be
a partition of $\mathit{Inst}$. \qed
\end{proof}

Note that we need to know the instance in advance in order to
define the set $\mathbb{A}$ accordingly. However, the label SD can
also be defined by the AxPRE $\epsilon$, which makes the label SD
\emph{homogeneous} and its definition independent of the axis
graph. The $\epsilon$ SD will produce exactly the same equivalence
classes that the set $\mathbb{A}$ of Proposition
\ref{prop:labelSD}.

\begin{example}[Summary Descriptor]
Figure \ref{fig:PSIMI-label} shows a label SD for our running
example. Since there are ten different labels in the axis graph of
the instance, there are ten summary nodes in the label SD. Nodes
in the figure are labeled by their AxPREs, so we are considering a
heterogeneous label SD in which $\mathbb{A}$ contains an AxPRE per
label. The extent of each node is depicted below it. Edges
represent summary axis relations. For instance, there is an edge
from $s_2$ to $s_{10}$ labeled $c,$ because there is a $c$ edge in
the axis graph from node 14 (in the extent of $s_2$) to node 16
(in the extent of $s_{10}$).

There are three kinds of edges in the figure, depending on
properties of the sets that participate in the axis relation:
dashed, regular, and bold. Dashed edges, like $(s_2,s_{10})$ with
label $c$, mean that some element in the extent of $s_2$ has a
child in the extent of $s_{10}$. Regular edges, like ($s_6,s_7$)
with label $\mathit{fc}$, mean that every element in the extent of
$s_6$ has a first child in the extent of $s_7$. (Since $c$
includes $\mathit{fc}$, we do not draw a $c$ edge when an
$\mathit{fc}$ edge exists.) Finally, bold edges, like $(s_4,s_5)$
with label $\mathit{fc}$, mean that every element in the extent of
$s_5$ is a first child of some element in the extent of $s_4$ and
that every element in the extent of $s_4$ has a first child in the
extent of $s_5.$ The nodes and edges in the figure constitute the
SD graph of the label SD.

Figure \ref{fig:PSIMI-axpre} shows another heterogeneous SD with a
different set $\mathbb{A}$ where $[participant]$, $[expRoleList]$
and $[expRole]$ from Figure \ref{fig:PSIMI-label} have been
replaced by $[participant].c.\mathit{fc}.ns^*$,
$[expRoleList].\mathit{fc}.ns^*$ and $[expRole].ns^*$,
respectively. \qed
\end{example}

\begin{figure}
    \centering
        \includegraphics{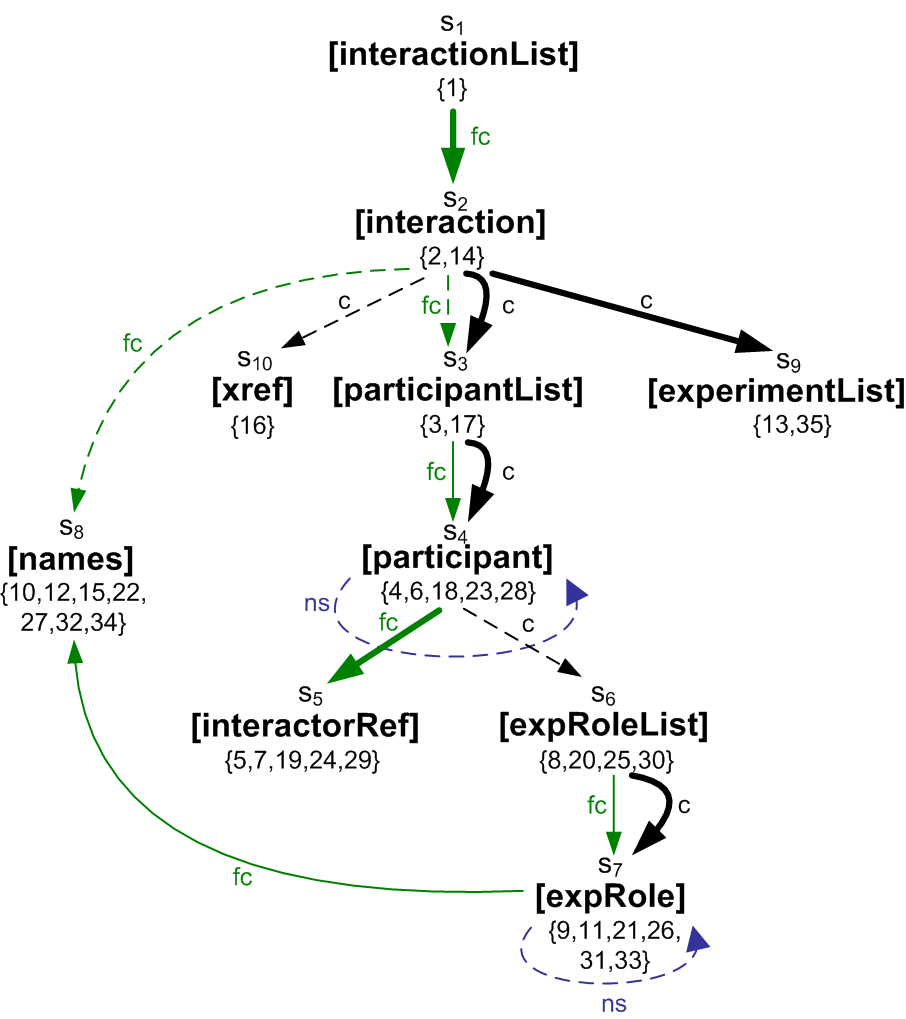}
    \caption{Label SD for the PSI-MI samples}
    \label{fig:PSIMI-label}
\end{figure}

\begin{figure}
    \centering
        \includegraphics{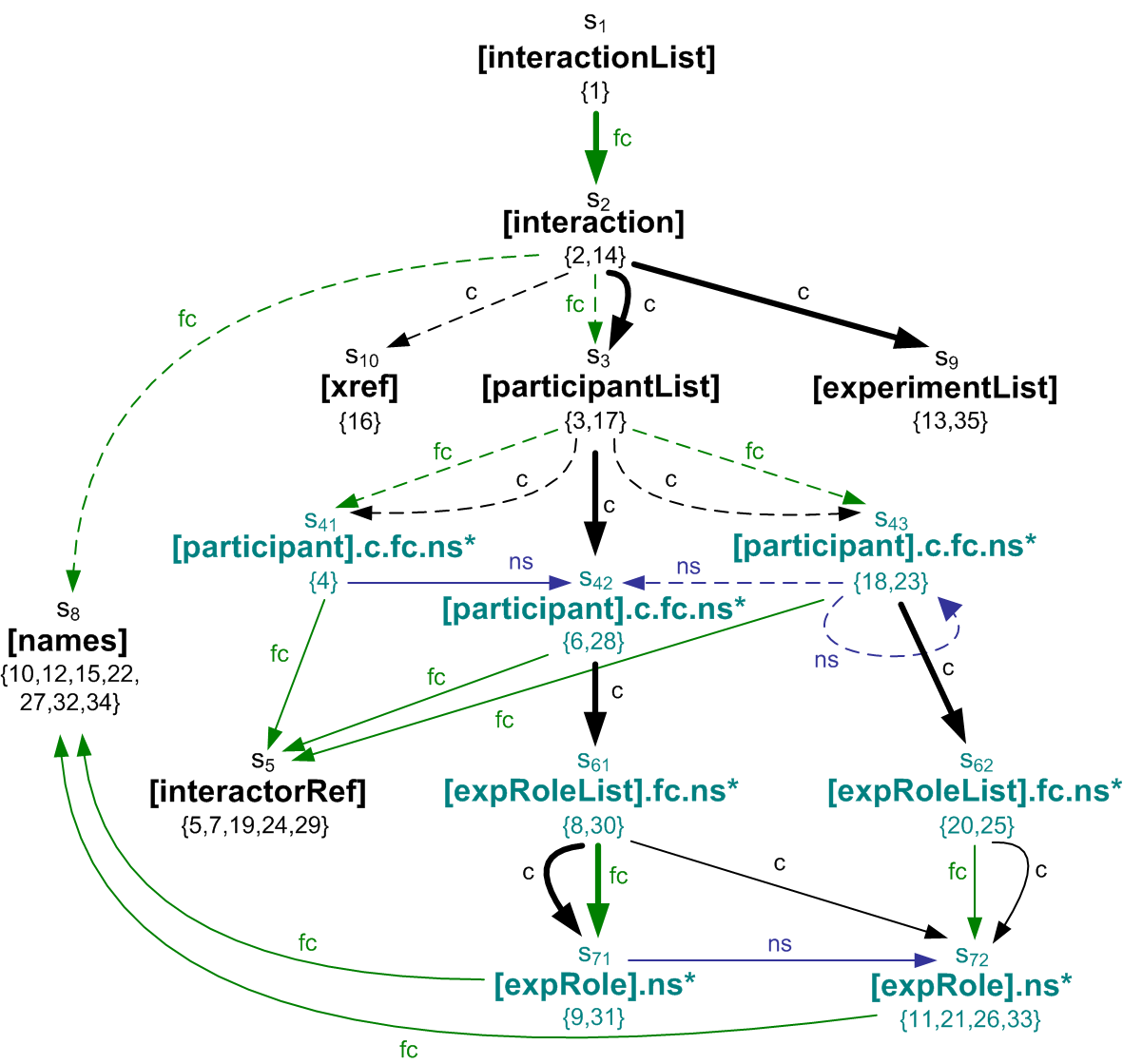}
    \caption{A refined SD for the PSI-MI samples}
    \label{fig:PSIMI-axpre}
\end{figure}

\begin{definition} [Summary Axis Stability]
\label{def:stability} Let $e = \langle s_i, s_j \rangle$ be an SD
graph edge with label $axis$. We say that $e$ is an
\emph{existential} edge if $\exists x \in extent(s_i), \exists y
\in extent(s_j) \wedge \langle x,y \rangle \in axis$, and a
\emph{forward-stable} edge if $\forall x \in extent(s_i), \exists
y \in extent(s_j) \wedge \langle x,y \rangle \in axis$. \qed
\end{definition}

Definition \ref{def:stability} captures the relationship between
edges in the SD graph and the axis graph, and generalizes to
several axes the edge stability representation in XSketch
\cite{PG06b}. Note that all forward-stable edges are also
existential. In Figures \ref{fig:PSIMI-label} and
\ref{fig:PSIMI-axpre}, existential edges are represented by dashed
lines and forward-stable edges by solid lines. A dashed line does
not necessarily mean that an edge is not forward-stable, it might
be that stability has not been checked on that edge (existential
edges in Figures \ref{fig:PSIMI-label} and \ref{fig:PSIMI-axpre}
have been checked and are not forward-stable). When an edge $e$
and its inverse are both forward-stable, $e$ is shown in bold
lines.

Algorithm~\ref{alg:createSD} computes an SD $D$ from an axis graph
$A$ and a set $X$ of AxPREs that define a positive partition of
$A$. Essentially, the algorithm creates the positive partition in
one pass over $A$ (outer loop spanning steps 2-18). Loop 3-18
computes the AxPRE neighbourhood of $v$ for each $\alpha$ in $X$
(step 5) in order to find the $\alpha$ for which the AxPRE
neighbourhood of $v$ is non-empty. Since $X$ defines a positive
partition as a precondition, then for every $v$ there is one and
only one $\alpha$ in $X$ such that $\mathcal{N}_{\alpha}(v) \neq
\emptyset$. This guarantees that condition in step 6 is true
exactly once for every $v$ in $A$.

The next task in the algorithm is to find the extent where $v$
belongs. Loop 7-11 compares by bisimulation
$\mathcal{N}_{\alpha}(v)$ with every node in $D$ that has the same
AxPRE $\alpha$. If there is a node $s$ in $D$ with $\alpha$ but
the  $\alpha$ neighbourhoods of $v$ and $s$ are not bisimilar
(step 10), then a new node $s$ is created and $v$ is added to its
extent (steps 12-16). The same happens if there is no $s$ in $D$
with $\alpha$ at all. Since each $v$ in $A$ may be in an $axis$
relationship with nodes in any extent, the final loop 17-18 checks
edge existence (for the input set of axes $Axes^{\mathcal{D}}$)
between the node $s$ such that $v \in extent(s)$ and every other
node in $D$. The result of the algorithm is an SD $D$ where each
$s$ in $D$ has associated a set in the positive partition of $A$
by $X$ and the axes in $Axes^{\mathcal{D}}$ satisfy the conditions
in Definition~\ref{def:stability}.

As shown, loop 2-18 performs $|\mathit{Inst}|$ iterations. At any
given moment, there is at most the same number of nodes in $D$ as
in $A$ (each extent having only one node) and all have the same
AxPRE. Therefore, loop 7-11 performs $|\mathit{Inst}|$ iterations
in the worst case. Each iteration computes an AxPRE bisimulation
(step 10) with time complexity $O(m . log|\mathit{Inst}|)$, where
$m$ is the total number of tuples (edges) in all axes in $Axis$.
The worst case for loop 17-18 is the same as that of loop 7-11, so
it also performs $|\mathit{Inst}|$ iterations. Thus, the total
time complexity of Algorithm~\ref{alg:createSD} is
$O(|\mathit{Inst}| . m . log|\mathit{Inst}|)$.

\vspace{.2in}

\begin{figure}
\begin{algorithm}
\label{alg:createSD}

    \mbox{}\\
    \parbox{\textwidth}{
    \noindent{\sl \underline{createSD$(A, X)$}}\\[0.5em]
    \noindent{\bf Input:} An axis graph $A$, a set $X$ of AxPREs that defines a positive partition of $A$, and a set $Axes^{\mathcal{D}}$ of SD axes where each axis contains only the empty tuple \\
    \noindent{\bf Output:} An SD $D$

    \begin{algorithmic} [1]

    \STATE create empty SD $D$
    \FOR {every $v$ in $A$}
        \STATE $candidate := \emptyset$
        \FOR {every $\alpha$ in $X$}
            \STATE compute the $\alpha$ neighbourhood of $v$: $\mathcal{N}_{\alpha}(v)$
            \IF {$\mathcal{N}_{\alpha}(v) \neq \emptyset$}
                \FOR {every node $s$ in $D$ such that $axpre(s) := \alpha$ }
                    \STATE let $w$ be a node in $extent(s)$
                    \STATE compute the $\alpha$ neighbourhood of $w$: $\mathcal{N}_{\alpha}(w)$
                    \IF {$v \sim^\alpha w$ (i.e., $\mathcal{N}_{\alpha}(v) \sim \mathcal{N}_{\alpha}(w)$) }
                        \STATE $candidate := s$
                    \ENDIF
                \ENDFOR
                \IF {$candidate = \emptyset$}
                    \STATE create a new node $candidate$ in $D$
                    \STATE $axpre(candidate) := \alpha$
                    \STATE $\lambda^D(candidate) := \lambda(v)$
                \ENDIF
                \STATE add $v$ to $extent(s)$
                \FOR {every node $s' \neq s$ in $D$}
                    \STATE add tuple $\langle s, s' \rangle$ and $\langle s', s \rangle$ to the corresponding $axis$ in $Axes^{\mathcal{D}}$ if conditions in Definition~\ref{def:stability} are satisfied
                \ENDFOR
            \ENDIF
        \ENDFOR
    \ENDFOR
    \end{algorithmic}
}

 \end{algorithm}
 \end{figure}

The notion of an AxPRE neighbourhood can also be defined for an SD
graph, and it is called \emph{summary AxPRE neighbourhood} of a
node. Since an SD Graph is in fact an axis graph
$\mathcal{G}=(Sum, Axes^{\mathcal{D}}, Label,
\lambda^{\mathcal{D}})$, for any given SD node $s$ and AxPRE
$\alpha$ we can define its SD graph automaton
$\mathcal{M}_\mathcal{G}(s)$ (Definition~\ref{def:axisAutomaton})
and intersect it with the AxPRE automaton $\mathcal{M}_{\alpha}$
(Definition~\ref{def:AxPREAutomaton}) in order to obtain an AxPRE
neighbourhood (Definition~\ref{def:neighbourhood}) of $s$.

\begin{definition} [Summary Neighbourhood] \label{def:SDneighbourhood}
Let $\mathcal{D}_\mathbb{A} = (\mathbb{A},$ $\mathcal{G},$
$axpre,$ $extent)$ be an SD, axis graph $\mathcal{G}=(Sum,$
$Axes^{\mathcal{D}},$ $Label,$ $\lambda^{\mathcal{D}})$ its SD
graph, $s$ a node in $\mathcal{G}$, $\alpha$ an AxPRE, and
$\mathcal{M}_\mathcal{G}(s) \cap \mathcal{M}_\alpha$ the
intersection automaton of $\mathcal{M}_\mathcal{G}(s)$ and
$\mathcal{M}_\alpha$. The \emph{summary neighbourhood} of $s$ by
$\alpha$, denoted $\mathcal{N}^\mathcal{G}_{\alpha}(s),$ is the
subgraph of $\mathcal{G}$ as in Definition
\ref{def:neighbourhood}. \qed
\end{definition}

\begin{figure}
    \centering
        \includegraphics{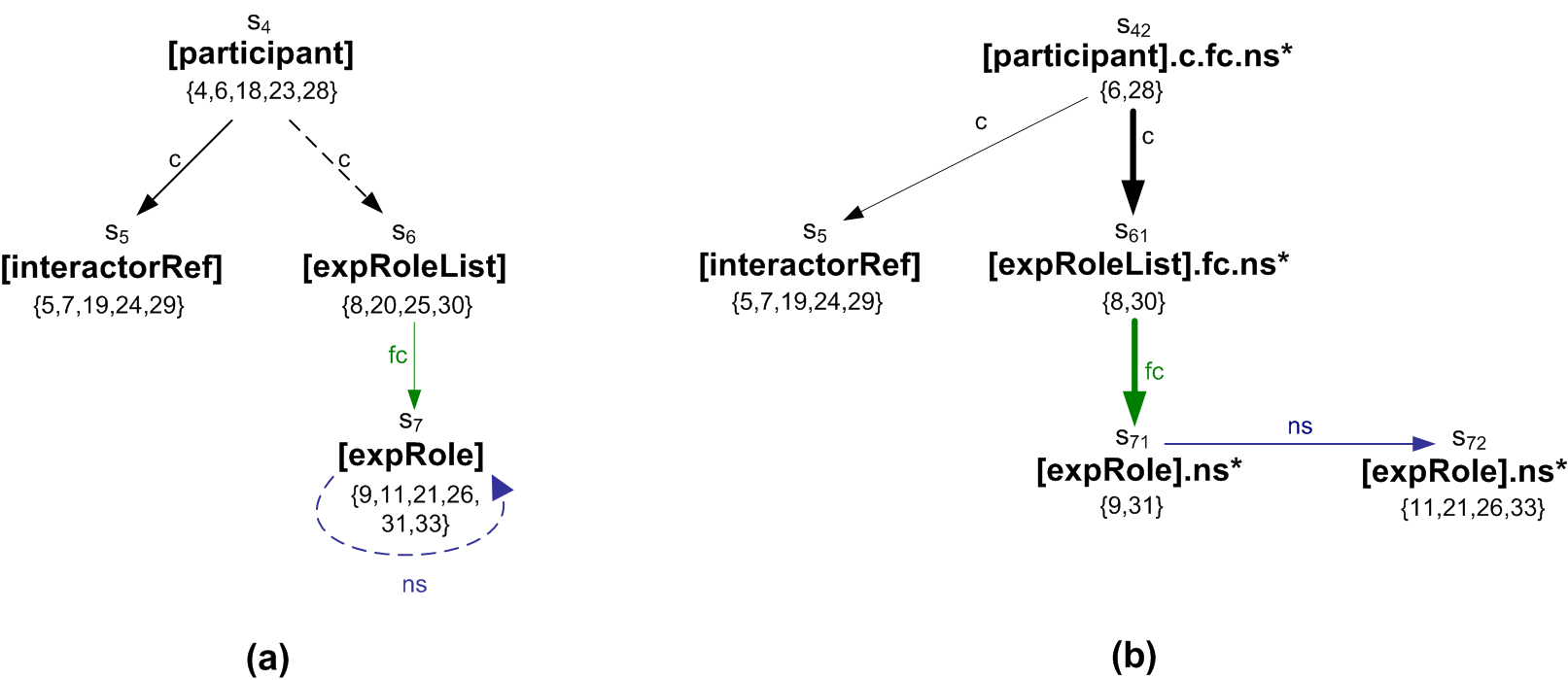}
    \caption{$[participant].c.\mathit{fc}.ns^*$ neighbourhoods of Figure \ref{fig:PSIMI-label} (a) and Figure \ref{fig:PSIMI-axpre} (b) SDs }
    \label{fig:PSIMI-SDNeighbourhoods}
\end{figure}

\begin{definition}[Partition Refinement]
\label{def:AxPREPartitionRefinement} Let $\mathcal{A} =
(\mathit{Inst},$ $Axes,$ $Label,$ $\lambda)$ be an axis graph. If
$\mathcal{P}_\mathbb{A}$ and $\mathcal{P}_\mathbb{B}$ are positive
partitions of $\mathcal{A}$, $\mathcal{P}_\mathbb{A}$ is a
\emph{partition refinement} of $\mathcal{P}_\mathbb{B}$ if every
set of $\mathcal{P}_\mathbb{A}$ is contained in a set of
$\mathcal{P}_\mathbb{B}$. \qed
\end{definition}

\begin{definition}[SD Refinement]
\label{def:SDRefinement} Let $\mathcal{A} = (\mathit{Inst},$
$Axes,$ $Label,$ $\lambda)$ be an axis graph and
$\mathcal{D}_\mathbb{A} = (\mathbb{A}, \mathcal{G}, extent)$ and
$\mathcal{D}_\mathbb{B} = (\mathbb{B}, \mathcal{G}', extent')$ be
two SDs of $\mathcal{A}$. $\mathcal{D}_\mathbb{A}$ is an \emph{SD
refinement} of $\mathcal{D}_\mathbb{B}$ if
$\mathcal{P}_\mathbb{A}$ is a partition refinement of
$\mathcal{P}_\mathbb{B}$. \qed
\end{definition}

\begin{proposition}
Let $\mathcal{A} = (\mathit{Inst},$ $Axes,$ $Label,$ $\lambda)$ be
an axis graph, $\alpha$ and $\beta$ be AxPREs, and
$\mathcal{P}_\alpha$ and $\mathcal{P}_\beta$ be AxPRE partitions
of $\mathcal{A}$. If $\alpha$ is contained in $\beta$ then
$\mathcal{P}_\beta$ is a refinement of $\mathcal{P}_\alpha$. \qed
\end{proposition}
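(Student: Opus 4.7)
The plan is to read ``$\alpha$ is contained in $\beta$'' as language containment $L(\alpha)\subseteq L(\beta)$ of the prefix-closed languages accepted by the AxPRE automata $\mathcal{M}_\alpha$ and $\mathcal{M}_\beta$ (Definition~\ref{def:AxPREAutomaton}). Under this reading, I will show that refining from $\beta$ to $\alpha$ only coarsens the partition, i.e.\ each $\beta$-equivalence class is contained in an $\alpha$-equivalence class, which by Definition~\ref{def:AxPREPartitionRefinement} is exactly the statement that $\mathcal{P}_\beta$ refines $\mathcal{P}_\alpha$.

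First I would establish the neighbourhood containment $\mathcal{N}_\alpha(v)\subseteq \mathcal{N}_\beta(v)$ for every node $v$ in $\mathcal{A}$. By Definition~\ref{def:neighbourhood}, a node (resp.\ edge) lies in $\mathcal{N}_\alpha(v)$ precisely when it is produced by a path in the intersection automaton $\mathcal{M}_\mathcal{A}(v)\cap\mathcal{M}_\alpha$ ending in a final state; such a path corresponds to an axis label path from $v$ whose axis sequence lies in $L(\alpha)$. Since $L(\alpha)\subseteq L(\beta)$, every such path is also accepted by $\mathcal{M}_\beta$ and therefore contributes the same node/edge to $\mathcal{N}_\beta(v)$.

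The main step---and the place where the argument takes the most care---is to show that $v\sim^\beta w$ implies $v\sim^\alpha w$. Given a labeled bisimulation $R$ between $\mathcal{N}_\beta(v)$ and $\mathcal{N}_\beta(w)$ with $v\,R\,w$, I would consider its restriction $R':=R\cap(\mathcal{N}_\alpha(v)\times\mathcal{N}_\alpha(w))$ and verify the conditions of Definition~\ref{def:bisimulation}. Equi-labelling is inherited from $R$. For the forth condition, suppose $v'\,R'\,w'$ and $\langle v',v''\rangle\in E_i^\alpha$, i.e.\ this edge lies in $\mathcal{N}_\alpha(v)$. By the neighbourhood containment above it also lies in $\mathcal{N}_\beta(v)$, so by $R$ there is $w''\in\mathcal{N}_\beta(w)$ with $\langle w',w''\rangle\in E_i^\beta$ and $v''\,R\,w''$. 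The delicate point is that $w''$ actually lives inside $\mathcal{N}_\alpha(w)$: the axis label path from $w$ through $w'$ to $w''$ uses the \emph{same} sequence of axes as the axis label path from $v$ through $v'$ to $v''$ (because $R$ matches edges axis-by-axis), and the latter sequence belongs to $L(\alpha)$ since $v''\in\mathcal{N}_\alpha(v)$ and $L(\alpha)$ is prefix-closed. Hence $w''\in\mathcal{N}_\alpha(w)$ and $v''\,R'\,w''$, as required. The back condition is symmetric.

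Finally, since $v\,R'\,w$ exhibits a labeled bisimulation between $\mathcal{N}_\alpha(v)$ and $\mathcal{N}_\alpha(w)$, we obtain $v\sim^\alpha w$ (Definitions~\ref{def:bisimulation} and~\ref{def:axprebisimilarity}). Therefore the $\beta$-equivalence class of any node is contained in its $\alpha$-equivalence class, so every set of $\mathcal{P}_\beta$ sits inside a set of $\mathcal{P}_\alpha$, which is the definition of $\mathcal{P}_\beta$ being a partition refinement of $\mathcal{P}_\alpha$. I expect the only real obstacle to be the bookkeeping in the forth/back argument above, specifically the observation that the matched endpoint $w''$ remains inside $\mathcal{N}_\alpha(w)$; once this is noted, restriction of $R$ yields the required bisimulation essentially for free.
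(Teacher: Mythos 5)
Your proof is correct and follows essentially the same route as the paper's own argument, which is given only as a sketch: containment of AxPREs yields containment of neighbourhoods, and hence $\beta$-bisimilarity implies $\alpha$-bisimilarity, so each class of $\mathcal{P}_\beta$ sits inside a class of $\mathcal{P}_\alpha$. The only difference is that you work out the step the paper glosses over — verifying that the restriction of a $\beta$-bisimulation to the $\alpha$-neighbourhoods is again a bisimulation, using prefix-closure of $L(\alpha)$ to keep the matched endpoints inside $\mathcal{N}_\alpha(w)$ — which is a welcome elaboration rather than a departure.
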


\begin{proof} (sketch)
The proof follows from the notion of AxPRE neighbourhoods. If
$\alpha$ is contained in $\beta$ then for any given node $v$, its
$\alpha$ neighbourhood is contained in its $\beta$ neighbourhood.
Consequently, two nodes that are not distinguished by $\alpha$
(i.e., they are $\alpha$-bisimilar) may be distinguished by
$\beta$, but not the other way around. This guarantees that
$\beta$ creates either the same partition as $\alpha$ or a
refinement. \qed
\end{proof}

\begin{corollary}
Let $\mathcal{A} = (\mathit{Inst},$ $Axes,$ $Label,$ $\lambda)$ be
an axis graph and $\mathcal{D}_\mathbb{A} = (\mathbb{A},
\mathcal{G}, extent)$ and $\mathcal{D}_\mathbb{B} = (\mathbb{B},
G', extent')$ be two SDs of $\mathcal{A}$. If every $\beta \in
\mathbb{B}$ is contained in some $\alpha \in \mathbb{A}$ then
$\mathcal{D}_\mathbb{A}$ is an \emph{SD refinement} of
$\mathcal{D}_\mathbb{B}$. \qed
\end{corollary}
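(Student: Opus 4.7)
The plan is to lift the preceding proposition from single AxPREs to the whole sets $\mathbb{A}$ and $\mathbb{B}$, then invoke the definitions of partition and SD refinement. Recall that $\mathcal{P}_\mathbb{A}= \bigcup_i \mathcal{P}^+_{\alpha_i}$ and $\mathcal{P}_\mathbb{B}= \bigcup_j \mathcal{P}^+_{\beta_j}$, so it suffices to show that every positive class arising from some $\alpha_i \in \mathbb{A}$ is contained in a positive class arising from some $\beta_j \in \mathbb{B}$.

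First, I would pick an arbitrary $P \in \mathcal{P}_\mathbb{A}$. By Definition~\ref{def:positivePartition}, $P$ lies in $\mathcal{P}^+_{\alpha}$ for some $\alpha \in \mathbb{A}$, hence $P$ is a class of the (full) AxPRE partition $\mathcal{P}_\alpha$. By hypothesis there is some $\beta \in \mathbb{B}$ with $\beta$ contained in $\alpha$ (in the sense used in the proposition, i.e., language containment of the AxPREs). Applying the preceding proposition with the roles $\alpha \leftrightarrow \beta$ appropriately identified, $\mathcal{P}_\alpha$ is a partition refinement of $\mathcal{P}_\beta$, so $P$ is contained in some class $Q$ of $\mathcal{P}_\beta$.

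Next, I would argue that $Q$ is in fact a \emph{positive} class of $\beta$, so that $Q \in \mathcal{P}^+_\beta \subseteq \mathcal{P}_\mathbb{B}$. This follows from the neighbourhood-containment observation in the proposition's proof: if $v \in P$, then $\mathcal{N}_\alpha(v)\neq\emptyset$ (since $P$ is a positive class of $\alpha$), and containment of $\beta$ in $\alpha$ makes $\mathcal{N}_\beta(v)$ a supergraph of $\mathcal{N}_\alpha(v)$, hence also non-empty. Thus $v \in Q$ witnesses that $Q$ is a positive class, so $Q \in \mathcal{P}_\mathbb{B}$, and we have $P \subseteq Q$. Since $P$ was arbitrary, Definition~\ref{def:AxPREPartitionRefinement} gives $\mathcal{P}_\mathbb{A}$ as a partition refinement of $\mathcal{P}_\mathbb{B}$, and Definition~\ref{def:SDRefinement} then yields that $\mathcal{D}_\mathbb{A}$ is an SD refinement of $\mathcal{D}_\mathbb{B}$.

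The only delicate point I expect is making sure the argument is coherent across a \emph{heterogeneous} set of AxPREs: the same $P$ comes from exactly one $\alpha \in \mathbb{A}$ by the positive-partition hypothesis on $\mathbb{A}$, and the $\beta \in \mathbb{B}$ guaranteed by the corollary's hypothesis need not be unique, but any such $\beta$ suffices because the preceding proposition already handles the single-AxPRE case. So the essential content is contained in the proposition, and the corollary is just a pointwise reassembly together with the observation that non-empty neighbourhoods remain non-empty under AxPRE containment.
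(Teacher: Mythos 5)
Your argument starts from the wrong side and, in doing so, uses the corollary's hypothesis with the quantifiers reversed. The hypothesis says that every $\beta \in \mathbb{B}$ is contained in some $\alpha \in \mathbb{A}$; it does not say that every $\alpha \in \mathbb{A}$ contains some $\beta \in \mathbb{B}$. So after you pick an arbitrary class $P \in \mathcal{P}^+_{\alpha}$ for some $\alpha \in \mathbb{A}$, the step ``by hypothesis there is some $\beta \in \mathbb{B}$ with $\beta$ contained in $\alpha$'' is unjustified: $\mathbb{A}$ may contain AxPREs that contain no member of $\mathbb{B}$ at all (the hypothesis only constrains each $\beta$, and all of them may be absorbed by one and the same $\alpha$), and then your argument says nothing about the classes contributed by the remaining AxPREs of $\mathbb{A}$. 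There is a second, related error: you claim that containment of $\beta$ in $\alpha$ makes $\mathcal{N}_{\beta}(v)$ a supergraph of $\mathcal{N}_{\alpha}(v)$. The proposition's proof states the opposite direction --- if $\beta$ is contained in $\alpha$ then $\mathcal{N}_{\beta}(v) \subseteq \mathcal{N}_{\alpha}(v)$ --- so non-emptiness of $\mathcal{N}_{\alpha}(v)$ does not yield non-emptiness of $\mathcal{N}_{\beta}(v)$, and the class $Q$ of $\mathcal{P}_{\beta}$ containing $P$ could be the empty-neighbourhood class of $\beta$, which is not a class of $\mathcal{P}_\mathbb{B}$ at all.

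The statement does follow from the proposition, but the argument must be run from the $\mathbb{B}$ side, node by node, which is how the stated hypothesis is meant to be consumed. Fix a node $v$. Since $\mathcal{P}_\mathbb{B}$ is a positive partition, $v$ lies in a positive class $Q$ of some $\beta \in \mathbb{B}$, so $\mathcal{N}_{\beta}(v) \neq \emptyset$. By the hypothesis (used in its stated direction) there is $\alpha \in \mathbb{A}$ with $\beta$ contained in $\alpha$; then $\mathcal{N}_{\beta}(v) \subseteq \mathcal{N}_{\alpha}(v)$, hence $\mathcal{N}_{\alpha}(v) \neq \emptyset$ and $v$ lies in a positive class $P$ of $\alpha$, which --- because $\mathcal{P}_\mathbb{A}$ is a partition --- is the unique class of $v$ in $\mathcal{P}_\mathbb{A}$. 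The proposition now gives that $\mathcal{P}_{\alpha}$ refines $\mathcal{P}_{\beta}$, so $P \subseteq Q$. Since every class of $\mathcal{P}_\mathbb{A}$ is non-empty and therefore arises as such a $P$ for some node $v$, every class of $\mathcal{P}_\mathbb{A}$ is contained in a class of $\mathcal{P}_\mathbb{B}$, which is exactly partition refinement and hence SD refinement.
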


\vspace{.1in}

\begin{example} [SD Refinement] \label{ex:refinement}
Let us consider the label SD of Figure \ref{fig:PSIMI-axpre}.
Recall that in the label SD, $\mathbb{A}= \{[l_1],...,[l_n]\},$
where $l_i \in Label,$ $l_i \neq l_j~\forall~i,j,$ and
$\bigcup_i~{l_i}=Label.$ Suppose we want to refine node $s_4.$ For
this node, the partition represented in the figure was produced by
the AxPRE $[participant].$  If we replace this AxPRE by
$[participant].c$ in $\mathbb{A},$ and apply this set of  AxPREs
to $Inst$, two nodes will be produced, let us call these nodes
$s_{41}$ and  $s_{42},$ with extents $\{4\}$ and $\{6,18,23,28\}$,
respectively ($s_{4}$ will not appear because the AxPRE which
produced it was replaced by the new one). This occurs because node
4 in the axis graph has one child (namely interactorRef) while the
other four nodes have two children each (interactorRef and
expRoleList). Thus, applying $[participant]c$ we obtain two
different AxPRE neighbourhoods, plus the empty neighbourhood,
which is itself partitioned by the remaining AxPREs.

Analogously, if we want to refine the extent of $s_{42}$ further
using the AxPRE $c.ns,$  we will replace the AxPRE
$[participant].c$ by $[participant].c.c.ns.$ This will produce
three sets, with extents: $\{4\},$ $\{6,28\},$ $\{18,23\}.$

Finally, suppose now that the label SD is defined using
$\mathbb{A}= \epsilon,$ and we want to refine node $s_4$ with
$[participant].c.$ In this case, just adding the new AxPRE does
not suffice, because we would not obtain an SD: the union of
positive partitions will not be a partition of $Inst$ because
$\epsilon$ will still produce its own partitions. We solve this
adding the AxPRE $[\neg participant],$ which will produce the
remainder of the label SD and will send all nodes labeled
$participant$ to the empty neighbourhood class. \qed
\end{example}

\vspace{.1in}

The notions of partition and SD refinement, besides describing the
axis structure of an axis graph, allows us to define a
\emph{hierarchy} of SDs. This provides the basis for recognizing a
lattice among different SDs, where each node corresponds to a
different AxPRE definition. We will show that this lattice covers
all the summaries addressed in the literature, plus more complex
new ones. At the top of this hierarchy (i.e., the coarsest
partition), the empty AxPRE defines a SD where each node is
partitioned by label (as shown in Figure \ref{fig:prodlattice}), a
typical summary found in the literature \cite{CM94,NUWC97}. The
bottom of the lattice may vary, although the finest partition
granularity can be represented  by the expression
$(\mathit{fc}.ns^*)^*,$ that produces a partition in which each
node in the axis graph will belong to a different equivalence
class.

\begin{definition}[\dx\ Lattice] \label{def:lattice}
A \dx\ lattice with respect to a set of axes
$A=\{a_1,\ldots,a_n\}$ is defined as follows: each node
corresponds to an AxPRE generated by the grammar of Definition
\ref{def:axpre} when the terminal axis is one of $a_1,...,a_n.$
Also, there is an edge $(n_1,n_2)$ in the lattice if and only if
the AxPRE of $n_2$ is contained in the AxPRE of $n_1.$ \qed
\end{definition}

From Definition \ref{def:lattice} it follows that the coarsest
partition that the lattice may define is the label SD. The finest
partition depends on the chosen set of axes.

\vspace{.2in}

This chapter provided an overview of the \dx\ framework, including
the AxPRE language and some fundamental notions like
neighbourhood, bisimilarity, and summary descriptor (SD). In the
next chapter we will discuss how the \dx\ lattice captures and
generalizes many proposals in the literature.

\chapter{Capturing earlier literature proposals with \dx}
\label{sec:lattice}

\dx\ summaries can be classified in a lattice that describes a
\emph{refinement} relationship between entire summaries
(Definition \ref{def:lattice}). In this chapter we revisit some of
the related work discussed in Chapter \ref{sec:related} that can
be captured in such a lattice by the \dx\ framework.

Figure \ref{fig:prodlattice} shows a fragment of a \dx\ summary
lattice that captures earlier proposals based on the notion of
bisimilarity (in green) and ad-hoc constructions (in red). Each
node in the figure corresponds to a homogeneous SD defined by an
AxPRE. \dx\ not only captures most summary proposals but also
provides a declarative way of defining entirely new ones: nodes
and edges in blue are a sample of the richer SDs that were never
considered in the literature, like the one that appears in Figure
\ref{fig:PSIMI-axpre} ($c.\mathit{fc}.ns^*$) and in Chapter
\ref{Section:Experiments} ($p^*|c.\mathit{fs}$).

\begin{figure}
    \centering
    \includegraphics{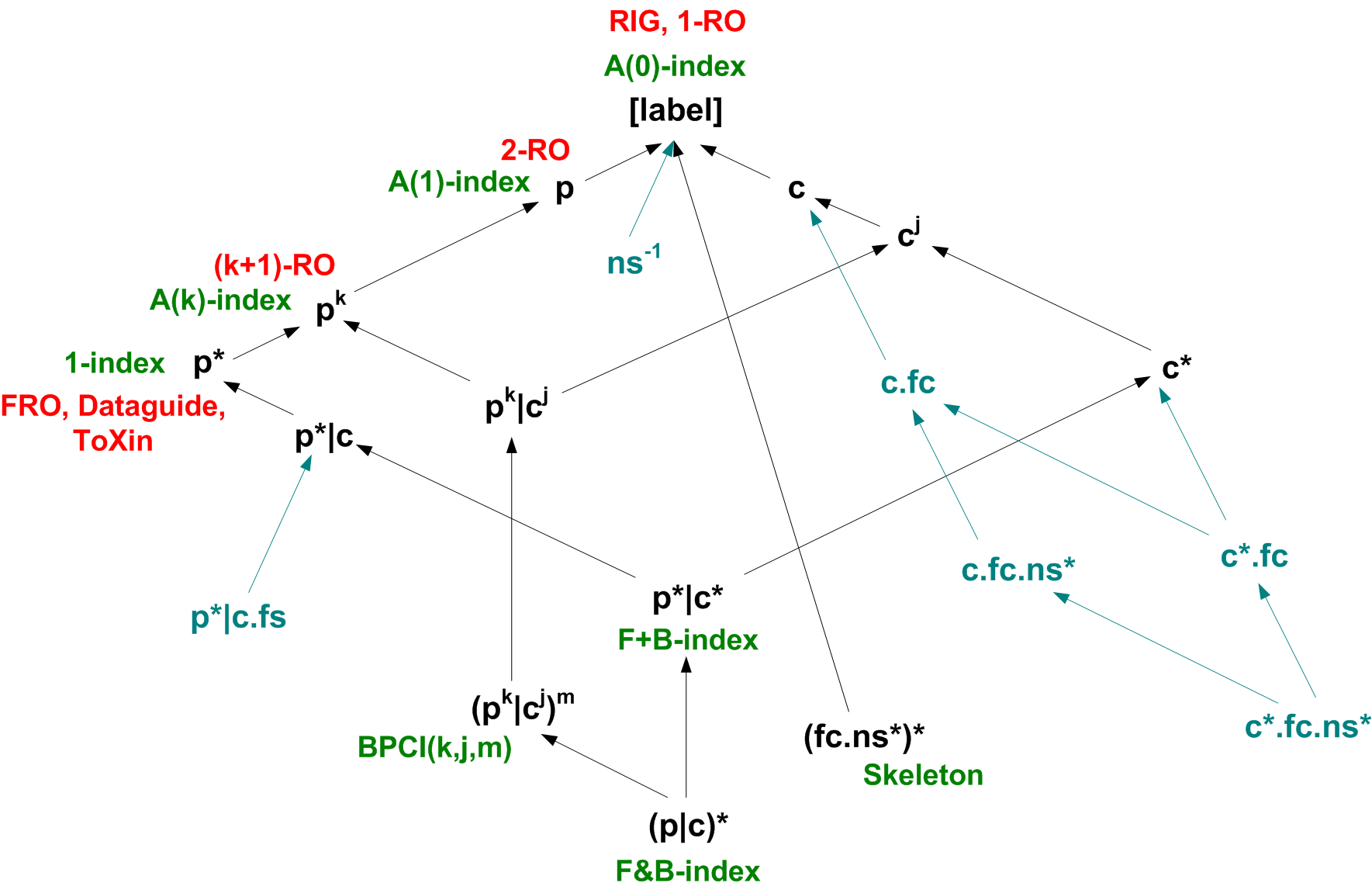}
    \caption{AxPRE summary lattice capturing earlier homogeneous proposals}
     \label{fig:prodlattice}
\end{figure}

\section{Bisimilarity-based proposals}

The earliest bisimilarity-based summary proposal is the family
presented in \cite{MS99}, which contains a $p^*$ summary: the
1-index. The 1-index partition is computed by using
\emph{bisimulation} as the equivalence relation. The F\&B-Index
\cite{KBNK02}, is an example of a $(p|c)^*$ SD. The F\&B-Index
construction uses bisimulation like the 1-index, but applied to
the edges and their inverses in a recursive procedure until a
fix-point. With this construction, the F\&B-Index's equivalence
classes are computed according to the incoming and outgoing label
paths of the nodes. The same work introduces the F+B-index (a
$p^*|c^*$ AxPRE summary constructed by applying bisimulation to
the edges and their inverses only once) and the BPCI(k,j,m) index
(a $(p^k|c^j)^m$ AxPRE summary, where $k$, and $j$ controls the
lengths of the paths  and $m$ the iterations of the bisimulation
on the edges and their inverses). The F+B-index and the F\&B-index
are BPCI($\infty, \infty, 1$) and BPCI($\infty, \infty, \infty$)
respectively. The A(k)-index \cite{KSBG02} is a $p^k$ AxPRE
summary based on $k$-bisimilarity (bisimilarity computed for paths
of length $k$). Thus, the A(0)-index is a label SD, the A(1)-index
is a $p$ SD, the A(2)-index is a $p.p$ SD, and the A(h)-index is
the $p^h$ SD. We discuss some of these proposals in more detail.

Unlike standard definitions in the bisimulation literature
\cite{PT87,DPP04}, 1-index, A(k)-index, F\&B-index, and
BPCI(k,j,m) use a bisimulation defined backwards in order to
capture incoming paths to the nodes. We provide next a definition
of backwards bisimulation and bisimilarity for completeness. In
the literature, the only axes considered are $c$ and
$\mathit{idref}$.

\vspace{.2in}

\begin{definition}[Backwards Bisimulation and Bisimilarity] \label{def:backwardsbisimulation}
Let $\mathcal{G}_1$ and $\mathcal{G}_2$ be two rooted subgraphs of
an axis graph $\mathcal{A} = (\mathit{Inst},$ $Axes,$ $Label,$
$\lambda)$, such that $Axes_{\mathcal{G}_1} \subseteq Axes$ and
$Axes_{\mathcal{G}_2} \subseteq Axes$, and let $r_1,r_2 \in
\mathit{Inst}$ be the roots of $\mathcal{G}_1$ and $\mathcal{G}_2$
respectively. A \emph{backwards bisimulation} between
$\mathcal{G}_1$ and $\mathcal{G}_2$ is a symmetric relation
$\approx_b$ such that for all $v \in \mathcal{G}_1$, $w \in
\mathcal{G}_2$, $E_i^{\mathcal{G}_1} \in Axes_{\mathcal{G}_1}$,
and $E_i^{\mathcal{G}_2} \in Axes_{\mathcal{G}_2}$: if $v
\approx_b w,$ then $\lambda(v)=\lambda(w)$; if $v \approx_b w,$
and $\langle v', v \rangle \in E_i^{\mathcal{G}_1}$, then $\langle
w', w \rangle \in E_i^{\mathcal{G}_2}$ and $v' \approx_b w'$. Two
nodes $v \in \mathcal{G}_1$, $w \in \mathcal{G}_2$ are
\emph{backward bisimilar}, in notation $v \sim_b w$, iff there
exist a backwards bisimulation $\approx_b$ between $\mathcal{G}_1$
and $\mathcal{G}_2$ such that $v \approx_b w$. \qed
\end{definition}

It is easy to see that the backwards bisimulation is an
equivalence relation. The F\&B-Index construction uses backwards
bisimulation like the 1-index, but applied to $c$ and
$\mathit{idref}$ edges and their inverses. Algorithm
\ref{alg:fbconstruction} computes the equivalence classes for the
F\&B-Index according to both incoming and outgoing label paths of
the nodes.

\vspace{.2in}

\begin{figure}
\begin{algorithm}
\label{alg:fbconstruction}

    \mbox{}\\
    \parbox{\textwidth}{
    \noindent{\sl \underline{F\&B$-construction(G)$}}\\[0.5em]
    \noindent{\bf Input:} Data graph $G$\\
    \noindent{\bf Output:} F\&B-index $I$

    \begin{algorithmic} [1]

    \STATE let $\mathcal{P}$ be a partition of the nodes in $G$
    \STATE $\mathcal{P}\leftarrow$ label SD partition of $G$
    \REPEAT{}
        \STATE reverse all edges in $G$
        \STATE $\mathcal{P}\leftarrow$ compute the \emph{backwards bisimilarity partition} of $G$
        initializing the computation with $\mathcal{P}$
        \STATE reverse all edges in $G$, obtaining the original $G$
        \STATE $\mathcal{P}\leftarrow$ compute the \emph{backwards bisimilarity partition} of $G$
        initializing the computation with $\mathcal{P}$
    \UNTIL{$\mathcal{P}$ does not change (fix point)}
    \FOR {each equivalence class $P_i \in \mathcal{P}$}
        \STATE create an index node $s \in I$
        \STATE $extent(s)\leftarrow P_i$
    \ENDFOR
    \FOR {each edge from $v$ to $w$ in $G$}
        \STATE let $s \in I$ be an index node such that $v \in
        extent(s)$
        \STATE let $s' \in I$ be an index node such that $w \in
        extent(s)$
        \IF {there is no edge from $s$ to $s'$}
            \STATE create an edge from $s$ to $s'$
        \ENDIF
    \ENDFOR
    \end{algorithmic}
}
 \end{algorithm}
 \end{figure}

\begin{proposition} \label{pro:fbindexEquivalence}
Let $G$ be an axis graph with $Axes=\{c\}$ (or $Axes=\{c,
\mathit{idref} \}$). The F\&B-index of $G$ is a $(p|c)^*$ SD (or a
$(p| c | \mathit{idref} | \mathit{idref}^{-1})^*$ SD). \qed
\end{proposition}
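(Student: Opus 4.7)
The plan is to show that the F\&B-index partition coincides with the $\alpha$-bisimilarity partition for $\alpha=(p|c)^*$ (and analogously with $\mathit{idref}$), where $\alpha$-bisimilarity is understood in the sense of Definition \ref{def:bisimulation} applied to AxPRE neighbourhoods. Since the extents of an SD are determined by AxPRE bisimilarity (Definitions \ref{def:AxPREPartition} and \ref{def:SD}), it suffices to prove that two nodes $v,w$ in $G$ are placed in the same class by Algorithm \ref{alg:fbconstruction} iff $\mathcal{N}_{(p|c)^*}(v) \sim \mathcal{N}_{(p|c)^*}(w)$.

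First I would characterize $\mathcal{N}_{(p|c)^*}(v)$. Using the construction of Definition \ref{def:neighbourhood}, the intersection of $\mathcal{M}_\mathcal{A}(v)$ with $\mathcal{M}_{(p|c)^*}$ retains every edge reachable from $v$ along an arbitrary alternation of $p$ and $c$ transitions. Because $p=c^{-1}$, this subgraph contains every node reachable from $v$ in the undirected version of $G$, with both $c$ and its inverse preserved as labelled edges. Consequently, a labelled bisimulation between $\mathcal{N}_{(p|c)^*}(v)$ and $\mathcal{N}_{(p|c)^*}(w)$ is precisely a symmetric relation that respects labels and matches both $c$- and $p$-successors, i.e., it is simultaneously a forward and backward bisimulation on the $c$-labelled graph $G$.

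Next, I would relate this to the fix-point produced by Algorithm \ref{alg:fbconstruction}. Let $\mathcal{P}^*$ denote the terminal partition of the algorithm. The loop halts exactly when a further backwards-bisimulation step on $G$ and on the edge-reversal of $G$ leaves $\mathcal{P}^*$ unchanged, meaning $\mathcal{P}^*$ is simultaneously stable under $c$-in-edges (backwards bisimulation on $G$) and under $c$-out-edges (backwards bisimulation on the reversed $G$, equivalently $p$-in-edges). Equivalently, $\mathcal{P}^*$ is the coarsest label-respecting partition that is closed under both directions of $c$; this is well known to be the largest forward-and-backward bisimulation on $G$. By the characterization of the previous paragraph, this is exactly $(p|c)^*$-bisimilarity, so the partitions agree class by class, and the induced edges are determined by existence of an axis between extents (Definition \ref{def:SD}) in both cases. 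The $\mathit{idref}$ case is handled by replacing $\{c,p\}$ with $\{c,p,\mathit{idref},\mathit{idref}^{-1}\}$ throughout; every step of the argument is uniform in the choice of axis set.

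The main obstacle will be the last implication in the previous paragraph: carefully checking that the alternating backwards-bisimulation fix-point of Algorithm \ref{alg:fbconstruction} really coincides with labelled forward bisimulation on the doubly-labelled neighbourhood $\mathcal{N}_{(p|c)^*}(v)$, rather than some strictly coarser or finer refinement. I would prove this by mutual refinement: (i) show by induction on iterations of the repeat-loop that if $v\sim^{(p|c)^*} w$ then $v$ and $w$ are never separated, which gives one inclusion; (ii) show that at the fix-point the equivalence relation satisfies both conditions of Definition \ref{def:bisimulation} for the $(p|c)^*$ neighbourhoods (label equality by the initialization in step 2, and the forth-condition on $c$- and $p$-edges by stability under both loop iterations), so it is itself a labelled bisimulation and is therefore contained in $\sim^{(p|c)^*}$. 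Together with Proposition \ref{prop:labelSD}-style uniqueness of the resulting SD graph, this gives the desired equality.
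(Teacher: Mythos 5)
Your proposal is correct, but it takes a genuinely different route from the paper. The paper's proof stays inside the AxPRE-refinement machinery: it reads lines 4--5 of Algorithm~\ref{alg:fbconstruction} as a refinement of the initial $\epsilon$ SD by $c^*$, lines 6--7 as a further refinement by $p^*$, represents the repeat-until-fixpoint loop as a Kleene closure yielding a $(c^*.p^*)^*$ SD, and then concludes by the regular-expression identity $(c^*.p^*)^* \equiv (p|c)^*$ (with the $\mathit{idref}$ variant obtained by substituting $(c|\mathit{idref})^*$ and $(p|\mathit{idref}^{-1})^*$). You instead argue semantically: you characterize $\mathcal{N}_{(p|c)^*}(v)$ as the reachable component carrying both $c$- and $p$-labelled edges, observe that labelled bisimulation of such neighbourhoods is exactly simultaneous forward-and-backward bisimulation on the $c$-graph, and then identify the algorithm's terminal partition with the largest such bisimulation by a two-inclusion fixpoint argument (induction over iterations for one direction, stability of the fixpoint under both passes for the other), finishing with the existential-edge condition of Definition~\ref{def:stability}. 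What each buys: the paper's argument is short and reuses the lattice/refinement vocabulary, but it leaves implicit both the step from ``refine by $c^*$ then $p^*$'' to ``the $c^*.p^*$ SD'' and the transfer of regex equivalence to partition equality; your argument is longer but makes precisely the point the paper glosses --- that the alternating backwards-bisimulation fixpoint coincides with $(p|c)^*$-bisimilarity --- and your mutual-refinement strategy (F\&B-bisimilar nodes are never separated because their classes form a stable refinement of the current partition; the fixpoint relation is itself a labelled bisimulation on the neighbourhoods) is the standard and sound way to close that gap, and it extends verbatim to the $\{c,\mathit{idref}\}$ case.
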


\begin{proof}
The input data graph used in the F\&B-index construction
(Algorithm \ref{alg:fbconstruction}) can be viewed as an axis
graph with the $c$ axis, in which the reversed edges of lines $4$
and $6$ correspond to the $c^{-1}$ axis (equivalent to a $p$
axis). Therefore, for simplicity, instead of reversing edges we
use an axis graph $G$ with $Axes=\{c\}$ and take its inverse when
necessary. If id-idrefs are considered, then $Axes=\{c,
\mathit{idref}\}$.

Let us consider first the case of $Axes=\{c\}$. We start with the
label SD in Line $2$, which is an $\epsilon$ SD. Lines $4$ and $5$
are equivalent to refining all nodes in the initial $\epsilon$ SD
by the $c^*$ AxPRE. This produces a $c^*$ SD. Then, lines $6$ and
$7$ produce a refinement of all $c^*$ nodes by the $p^*$ AxPRE,
thus obtaining a $c^*.p^*$ SD. The iterative process until the fix
point can be represented in our framework as a Kleene closure of
the $c^*.p^*$ AxPRE, which yields a $(c^*.p^*)^*$ SD. It is easy
to see that AxPRE $(c^*.p^*)^*$ produces the same SD as $(p |
c)^*$ (by identity of regular expressions). The remainder of the
algorithm (lines $9$-$16$) creates existential edges like in
Definition \ref{def:stability}.

When $Axes=\{c, \mathit{idref} \}$, the argument is similar but
with AxPREs $(c|\mathit{idref})^*$ and $(p|\mathit{idref}^{-1})^*$
instead of $c^*$ and $p^*$, respectively. In this case, the final
AxPRE for the SD is $(p| c | \mathit{idref} |
\mathit{idref}^{-1})^*$. \qed
\end{proof}

\vspace{.1in}

The notion of k-bisimilarity used in the A(k)-index was defined to
capture incoming paths on $c$ and $\mathit{idref}$ edges of length
up to $k$. We provide next a more general definition for axis
graphs that supports paths on all types of axes.

\begin{definition}[Backwards k-Bisimulation and k-Bisimilarity] \label{def:backwards-k-bisimulation}
Let $\mathcal{G}_1$ and $\mathcal{G}_2$ be two rooted subgraphs of
an axis graph $\mathcal{A} = (\mathit{Inst},$ $Axes,$ $Label,$
$\lambda)$, such that $Axes_{\mathcal{G}_1} \subseteq Axes$ and
$Axes_{\mathcal{G}_2} \subseteq Axes$, and let $r_1,r_2 \in
\mathit{Inst}$ be the roots of $\mathcal{G}_1$ and $\mathcal{G}_2$
respectively. A \emph{backwards k-bisimulation} between
$\mathcal{G}_1$ and $\mathcal{G}_2$ is a symmetric relation
$\approx^k_b$ such that for all $v \in \mathcal{G}_1$, $w \in
\mathcal{G}_2$, $E_i^{\mathcal{G}_1} \in Axes_{\mathcal{G}_1}$,
and $E_i^{\mathcal{G}_2} \in Axes_{\mathcal{G}_2}$: if $v
\approx^0_b w,$ then $\lambda(v)=\lambda(w)$; if $v \approx^k_b
w,$ and $\langle v', v \rangle \in E_i^{\mathcal{G}_1}$, then
$\langle w', w \rangle \in E_i^{\mathcal{G}_2}$ and $v'
\approx^{k-1}_b w'$. Two nodes $v \in \mathcal{G}_1$, $w \in
\mathcal{G}_2$ are \emph{backward k-bisimilar}, in notation $v
\sim^k_b w$, iff there exist a backwards k-bisimulation
$\approx^k_b$ between $\mathcal{G}_1$ and $\mathcal{G}_2$ such
that $v \approx^k_b w$. \qed
\end{definition}

Note that backwards k-bisimilarity defines an equivalence relation
on the nodes in the axis graph. The partition created by the
backwards k-bisimilarity corresponds to the A(k)-index, where $k$
is a parameter that represents the length of the incoming paths
summarized by the index.

\begin{figure}
\begin{algorithm}
\label{alg:BPCIconstruction}

    \mbox{}\\
    \parbox{\textwidth}{
    \noindent{\sl \underline{BPCI$-construction(G, k_{in}, k_{out},td)$}}\\[0.5em]
    \noindent{\bf Input:} Data graph $G$, local similarities $k_{in}$ and $k_{out}$, tree depth $td$\\
    \noindent{\bf Output:} BPCI$(k_{in}, k_{out},td)$ $I$

    \begin{algorithmic} [1]

    \STATE let $\mathcal{P}$ be a partition of the nodes in $G$
    \STATE $\mathcal{P}\leftarrow$ label SD partition of $G$
    \FOR{i=1 to $td$}
        \STATE reverse all edges in $G$
        \STATE $\mathcal{P}\leftarrow$ compute the \emph{backwards $k_{in}$-bisimilarity partition} of $G$
        initializing the computation with $\mathcal{P}$
        \STATE reverse all edges in $G$, obtaining the original $G$
        \STATE $\mathcal{P}\leftarrow$ compute the \emph{backwards $k_{out}$-bisimilarity partition} of $G$
        initializing the computation with $\mathcal{P}$
    \ENDFOR
    \FOR {each equivalence class $P_i \in \mathcal{P}$}
        \STATE create an index node $s \in I$
        \STATE $extent(s)\leftarrow P_i$
    \ENDFOR
    \FOR {each edge from $v$ to $w$ in $G$}
        \STATE let $s \in I$ be an index node such that $v \in
        extent(s)$
        \STATE let $s' \in I$ be an index node such that $w \in
        extent(s)$
        \IF {there is no edge from $s$ to $s'$}
            \STATE create an edge from $s$ to $s'$
        \ENDIF
    \ENDFOR
    \end{algorithmic}
}
 \end{algorithm}
 \end{figure}

\begin{proposition} \label{pro:a(k)indexEquivalence}
Let $G$ be an axis graph with $Axes=\{c\}$ (or $Axes=\{c,
\mathit{idref}\}$). The A(k)-index of $G$ is a $p^k$ SD (or a
$(p|\mathit{idref})^k$ SD).  \qed
\end{proposition}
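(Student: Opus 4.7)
The plan is to mirror the structure of the proof of Proposition~\ref{pro:fbindexEquivalence}, reinterpreting the A(k)-index construction in the axis graph framework and then showing that its equivalence classes coincide with the $p^k$ AxPRE partition. The crucial reinterpretation is that reversing all $c$ edges in the data graph (as done implicitly in the original A(k)-index construction to track \emph{incoming} label paths) is the same as navigating along the $p$ axis of the axis graph. Hence backwards $k$-bisimilarity on $G$ viewed as a data graph with edges $c$ (or $c$ and $\mathit{idref}$) becomes, in our terminology, a forward $k$-bisimilarity along the axis $p$ (or along $p$ and $\mathit{idref}^{-1}$, but the A(k)-index treats $\mathit{idref}$ as undirected so we just get $p\,|\,\mathit{idref}$).

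I would then proceed by induction on $k$. For the base case $k=0$, Definition~\ref{def:backwards-k-bisimulation} requires only equality of labels, which matches the partition induced by $\mathcal{N}_{\epsilon}(v)$; since $p^0=\epsilon$, this is exactly the $p^0$ SD (i.e., the label SD), in agreement with the well-known fact that the A(0)-index is the label SD. For the inductive step, assume that $v\sim^{k-1}_b w$ iff $v\sim^{p^{k-1}} w$. Unfolding Definition~\ref{def:backwards-k-bisimulation}, $v\sim^k_b w$ holds iff $\lambda(v)=\lambda(w)$ and for every incoming $c$-edge $\langle v',v\rangle$ there is an incoming $c$-edge $\langle w',w\rangle$ with $v'\sim^{k-1}_b w'$, and symmetrically. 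Translating incoming $c$-edges into outgoing $p$-edges and applying the inductive hypothesis, this is exactly the condition that the AxPRE neighbourhoods $\mathcal{N}_{p^k}(v)$ and $\mathcal{N}_{p^k}(w)$ are bisimilar in the sense of Definition~\ref{def:bisimulation}. That is, the equivalence classes produced by backwards $k$-bisimilarity on $c$ coincide with the positive classes of $\mathcal{P}_{p^k}$, and the remainder of the A(k)-index construction (creating one SD node per class and adding an edge whenever some axis edge connects their extents) matches the existential edges of Definition~\ref{def:stability} used in Definition~\ref{def:SD} of an SD.

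For the case $Axes=\{c,\mathit{idref}\}$, I would simply repeat the same induction but allowing the backwards step to traverse either a reversed $c$ edge or a reversed $\mathit{idref}$ edge; in the axis graph these are the $p$ and $\mathit{idref}^{-1}$ axes respectively, which the A(k)-index literature lumps together as a single symmetric ``id-idrefs'' relation. The resulting AxPRE is then $(p\,|\,\mathit{idref})^k$ (writing $\mathit{idref}$ for the combined axis as in the statement), and the rest of the argument goes through unchanged.

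The main obstacle is the inductive step: ensuring that the translation from ``incoming edge in the reversed data graph'' to ``outgoing $p$ edge in the axis graph'' exactly preserves the branching structure that bisimulation cares about, including the symmetry condition. A secondary subtlety is that the $p^k$ AxPRE automaton (Definition~\ref{def:AxPREAutomaton}) accepts all prefixes, so $\mathcal{N}_{p^k}(v)$ contains all $p$-paths from $v$ of length \emph{at most} $k$, not only those of length exactly $k$; I would verify that the bisimulation notion in Definition~\ref{def:bisimulation} applied to these truncated neighbourhoods coincides with backwards $k$-bisimilarity rather than with some stricter condition, since shorter paths correspond to nodes in the axis graph that have fewer than $k$ ancestors and are handled identically in both definitions.
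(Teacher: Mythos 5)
Your proposal is correct and takes essentially the same route as the paper: reinterpreting the reversed (incoming) $c$-edges of the A(k)-index construction as outgoing $p$-edges in the axis graph, and equating backwards $k$-bisimilarity with bisimilarity of the prefix-closed $p^k$ neighbourhoods, with the remaining construction steps matching the existential SD edges of Definition~\ref{def:stability}. The paper's own proof is a two-line sketch of exactly this correspondence (spelling out only one direction and omitting the $\mathit{idref}$ case), so your induction on $k$ simply makes the same argument explicit and complete.
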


\begin{proof}
Consider an axis graph $G$ with $Axes=\{c\}$. Two nodes $v,w$
belong to the same extent in the $p^k$ SD iff they are
$p^k$-bisimilar. In addition, we know that $v \sim_{p^k} w$ iff
there exists neighbourhoods $\mathcal{N}_{p^k}(v)$ and
$\mathcal{N}_{p^k}(w)$ such that $v \sim w$. This means we can
define a backwards k-bisimulation $\approx^k_b$ between
$\mathcal{N}_{p^k}(v)$ and $\mathcal{N}_{p^k}(w)$ such that $v
\approx^k_b w$ and thus $v \sim^k_b w$. \qed
\end{proof}

The BPCI($k_{in},k_{out},td$)-index is another proposal based on
the notion of backwards k-bisimulation. Algorithm
\ref{alg:BPCIconstruction} constructs a
BPCI($k_{in},k_{out},td$)-index. Algorithm
\ref{alg:BPCIconstruction} is similar to Algorithm
\ref{alg:fbconstruction} but uses $k_{in}$-bisimilarity for the
reversed edges (line 5), $k_{out}$-bisimilarity for the original
edges (line7), and a $td$ number of iterations instead of a fix
point (lines 3 to 7).

\begin{proposition} \label{pro:BPCIindexEquivalence}
Let $G$ be an axis graph with $Axes=\{c\}$ (or $Axes=\{c,
\mathit{\mathit{idref}}\}$). The
BPCI($k_{in},k_{\mathit{out}},td$)-index of $G$ is a
$(p^{k_{in}}|c^{k_{\mathit{out}}})^{td}$ SD (or a $(p^{k_{in}} |
c^{k_{\mathit{out}}} | \mathit{\mathit{idref}}^{k_{\mathit{out}}}
| (\mathit{\mathit{idref}}^{-1})^{k_{in}})^{td}$ SD). \qed
\end{proposition}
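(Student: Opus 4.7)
The plan is to mirror the structure of the proofs of Propositions \ref{pro:fbindexEquivalence} and \ref{pro:a(k)indexEquivalence}, combining their key ideas. First, I would isolate a single iteration of the outer \texttt{FOR} loop (lines 3--7 of Algorithm \ref{alg:BPCIconstruction}). Applying the reasoning used in the F\&B-Index proof, the two reversal-plus-backwards-bisimilarity blocks refine the current partition along the two opposite axis directions of the original graph. Combining this with Proposition \ref{pro:a(k)indexEquivalence}, which identifies backwards $k$-bisimilarity with $p^k$-bisimilarity (and by symmetry forward $k$-bisimilarity with $c^k$-bisimilarity), each block refines by a bounded AxPRE whose exponent is the corresponding parameter from the algorithm. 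Thus a single iteration refines the partition by a concatenation whose factors are $p^{k_{in}}$ and $c^{k_{out}}$.

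Next, I would argue that performing $td$ such compound refinements produces a partition equivalent to the one induced by the AxPRE $(p^{k_{in}}|c^{k_{out}})^{td}$. This parallels the step in the F\&B proof where the fix-point iteration of $c^*.p^*$ was identified with $(p|c)^*$ by an identity of regular expressions, now specialised to the bounded exponents $k_{in}$, $k_{out}$, and $td$. The remainder of the algorithm (lines 8--16) creates existential edges exactly as required by Definition \ref{def:stability}, so the resulting SD structure matches the claimed AxPRE SD.

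For the case where $\mathit{idref}$ edges are also present in $Axes$, the argument extends \emph{mutatis mutandis}: edge reversal now flips both $c$ and $\mathit{idref}$ edges, so each compound iteration additionally refines by $\mathit{idref}^{k_{out}}$ and $(\mathit{idref}^{-1})^{k_{in}}$, ultimately yielding the AxPRE $(p^{k_{in}} | c^{k_{out}} | \mathit{idref}^{k_{out}} | (\mathit{idref}^{-1})^{k_{in}})^{td}$ claimed in the proposition.

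The main obstacle will be rigorously justifying the second step: the identity between the $td$-fold compound refinement and the single refinement by $(p^{k_{in}}|c^{k_{out}})^{td}$. Whereas the unbounded F\&B identity $(c^*.p^*)^* \equiv (p|c)^*$ is immediate at the level of regular languages (since Kleene closure absorbs intermediate blocks), the bounded analog is more delicate and, as literal regular languages, the two expressions differ; the identity we actually need is between the \emph{partitions} they induce on the axis graph via AxPRE neighbourhoods, which requires a direct argument on prefix-closed neighbourhood languages rather than a purely syntactic manipulation.
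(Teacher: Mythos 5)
Your proposal follows essentially the same route as the paper's proof: view the reversal steps as supplying the $p$ (and $\mathit{idref}^{-1}$) axes, identify each backwards-bisimilarity pass with a bounded AxPRE refinement so that one iteration of the loop yields $c^{k_{out}}.p^{k_{in}}$, iterate $td$ times, convert to the disjunctive form, and let the final lines of the algorithm supply the existential edges as in Definition \ref{def:stability}. Your closing caveat is well taken: the paper simply asserts that $(c^{k_{out}}.p^{k_{in}})^{td}$ and $(c^{k_{out}}\,|\,p^{k_{in}})^{td}$ are equivalent ``by identity of regular expressions,'' which, unlike the unbounded F\&B identity $(c^*.p^*)^* \equiv (p|c)^*$, is not a literal language identity and genuinely requires the partition-level argument on prefix-closed neighbourhood languages that you describe.
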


\begin{proof}
Like for F\&B-index construction (Algorithm
\ref{alg:fbconstruction}) the input data graph $G$ can be viewed
as an axis graph with the $c$ axis, in which the reversed edges
correspond to the $c^{-1}$ (or $p$) axis. If id-idrefs are
considered, then $Axes=\{c, \mathit{idref}\}$.

Let us consider first the case of $Axes=\{c\}$. Lines $4$ and $5$
are equivalent to refining all nodes in the initial $\epsilon$ SD
(line 2) by the $c^{k_{out}}$ AxPRE. This produces a $c^{k_{out}}$
SD. Then, lines $6$ and $7$ produce a refinement of all
$c^{k_{out}}$ nodes by the $p^{k_{in}}$ AxPRE, thus obtaining a
$c^{k_{out}}.p^{k_{in}}$ SD. The iterative process is repeated
$td$ times, which is equivalent to a
$(c^{k_{out}}.p^{k_{in}})^{td}$ SD. Again, by identity of regular
expressions $(c^{k_{out}}.p^{k_{in}})^{td}$ is equivalent to as
$(c^{k_{out}} | p^{k_{in}})^{td}$. The remaining of the algorithm
(lines $9$-$16$) creates existential edges like in Definition
\ref{def:stability}. When $Axes=\{c, \mathit{idref} \}$, the final
AxPRE for the SD is $(c^{k_{out}} | p^{k_{in}} |
\mathit{idref}^{k_{out}} | (\mathit{idref}^{-1})^{k_{in}} )^{td}$.
\qed
\end{proof}

The Skeleton summary \cite{BCF+05} clusters together nodes with the
same subtree structure, thus capturing node ordering in subtrees.
Skeleton uses an entirely different construction approach, but its
essence can be captured by the $(fc.ns^*)^*$ AxPRE.

The D(k)-index \cite{QLO03}, and M(k)-index \cite{HY04} are
heterogeneous SD proposals. All nodes $s_i$ are described by $p^k$
AxPREs with a different $k$ per $s_i$. They use different
construction strategies based on dynamic query workloads and local
similarity (i.e., the length of each path depends on its location
in the XML instance) to determine the subset of incoming paths to
be summarized.

XSketch \cite{PG06b} manages summaries capturing many (but not
all) heterogeneous SD's along the $p$ and $c$ axis, ranging from
the label summary to the F\&B-Index. However there is no control
over the refinements chosen, nor a description of the intermediate
summaries obtained. This makes sense given that XSketch objective
is to provide selectivity estimates. As such, its construction
algorithm is guided by heuristics to optimize the space/accuracy
trade-off.

\section{Ad-hoc construction proposals}

Region inclusion graphs (RIGs) \cite{CM94} and representative
objects of length 1 (1-RO) \cite{NUWC97} are label SDs, that is
$\epsilon$ SDs (because all their nodes $s_i$ are described by the
$\epsilon$ AxPRE). In general, representative objects are $p^k$
SDs for XML tree instances. Therefore, the 1-RO is a label SD, the
2-RO is a $p$ SD, the 3-RO is a $p.p$ SD, and the FRO (full
representative object) is the $p^*$ SD.

Dataguides \cite{GW97} group instance nodes into sets called
\emph{target sets} according to the label paths from the root they
belong to. The dataguide construction is basically a
nondeter\-mi\-nistic-to-deterministic automaton translation. When
the data instance is a tree, the data\-guide's target sets are
equivalent to the extents in our framework: a data\-guide of an
XML tree is a $p^*$ SD.

ToXin \cite{RM01} also has a component that can be viewed as an
$p^*$ SD. ToXin consists of three index structures: the ToXin
schema, the path index, and the value index. The ToXin schema is
defined only for tree instances, and it is equivalent to a $p^*$
SD graph.

\vspace{.2in}

In this chapter, we discussed how \dx\ uses AxPREs to capture many
summary proposals in the literature by providing a declarative
definition for them for the first time. In the next chapter, we
will show how SDs can be declaratively updated by means of two
basic operations, refinement and stabilization applied to
neighbourhoods.

\chapter{Describing extents and neighbourhoods} \label{sec:refinements}

We have seen that several SD nodes can share the same AxPRE
$\alpha$. The reason for this is that each SD node with the same
$\alpha$ corresponds to a different extent in the $\alpha$
partition. In the first section of this chapter, we provide
mechanisms for describing each extent in the partition based on
neighbourhoods, sets of axis label paths, and AxPREs.

The description provided by a node in the SD can be changed by an
operation that modifies its AxPRE and thus the AxPRE neighbourhood
of the nodes in its extent. When the new AxPRE partition thus
obtained constitutes a refinement of the old one, the operation is
called an \emph{AxPRE refinement}. The notion of refinement is
tightly related to that of \emph{stabilization}. An edge
stabilization determines the partition of an extent into two sets
based on the participation (total or partial) of the extent nodes
in the axis relation the edge represents. In the second section of
this chapter, we discuss in detail our approaches to refinement
and stabilization based on AxPREs.

\section{Concise descriptions} \label{sec:acyclic}

Since several SD nodes can share the same AxPRE, we need a
mechanism for uniquely describe each SD node and its extent. The
most straightforward way to do that would be just to list all
nodes that belong to the extent (extensional definition). A more
concise description is provided by the $\alpha$ neighbourhood of
any node in the extent. Since all nodes in an extent are
bisimilar, any $\alpha$ neighbourhood can be used to find all the
other nodes in the extent by bisimulation.

In order to get the most concise description, we need to find the
\emph{smallest} (in terms of number of nodes) neighbourhood in the
extent of $s$ that is bisimilar to all the others. We can do this
by computing a \emph{bisimulation contraction} over all
neighbourhoods in the extent of $s$. The bisimulation contraction
of a given graph is the smallest graph that is bisimilar to it,
which can be computed in time $O(m \log n)$ (where $m$ is the
number of edges and $n$ is the number of nodes) \cite{PT87}, or
even linearly for acyclic graphs \cite{DPP04}. Based on
bisimulation contraction we define the notion of
\emph{representative neighbourhood}.

\begin{figure}
    \centering
        \includegraphics{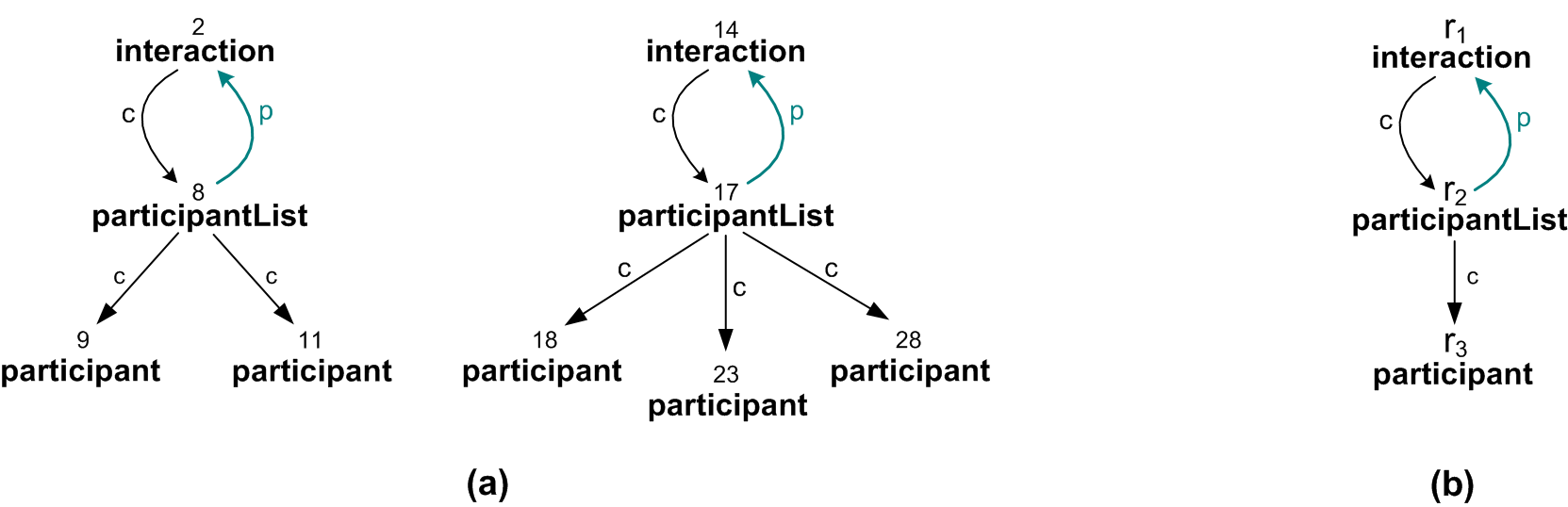}
    \caption{The two $[interaction].c[participantList].(c|p)$ neighbourhoods (a) and their representative neighbourhood (b) from our running example}
    \label{fig:representative}
\end{figure}

\begin{definition} [Representative Neighbourhood] \label{def:representativeNeighbourhood}
Let $\mathcal{D}$ be an SD and $s$ a node in $\mathcal{D}$ such
that $axpre(s)=\alpha$. The \emph{representative neighbourhood} of
$s$ for $\alpha,$ denoted $\mathcal{R}_{\alpha}(s),$ is an axis
graph that is the bisimulation contraction of all neighbourhoods
$\mathcal{N}_{\alpha}(v_i)$, where $v_i \in extent(s)$.
$\mathcal{R}_{\alpha}(s)$ has a single root node $v_0$ that is
bisimilar to all $v_i \in extent(s).$ \qed
\end{definition}

Note that the bisimulation contraction is not necessarily one of
the neighbourhoods in the extent -- it could be smaller than any
of them. Rather, a representative neighbourhood is an entirely new
axis graph that happens to be the smallest that is bisimilar to
all neighbourhoods in an extent.

\begin{example} [Representative Neighbourhood]
Consider the AxPRE partition of our running example described by
AxPRE $[interaction].c[participantList].(c|p)$. It has only one
set containing nodes $2$ and $14$, whose neighbourhoods are shown
in Figure~\ref{fig:representative}~(a). Its representative
neighbourhood
$\mathcal{R}_{[interaction].c[participantList].(c|p)}(s)$ is the
graph shown in Figure~\ref{fig:representative}~(b). Note that such
a neighbourhood does not belong to the extent of $s$ (there is no
\emph{participantList} in the axis graph with only one
\emph{participant} node). \qed
\end{example}

For some neighbourhoods, deciding bisimilarity is equivalent to
comparing the sets of simple label paths from their roots to their
leaves. (A path is \emph{simple} when it has no repeated edges.)
In those cases, neighbourhoods can be described by an \emph{extent
expression} (EE for short), which is capable of computing
precisely the set of elements in the extent of a given SD node and
functions like a virtual view. In
Chapter~\ref{sec:changingSDswithxpath} we provide a mechanism for
expressing EEs in XPath.

\begin{definition} [Path and LPath Sets] \label{def:lpathset}
Let $\mathcal{N}$ be a neighbourhood in an axis graph
$\mathcal{A}$, and $v$ a node in $\mathcal{N}$. We denote by
$Path(v)$ and $LPath(v)$ the set of simple axis paths and simple
axis label paths from $v$, respectively. \qed
\end{definition}

\begin{example}
Consider the neighbourhoods of Figure \ref{fig:representative}
(a). The $Path$ and $LPath$ sets are defined as follows: $Path(2)
= \{ c,c.c,c.p \} = Path(14)$, and $LPath(2) = \{ c[participant$
$List], c[participantList].c[participant],
c[participantList].p[interaction] \}=LPath(14)$. Note that both
sets include all the prefixes. \qed
\end{example}

If deciding bisimilarity between a given set of neighbourhoods is
equivalent to comparing their $LPath$ sets, we say that such
neighbourhoods are \emph{LPath distinguishable}.

\begin{figure}
    \centering
        \includegraphics{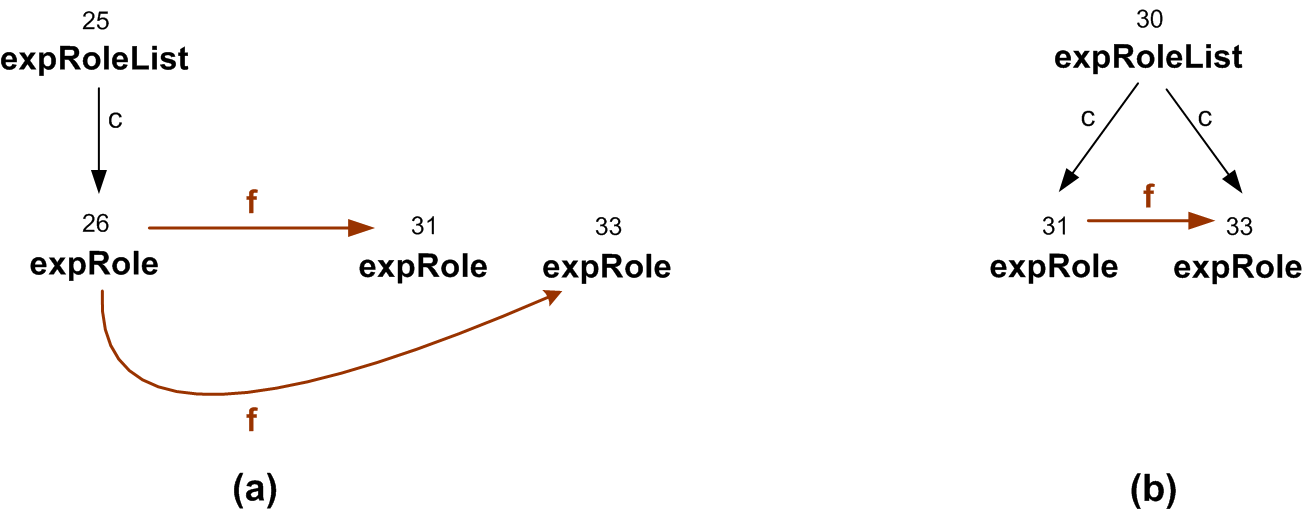}
    \caption{Two $[expRoleList].c.f[expRole]$ neighbourhoods from our running example}
    \label{fig:distinguishable}
\end{figure}

\begin{definition} [LPath Distinguishable] \label{def:lpathdisting}
Let $\mathcal{N}_1(v_1), \ldots, \mathcal{N}_m(v_m)$ be
neighbourhoods in an axis graph $\mathcal{A}$. We say that
$\mathcal{N}_1, \ldots, \mathcal{N}_m$ are \emph{LPath
distinguishable} when, for all $1 \leq i,j \leq m:$
$\mathcal{N}_i(v_i) \sim \mathcal{N}_j(v_j)$ iff
$LPath(v_i)=LPath(v_j)$. \qed
\end{definition}

Although the axis graph neighbourhoods we have considered so far
are all LPath distinguishable, some combination of axes may
produce neighbourhoods that are not, as illustrated by the next
example.

\begin{example} [LPath Distinguishable]
Consider the two acyclic neighbourhoods of Figure
\ref{fig:distinguishable}, which correspond to nodes 25 and 30 in
Figure \ref{fig:PSIMI}, respectively. Both neighbourhoods have the
same LPath set $\{ c[expRole], c[expRole].f[expRole] \}$. However,
it is easy to see they are \emph{not} bisimilar: node 33 in
neighbourhood (b) has $c$ and $f$ incoming edges, whereas all
$expRole$ nodes in neighbourhood (a) have either a $c$ or an $f$
edge, but not both. Thus, they are not LPath distinguishable.

In contrast, the three cyclic neighbourhoods of
Figure~\ref{fig:representative} are all bisimilar and have the
same LPath set $\{ c[participantList],
c[participantList].c[participant],  c[participantList].p$
$[interaction] \}$. Therefore, they are all LPath distinguishable.
\qed
\end{example}

We are interested in LPath distinguishable neighbourhoods because
they can be described by EEs. In general, determining whether a
given set of neighbourhoods is LPath distinguishable entails
computing the bisimulation between them and then comparing the
result to their LPath sets.

There is a class of neighbourhoods, however, that are guaranteed
to be always LPath distinguishable. For neighbourhoods in that
class, we can bypass the bisimulation computation and obtain the
EEs directly from the LPaths sets. Such is the class of the
\emph{tree neighbourhoods}. How to characterize other classes of
LPath distinguishable neighbourhoods without resorting to
bisimulation remains an open problem.

We will show below that tree neighbourhoods are in fact LPath
distinguishable (Proposition \ref{pro:bisimilarityAndLPaths}). In
order to do that, we need first some auxiliary results.

\begin{lemma} \label{lem:universalBisimulation}
If two neighbourhoods $\mathcal{N}_1$ and $\mathcal{N}_2$ are
bisimilar then there exists a labeled bisimulation $\approx$ such
that every node in both graphs is in $\approx$. \qed
\end{lemma}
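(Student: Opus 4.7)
The plan is to build the desired $\approx$ as the maximum bisimulation between $\mathcal{N}_1$ and $\mathcal{N}_2$ and argue that every node of either neighbourhood lies in its domain or range, using the fact that in a neighbourhood every node is reachable from the root.

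First I would recall the standard fact that the union of bisimulations between $\mathcal{N}_1$ and $\mathcal{N}_2$ (regarded as edge-labeled, node-labeled graphs over $Axes_\alpha$ and $Axes_\beta$) is itself a bisimulation. Define $\approx^\ast$ to be this union; equivalently, $v \approx^\ast w$ iff $v \sim w$ in the sense of Definition~\ref{def:bisimulation}. By assumption $\mathcal{N}_\alpha(v_0) \sim \mathcal{N}_\beta(w_0)$, so $v_0 \approx^\ast w_0$.

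Next I would show, by induction on the edge-distance from the root, that every node $v \in \mathcal{N}_\alpha(v_0)$ is related by $\approx^\ast$ to some node in $\mathcal{N}_\beta(w_0)$ (and symmetrically). The base case is $v = v_0$, handled above. For the inductive step, any node $v'$ at distance $k+1$ is reached from some $v$ at distance $k$ via an edge $\langle v, v' \rangle \in E_i^{\alpha}$; by the inductive hypothesis $v \approx^\ast w$ for some $w \in \mathcal{N}_\beta(w_0)$, and by the forth condition of Definition~\ref{def:bisimulation} there exists $w' \in \mathcal{N}_\beta(w_0)$ with $\langle w, w' \rangle \in E_i^{\beta}$ and $v' \approx^\ast w'$. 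The symmetric argument, using the back condition (which follows from the symmetry of $\approx$), covers nodes in $\mathcal{N}_\beta(w_0)$. Here I would invoke the key structural fact, guaranteed by Definition~\ref{def:neighbourhood}: a neighbourhood is exactly the image of the intersection automaton starting at the head state of its root, so every node is reachable from the root through edges that belong to the neighbourhood itself.

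The only subtle point, and what I expect to be the main obstacle to get right, is making sure the bisimulation stays inside the two neighbourhoods rather than leaking into the surrounding axis graph. This is handled by observing that Definition~\ref{def:bisimulation} quantifies axes over $Axes_\alpha$ and $Axes_\beta$, i.e., over the edge relations restricted to $\mathcal{N}_\alpha(v_0)$ and $\mathcal{N}_\beta(w_0)$; hence the ``matching'' successor $w'$ produced in the induction step is automatically a node of $\mathcal{N}_\beta(w_0)$, not an arbitrary node of the ambient axis graph. Restricting $\approx^\ast$ to pairs $(v,w) \in \mathcal{N}_\alpha(v_0) \times \mathcal{N}_\beta(w_0)$ then yields a labeled bisimulation $\approx$ whose domain covers $\mathcal{N}_\alpha(v_0)$ and whose range covers $\mathcal{N}_\beta(w_0)$, establishing the lemma.
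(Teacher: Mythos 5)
Your proposal is correct and follows essentially the same route as the paper's proof: start from the fact that the roots are related by some labeled bisimulation, use the forth/back conditions to propagate relatedness along edges, and induct on reachability from the root (which holds for every node of a neighbourhood) to conclude that every node lies in the relation. Packaging the argument via the maximum bisimulation $\approx^\ast$ and explicitly noting that the definition's quantification over $Axes_\alpha$, $Axes_\beta$ keeps the relation inside the two neighbourhoods are minor tightenings of the paper's more informal version, not a different argument.
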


\begin{proof}
By definition, in order for $\mathcal{N}_1$ and $\mathcal{N}_2$ to
be bisimilar $r_1$ and $r_2$ have to be bisimilar, where $r_1$ and
$r_2$ are the roots of $\mathcal{N}_1$ and $\mathcal{N}_2$
respectively. Thus, there has to be a labeled bisimulation
$\approx$ such that $r_1 \approx r_2$. In addition, all nodes in
$\mathcal{N}_1$ connected to $r_1$ by an edge with label $a$ have
to be in the labeled bisimulation with all nodes in
$\mathcal{N}_2$ connected to $r_2$ by an edge with label $a$ (also
by definition). This means that every node connected to either
$r_1$ or $r_2$ by an edge have to belong to $\approx$. Since every
node in $\mathcal{N}_1$ and $\mathcal{N}_2$ is reachable from
$r_1$ and $r_2$ respectively, we can prove inductively that every
node in both $\mathcal{N}_1$ and $\mathcal{N}_2$ belong to
$\approx$. \qed
\end{proof}

\begin{corollary} \label{cor:leavesBisimilar}
For all leaves $v \in \mathcal{N}_1$ and $w \in \mathcal{N}_2$: $v
\sim w$ iff $\lambda(v)=\lambda(w)$. \qed
\end{corollary}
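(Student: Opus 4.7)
The plan is to prove both directions directly from Definition \ref{def:bisimulation}, exploiting the fact that a leaf has no outgoing axis edges in its neighbourhood, so the edge-simulation clauses become vacuous.

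For the forward direction, suppose $v \sim w$. By Definition \ref{def:bisimulation}, there exists a labeled bisimulation $\approx$ between $\mathcal{N}_1$ and $\mathcal{N}_2$ with $v \approx w$. The first clause of the definition of labeled bisimulation then yields $\lambda(v) = \lambda(w)$ immediately.

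For the backward direction, assume $\lambda(v) = \lambda(w)$ with $v$ and $w$ both leaves. I would exhibit an explicit witness: take $\approx\,:=\,\{(v,w),(w,v)\}$, which is symmetric by construction. To verify it is a labeled bisimulation in the sense of Definition \ref{def:bisimulation}, the label-preservation clause holds by the hypothesis $\lambda(v) = \lambda(w)$, and the edge-preservation clauses range over pairs $\langle v, v' \rangle \in E_i^{\alpha}$ and $\langle w, w' \rangle \in E_i^{\beta}$; since $v$ is a leaf of $\mathcal{N}_1$ and $w$ is a leaf of $\mathcal{N}_2$, no such outgoing edges exist, so both implications are vacuously satisfied. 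Hence $\approx$ witnesses $v \sim w$. (Lemma \ref{lem:universalBisimulation} is not actually needed here, but confirms that when $\mathcal{N}_1 \sim \mathcal{N}_2$ this pairing can be made part of a single bisimulation covering every node, which is the way the corollary will be used in the sequel.)

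The proof presents no substantive obstacle; the only point requiring care is confirming that ``leaf'' refers to the absence of outgoing edges under the axes that actually appear in the AxPRE-induced neighbourhood, so that the universal quantifier over $E_i^{\alpha}$ and $E_i^{\beta}$ in the bisimulation definition is genuinely vacuous, and no hidden edge (for instance an inverse axis that happens to be in $Axes$ but not displayed) slips through.
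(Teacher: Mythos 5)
Your proof is correct and follows essentially the same route as the paper's: the forward direction reads off the label clause of Definition \ref{def:bisimulation}, and the backward direction observes that for leaves the edge clauses are vacuous, so a bisimulation pairing the two leaves exists. Your version is in fact slightly more explicit than the paper's (which merely asserts the backward direction is ``easy to see''), since you exhibit the witness relation $\{(v,w),(w,v)\}$ and check symmetry and vacuity of the edge conditions directly.
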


\begin{proof}
By Definition \ref{def:bisimulation} if $v \sim w$ then there
exist a labeled bisimulation $\approx$ between $\mathcal{N}_1$ and
$\mathcal{N}_2$ such that $v \approx w$, which means that
$\lambda(v)=\lambda(w)$. We need to prove now that leaves having
the same label are bisimilar. It is easy to see from Definition
\ref{def:bisimulation} that there always exists a labeled
bisimulation between leaves in $\mathcal{N}_1$ and $\mathcal{N}_2$
when they have the same labels. Consequently, if
$\lambda(v)=\lambda(w)$ then $v \sim w$. \qed
\end{proof}

\begin{proposition} \label{pro:bisimilarityAndLPaths}
Let $\mathcal{N}_1$ and $\mathcal{N}_2$ be tree neighbourhoods in
an axis graph $\mathcal{A}$. Then, $\mathcal{N}_1(v) \sim
\mathcal{N}_2(w)$ iff $LPath(v)=LPath(w)$. \qed
\end{proposition}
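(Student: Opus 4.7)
My plan is to prove the biconditional by handling the two directions separately, and exploit the tree hypothesis only in the harder backward direction.

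For the forward direction ($\mathcal{N}_1(v)\sim \mathcal{N}_2(w) \Rightarrow LPath(v)=LPath(w)$), I do not need the tree assumption at all. I would argue by induction on the length of a path. If $\approx$ is a labelled bisimulation witnessing $v\sim w$, then given any simple axis label path $axis_1[\ell_1]\ldots axis_n[\ell_n]$ in $LPath(v)$, realised by nodes $v=v_0,v_1,\ldots,v_n$ in $\mathcal{N}_1$, I repeatedly apply the forth condition of Definition~\ref{def:bisimulation} to build a matching sequence $w=w_0,w_1,\ldots,w_n$ in $\mathcal{N}_2$ with $\langle w_{i-1},w_i\rangle\in E_i^{\beta}$ and $v_i\approx w_i$; the label condition then gives $\lambda(w_i)=\lambda(v_i)=\ell_i$, so the same label path lies in $LPath(w)$. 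Swapping the roles yields the reverse inclusion, and hence equality.

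For the backward direction I use that $\mathcal{N}_1,\mathcal{N}_2$ are trees. The plan is to exhibit an explicit labelled bisimulation by defining $x\approx y$ iff $\lambda(x)=\lambda(y)$ and $LPath(x)=LPath(y)$ (interpreting $LPath$ relative to the subtree rooted at the node). To verify the forth property, suppose $x\approx y$ and $\langle x,x'\rangle\in E_i^{\alpha}$ with $\lambda(x')=\ell'$. Then $axis_i[\ell']$ extended by each suffix in $LPath(x')$ contributes to $LPath(x)=LPath(y)$, so $y$ must have at least one child $y'$ reached by $axis_i$ with $\lambda(y')=\ell'$. The task reduces to showing that among such candidates at least one satisfies $LPath(x')=LPath(y')$. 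Here I would use structural induction on tree height: because $\mathcal{N}_2$ is a tree, the portion of $LPath(y)$ beginning with $axis_i[\ell']$ is exactly the union, over children $y'$ with the right edge and node labels, of $\{axis_i[\ell']\}\cup axis_i[\ell'].LPath(y')$, and analogously for $x$; comparing these decompositions forces a matching child to exist. The base case (leaves) follows from Corollary~\ref{cor:leavesBisimilar}, and symmetry handles the back condition.

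The main obstacle I anticipate is precisely this last matching step when a node has several children carrying the same $(axis,\ell')$ pair but with genuinely different subtrees, since then the set $LPath(x)$ alone does not record the individual subtree contributions. I expect the proof to handle this by invoking the tree hypothesis to rule out the pathological overlap (e.g.\ by a refined induction on the number of distinct bisimulation equivalence classes appearing among the $(axis_i,\ell')$-children, or by first collapsing bisimilar sibling subtrees on both sides using Lemma~\ref{lem:universalBisimulation} so that a unique representative child remains). Once this point is dispatched, assembling the two directions gives the claim.
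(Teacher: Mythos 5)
Your forward direction (bisimilar $\Rightarrow$ equal $LPath$ sets) is fine and indeed needs no tree hypothesis. The problem is the backward direction: the step you explicitly defer — finding, among the $(axis,\ell')$-children of $y$, one whose $LPath$ set equals $LPath(x')$ — is not a technicality that the tree hypothesis dispatches; it can genuinely fail for tree neighbourhoods as defined. Take $\mathcal{N}_1$ rooted at $v$ labelled $r$ with two $c$-children labelled $\ell$, one having a single $c$-child labelled $a$ and the other a single $c$-child labelled $b$; take $\mathcal{N}_2$ rooted at $w$ labelled $r$ with one $c$-child labelled $\ell$ that has two $c$-children labelled $a$ and $b$. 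Both are trees (they arise, e.g., as $c.c$ neighbourhoods of two XML elements), and $LPath(v)=LPath(w)=\{\,c[\ell],\; c[\ell].c[a],\; c[\ell].c[b]\,\}$, yet $v\not\sim w$ under Definition~\ref{def:bisimulation}: the unique $\ell$-child of $w$ would have to be related to the first $\ell$-child of $v$, and then its $b$-child has no counterpart. This is exactly the ``several children with the same $(axis,\ell')$ pair but genuinely different subtrees'' situation you flag, and neither of your proposed repairs closes it: the two $\ell$-children of $v$ are not bisimilar to each other, so collapsing bisimilar sibling subtrees via Lemma~\ref{lem:universalBisimulation} leaves them distinct, and no refined induction can recover the matching child, because the union decomposition of $LPath(y)$ irretrievably loses the per-child contributions.

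For comparison, the paper proves the statement by a single induction on path length, with Corollary~\ref{cor:leavesBisimilar} as base case; in the inductive step it asserts that if some child $v_k$ of $v$ differs in label or $LPath$ set from every child $w_j$ of $w$, then $\bigcup_i preLPath(v_i)\neq\bigcup_j preLPath(w_j)$ and hence $LPath(v)\neq LPath(w)$. That assertion is precisely the point you isolated and left open (the prefixed $LPath$ sets of several siblings on one side can merge into that of a single sibling on the other), so you have correctly located the crux of the argument; but your plan's hope that it ``can be dispatched'' is unsubstantiated, and the example above shows the deferred step is where the proof — yours as written, and for the same reason the paper's — actually breaks unless one adds a further hypothesis (e.g., that siblings reached by the same axis and label are pairwise bisimilar, or some determinism condition on the neighbourhoods).
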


\begin{proof}
We proceed by induction on the length of an arbitrary outgoing
path. For the base case, we have that $v$ and $w$ are leaves of
$\mathcal{N}_1$ and $\mathcal{N}_2$ respectively. By Corollary
\ref{cor:leavesBisimilar}, $v \sim w$ iff $\lambda(v)=\lambda(w)$.
Since they are leaves, $LPath(v)=LPath(w)=\emptyset$, so $v \sim
w$ iff $\lambda(v)=\lambda(w)$ and $LPath(v)=LPath(w)$.

For the induction step, consider nodes $v \in \mathcal{N}_1$ and
$w \in \mathcal{N}_2$ and all edges from them with label
``$axis$'': $\langle v, v_i \rangle$, $1 \leq i \leq n$ and
$\langle w, w_j \rangle$, $1 \leq j \leq m$. We know that, if
there is a $v_k$ that is not bisimilar to any $w_j$, i.e., $v_k
\not \sim w_1$, $\ldots$, $v_k \not \sim w_m$, then by Definition
\ref{def:bisimulation} $v \not \sim w$. We need to prove that the
latter statement is equivalent to the following: if there is a
$v_k$ whose label (or $LPath$ set) is different from the label (or
$LPath$ set) of every $w_j$, then $LPath(v) \neq LPath(w)$.

By inductive hypothesis, $v_k \not \sim w_j$ iff $\lambda(v_k)
\neq \lambda(w_j)$ or $LPath(v_k) \neq LPath(w_j)$. Note that,
edges $\langle v, v_i \rangle$ and $\langle w, w_j \rangle$ add
prefixes ``$axis[\lambda(v_i)]$'' and ``$axis[\lambda(w_j)]$'' to
each string in $LPath(v_i)$ and $LPath(w_j)$ respectively. For a
given node $v$, let us call $preLPath(v)$ the set of strings in
$LPath(v)$ prefixed with ``$axis[\lambda(v)]$''. It is easy to see
that, given any two nodes $v_k$ and $w_l$, if the original set of
string are different $(LPath(v_k) \neq LPath(w_l))$, then the
strings with the prefixes are going to be different
$(preLPath(v_k) \neq preLPath(w_l))$, no matter what the prefixes
are. In addition, if $\lambda(v_i) \neq \lambda(w_l)$, we have
that $preLPath(v_k) \neq preLPath(w_l)$ even when $LPath(v_k) =
LPath(w_l)$ (because the label of the nodes are included in the
prefixes).

Since $LPath(v)$ contains all label paths from $v$, in particular
it contains all those that begin with ``$axis$'' ($\bigcup_i
preLPath(v_i)) \subseteq LPath(v)$). Similarly, $\bigcup_j
preLPath(w_j) \subseteq LPath(w)$. However, if there is a $v_k$
such that either its label or its $LPath$ set is different from
those of every $w_j$, then $\bigcup_i preLPath(v_i) \neq \bigcup_j
preLPath(w_j)$. Since all label paths in $LPath(v)$ that are not
in $\bigcup_i preLPath(w_i)$ begin with a prefix different from
``$axis$'', we conclude $\bigcup_i preLPath(v_i) \neq \bigcup_j
preLPath(w_j) \Rightarrow LPath(v) \neq LPath(w)$. \qed
\end{proof}

\paragraph{Notation.} Let $s$ be a node in an SD $\mathcal{D}$ whose extent
contains only LPath distinguishable neighbourhoods. We denote by
$Path(s)$ and $LPath(s)$ the set of all different axis paths and
axis label path, respectively, from the nodes in the extent of
$s$. That is, $LPath(s) = \bigcup_i LPath(v_i)$, $v_i \in
extent(s)$.

When dealing with $LPath$ distinguishable neighbourhoods, the
$LPath$ set can be an alternative way of representing an extent:
just compute the representative neighbourhood
$\mathcal{R}_{\alpha}(s)$ of a given SD node $s$ and then take
$LPath(s)$. However, checking containment and equivalence from the
$LPath$ sets is cumbersome, so we would like to have a way of
obtaining an AxPRE from an $LPath$ set that provides a concise
description of the representative neighbourhood and thus of all
nodes in a given extent. We will denote this new expression
\emph{extent AxPRE}.

\begin{definition} [Extent AxPRE] \label{def:extentAxPRE}
Let $\mathcal{D}$ be an SD, $s$ a node in $\mathcal{D}$ and
$\alpha$ its AxPRE. An \emph{extent AxPRE} $\alpha'$ of $s$ is an
AxPRE such that all nodes in the extent of $s$ have $\alpha'$
neighbourhoods and $\alpha'$ is different from all other extent
AxPREs in $\mathcal{D}$. \qed
\end{definition}

It is important to note that extent AxPREs can only be defined
when representative neighbourhoods are not pairwise in an
inclusion relationship. Because of the prefix semantics we use, if
for any two representative neighbourhoods $\mathcal{R}_{\alpha}$
and $\mathcal{R}'_{\alpha}$ we have that $\mathcal{R}_{\alpha}
\subseteq \mathcal{R}'_{\alpha}$ then any possible AxPRE for
$\mathcal{R}'_{\alpha}$ will also return $\mathcal{R}_{\alpha}$,
and consequently it will not be an ``extent'' AxPRE.

The extent AxPRE of an SD node $s$ can be constructed from the
representative neighbourhood $\mathcal{R}_{\alpha}(s)$ by taking
the label of the root $v$ of $\mathcal{R}_{\alpha}(s)$ and
concatenating it with the disjunction of the axis label paths of
$v$. That is, the extent AxPRE $\alpha'$ of $s$ is one of the
followings:
\begin{itemize}
\item $[\lambda(v)].(lp_1 | lp_2 | \ldots | lp_n)$ if $LPath(v) =
\bigcup_i lp_i$ \item $[\lambda(v)].lp$ if $LPath(v) = \{lp\}$
\item $[\lambda(v)]$ if $LPath(v)=\emptyset$
\end{itemize}

It easy to see from the construction that all nodes in the extent
of $s$ will have $\alpha'$ neighbourhoods.

\begin{figure}
    \centering
        \includegraphics{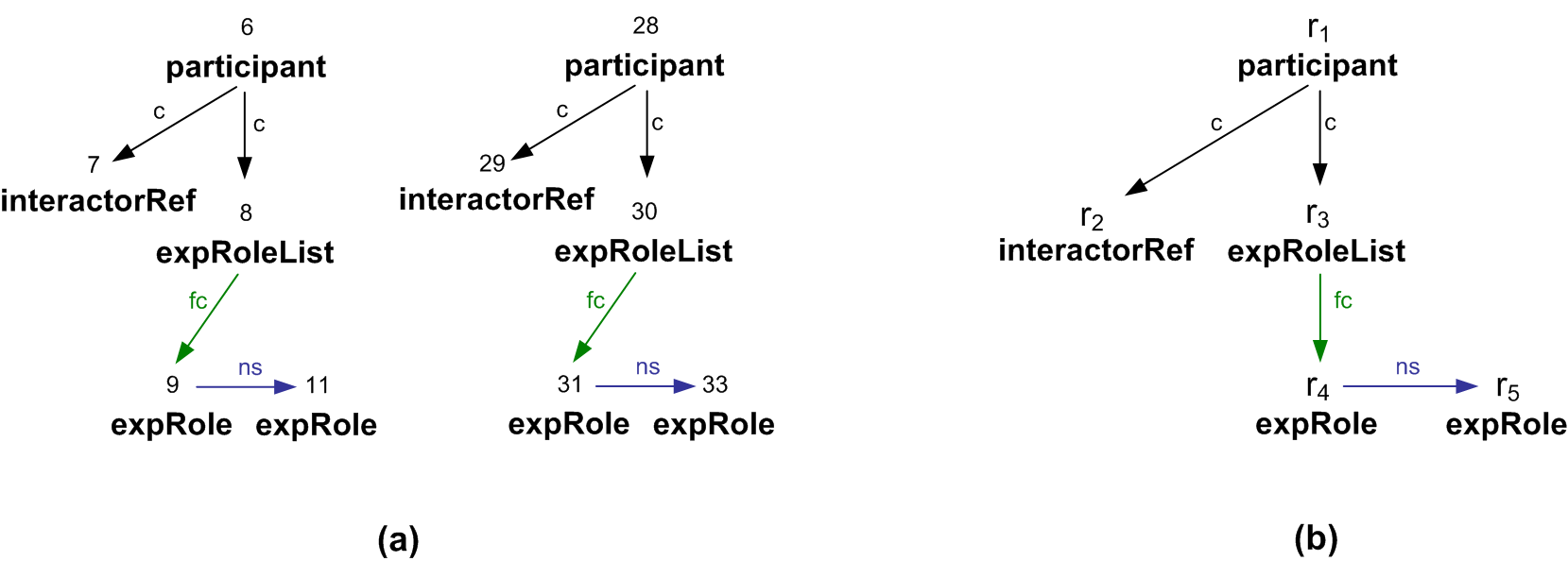}
    \caption{Two $[participant].c.fc.ns^*$ neighbourhoods (a) and their representative neighbourhood (b) from our running example}
    \label{fig:extentAxPRE}
\end{figure}

\begin{example} [Extent AxPRE]
Consider the two neighbourhoods of Figure~\ref{fig:extentAxPRE}
(a) from our running example. They are tree
$[participant].c.fc.ns^*$ neighbourhoods of elements $6$ and $28$,
respectively. In this case, the bisimulation contraction of both
neighbourhoods is an axis graph isomorphic to them and appears in
Figure~\ref{fig:extentAxPRE} (b). Since the label of both nodes
$6$ and $28$ is $participant$, the extent AxPRE begins with the
prefix $[participant]$. In addition, $LPath(6) = \{
c[interactorRef], c[expRoleList],
c[expRoleList].\mathit{fc}[expRole],$
$c[expRoleList].\mathit{fc}[expRole].ns[expRole] \} = LPath(28)$,
which means that the AxPRE contains a conjunction of four
subAxPREs, resulting in $[participant].(c[interactorRef] | c$
$[expRoleList] | c[expRoleList].\mathit{fc}[expRole] |
c[expRoleList].\mathit{fc}[expRole].ns[expRole]).$ \qed
\end{example}

According to Definition \ref{def:stability} (summary axis
stability), forward-stable edges provide stronger information on
the axis relationship that nodes in their extents satisfy: from a
forward-stable edge $\langle s_i, s_j \rangle$ labeled $axis$, we
know that all nodes in the extent of $s_i$ are related by $axis$
to some nodes in the extent of $s_j$. Thus, we are particulary
interested in neighbourhoods in which all edges are forward-stable
for their descriptive capabilities.

\begin{definition} [Forward-stable Neighbourhood] \label{def:stableNeigh}
A \emph{forward-stable neighbourhood} of an SD node $s$ is a
neighbourhood of $s$ with all its edges forward-stable. \qed
\end{definition}

An AxPRE always describes some neighbourhood in an axis graph,
either of an instance or an SD. When an AxPRE describes a
forward-stable neighbourhood in the SD graph, it is called a
\emph{neighbourhood AxPRE}. If all edges in the $\alpha$
neighbourhood of SD node $s$ are forward-stable, the extent AxPRE
of $s$ can be computed from them rather than from the axis graph
of the instance.

\begin{figure}
    \centering
        \includegraphics{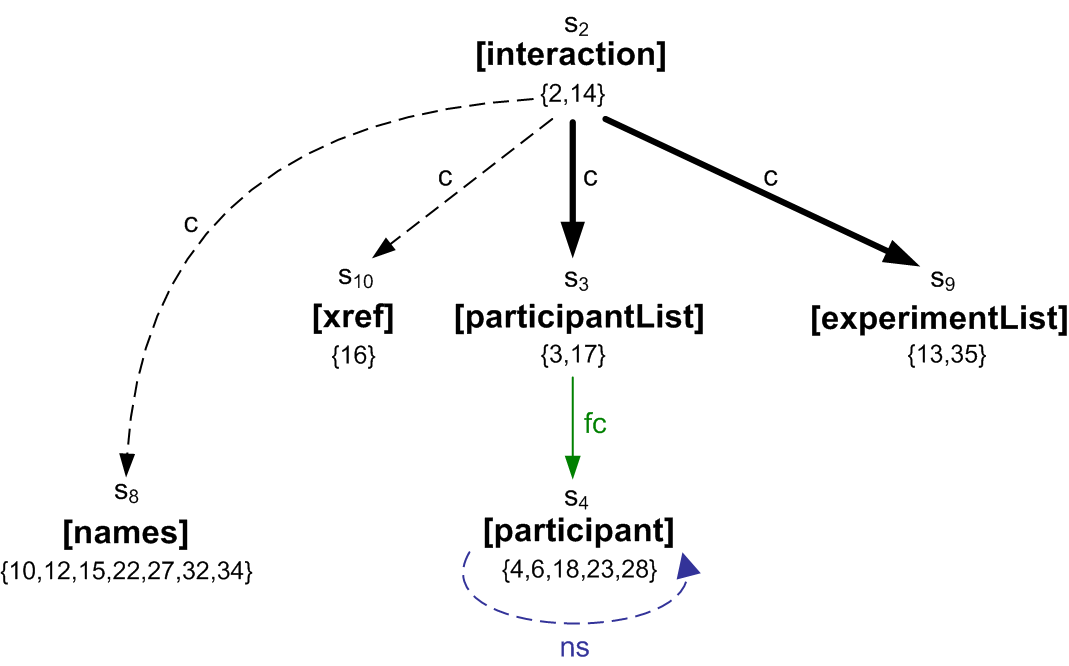}
    \caption{The $c.fc.ns^*$ neighbourhood of node $s_2$ of Figure \ref{fig:PSIMI-label}}
    \label{fig:neighbourhood}
\end{figure}

\begin{example} [Neighbourhood AxPRE] \label{exa:detailed}
Consider node $s_2$ in Figure \ref{fig:neighbourhood}. Its current
AxPRE is $[interaction]$, which means that its extent contains
only interaction elements. We can infer from the SD graph an
neighbourhood AxPRE as follows. Since edges $\langle s_2, s_3
\rangle$, $\langle s_2, s_9 \rangle$, and $\langle s_3, s_4
\rangle$ are forward-stable, we could write an AxPRE that
expresses those relations, which is $
[interaction].(c[participantList].\mathit{fc}[participant]|c[experimentList]).$
Such an AxPRE tells us that not only the extent of $s_2$ contains
interaction elements, but more precisely they also have nested
elements such as a participantList with a nested participant, and
an experimentList. \qed
\end{example}

\section{Refinement and stabilization}
\label{sec:NeighbourhoodStabilization}

The description provided by a node in the SD can be changed by an
operation that modifies its AxPRE and thus its AxPRE
neighbourhood. This operation is called a \emph{refinement} of an
SD node. The refinement of an SD node can be computed directly by
changing the AxPRE of the node (Algorithm \ref{alg:refineNode}) or
by stabilizing a summary neighbourhood for a given AxPRE
(Algorithm \ref{alg:n-stabilize}). Note that Algorithm
\ref{alg:refineNode} in fact changes one of the AxPREs in the
definition of the SD, so all nodes that share the modified AxPRE
will be affected.

\begin{figure}
\begin{algorithm}
\label{alg:refineNode}

    \mbox{}\\
    \parbox{\textwidth}{
    \noindent{\sl \underline{refineNode$(D,s,\alpha)$}}\\[0.5em]
    \noindent{\bf Input:} An SD $D$, a node $s$ in $D$, and an AxPRE $\alpha \subseteq axpre(s)$ \\
    \noindent{\bf Output:} An SD $D$ where $s$ has been refined by $\alpha$

    \begin{algorithmic} [1]

    \FOR {every $v$ in $extent(s)$}
        \STATE $candidate := \emptyset$
        \STATE compute the $\alpha$ neighbourhood of $v$: $\mathcal{N}_{\alpha}(v)$
        \FOR {every node $s$ in $D$ such that $axpre(s) := \alpha$ }
            \STATE let $w$ be a node in $extent(s)$
            \STATE compute the $\alpha$ neighbourhood of $w$: $\mathcal{N}_{\alpha}(w)$
            \IF {$v \sim^\alpha w$ (i.e., $\mathcal{N}_{\alpha}(v) \sim \mathcal{N}_{\alpha}(w)$) }
                \STATE $candidate := s$
            \ENDIF
        \ENDFOR
        \IF {$candidate = \emptyset$}
            \STATE create a new node $candidate$ in $D$
            \STATE $axpre(candidate) := \alpha$
            \STATE $\lambda^D(candidate) := \lambda(v)$
        \ENDIF
        \STATE move $v$ from $extent(s)$ to $extent(candidate)$
    \ENDFOR
    \STATE let $S$ be the set of nodes connected to $s$
    \FOR {every node $s'$ in $S$}
        \STATE add edges $\langle candidate, s' \rangle$ and $\langle s', candidate \rangle$ if conditions in Definition \ref{def:stability} are satisfied
    \ENDFOR
    \STATE delete $s$ and all its incoming and outgoing edges from $D$

    \end{algorithmic}
}

 \end{algorithm}
 \end{figure}

\begin{figure}
\begin{algorithm}
\label{alg:edgeStabilization}

    \mbox{}\\
    \parbox{\textwidth}{
    \noindent{\sl \underline{stabilizeEdge$(D,s_i,s_j)$}}\\[0.5em]
    \noindent{\bf Input:} An SD $D$ containing  a non forward-stable edge
$e=\langle s_i, s_j \rangle$ with label $axis$\\
    \noindent{\bf Output:} An SD $D$ where $e$ has been replaced
    by forward-stable $e'=\langle s_i', s_j \rangle$

    \begin{algorithmic} [1]

    \STATE $\alpha:=axpre(s_i) | axis \; axpre(s_j)$
    \FOR {every node $s$ in $D$ such that $axpre(s)=axpre(s_i)$}
        \STATE refineNode$(D,s,\alpha)$
    \ENDFOR

    \end{algorithmic}
}

 \end{algorithm}
 \end{figure}

\begin{figure}
\begin{algorithm}
\label{alg:axisStabilization}

    \mbox{}\\
    \parbox{\textwidth}{
    \noindent{\sl \underline{stabilizeAxis$(D,s_i,axis)$}}\\[0.5em]
    \noindent{\bf Input:} An SD $D$ containing a non forward-stable edge from $s_i$ with label $axis$\\
    \noindent{\bf Output:} An SD $D$ where all $axis$ edges from $s_i$ are forward-stable

    \begin{algorithmic} [1]

    \STATE $\alpha:=axpre(s_i) | axis$
    \FOR {every node $s$ in $D$ such that $axpre(s)=axpre(s_i)$}
        \STATE refineNode$(D,s,\alpha)$
    \ENDFOR

    \end{algorithmic}
}

 \end{algorithm}
 \end{figure}

\begin{figure}
\begin{algorithm}
\label{alg:edgeUnfolding}
    \mbox{}\\
    \parbox{\textwidth}{
    \noindent{\sl \underline{unfoldEdge$(D,s_i,axis)$}}\\[0.5em]
    \noindent{\bf Input:} An SD $D,$ a node $s_i$ such that there
    exists a non forward-stable $e=\langle s_i, s_i \rangle$ with label $axis$\\
    \noindent{\bf Output:} The SD $D$ where any edge $e=\langle s_i, s_i \rangle$ with label
    $axis$ is forward-stable

    \begin{algorithmic} [1]

    \STATE $\alpha:=axpre(s_i) | axis^*$
    \FOR {every node $s$ in $D$ such that $axpre(s)=axpre(s_i)$}
        \STATE refineNode$(D,s,\alpha)$
    \ENDFOR

    \end{algorithmic}
}
 \end{algorithm}
\end{figure}

\vspace{.2in}

Previous proposals perform global refinements on the entire SD
graph \cite{KBNK02,KSBG02} or local refinements based on
statistics or workload \cite{QLO03,HY04,PG06b}, without the
ability to refine a declaratively defined neighbourhood. In
contrast, using \dx\ we can precisely characterize the
neighbourhood considered for the refinement with an AxPRE.

\vspace{.2in}

\dx\ refinements can also be based on the notion of summary axis
stability (Definition~\ref{def:stability}). The goal of this
particular refinement operation is to make all edges of a
neighbourhood, given by an AxPRE in the SD graph, forward-stable.
Edges can be stabilized one at a time or by groups with the same
axis. For the former approach, \dx\ implements two different
strategies. If the edge links two different nodes, then
Algorithm~\ref{alg:edgeStabilization} is invoked. In contrast, if
the edge forms a loop, then Algorithm~\ref{alg:edgeUnfolding} is
used. For stabilizing a group of edges with the same axis from a
given node, \dx\ invokes Algorithm~\ref{alg:axisStabilization}.
All algorithms mentioned above reduce edge stabilization to
refinement: step 1 in each algorithm composes a new AxPRE and step
3 refines the affected nodes by calling
Algorithm~\ref{alg:refineNode}.

\vspace{.2in}

The next two examples illustrate how a non forward-stable edge is
stabilized by Algorithms~\ref{alg:edgeStabilization} and
\ref{alg:edgeUnfolding}, respectively.

\begin{example} [Edge Stabilization] \label{exa:stable}
Consider edge $\langle s_4, s_6 \rangle$ from Figure
\ref{fig:stable} (a). This edge is not forward-stable because
elements 4 is not related to any node in $extent(s_6)$ via the $c$
axis (i.e. there is no $c$ edge from 4 to a expRoleList element in
Figure \ref{fig:PSIMI}). Edge stabilization (Algorithm
\ref{alg:edgeStabilization}) creates two nodes, $s_{41}$ and
$s_{42}$, such that $extent(s_{41}) = \{ 4 \}$ and $extent(s_{42})
= \{ 6 , 18 , 23 , 28 \}$. Since $axpre(s_4)=[participant]$ and
$axpre(s_6)=[expRoleList]$ (the original AxPREs), then line 1 of
Algorithm~\ref{alg:edgeStabilization} creates the new AxPRE
$[participant]|c[expRoleList]$, which will be used to refine all
nodes with $[participant]$ AxPRE (lines 2 and 3). The new edge
$\langle s_{41}, s_5 \rangle$ is forward-stable. The result of
stabilizing edge $\langle s_4, s_6 \rangle$ is shown in Figure
\ref{fig:stable} (b). \qed
\end{example}


\begin{example} [Edge Unfolding] \label{exa:unfold}
Consider the $ns$ loop on node $s_{42}$ from Figure
\ref{fig:stable} (b). The edge is not forward-stable because some
element in $extent(s_{42})$ is not in a $ns$ relation with
elements in the same extent (for instance, there is no element
that is the next sibling of 28 in Figure \ref{fig:PSIMI}). Since
$axpre(s_{42})=[participant]|c[expRoleList]$ (the result of the
stabilization performed in Example \ref{exa:stable}), then line 1
of Algorithm~\ref{alg:edgeUnfolding} creates the new AxPRE
$[participant]|c[expRoleList]|ns^*$, which will be used to refine
all nodes with $[participant]|$ $c[expRoleList]$ AxPRE (lines 2
and 3). The new edges are forward-stable. The result of unfolding
$ns$ loop on $s_{42}$ is shown in Figure~\ref{fig:stable2}. \qed
\end{example}

\begin{figure}
    \centering
        \includegraphics{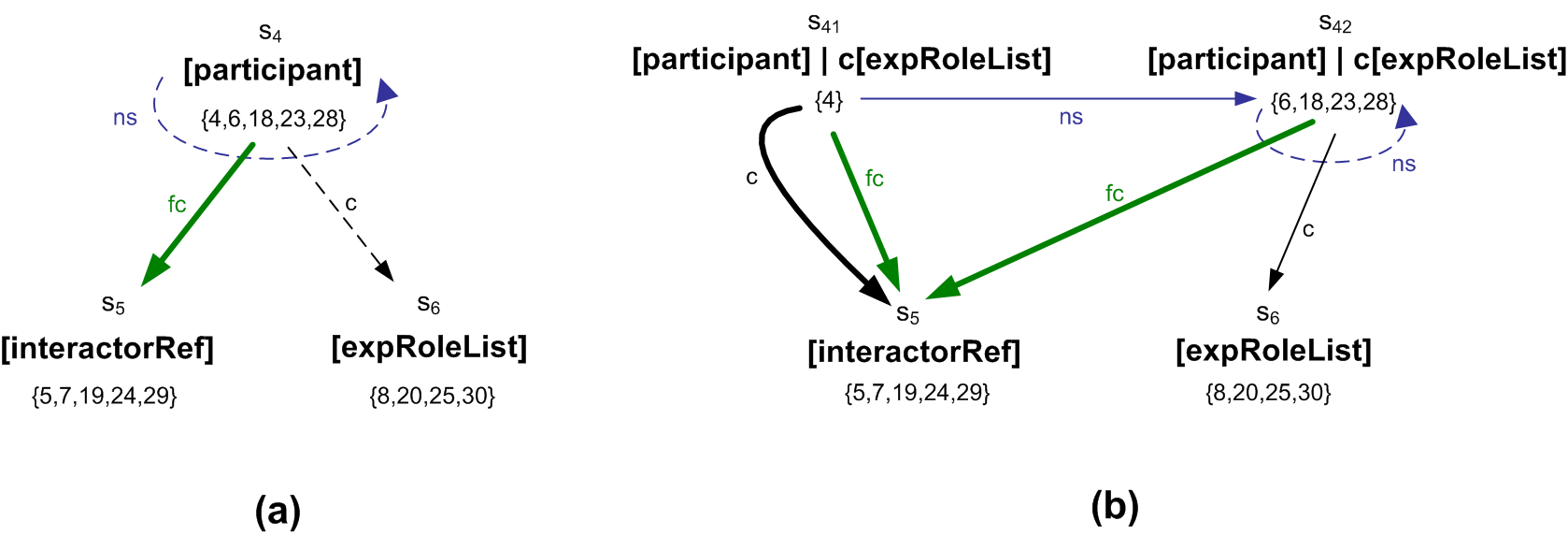}
    \caption{The $\mathit{fc}|c|ns$ neighbourhood of $s_4$ from Figure \ref{fig:PSIMI-label}: (a) before stabilizing $c$ edge to $s_6$, (b) after stabilization}
    \label{fig:stable}
\end{figure}

\begin{figure}
    \centering
        \includegraphics{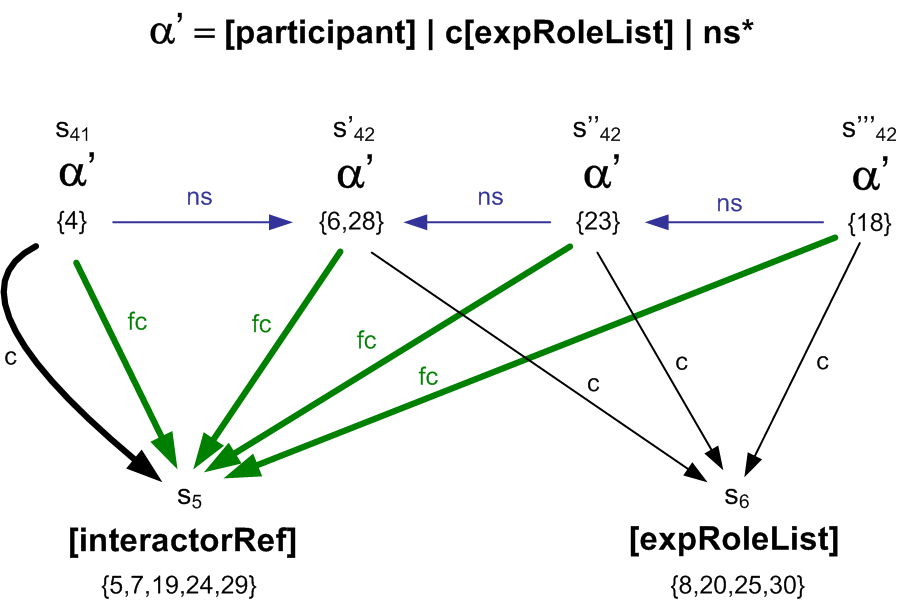}
    \caption{The neighbourhood from Figure \ref{fig:stable} (b) after stabilizing $ns$ loop on $s_{42}$}
    \label{fig:stable2}
\end{figure}

\begin{figure}
\begin{algorithm}
\label{alg:n-stabilize}
    \mbox{}\\
    \parbox{\textwidth}{
    \noindent{\sl \underline{StabilizeNeighbourhood$(D,\alpha,s)$}}\\[0.5em]
    \noindent{\bf Input:} An SD $D,$ an AxPRE $\alpha,$ and a node $s$\\
    \noindent{\bf Output:} An SD $D$ where all the edges in the $\alpha$ neighborhood of
    $s$ are forward-stable

    \begin{algorithmic} [1]

    \STATE compute the $\alpha$ neighbourhood of $s$
    \STATE $S=\{s' \, | \, s'$ is in the  $\alpha$ neighbourhood of $s \}$
    \WHILE {$S \neq \emptyset$}
        \STATE pick a node $s'$ in $S$ such that $s'$ is at the end of the longest simple path from $s$
        \FOR {each edge $e=\langle s', s' \rangle$}
            \STATE unfoldEdge$(D,s',axis)$
        \ENDFOR
        \FOR {each edge $e=\langle s'', s' \rangle$}
            \STATE stabilizeEdge$(D,s',s'')$
        \ENDFOR
        \STATE remove $s'$ from $S$
    \ENDWHILE

    \end{algorithmic}
 }

\end{algorithm}
\end{figure}

We have now all the building blocks for introducing the
neighbourhood stabilization algorithm, Algorithm
\ref{alg:n-stabilize}, which computes a refinement of the extent
of an SD node $s$ for an AxPRE $\alpha$ that results in a stable
$\alpha$ neighbourhood of $s$. Given an SD node $s$ and an AxPRE
$\alpha$, Algorithm \ref{alg:n-stabilize} computes an AxPRE
partition of the extent of $s$ for $\alpha$ that is a refinement
of the extent of $s$. This is achieved by stabilizing the $\alpha$
neighbourhood of $s$. In order to stabilize a single edge,
Algorithm~\ref{alg:n-stabilize} invokes
Algorithm~\ref{alg:edgeStabilization}, for different nodes, and
Algorithm~\ref{alg:edgeUnfolding}, for the same node (loop).
Algorithm~\ref{alg:axisStabilization} is a variation of
Algorithm~\ref{alg:edgeStabilization} in which all edges labeled
with the same axis are stabilized.

Most of the execution of the neighbourhood stabilization algorithm
is covered by Examples~\ref{exa:stable} and \ref{exa:unfold}. For
instance, if we want to stabilize the $[participant].(c|fc|ns^*)$
neighbourhood of node $s_4$ in Figure~\ref{fig:stable} (a), then
Algorithm~\ref{alg:n-stabilize} stabilizes edge $\langle s_4, s_6
\rangle$, as described in Example \ref{exa:stable}, and unfolds
edge $\langle s_7, s_7 \rangle$ labeled $ns$, as described in
Example \ref{exa:unfold}. The resulting stable
$[participant].(c|fc|ns^*)$ neighbourhood is shown in
Figure~\ref{fig:stable2}.

\vspace{.2in}

In this chapter, we discussed how an SD description can be changed
by operations that modify its AxPREs and thus the AxPRE
neighbourhoods of the nodes in their extents. We introduced the
two basic \dx\ operations, AxPRE refinement and stabilization, and
provided algorithms for them. We also gave, for LPath
distinguishable neighbourhoods, a characterization of the extent
of an SD node with an EE. In the next chapter, we discuss the
XPath syntax and data model, together with a novel mechanism for
expressing EEs in XPath.

\chapter{Changing descriptions with XPath}
\label{sec:changingSDswithxpath}

We have discussed how to characterize SD nodes and their extents
using different approaches based on neighbourhoods, sets of axis
label paths, and AxPREs. In this chapter, we propose a novel
mechanism to characterize an SD node with an XPath expression
\cite{W3C:XPath/02} whose evaluation returns exactly the elements
in the extent. This expression, which effectively represents the
extent of a given SD node $s$, is called \emph{extent expression}
(EE) and is denoted $ee(s)$.

In \dx, the extents of any SD node can be precomputed and stored
in a data structure. This approach, which we call
\emph{materialized extents}, uses a pointer to every XML element
in the collection and therefore it can be very space consuming.
Since the evaluation of an $ee(s)$ of a node $s$ returns the
actual extents of $s$, a more space-efficient approach is to keep
only the EEs. These \emph{virtual extents} are a compact
representation of the extents, similar to the concept of virtual
views.

Since EEs are expressed in XPath, we give first an introduction to
the XPath syntax and data model. The formal semantics definition
of the full language is provided for completeness in Appendix
\ref{Section:XPathLanguage}.

\section{XPath syntax and data model}

XPath is a compositional language for selecting element nodes in
XML documents. It is also the dialect that most XML manipulation
languages (e.g., XSLT\footnote{\small \tt{
http://www.w3.org/TR/xslt}}, XPointer\footnote{\small \tt{
http://www.w3.org/TR/xptr/}}, XQuery\footnote{\small \tt{
http://www.w3.org/TR/xquery/}}, etc.) have in common. In this
section we introduce the language expression grammar and its data
model based on axes.

\begin{definition}[XPath Expression Grammar] \label{def:expressions}
Let $e, e_1 \ldots e_m$ be expressions, $locpath$, $locpath_1$,
$\ldots$, $locpath_m$ be location paths, $l$ be a node name from
the label alphabet $Label$ of the axis graph, $axis$ be a relation
in $Axes$, and $op$ be a place holder for any of the XPath
functions and operators such as $+, -, *, div, =, \neq, \leq, <,
\geq$ and $>$, as well as for context accessing functions
$position()$ and $last()$. The following is the grammar for XPath
1.0 expressions:
\[ e := disj \; | \; op(e_1, \ldots, e_m) \]
\[ disj := locpath_1 \; | \ldots | \; locpath_m \]
\[ locpath := par \; | \; comp \; | \; abs \; | \; step \]
\[ par := ( \; disj \; ) \; [ e_1 ] \ldots [ e_m ] \]
\[ comp := locpath_1 \; / \; locpath_2  \]
\[ abs := / \; locpath \]
\[ step := axis \; :: \; l \; [ e_1 ] \; \ldots \; [ e_m ] \] \qed
\end{definition}

The XPath data model includes atomic values, sequences, and a
predefined set of axes for navigating the instance. Like in an
axis graph, which is an abstract representation of the XPath data
model, axes define relationships between nodes in the instance. We
provide next a definition of the XPath axes in terms of
$firstchild$, $nextsibling$, their inverses and $\mathit{self}$.

\begin{definition} [XPath Axes] \label{def:XPathAxes}
Given an axis graph $\mathcal{A} = (\mathit{Inst},$ $Axes,$
$Label,$ $\lambda)$, the XPath axes in $\mathcal{A}$ are defined
as follows:

\begin{itemize}
\item $\mathit{self}:= \{\langle v,v \rangle \; | \; v \in Inst
\}$ \item $child := firstchild.nextsibling^*$ \item $parent :=
(nextsibling^{-1})^*.firstchild^{-1}$ \item $descendant :=
firstchild.(firstchild \bigcup nextsibling)^*$ \item $ancestor :=
(firstchild^{-1} \bigcup nextsibling^{-1})^*.firstchild^{-1}$
\item $descendant$-$or$-$\mathit{self} := descendant \bigcup
\mathit{self} $ \item $ancestor$-$or$-$\mathit{self} := ancestor
\bigcup \mathit{self} $ \item $following :=
ancestor$-$or$-$\mathit{self}.nextsibling.nextsibling^*.descendant$-$or$-$\mathit{self}$
\item $preceding :=
ancestor$-$or$-$\mathit{self}.nextsibling^{-1}.(nextsibling^{-1})^*.descendant$-$or$-$\mathit{self}$
\item $following$-$sibling := nextsibling.nextsibling^*$ \item
$preceding$-$sibling := (nextsibling^{-1})^*.nextsibling^{-1}$
\qed
\end{itemize}

Whenever it is clear from the context, we use $s$, $c$, $p$, $d$,
$a$, $ds$, $as$, $f$, $pc$, $\mathit{fs}$ and $ps$ as
abbreviations of $\mathit{self}$, $child$, $parent$, $descendant$,
$ancestor$, $descendant$-$or$-$\mathit{self}$,
$ancestor$-$or$-$\mathit{self}$, $following$, $preceding$,
$following$-$sibling$ and $preceding$-$sibling$, respectively.
\qed

\end{definition}

Note that the $\mathit{self}$, $ancestor$, $descendant$,
$preceding$, and $following$ axes from a given node $v$ partition
the nodes in the XML tree. This is represented graphically by the
schema in Figure \ref{fig:partition}.

\begin{figure}
    \centering
    \includegraphics{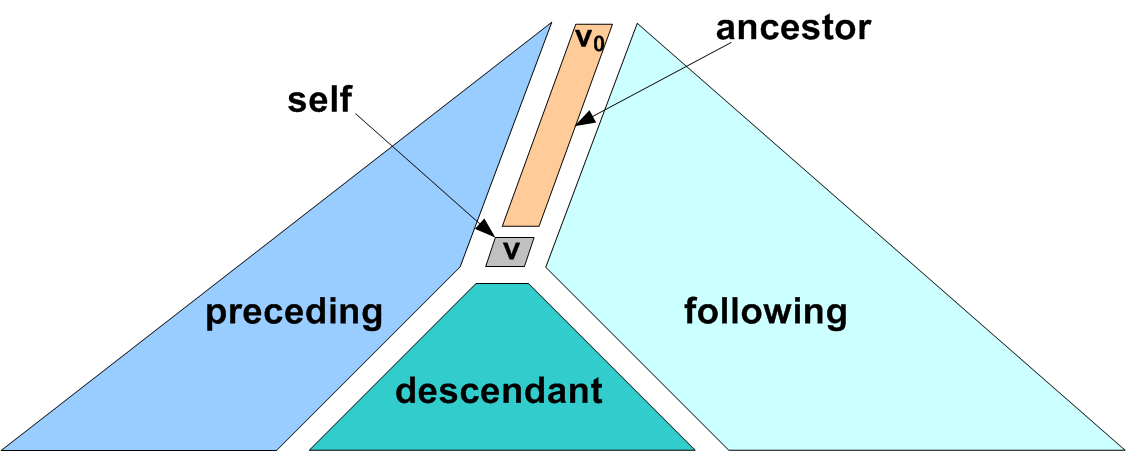}
    \caption{Partition of the nodes in the XML tree by axis relations}
\label{fig:partition}
\end{figure}

Since XML documents are ordered, we need to define a
\emph{document order} relation on the nodes of an axis graph
$\mathcal{A}$.

\begin{definition} [Document Order] \label{def:doc_order}
The \emph{document order} relation $\prec_{doc}$ on an axis graph
$\mathcal{A} = (\mathit{Inst},$ $Axes,$ $Label,$ $\lambda)$ is the
total order relation given by $d \bigcup f$, where $d$ and $f$ are
the XPath axes in $Axes$ from Definition~\ref{def:XPathAxes}. \qed
\end{definition}

Based on the document order relation and its inverse we define
next \emph{axis order} and \emph{axis position}.

\begin{definition}[Axis Order] \label{def:order}
Let axis graph $\mathcal{A} = (\mathit{Inst},$ $Axes,$ $Label,$
$\lambda)$ be an axis graph. We define the binary \emph{axis
order} relation $\prec_{axis}$ in $\mathit{Inst} \times
\mathit{Inst}$ as $\prec_{doc}$ if $axis \in \{s$, $c$, $d$, $ds$,
$f$, $fs \}$ and as $\prec_{doc}^{-1}$ otherwise. \qed
\end{definition}

Having introduced the XPath syntax and data model, we discuss next
how descriptions are changed in \dx\ using XPath.

\section{Refinement with XPath} \label{sec:xpathRefinement}

Whenever the representative neighbourhood of an SD node $s$ is
LPath distinguishable, it is possible to precisely characterize
the extent of $s$ in terms of the axis label paths in its $LPath$
set (see Chapter \ref{sec:acyclic}). For this class of
neighbourhood, nodes with the same LPath set are bisimilar
(Proposition \ref{pro:bisimilarityAndLPaths}). Therefore, we
propose a mechanism capable of computing the extent of $s$ based
on its $LPath$ set.

First, we need a few auxiliary results that show how an axis label
path in a given LPath set can be captured by a single XPath
expression. We will show later how to derive EEs from these axis
label path expressions. In order to prove our results, we use the
XPath formal semantics given in Appendix
\ref{Section:XPathLanguage}.

\begin{lemma} \label{thm:expressionLabelPath}
Let $e$ be an XPath expression of the form $axis_1::l_1/ \ldots
/axis_n::l_n$. If $\mathcal{D}[\![e]\!](v)\neq \emptyset$ then
there exists an axis label path $lp = axis_1[l_1]. \ldots
.axis_n[l_n]$ from $v$. \qed
\end{lemma}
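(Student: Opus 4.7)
My plan is to prove the lemma by induction on $n$, the number of steps in the XPath expression, relying on the compositional semantics of the XPath ``/'' operator (as given in Appendix~\ref{Section:XPathLanguage}) and the semantics of a single XPath step.

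For the base case $n=1$, the expression is $e = axis_1 :: l_1$. By the semantics of a single step, $\mathcal{D}[\![e]\!](v) \neq \emptyset$ means there exists a node $v_1 \in \mathit{Inst}$ such that $\langle v, v_1 \rangle \in axis_1$ and $\lambda(v_1) = l_1$. By Definition~\ref{def:path}, this yields an axis label path $lp = axis_1[l_1]$ from $v$ (through $v_1$), which is the desired conclusion.

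For the inductive step, assume the claim holds for any XPath expression consisting of a composition of $n-1$ steps, and consider $e = axis_1::l_1/axis_2::l_2/\ldots/axis_n::l_n$. By the compositional semantics of ``/'', the set $\mathcal{D}[\![e]\!](v)$ is the union, over all $v_1 \in \mathcal{D}[\![axis_1::l_1]\!](v)$, of $\mathcal{D}[\![axis_2::l_2/\ldots/axis_n::l_n]\!](v_1)$. Since this union is non-empty, there must exist some $v_1$ for which both $v_1 \in \mathcal{D}[\![axis_1::l_1]\!](v)$ and $\mathcal{D}[\![axis_2::l_2/\ldots/axis_n::l_n]\!](v_1) \neq \emptyset$. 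From the base-case reasoning applied to the first step, $\langle v, v_1 \rangle \in axis_1$ and $\lambda(v_1) = l_1$. From the inductive hypothesis applied to the tail expression at $v_1$, there exists an axis label path $lp' = axis_2[l_2].\ldots.axis_n[l_n]$ from $v_1$. Concatenating the edge $\langle v, v_1 \rangle$ with the path realizing $lp'$ yields an axis label path $lp = axis_1[l_1].axis_2[l_2].\ldots.axis_n[l_n]$ from $v$, as required.

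The only non-routine part is making precise the appeal to XPath semantics: we must confirm that the composition operator ``/'' in XPath does in fact distribute evaluation sequentially (so that the tail of the expression is evaluated on each intermediate result), and that an atomic step $axis::l$ returns exactly those targets of $axis$-edges from the context node whose labels equal $l$. Both of these are direct consequences of the formal semantics in Appendix~\ref{Section:XPathLanguage}, so no real obstacle arises; the proof is essentially a straightforward structural induction.
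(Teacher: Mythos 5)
Your proof is correct and follows essentially the same route as the paper's: the paper simply unfolds the composition and step semantics in one shot to extract the witness chain $v_1,\ldots,v_n$ with $\langle v_{i-1},v_i\rangle\in axis_i$ and $\lambda(v_i)=l_i$, which is exactly what your induction on $n$ establishes step by step. Your version just makes that unrolling explicit; no substantive difference.
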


\begin{proof}
If $\mathcal{D}[\![e]\!](v)\neq \emptyset$, then by semantic rule
(\ref{sem:locPath}) in Figure \ref{sem:expressions} there must
exist  $v_1, \ldots, v_n$ such that $\langle v, v_1 \rangle \in
axis_1 \wedge$ $\langle v_1, v_2 \rangle \in axis_2 \wedge \ldots
\wedge$ $\langle v_{n-1}, v_n \rangle \in axis_n,$ and
$\lambda(v_i) = l_i$, $1 \leq i \leq n$. This means that there is
a path from $v$ to $v_n$ going through edges $axis_1, \ldots,
axis_n$ and nodes $v_1, \ldots, v_n$ such that $axis_1[l_1].
\ldots .axis_n[l_n]$ is its axis label path. \qed
\end{proof}

\begin{lemma} \label{thm:notExpressionLabelPath}
Let $e$ be an XPath expression of the form $axis_1::l_1/ \ldots
/axis_n::l_n$. If $\mathcal{D}[\![e]\!](v) = \emptyset$ then there
is no axis label path $lp= axis_1[l_1]. \ldots .axis_n[l_n]$ from
$v$. \qed
\end{lemma}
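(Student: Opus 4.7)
The natural plan is to establish the contrapositive: if there exists an axis label path $lp = axis_1[l_1]. \ldots .axis_n[l_n]$ starting from $v$, then $\mathcal{D}[\![e]\!](v) \neq \emptyset$. This is essentially the converse direction of Lemma \ref{thm:expressionLabelPath}, so the argument should mirror that proof, but running the implication in the opposite direction through the same XPath semantic rule.

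First I would unpack the hypothesis. By Definition \ref{def:path}, the existence of the axis label path $axis_1[l_1]. \ldots .axis_n[l_n]$ from $v$ means there exist nodes $v_1, \ldots, v_n$ (with $v_0 := v$) satisfying $\langle v_{i-1}, v_i \rangle \in axis_i$ and $\lambda(v_i) = l_i$ for every $1 \leq i \leq n$. The goal then reduces to showing that $v_n$ belongs to the result of evaluating $e$ at $v$.

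Next I would appeal to the XPath formal semantics given in Appendix \ref{Section:XPathLanguage}, specifically rule (\ref{sem:locPath}) for location path composition (the same rule invoked in the proof of Lemma \ref{thm:expressionLabelPath}). A short induction on $n$ suffices: for the base case $n=1$, the step $axis_1::l_1$ evaluated at $v$ returns all nodes reachable via $axis_1$ whose label is $l_1$, which includes $v_1$ by construction; for the inductive step, assuming $v_{n-1} \in \mathcal{D}[\![axis_1::l_1/\ldots/axis_{n-1}::l_{n-1}]\!](v)$, the composition rule yields $v_n \in \mathcal{D}[\![e]\!](v)$ because $\langle v_{n-1}, v_n \rangle \in axis_n$ and $\lambda(v_n) = l_n$. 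Hence $\mathcal{D}[\![e]\!](v) \neq \emptyset$, contradicting the hypothesis and completing the contrapositive.

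I do not expect any real obstacle here: the lemma is the converse direction of Lemma \ref{thm:expressionLabelPath} and the semantic rule (\ref{sem:locPath}) is an ``if and only if'' characterization of membership in a location-path evaluation, so the two directions are symmetric. The only subtlety worth stating explicitly is that each step in $e$ is purely a named axis step with no predicates, which is exactly the form given in the lemma's hypothesis; this is what lets the induction go through without any extra filtering conditions beyond the label tests $\lambda(v_i) = l_i$ already guaranteed by the axis label path.
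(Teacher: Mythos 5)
Your proof is correct and takes essentially the same route as the paper: both arguments rest on unwinding the semantics of the chained axis steps, which characterizes $\mathcal{D}[\![e]\!](v)$ exactly as the set of endpoints of label-matching axis paths from $v$. The paper states this directly (empty result set implies no such chain of nodes exists), whereas you phrase it as the contrapositive with an explicit induction on $n$; the difference is purely presentational, and your explicit induction is, if anything, slightly more careful than the paper's one-line unwinding.
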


\begin{proof}
If $\mathcal{D}[\![e]\!](v) = \emptyset$, then by semantic rule
(\ref{sem:locPath}) in Figure \ref{sem:expressions} there are no
$w_1, \ldots, w_m$ such that $\langle v, w_1 \rangle \in axis_1,
\langle w_1, w_2 \rangle \in axis_2, \ldots, \langle w_{m-1}, w_m
\rangle \in axis_m$ and $\lambda(w_i) = l_i$, $1 \leq i \leq m$.
This means that there is no path from $v$ to $w_n$ going through
edges $axis_1, \ldots, axis_m$ and nodes $w_1, \ldots, w_m$, and
thus there is no axis label path $lp = axis_1[l_1]. \ldots
.axis_m[l_m]$ from $v$. \qed
\end{proof}

Consider SD node $s$ with AxPRE $\alpha$. In order to compute the
extent of $s$ we need to get all nodes that have the same $LPath$
set and label as $s$. Therefore, we need to write an XPath
expression $e_i$ as defined in Lemma \ref{thm:expressionLabelPath}
for each different axis label path $lp_i$ in $LPath$. Then, all
$e_i$ expressions have to be combined in one EE as follows:
$exp=/ds::\lambda(s)[e_1]\ldots[e_n]$. However, such an expression
does not guarantee that the returned nodes have the exact $\{
lp_1, \ldots, lp_n \}$ LPath set: it only guarantees containment.
That is, $exp$ will return all nodes $v$ such that $LPath(v)
\supseteq \{ lp_1, \ldots, lp_n \}$. The reason for that is that
$exp$ says that all $[e_1]\ldots[e_n]$ have to be satisfied, but
it does not say they have to be the only ones, which would be
required for equality. The way of circumventing this problem is by
explicitly adding a $[not(e_i)]$ predicate for each $lp_i$ that is
not in $LPath(v)$.

The problem with this approach is that it would require the
explicit negation of a large number of axis label paths. However,
we can drastically reduce that number by considering only SD nodes
that have an AxPRE $\alpha'$ such that $\alpha' \subseteq \alpha$.
The intuition is that, if two AxPREs of SD nodes $s_1$ and $s_2$
are not in a containment relationship, then nodes in their extents
cannot have $LPath$ sets in a containment relationship either and
we do not need to have a $not()$ predicate for them. The following
example illustrates how an EE is composed from axis label paths
expressions and $not()$ predicates.

\begin{example} [Extent Expressions] \label{ex:extentExpressions}
Consider SD nodes $s_{41}$, $s_{42}$, and $s_{43}$ from Figure
\ref{fig:SDNeighbourhood2}. For the EE of $s_{41}$, we need all
axis label paths that are in $LPath(s_{41})$ but not in
$LPath(s_{42}) \cup LPath(s_{43})$. The required LPath sets are
the following: $LPath(s_{41}) = \{ c[interactorRef] \}$,
$LPath(s_{43}) =
\{c[interactorRef],c[expRoleList].\mathit{fc}[expRole] \}$, and
$LPath(s_{42}) = \{
c[interactorRef],c[expRoleList].\mathit{fc}[expRole].ns[expRole]\}$.
The final EE will have a positive predicate for each string in
$LPath(s_{41})$ and a negative one for each string in
$(LPath(s_{42}) \cup LPath(s_{43})) - LPath(s_{41})$. The
resulting expression is $ee(s_{41}) =
/ds::participant[c::interactorRef][not(c::expRoleList/c::*[1][s::expRole])]
[not(c::expRoleList/c::*[1][s::expRole]/\mathit{fs}::*[1][expRole])]$,
where $c::*[1][s::expRole]$ and $\mathit{fs}::*[1][expRole]$ are
the XPath expressions of $\mathit{fc}[expRole]$ and $ns[expRole]$,
respectively. \qed
\end{example}

\begin{figure}
    \centering
        \includegraphics{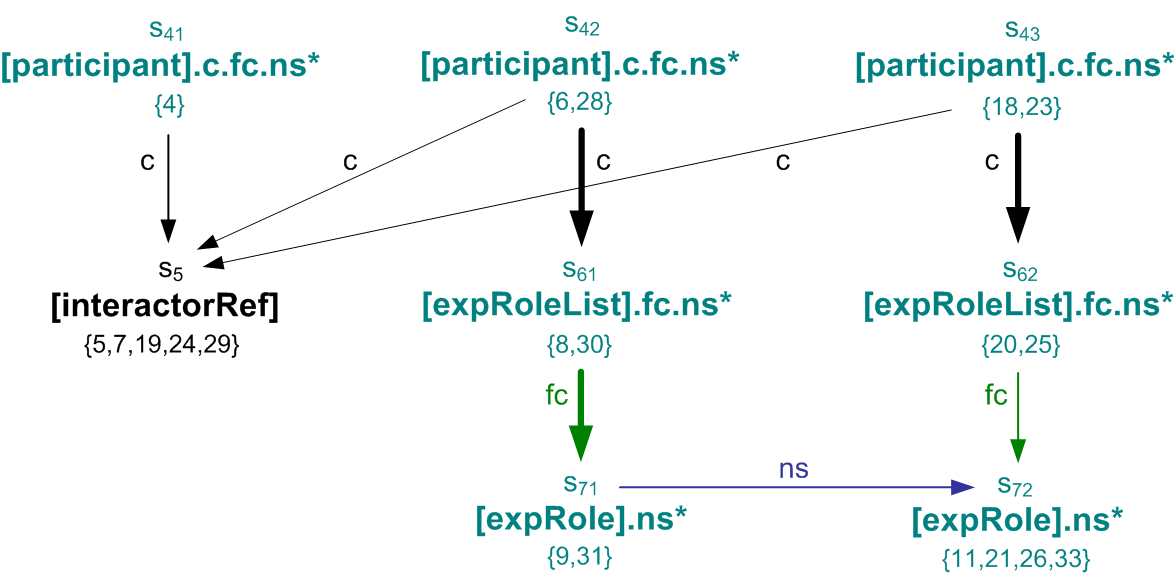}
    \caption{The $[participant].c.\mathit{fc}.ns^*$ neighbourhood from Figure \ref{fig:PSIMI-axpre}}
    \label{fig:SDNeighbourhood2}
\end{figure}

Note that the EEs resulting from this approach might have
redundant predicates that can be simplified. Consider Example
\ref{ex:extentExpressions} for instance: if a node does not
exists, neither does a following sibling for that node, then the
last predicate for $ee(s_{41})$ can be removed safely. There are
many other useful simplifications that can be applied to EEs, but
a broad theory of equivalence is beyond the scope of this thesis.

\vspace{.2in}

The next proposition shows that the EEs thus constructed return
all nodes that do have the axis label paths specified in the
positive predicates and do not have those in the negative
predicates.

\vspace{.1in}

\begin{proposition} \label{pro:extentXPath}
Let $\mathcal{A}$ be an axis graph and $e_s=ds::l[e_1] \ldots
[e_n][not(e_{n+1})] \ldots [not(e_m)]$ an XPath expression where
$e_i=axis_{i_1}::l_{i_1}/ \ldots /axis_{i_{k_i}}::l_{i_{k_i}}$, $1
\leq i \leq m$. Then, $e_s$ returns all nodes $v$ such that there
exists $lp_1, \ldots, lp_n$ axis label paths from $v$ and there
are no $lp_{n+1}, \ldots, lp_m$ axis label paths from $v$, where
$lp_i= axis_{i_1}[l_{i_1}]. \ldots .axis_{i_{k_i}}[l_{i_{k_i}}].$
\qed
\end{proposition}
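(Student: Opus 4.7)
The plan is to unfold the semantics of the XPath expression $e_s$ layer by layer using the formal semantics in Appendix~\ref{Section:XPathLanguage}, and then reduce each predicate evaluation to an invocation of Lemma~\ref{thm:expressionLabelPath} or Lemma~\ref{thm:notExpressionLabelPath}. I would prove both directions (a node is returned by $e_s$ if and only if it satisfies the positive/negative LPath conditions), since the statement is phrased as an exact characterization of the output of $e_s$.

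First I would peel off the leading step $ds::l$. By semantic rule (\ref{sem:locPath}) applied with the $descendant$-$or$-$\mathit{self}$ axis and label test $l$, the intermediate result of evaluating $ds::l$ from the context node is exactly the set $V_l$ of all nodes $v \in \mathit{Inst}$ with $\lambda(v) = l$. Then I would apply the predicate semantics: attaching predicates $[e_1]\ldots[e_n][not(e_{n+1})]\ldots[not(e_m)]$ filters $V_l$ to the set of nodes $v \in V_l$ such that, taking $v$ as context, each of $e_1,\ldots,e_n$ evaluates to a non-empty node set and each of $e_{n+1},\ldots,e_m$ evaluates to the empty node set (the latter because $not(\cdot)$ converts emptiness of the node set to boolean true). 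The composition of predicates behaves like a conjunction, so a node $v$ survives all filters iff each individual condition holds.

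Next I would translate each predicate condition into an axis label path condition. For positive predicates with $1 \le i \le n$: the expression $e_i = axis_{i_1}::l_{i_1}/\ldots/axis_{i_{k_i}}::l_{i_{k_i}}$ evaluates to a non-empty node set from $v$; by Lemma~\ref{thm:expressionLabelPath}, this is equivalent to the existence of an axis label path $lp_i = axis_{i_1}[l_{i_1}].\ldots.axis_{i_{k_i}}[l_{i_{k_i}}]$ from $v$. For negated predicates with $n+1 \le i \le m$: the expression $e_i$ evaluates to the empty node set from $v$; by Lemma~\ref{thm:notExpressionLabelPath}, this is equivalent to the non-existence of the corresponding axis label path $lp_i$ from $v$. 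Conjoining these equivalences over all $i$ yields exactly the characterization in the statement.

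The only real subtlety I anticipate is making sure that Lemmas~\ref{thm:expressionLabelPath} and \ref{thm:notExpressionLabelPath} give iff statements rather than one-directional implications, so that the argument runs in both directions cleanly. As stated they are one-directional, but by considering the contrapositive of each (non-empty result iff some label path exists; empty result iff no such label path exists) the equivalence is immediate, since ``exists'' and ``does not exist'' are logical complements and so are ``non-empty'' and ``empty''. With that observation in hand, the argument becomes a straightforward chain of equivalences and no further machinery is required.
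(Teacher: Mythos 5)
Your proof is correct and follows essentially the same route as the paper: unfold the semantics of the outer $ds::l$ step and its predicates via the rules of Appendix~\ref{Section:XPathLanguage}, then apply Lemmas~\ref{thm:expressionLabelPath} and \ref{thm:notExpressionLabelPath} to turn each positive and negated predicate into an existence or non-existence condition on the corresponding axis label path. Your explicit observation that the two one-directional lemmas together (via contraposition) yield the needed equivalences is a point the paper's equational chain passes over silently, so it is a welcome clarification rather than a deviation.
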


\vspace{.1in}

\begin{proof}
\begin{tabbing}
\hspace{0in} \= \hspace{.1in} \= \hspace{.1in} \= \kill \\
\> \>   $\mathcal{D}[\![ds::l/[e_1] \ldots
[e_n][not(e_1')] \ldots [not(e_m')] ]\!](v_0) $\\
\> = \> \> by semantic rules (\ref{sem:union}) and (\ref{sem:axis}) in Figure \ref{sem:locationPaths} \\
\> \>   $ \{v \; | \; \lambda(v)=l \; \wedge \; \langle v_0,v \rangle \in ds \; \bigwedge^n_{i=1}\mathcal{E}[\![e_i]\!](v,pos_{ds}(v,S),|S|) = true \; $ \\
\> \> \>  \`  $ \bigwedge^m_{i=n+1}\mathcal{E}[\![not(e_i)]\!](v,pos_{ds}(v,S),|S|) = true \} $  \\
\> = \> \> since all nodes $v$ are reachable from $v_0$, $\langle v_0,v \rangle \in ds$ is always true \\
\> \>   $ \{v \; | \; \lambda(v)=l \; \bigwedge^n_{i=1}\mathcal{E}[\![e_i]\!](v,pos_{ds}(v,S),|S|) = true \; $ \\
\> \> \>  \`  $ \bigwedge^m_{i=n+1}\mathcal{E}[\![not(e_i)]\!](v,pos_{ds}(v,S),|S|) = true \} $  \\
\> = \> \> by semantic rule (\ref{sem:op}) in Figure \ref{sem:expressions} with $Op=not(boolean(S)$ from Figure \ref{fig:BasicOperators} \\
\> \>   $ \{v \; | \; \lambda(v)=l \; \bigwedge^n_{i=1}\mathcal{E}[\![e_i]\!](v,pos_{ds}(v,S),|S|) = true \; $ \\
\> \> \>  \`  $ \bigwedge^m_{i=n+1}\mathcal{E}[\![e_i]\!](v,pos_{ds}(v,S),|S|) = false \} $ \\
\> = \> \> by semantic rule (\ref{sem:locPath}) in Figure \ref{sem:expressions}  \\
\> \>   $ \{v \; | \; \lambda(v)=l \; \bigwedge^n_{i=1}\mathcal{D}[\![e_i]\!](v) = true \; \bigwedge^m_{i=n+1}\mathcal{D}[\![e_i]\!](v) = false \} $ \\
\> = \> \> by Lemmas \ref{thm:expressionLabelPath} and \ref{thm:notExpressionLabelPath}  \\
\> \>   $ \{v \; | \; \lambda(v)=l \; \wedge \; \exists lp_1, \ldots, \exists lp_n \; \wedge \; \not \exists lp_{n+1}, \ldots, \not \exists lp_m \} $ \\
\qed
\end{tabbing}
\end{proof}

In some special cases, a more compact XPath expression can be
obtained. For instance, for an expression containing the closure
of an axis, like $c^*$, we can enforce that the $lp_i$'s expressed
by the EE are the only ones by using the $count()$ XPath function.
Since the XPath expression of each $lp_i$ for $c^*$ contains only
compositions of the $child$ axis, the set of nodes reached by all
$lp_i$'s and all their substrings are exactly all the descendants.

\begin{figure}
    \centering
        \includegraphics{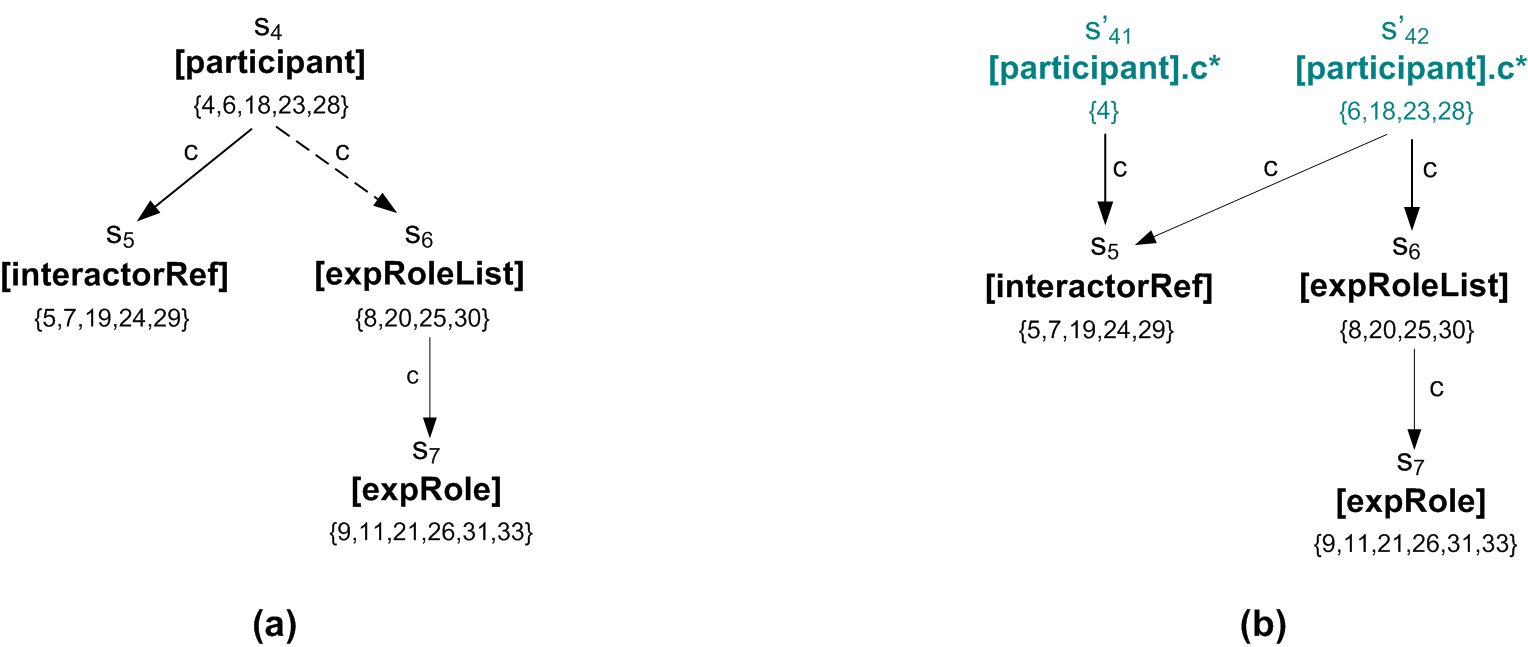}
    \caption{The $[participant].c^*$ neighbourhood of $s_4$ from Figure \ref{fig:PSIMI-label}: (a) before a $c^*$ refinement, (b) after the refinement}
    \label{fig:SDNeighbourhood}
\end{figure}

\begin{example}
Consider the $[participant].c^*$ neighbourhood of nodes $s'_{41}$
and $s'_{42}$ in Figure \ref{fig:SDNeighbourhood}. The extents of
$s'_{41}$ and $s'_{42}$ are $\{4\}$ and $\{6,18,23,28\}$,
respectively. The LPath sets of the nodes are $LPath(s'_{41})= \{
c[interactorRef] \}$ and $LPath(s'_{42})= \{ c[interactorRef],$
$c[expRoleList].c[expRole] \}$, whereas the EEs are
$e_1=ds::participant[c::interactorRef]$
$[count(ds::*)=count(c::interactorRef)]$ and
$e_2=ds::participant[c::interactorRef]$
$[c::expRoleList][c::expRoleList/c::expRole]
[count(d::*)=count(c::interactorRef)+count(c::expRoleList)+count(c::expRoleList/c::expRole].$
\qed
\end{example}

The next proposition shows, for the special case of $c^*$, that
the EEs thus constructed return a set of nodes that do have the
axis label paths specified in the predicates.

\begin{proposition} \label{pro:extentXPath2}
Let $\mathcal{A}$ be an axis graph and $e_s=ds::l[e_1] \ldots
[e_n][count(d::*) = count(e_1)+ \ldots + count(e_n)]$ an XPath
expression where $e_i=c::l_{i_1}/ \ldots /c::l_{i_{k_i}}$, $1 \leq
i \leq m$. Then, $e_s$ returns all nodes $v$ such that there
exists only $lp_1, \ldots, lp_n$ axis label paths from $v$ of the
form $lp_i= c[l_{i_1}]. \ldots .c[l_{i_{k_i}}].$ \qed
\end{proposition}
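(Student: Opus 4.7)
The plan is to mirror the proof of Proposition~\ref{pro:extentXPath}: unfold $\mathcal{D}[\![e_s]\!](v_0)$ step by step with the semantic rules collected in Appendix~\ref{Section:XPathLanguage}, apply Lemma~\ref{thm:expressionLabelPath} to the positive predicates $[e_i]$, and then treat the $count$ predicate as the replacement for the family of negative $[not(\cdot)]$ predicates that appeared in the earlier result. The subtlety is that the previous proposition needed a separate negated predicate for every ``forbidden'' axis label path, whereas here a single arithmetic predicate must rule out \emph{all} of them simultaneously.

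First I would apply rules (\ref{sem:union}), (\ref{sem:axis}), (\ref{sem:op}), and (\ref{sem:locPath}) to rewrite $\mathcal{D}[\![e_s]\!](v_0)$ as the set of $v$ with $\lambda(v)=l$ such that $\mathcal{D}[\![e_i]\!](v)\neq\emptyset$ for every $i$ and such that $|\mathcal{D}[\![d{::}\ast]\!](v)|=\sum_{i=1}^{n}|\mathcal{D}[\![e_i]\!](v)|$. By Lemma~\ref{thm:expressionLabelPath}, the first conjunction is equivalent to the existence of each $lp_i=c[l_{i_1}].\ldots.c[l_{i_{k_i}}]$ as an axis label path from $v$. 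What remains is to show that the count equality is equivalent to the non-existence of any other child-only axis label path from $v$.

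The decisive step exploits the fact that each $e_i$ is a pure composition of $child$ steps. I would establish three observations in turn: (i) because the instance below $v$ is a tree, every descendant is the endpoint of a \emph{unique} child-only label path from $v$, so $\mathcal{D}[\![d{::}\ast]\!](v)$ is the disjoint union over all child-only axis label paths from $v$ of the singleton endpoint of each path; (ii) for each $i$, $\mathcal{D}[\![e_i]\!](v)$ is exactly the endpoint set of $lp_i$, so the two sides of the count equation are comparing $|d{::}\ast|$ with the sum of sizes of a subfamily of a disjoint partition; and (iii) consequently, the equality holds iff the subfamily $\{lp_1,\ldots,lp_n\}$ already exhausts all child-only axis label paths from $v$, which is exactly the ``only'' clause in the proposition.

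The main obstacle will be the prefix bookkeeping, since axis label paths are automatically prefix-closed (Definition~\ref{def:lpathset}): a node with path $lp_i=c[l_{i_1}].\ldots.c[l_{i_{k_i}}]$ also has every proper prefix, whose endpoints must likewise be accounted for on the right-hand side. I would dispose of this by distinguishing two cases. If the input family $\{lp_1,\ldots,lp_n\}$ is not prefix-closed, then no node can have exactly these child-only label paths and the count predicate is unsatisfiable, so both sides of the biconditional are vacuously empty. Otherwise (which is the case for families generated from an $LPath$ set, as in the construction preceding the proposition), every intermediate node along an $lp_i$ is itself an endpoint of some shorter $lp_j$ and is therefore counted exactly once on the right-hand side, making the tiling argument in step~(iii) go through cleanly. \qed
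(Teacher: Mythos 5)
Your proposal is correct and follows essentially the same route as the paper's own proof: unfold the semantics of $e_s$ via rules (\ref{sem:union}), (\ref{sem:axis}), (\ref{sem:op}), and (\ref{sem:locPath}), then argue that the count equality forces $\mathcal{D}[\![d::*]\!](v)=\bigcup_i\mathcal{D}[\![e_i]\!](v)$, which the paper does via Lemmas \ref{thm:inclusionP[e]P[ds]} and \ref{thm:pigeonhole}. Your observation (i) — that in a tree each descendant is the endpoint of a unique child-only label path, so the $\mathcal{D}[\![e_i]\!](v)$ are pairwise disjoint — together with the prefix-closure bookkeeping, supplies exactly the hypotheses that make the paper's cardinality lemma applicable, and is in fact slightly more careful than the paper's own write-up.
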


For proving Proposition \ref{pro:extentXPath2} we need the
following Lemmas.

\begin{lemma} \label{thm:inclusionP[e]P[ds]}
Let $e=child::l_1/ \ldots /child::l_m$ be an XPath expression. For
every node $v \in Inst: \mathcal{D}[\![e]\!](v) \subseteq
\mathcal{D}[\![d::*]\!](v).$ \qed
\end{lemma}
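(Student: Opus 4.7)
The plan is to prove the inclusion by induction on the length $m$ of the composition in $e$, using the XPath semantic rules from Appendix~\ref{Section:XPathLanguage} and the axis definitions from Definition~\ref{def:XPathAxes}. The core observation is that $\mathcal{D}[\![e]\!](v)$ returns nodes reached by a chain of $child$ steps from $v$, and by Definition~\ref{def:XPathAxes}, $child \subseteq descendant$ and any composition of $child$ relations is contained in $descendant$ (since $descendant = firstchild.(firstchild \cup nextsibling)^*$ is closed under $child$-composition, i.e.\ $child.descendant \subseteq descendant$). Hence any node reached by $m$ successive $child$ steps is a descendant of $v$, which will give the inclusion once we recall that $\mathcal{D}[\![d::*]\!](v)$ returns all descendants of $v$ regardless of label.

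For the base case $m=1$, I would apply semantic rule~(\ref{sem:locPath}) together with rule~(\ref{sem:axis}) to unfold $\mathcal{D}[\![child::l_1]\!](v)$ as $\{w \mid \lambda(w)=l_1 \wedge \langle v,w\rangle \in child\}$, and $\mathcal{D}[\![d::*]\!](v)$ as $\{w \mid \langle v,w\rangle \in descendant\}$ (the wildcard $*$ imposes no label constraint). Since $child \subseteq descendant$, every element of the first set lies in the second.

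For the inductive step, assume the statement holds for expressions of length $m-1$. Write $e = e'/child::l_m$ with $e' = child::l_1/\ldots/child::l_{m-1}$. By semantic rule~(\ref{sem:locPath}), $w \in \mathcal{D}[\![e]\!](v)$ iff there exists $u$ with $u \in \mathcal{D}[\![e']\!](v)$ and $w \in \mathcal{D}[\![child::l_m]\!](u)$. By the inductive hypothesis $u$ is a descendant of $v$, and by the base case $w$ is a child (hence descendant) of $u$. Transitivity of $descendant$, which follows immediately from its definition as $firstchild.(firstchild \cup nextsibling)^*$ (concatenation of two such paths is again such a path), then yields $\langle v,w\rangle \in descendant$, so $w \in \mathcal{D}[\![d::*]\!](v)$.

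The only mild obstacle is justifying transitivity of $descendant$ cleanly from the given regular-expression definition; I would handle this by a one-line algebraic argument on the regular expression (concatenating two words of the form $firstchild.(firstchild \cup nextsibling)^*$ gives another word of the same form). Everything else is a direct unfolding of the semantic rules already established in the appendix.
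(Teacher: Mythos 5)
The paper states Lemma~\ref{thm:inclusionP[e]P[ds]} without any proof (it is used only as an auxiliary fact in the derivation of Proposition~\ref{pro:extentXPath2}), so there is no official argument to compare yours against. Your proof is correct and fills that gap: the induction on $m$ is the natural decomposition, the base case $child \subseteq descendant$ follows immediately from Definition~\ref{def:XPathAxes} since $firstchild.nextsibling^{*} \subseteq firstchild.(firstchild \cup nextsibling)^{*}$, and the transitivity of $descendant$ that you need for the inductive step is exactly the one-line concatenation argument you sketch ($firstchild.w_1.firstchild.w_2 = firstchild.(w_1.firstchild.w_2)$ with the parenthesized word again in $(firstchild \cup nextsibling)^{*}$). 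One small bookkeeping remark: the unfolding of the composition $e'/child::l_m$ into ``there exists $u$ with $u \in \mathcal{D}[\![e']\!](v)$ and $w \in \mathcal{D}[\![child::l_m]\!](u)$'' is licensed by rule~(\ref{sem:composition}) rather than rule~(\ref{sem:locPath}); the paper itself cites rule~(\ref{sem:locPath}) loosely for the same kind of unfolding in Lemma~\ref{thm:expressionLabelPath}, so this does not affect the substance, but the citation is worth tightening. Your treatment of the wildcard in $d::*$ as imposing no label constraint matches how the paper handles it in the proof of Proposition~\ref{pro:extentXPath2}.
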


\begin{lemma} \label{thm:pigeonhole}
Let $S$ and $S_1, \ldots, S_n$ be sets such that $S_i \subseteq S$
and $S_i \neq \emptyset$ for $1 \leq i \leq n$. Then $| \; S \; |
= | \; \sum^n_{i=1} S_i \; | \Leftrightarrow S = \bigcup^n_{i=1}
S_i.$ \qed
\end{lemma}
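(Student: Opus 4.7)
The plan is to prove the two directions separately, using the interpretation that $\sum_{i=1}^n S_i$ denotes the disjoint union (so $|\sum_{i=1}^n S_i| = \sum_{i=1}^n |S_i|$) together with the assumption---either implicit in that notation or to be inferred from the context in which the lemma is used---that the $S_i$ are pairwise disjoint. This assumption is necessary: in the application to Proposition~\ref{pro:extentXPath2}, each $S_i$ is the set of descendants of $v$ reached by a specific distinct root-to-node label path, and since each descendant is reached from $v$ by a unique such path, the $S_i$ are automatically pairwise disjoint. Without disjointness the statement fails (take $S_1=S_2=S=\{a,b\}$).

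For the $(\Leftarrow)$ direction, I would assume $S = \bigcup_{i=1}^n S_i$ and simply count: since the $S_i$ are pairwise disjoint, $|S| = |\bigcup_{i=1}^n S_i| = \sum_{i=1}^n |S_i| = |\sum_{i=1}^n S_i|$, by the definition of cardinality of a disjoint union.

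For the $(\Rightarrow)$ direction, I would argue by a pigeonhole-style comparison of finite cardinalities. Let $T = \bigcup_{i=1}^n S_i$. Since each $S_i \subseteq S$, we have $T \subseteq S$. By pairwise disjointness, $|T| = \sum_{i=1}^n |S_i| = |\sum_{i=1}^n S_i|$, which by hypothesis equals $|S|$. Since $T$ is a finite subset of $S$ with $|T| = |S|$, we conclude $T = S$, i.e., $S = \bigcup_{i=1}^n S_i$.

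The main obstacle is less mathematical than expository: the statement as written does not explicitly mention disjointness, so the proof hinges on clarifying that the $\sum$ notation is being used to encode a disjoint combination and that, in the intended application, the sets arising from distinct label paths are genuinely disjoint. Once that interpretation is fixed, both implications reduce to elementary finite-set counting; no deeper machinery (bisimulation, XPath semantics, etc.) is needed, which is why this lemma sits as a purely combinatorial auxiliary inside the proof of Proposition~\ref{pro:extentXPath2}.
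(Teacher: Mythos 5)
Your proof is correct, and in fact the paper offers no proof of this lemma at all: it is stated with a terminal \qed and immediately used in the derivation for Proposition~\ref{pro:extentXPath2}, so there is nothing in the text to compare your argument against. Your two-direction argument (additivity of cardinality over a disjoint union for $\Leftarrow$; the observation that a finite subset $T \subseteq S$ with $|T| = |S|$ must equal $S$ for $\Rightarrow$) is the standard and correct way to discharge it.

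The most valuable part of your write-up is something the paper silently assumes: as literally stated, with $\sum_{i=1}^n S_i$ read as anything compatible with $|\sum_i S_i| = \sum_i |S_i|$, the lemma is false without pairwise disjointness of the $S_i$ (your counterexample $S_1 = S_2 = S = \{a,b\}$ kills the $\Leftarrow$ direction, and $S=\{a,b\}$, $S_1 = S_2 = \{a\}$ kills $\Rightarrow$). You correctly verify that disjointness does hold in the intended application, since in a tree each descendant of $v$ is reached by a unique child-axis label path, so distinct label paths yield disjoint node sets $\mathcal{D}[\![e_i]\!](v)$. Two minor points worth making explicit: the $\Rightarrow$ direction also tacitly requires $S$ to be finite (a proper subset of an infinite set can have the same cardinality), which is harmless here because all sets are node sets of finite documents; and the hypothesis $S_i \neq \emptyset$ is actually not needed for either direction once disjointness is assumed — it is presumably carried along because the proof of Proposition~\ref{pro:extentXPath2} separately establishes $\mathcal{D}[\![e_i]\!](v) \neq \emptyset$.
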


Using Lemmas \ref{thm:inclusionP[e]P[ds]} and \ref{thm:pigeonhole}
we can prove Proposition \ref{pro:extentXPath2} as follows.

\begin{proof}
\begin{tabbing}
\hspace{0in} \= \hspace{.1in} \= \hspace{.1in} \= \kill \\
\> \>   $\mathcal{D}[\![d::*/[e_1] \ldots
[e_n][count(d::*) = count(e_1)+ \ldots + count(e_n)]]\!](v_0) $\\
\> = \> \> by semantic rules (\ref{sem:union}) and (\ref{sem:axis}) in Figure \ref{sem:locationPaths} \\
\> \>   $ \{v \; | \; \langle v_0,v \rangle \in d \; \bigwedge^n_{i=1}\mathcal{E}[\![e_i]\!](v,pos_{d}(v,S),|S|) = true \; \wedge \; $ \\
\> \> \>  \`  $ \mathcal{E}[\![count(d::*)= count(e_1)+ \ldots + count(e_n)]\!](v,pos_{d}(v,S),|S|) = true \} $  \\
\> = \> \> since all nodes $v$ are reachable from $v_0$, $\langle v_0,v \rangle \in d$ is always true  \\
\> \>   $ \{v \; | \; \bigwedge^n_{i=1}\mathcal{E}[\![e_i]\!](v,pos_{d}(v,S),|S|) = true \; \wedge \; $ \\
\> \> \>  \`  $ \mathcal{E}[\![count(d::*)= count(e_1)+ \ldots + count(e_n)]\!](v,pos_{d}(v,S),|S|) = true \} $  \\
\> = \> \> by semantic rule (\ref{sem:op}) in Figure \ref{sem:expressions} with $Op$'s $count$, $+$, and $=$ from Figure \ref{fig:BasicOperators} \\
\> \>   $ \{v \; | \; \bigwedge^n_{i=1}\mathcal{E}[\![e_i]\!](v,pos_{d}(v,S),|S|) = true \; \wedge \; | \; \mathcal{D}[\![d::*]\!](v) \; | \; = \sum^n_{i=1} \; | \; \mathcal{D}[\![e_i]\!](v) \; | \; \} $  \\
\> = \> \> by semantic rule (\ref{sem:locPath}) in Figure \ref{sem:expressions}  \\
\> \>   $ \{v \; | \; \bigwedge^n_{i=1}\mathcal{D}[\![e_i]\!](v) \neq \emptyset \; \wedge \; | \; \mathcal{D}[\![d::*]\!](v) \; | \; = \sum^n_{i=1} \; | \; \mathcal{D}[\![e_i]\!](v) \; | \; \} $  \\
\> = \> \> by Lemmas \ref{thm:inclusionP[e]P[ds]} and \ref{thm:pigeonhole} \\
\> \>   $ \{v \; | \; \bigwedge^n_{i=1}\mathcal{D}[\![e_i]\!](v) \neq \emptyset \; \wedge \; \mathcal{D}[\![d::*]\!](v) = \bigcup^n_{i=1} \mathcal{D}[\![e_i]\!](v)  \}  $  \\
\> = \> \> by semantic rule (\ref{sem:axis}) in Figure \ref{sem:expressions} \\
\> \>   $ \{v \; | \; \bigwedge^n_{i=1}\mathcal{D}[\![e_i]\!](v) \neq \emptyset \; \wedge \; \{ w \; | \; \langle v,w \rangle \in d \} = \bigcup^n_{i=1} \mathcal{D}[\![e_i]\!](v) \}  $  \\
\> = \> \> since the $lp_i$'s are of the form $lp_i= c[l_{i_1}]. \ldots .c[l_{i_{k_i}}]$ \\
\> \>   $ \{v \; | \; \{ w \; | \; \langle v,w \rangle \in d \} = \{ w \; | \; p=(v, c, \ldots,c,w)$ is an axis path $\; \wedge \; \lambda(p)=lp_i  \}  $  \\
\qed
\end{tabbing}
\end{proof}

\section{Stabilization with XPath} \label{sec:virtual}

As we have seen in Chapter \ref{sec:NeighbourhoodStabilization},
edge stabilization can be reduced to node refinement. However,
when the EEs of the nodes in an edge are available, we can use the
description provided by the EEs and compute the stabilization
directly from them. The idea is to express the condition for
forward-stability (i.e., $\forall x \in extent(s_i), \exists y \in
extent(s_j) \wedge \langle x,y \rangle \in axis$) of an edge
$\langle s_i, s_j \rangle$ in XPath using $ee(s_i)$ and $ee(s_j)$.

Algorithm~\ref{alg:edgeStabilizationWithXPath} computes the
stabilization of a single edge by updating the EEs of the nodes in
the edge and their extents. The algorithm replaces node $s_i$ by
two new nodes: $s_i'$ and $s_i''$. The extent of $s_i'$ contains
all nodes in the extent of the original $s_i$ that are in an
$axis$ relation with nodes in the extent of $s_j$ (line 2). The
extent of $s_i''$ contains the complement of $s_i'$ with respect
to $s_i$, i.e., it contains all nodes that do not have such an
$axis$ relation with nodes in the extent of $s_j$ (line 3).
Consequently, after the new edge is created (line 6), $s_i'$ has a
forward-stable $axis$ edge to $s_j$ whereas $s_i''$ does not have
any $axis$ edge to $s_j$. The EEs obtained in lines 4 and 5 are
the EEs for the new nodes.

Note that we do not need additional $count()$ nor $not()$
predicates in the new expressions because all the required ones
are already in $ee(s_i)$ and $ee(s_j)$.

\begin{figure}
\begin{algorithm}
\label{alg:edgeStabilizationWithXPath}

    \mbox{}\\
    \parbox{\textwidth}{
    \noindent{\sl \underline{stabilizeEdgeXPath$(D,s_i,s_j)$}}\\[0.5em]
    \noindent{\bf Input:} An SD $D$ containing  a non forward-stable edge
$e=\langle s_i, s_j \rangle$ with label $axis$\\
    \noindent{\bf Output:} An SD $D$ where $e$ has been replaced
    by forward-stable $e'=\langle s_i', s_j \rangle.$

    \begin{algorithmic} [1]

    \STATE create new nodes $s_i'$ and $s_i''$
    \STATE $extent(s_i') := \{ x \in extent(s_i) \, | \, \exists y \in extent(s_j) \wedge \langle x,y \rangle \in axis \}$
    \STATE $extent(s_i''):= extent(s_i) - extent(s_i')$
    \STATE $ee(s_i')=ee(s_i)[\mathit{axis::\lambda(s_j)} \; intersect \; ee(s_j)]$
    \STATE $ee(s_i'')=ee(s_i)[not(\mathit{axis::\lambda(s_j)} \; intersect \; ee(s_j))]$
    \STATE create an edge $e'=\langle s_i', s_j \rangle$
    \STATE let $S$ be the set of nodes connected to $s_i$
    \FOR {every node $s$ in $S$}
        \STATE add edges $\langle s_i', s \rangle$ and $\langle s, s_i' \rangle$ if conditions in Definition \ref{def:stability} are satisfied
    \ENDFOR
    \STATE delete node $s_i,$ and all its incoming and outgoing edges
    \end{algorithmic}
}

 \end{algorithm}
 \end{figure}

\begin{example}
Consider edge $\langle s_4, s_6 \rangle$ from Figure
\ref{fig:stable} (a), which is not forward-stable. Edge
stabilization will create two nodes, $s_{41}$ and $s_{42}$ as
shown in Figure \ref{fig:stable} (b). Given that
$ee(s_4)=\mathit{/ds::participant}$,
$ee(s_6)=\mathit{/ds::expRoleList}$, and the stabilized edge
corresponds to a $c$ axis, the resulting expressions are the
following: $ee(s_{42}) = /ds::participant$ $[child::expRoleList \;
intersect \; /ds::expRoleList]$ and $ee(s_{41})
=/ds::participant[not(child::expRoleList \; intersect \;
/ds::expRoleList)]$. \qed
\end{example}

\section{Adapting SDs to XPath queries} \label{sec:candidates}

Previously in this chapter, we used XPath to express EEs and to
manipulate them for refinement and edge stabilization operations.
In this section we show how XPath queries are used to guide a
refinement operation in the process of adapting an SD to a query.

In order to evaluate a query using an SD, we need to find the SD
nodes that participate in the answer. \dx's approach is to find
the SD nodes that contain a \emph{superset} of the answer and then
evaluate the entire expression on them to get the \emph{exact}
answer.

One of the central problems for finding a superset of the answer
is how to decide what SD nodes can be used to answer an XPath
query. This requires some sort of XPath \emph{matching} algorithm
and the ability to decide whether there exists an exact rewriting
of a query using an SD. The matching algorithm will transform the
\emph{structural subquery} of the XPath expression (the expression
that results from removing all non-structural predicates such as
those containing functions) to be evaluated into an equivalent
AxPRE $\alpha$. Then, we need to find the SD node (or nodes) whose
AxPRE is contained in $\alpha$. The union of the extents of such
nodes are a superset of the answer. If the query is purely
structural (i.e. the query is equal to its structural subquery)
and $\alpha$ is equivalent to some SD node AxPRE, then the answer
to the query is exactly the union of the extents. Otherwise, we
need to run the entire query on the union of the extents to find
the exact answer.

We begin by discussing in the next section how to derive an AxPRE
from an XPath expression.

\subsection{Deriving AxPREs from queries}

\dx\ can adapt an SD node to an XPath query $Q$, as we have
illustrated in our RRS feeds motivating example in Chapter
\ref{sec:workloadExample}. This section formalizes how an AxPRE is
obtained from $Q$ by using the two derivation functions $L$ and
$P$ we provide in Figure~\ref{axpre:construction2}. We begin by
illustrating the XPath-to-AxPRE derivation with a concrete
example.

\begin{figure}
\begin{equation}
P(Op(e_1, \ldots, e_m)) := \epsilon \label{p:op}
\end{equation}
\begin{equation}
P(axis\!::\!l[e_1]\ldots[e_m] / rlocpath) :=
\mathit{Ax}(axis).(P(e_1) | \! \ldots \! | P(e_m) | P(rlocpath))
\label{p:composition}
\end{equation}
\begin{equation}
P((locpath)[e_1]\ldots[e_m] / rlocpath) := P(locpath).(P(e_1) | \!
\ldots \! | P(e_m) | P(rlocpath)) \label{p:parenthesis}
\end{equation}
\begin{equation}
P(locpath_1 | \! \ldots \! | locpath_m) := ( P(locpath_1) | \!
\ldots \! |  P(locpath_m) ) \label{p:union}
\end{equation}
\begin{equation}
L(rlocpath/axis\!::\!l[e_1]\ldots[e_m]) :=
\mathit{Ax}(axis^{-1}).(L(rlocpath)) | P(e_1) | \! \ldots \! |
P(e_m) \label{axpre:composition2}
\end{equation}
\begin{equation}
L(rlocpath/(locpath)[e_1]\ldots[e_m]) := L(locpath).(L(rlocpath))
| P(e_1) | \! \ldots \! | P(e_m) \label{axpre:parenthesis2}
\end{equation}
\begin{equation}
L(locpath_1 | \! \ldots \! | locpath_m) := ( L(locpath_1) | \!
\ldots \! | L(locpath_m)) \label{axpre:union2}
\end{equation}
\caption{AxPRE derivation functions $L$ and $P$ }
\label{axpre:construction2}
\end{figure}

\begin{example}
Consider the following query

\begin{center}
{\small \texttt{Q3 =
/ds::participant[c::expRoleList/fc::expRole/ns::expRole]
 \textcolor[rgb]{0.50,0.50,0.50}{[not(ds::expRole/names=``prey'')]}}}
\end{center}

Q3 returns all participants that have expRoleLists whose first two
children are expRole elements and that are not playing the
``prey'' role in the experiments. Note that the structural
subquery appears in black (the last predicate in grey is not part
of the structural subquery).

The first rule of Figure \ref{axpre:construction2} that applies is
(\ref{axpre:composition2}) with the following variables: $rlocpath
= \emptyset$, $axis=ds$, $l=participant$,
$e_1=c::expRoleList/fc::expRole/ns::expRole$, and
$e_2=not(ds::expRole/names=``prey")$, resulting in
\[\mathit{Ax}(ds^{-1})|P(e_1)|P(e_2)\] where $\mathit{Ax}$ is a function that translates the XPath axis
into its AxPRE axis counterpart. In particular,
$\mathit{Ax}(axis^{-1})$ returns the actual AxPRE inverse (e.g.,
$child^{-1}$ is converted into $p$) and recursive axes are
translated to an equivalent Kleene closure of non-recursive axes
(e.g., $descendant$ translates into $c^*$).

The expansion of $P(e_2)$ is very simple. The predicate is
basically a function, so it matches rule (\ref{p:op}) and the
result of $P(not(ds::expRole/names=``prey"))$ is $\epsilon$
(Remember that this predicate is not part of the structural
subquery). This results in the following intermediate expression
\[ as|P(e_1)|\epsilon \]

For expanding $P(e_1)$, the first rule invoked is
(\ref{p:composition}) with $axis = c$, $l=expRoleList$,
$rlocpath=fc::expRole/ns::expRole$ and empty predicates. The
intermediate expression is now \[
as|Ax(c).(P(fc::expRole/ns::expRole)) \]

For expanding $P(fc::expRole/ns::expRole)$, the rule that applies
is (\ref{axpre:parenthesis2}) with $axis = fc$, $l=expRole$,
$rlocpath=ns::expRole$ and no predicates, which results in
\[ as|c.Ax(fc).(P(ns::expRole)) \]

Similarly, we can expand $P(ns::expRole)$ and obtain \[ as|c.fc.ns
\]

Finally, the node test of the step corresponding to the answer
($participant$ in this case) is prefixed as a label predicate to
the AxPRE. Therefore, the resulting AxPRE of query $Q3$ is
\[\alpha_{Q3} = [participant].(as|c.fc.ns) \]

\end{example}

Once the query AxPRE $\alpha$ of a given XPath query $Q$ is
computed, the next step in adapting the SD to $Q$ is finding the
SD node (or nodes) whose AxPRE $\alpha'$ is contained in $\alpha$.
Since the problem of AxPRE containment is related to that of
regular expression containment, any regular expression containment
algorithm can be used here. After finding the node, \dx\ proceeds
to change $\alpha'$ to $\alpha$, which in fact modifies the
description of the node and thus the neighbourhood it summarizes.
This entails performing a \emph{refinement} of the extent of the
node.

\subsection{Finding candidates} \label{sec:findingcandidates}

If an extent contains a superset of the answer of a query, then we
say that the elements in such an extent are \emph{candidate
elements}. Note that, by adapting the SD to the structural
subquery, \dx\ has found a restricted superset of the answer and
hence has considerably reduced the search space for computing the
entire query.

The \dx\ architecture is tailored to process XML collections one
file-at-a-time, the prevalent data processing model for the Web.
Each file is parsed and processed independently of the other files
in the collection. In this context, after adapting the SD to a
given query $Q$, \dx\ can restrict the evaluation of $Q$ to those
documents (called \emph{candidate documents}) that are guaranteed
to provide a non-empty answer for the structural subquery of $Q$.
Those candidate documents that do contain an answer for the entire
query are called \emph{answer documents}.

Once \dx\ has computed the query AxPRE $\alpha$ of a given XPath
query $Q$ as described above, it needs to find the SD node whose
AxPRE contains $\alpha$ in order to get the candidate documents
for evaluating $Q$. If there is an SD node $s$ with AxPRE
$\alpha$, then all documents in the extent of $s$ are in fact
candidate documents. In contrast, if $s$ has an AxPRE $\alpha'$
containing $\alpha$, \dx\ has two alternatives. One, it can adapt
the SD by refining $s$ from $\alpha'$ to $\alpha$ and then get the
candidate documents as in the previous case. Two, it can get all
documents in the extent of $s$ and run the structural subquery of
$Q$ on them in order to get the candidates. Once the candidate
documents are found, finding the answer documents entails running
$Q$ on all candidates.

\begin{example}
Consider query $Q3$ for our running example. We could evaluate
$Q3$ using the label SD from Figure \ref{fig:PSIMI-label}. In that
case, the only node whose AxPRE is contained in $\alpha_{Q3}$ is
$s_4$ with AxPRE $[participant]$. For simplicity, let us assume
that every node in the extent of $s_4$ belongs to a different
document, so there will be as many elements as documents, both
candidates and answer. From the SD graph we know that not all
participants in the extent of $s_4$ contain an $expRoleList$
element because the edge $\langle s_4, s_6 \rangle$ is not
forward-stable. So we conclude that $s_4$ contains only a superset
of the answer, so we get the six documents from the extent and
evaluate the query in all of them in order to get the answer.

Alternatively, we could use the refined SD from Figure
\ref{fig:PSIMI-axpre}. This SD could have been obtained from the
label SD after adapting it to $Q3$. Regardless of how the SD was
created, we found that three nodes have AxPREs contained in
$\alpha_{Q3}$: $s_{41}$, $s_{42}$, and $s_{43}$. However, we
notice from the SD graph that only node $s_{42}$ has a
\emph{forward-stable} neighbourhood for $\alpha_{Q3}$. (Note that
it is the only $[participant]$ node with an edge $c$, followed by
a $fc$ and an $ns$, all forward-stable.) That means that both
nodes (and thus documents) in the extent of $s_{42}$ are
candidates, and thus we need to run $Q3$ only in those two
documents. If $Q3$ did not have the second predicate (in grey),
the extent of $s_{42}$ would be the exact answer of $Q3$. \qed

\end{example}

The process of exploring candidates is not unidirectional: a
developer can move back and forth between the query explanations
described here and the structural exploration described in Chapter
\ref{sec:workloadExample}. For instance, she may create an SD and
run a query on some candidate documents. Next, she might decide to
relax the query in order to further investigate its impact on the
collection. Then, she may want to get a more or less refined
description of the collection by changing the SD using AxPRE
refinements, and then start the process again. \dx\ provides the
developer with this interactive functionality for describing and
evaluating XPath queries on large XML collections.

\chapter{\dx\ engine} \label{Section:Implementation}

In previous chapters we introduced \dx, a powerful framework
capable of declaratively describing complex structural summaries
of XML collections that captures and generalizes many proposals in
the literature. We also showed how summary descriptors (SDs) are
created and refined to selectively produce more or less detailed
descriptions of the data. In this chapter, we discuss how the \dx\
framework is implemented in the summarization engine and present
two strategies for refining an SD: one is based on materializing
the SD partitions, the other is a virtual approach that relies on
constructing XPath expressions that compute extents.

The \dx\ architecture is tailored to process XML collections one
file at a time, the prevalent data processing model for the Web.
Each file is parsed, processed and stored before continuing with
the next file in the collection. Such an approach supports the
interactive creation and refinement of AxPRE SDs for large
collections of XML documents.

The \dx\ engine is implemented in Java using Berkeley DB Java
Edition\footnote{{\small
\tt{http://www.oracle.com/technology/products/berkeley-db/je/}}}
to store and manage indexed collections (tables). The
implementation can invoke an arbitrary JAXP 1.3\footnote{{\small
\tt{http://jaxp.dev.java.net/1.3/}}} XPath processor for the
evaluation of XPath expressions. JAXP is an implementation
independent portable API for processing XML with Java. For the
experiments reported later in this paper, the
Saxon\footnote{{\small \tt{http://saxon.sourceforge.net/}}} XPath
processor was employed. The Saxon implementation conforms to the
XPath 1.0 standard set by the W3C \cite{W3C:XPath/01} and
therefore satisfies the semantic characterization formalized in
Appendix \ref{Section:XPathLanguage}.

The \dx\ implementation stores the extents in an indexed table
named {\tt elemDB} that has schema {\tt elemDB}(\underline{\tt
SID}, \underline{\tt docID}, \underline{\tt endPos}, {\tt
startPos}, \texttt{SID2}), where the underlined attributes are the
key (also used for indexing). The {\tt elemDB} table contains a
tuple for each XML element in the collection. Each SD node is
identified by a unique id called SID. Each element belongs to the
extent of a unique SD node, whose SID is stored in the {\tt SID}
attribute. The attribute {\tt docID} holds the identifier of the
document in which the element appears. The {\tt startPos} and {\tt
endPos} are the positions, in the document, where the element
starts and ends, respectively. \texttt{SID2} allows us to maintain
an SID for a second SD.

Alternatively, the user can decide to keep the extents virtual and
thus make the \dx\ engine store a {\tt docDB} table instead of the
{\tt elemDB} table described above. The schema of the {\tt docDB}
table is {\tt docDB}(\underline{\tt SID}, \underline{\tt docID}),
which contains for each sid $s$ the docIDs of all XML documents
containing elements in the extent of $s$. This can be used to
efficiently locate the XML documents to be evaluated by the EE of
$s$ in order to get the extent of $s$. The EEs are stored in a
separate XML file.

A third scenario in which both {\tt elemDB} and {\tt docDB} tables
coexist is also possible. In such a case, some SIDs would be kept
in the {\tt elemDB} table (with their extents materialized) and
some others would be stored without extents in the {\tt docDB}
table. In this thesis we have not studied the trade offs emerging
from this scenario.

The \dx\ engine keeps the SD graph in main memory in separate hash
tables for each axis relation in the SD, e.g. the {\tt parentsMap}
and {\tt childrenMap} maps contain the edge definitions for the
$p$ and $c$ SD axes respectively. In other words, each binary
$axis$ relation is stored as a map between a key SID $s$ and a set
of SIDs $s_1, \ldots, s_n$ such that $\langle s, s_i \rangle \in
axis$, $1 \leq i \leq n$. In addition, there is a label map,
\texttt{labelMap}, that contains the label of each SD node.

\section{Initial SD construction} \label{sec:initial}

Some SDs can be constructed in one pass over the collection. This
is possible when the parsing information collected at either the
start tag or the end tag of an element $v$ is enough to construct
the AxPRE neighbourhood $\mathcal{N}_{\alpha}(v)$ of the element,
compute the AxPRE partition and thus decide to what extent $v$
belongs. For instance, the \texttt{start} tag itself is enough to
classify an element $v$ when constructing the $\epsilon$ SD (the
$\mathcal{N}_{\epsilon}(v)$ contains just node $v$). For the $p^k$
and $p^*$ SDs, it suffices to keep the sequence of the last $k$
open elements (for the $p^k$) or all of them (for the $p^*$) for
creating $\mathcal{N}_{p^k}(v)$ and $\mathcal{N}_{p^*}(v)$. Thus,
$p^k$ and $p^*$ SDs can also be constructed in one pass over the
collection.

\begin{figure}
\begin{algorithm}
\label{alg:buildP(K)}

    \mbox{}\\
    \parbox{\textwidth}{
    \noindent{\sl \underline{buildP$(k)$}}\\[0.5em]
    \noindent{\bf Input:} Collection $C$ of XML documents \\
    \noindent{\bf Output:} $p^k$ SD

    \begin{algorithmic} [1]

    \FOR {each XML document $doc$ in collection $C$}
        \STATE assign a new docID $d$ to $doc$
        \STATE create a new XOM tree $t$
        \WHILE {parsing $doc$}
        \IF {element \texttt{start} tag is found in $doc$}
            \STATE create a new $e$ in $t$ with XML attributes $sid$, $startPos$, and $endPos$ set to empty
            \IF {the $p^k$ neighbourhood of $e$ is not in the SD graph}
                \STATE create a new SID $s'$
                \STATE update {\tt labelMap}, {\tt parentsMap}, and {\tt childrenMap}
                \STATE store the $p^k$ XPath expression of $s'$ in the EE XML file
            \ENDIF
            \STATE get the sid $s$ of $e$ from the SD
            \STATE set $e.sid$ to $s$ and $e.startPos$ to the offset position of the \texttt{start} tag of the element
        \ENDIF
        \IF {element \texttt{end} tag is found in $doc$}
            \STATE set $e.endPos$ to the offset position of the end tag of the element
            \STATE append tuple $(e.s, d, e.endPos, e.startPos)$ to {\tt elemDB}
        \ENDIF
        \ENDWHILE
    \ENDFOR
    \end{algorithmic}
    }
\end{algorithm}
\end{figure}

Algorithm~\ref{alg:buildP(K)} (buildP(k)) illustrates the use of
the \dx\ data structures. The algorithm computes the $\epsilon$,
$p^k$ and $p^*$ SDs. The parameter $k$ encodes the SD as follows:
$k=0$ corresponds to $\epsilon$, $k=maxint$ to $p^*$, and all
other values represent $p^k$. For each XML document in the
collection, the algorithm parses the document and creates a
XOM\footnote{{\small \tt{http://www.xom.nu/}}} tree (a lightweight
XML object model). The algorithm uses the XOM tree created for
composing the {\tt elemDB} tuple of each element in the document
containing SID, docID, and its beginning and end offset position.
Both the XOM tree and the SD are constructed simultaneously during
parsing time.

Once an SD has been constructed from scratch, the user can refine
any SD node or set of nodes by changing the node's AxPRE, as
described in Chapter \ref{sec:refinements}. In the next section we
provide algorithms for computing such refinements.

\section{Computing refinements} \label{sec:algorithms}

\begin{figure}
\begin{algorithm}
\label{alg:refineMaterialized}

    \mbox{}\\
    \parbox{\textwidth}{
    \noindent{\sl \underline{refineMaterialized$(sd, s, r_1, \ldots, r_n)$}}\\[0.5em]
    \noindent{\bf Input:} $sd$ is the SD, $s$ is the sid to be refined, $r_1 \ldots r_n$ is a family of refining XPath EEs \\
    \noindent{\bf Output:} Updated $sd$
    \begin{algorithmic} [1]

    \STATE get the XPath EE $e_s$ of $s$
    \FOR {each input $r_i$}
        \STATE create a new sid $s_i$
        \FOR {each $d$ s.t. there is a tuple $t_d$ in {\tt elemDB} with $t_d.SID=s$ and $t_d.docID=d$}
            \STATE create a XOM tree $t$ of $d$ in which each element has an $endPos$ attribute with the offset position of the \texttt{end} tag of the element
            \STATE assign to $extent$ the answer of $/e_s/r_i$
            \FOR {each element $n_j$ in $extent$}
                \STATE locate the tuple $t_j$ in the {\tt elemDB} table corresponding to $n_j$ by using $(s,$ $d,$
                $n_j.endPos)$ as a key
                \STATE assign $s_i$ to tuple $t_j$ by setting $t_j.SID=s_i$
            \ENDFOR
            \STATE update {\tt labelMap} by assigning the label of $s$ to the new $s_i$
            \STATE store the $r_i$ EE of $s_i$ in the EE XML file
            \FOR {each $axis$ in the $SD$}
                \STATE call computeEdgeByMerge$(sd, axis, s_i, extent, s)$ to test the existence of an $axis$ edge from $s_i$
            \ENDFOR
        \ENDFOR
    \ENDFOR

\end{algorithmic}
}
\end{algorithm}
\end{figure}

Following the materialized extents approach, a refinement can be
evaluated with Algorithm~\ref{alg:refineMaterialized}
(refineMaterialized), whereas virtual extents can be refined by
Algorithm~\ref{alg:refineVirtual} (refineVirtual). Both algorithms
are invoked with sid $s$ to be refined, its current EE $e_s$, and
a family $r_1 \ldots r_n$ of refining EEs, constructed as
described in Chapter \ref{sec:virtual}.

\vspace{.1in}

Suppose that SD node $s_i$ with EE $r_i$ is one of the refinements
of SD node $s$ with EE $e_s$. The extent of $s_i$ is computed by
evaluating $r_i$ on the set of documents that contain elements in
the extent of $s$, which entails evaluating the expression
$/e_s/r_i$ (line 6 of Algorithms \ref{alg:refineVirtual}
(refineVirtual) and \ref{alg:refineMaterialized}
(refineMaterialized). This set of documents are obtained from {\tt
ElemDB} (if the extent of $s$ is materialized) or from {\tt docDB}
(if the extent of $s$ is virtual). Once we have the extent of
$s_i$, the edges in the SD graph can be constructed either from
the EE when the extent is virtual (by computeEdgeByXPath, line 10
of Algorithm refineVirtual) or from {\tt ElemDB} when the extent
is materialized (by computeEdgeByMerge, line 13 of Algorithm
refineMaterialized).

\vspace{.1in}

In order to update the edges, we need to check whether there is an
$axis$ edge between $s_i$ and a set of candidate SD nodes $c_1,
\ldots, c_n$ such that $\langle s, c_j \rangle \in axis$. This is
performed by Algorithm \ref{alg:computeEdgeByXPath}
(computeEdgeByXPath) by computing the expression $e_{s_r}/axis::*
\cap e_{c_j}$, where $e_{c_j}$ is the EE of candidate $c_j$ (line
4). If the evaluation of the expression is not empty, then there
exists an edge from $s_i$ to $c_j$, otherwise there is no edge
(lines 5 and 6).

\vspace{.1in}

Algorithm computeEdgeByMerge (not shown), in contrast, simply
computes a merge of the {\tt ElemDB} using the {\tt startPos} and
{\tt endPos} attributes to check for containment (in case of
$\mathit{fc}$, $c$, $p$, $a$, and $d$ axes) or precedence (for
$ns$, $\mathit{fs}$, $f$, and $p$ axes).

\begin{figure}
\begin{algorithm}
\label{alg:refineVirtual}

    \mbox{}\\
    \parbox{\textwidth}{
    \noindent{\sl \underline{refineVirtual$(sd, s, r_1, \ldots, r_n, extent)$}}\\[0.5em]
    \noindent{\bf Input:} $sd$ is the SD, $s$ is the sid to be refined, $r_1 \ldots r_n$ is a family of refining XPath EEs \\
    \noindent{\bf Output:} Updated $sd$, $extent$ with the element in the extent of $s_i$

    \begin{algorithmic} [1]

    \STATE get the XPath EE $e_s$ of $s$
    \FOR {each input $r_i$}
        \STATE create a new sid $s_i$
        \FOR {each $d$ s.t. there is a tuple $t_d$ in {\tt docDB} with $t_d.SID=s$ and $t_d.docID=d$}
            \STATE create a XOM tree $t$ of $d$ in which each element has an $endPos$ attribute with the offset position of the \texttt{end} tag of the element
            \STATE assign to $extent$ the answer of $/e_s/r_i$
            \STATE update {\tt labelMap} by assigning the label of $s$ to the new $s_i$
            \STATE store the $r_i$ XPath expression of $s_i$ in the EE XML file
            \FOR {each $axis$ in $sd$}
                \STATE call computeEdgeByXPath$(sd, axis, s_i, extent, s)$ to test the existence of an $axis$ edge from $s_i$
            \ENDFOR
        \ENDFOR
    \ENDFOR

\end{algorithmic}
}
\end{algorithm}
\end{figure}

\begin{figure}
\begin{algorithm}
\label{alg:computeEdgeByXPath}

    \mbox{}\\
    \parbox{\textwidth}{
    \noindent{\sl \underline{computeEdgeByXPath$(sd, axis, s_i, extent, s)$}}\\[0.5em]
    \noindent{\bf Input:} $sd$ is the SD, $axis$ is the axis edge to be computed, $s_i$ is the new sid, $extent$ is
the extent of $s_i$, and $s$ is the sid being refined. \\
    \noindent{\bf Output:} Updated $sd$
    \begin{algorithmic} [1]

    \STATE assign to $candidates$ the set of sids $\{c_1, \ldots, c_n \}$ mapped to $s$ in {\tt axisMap}
    \FOR {each $c_j$ in $candidates$}
        \STATE get the EE $e_j$ of $c_j$ from the EE XML file
        \STATE evaluate the intersection expression $e=axis::* \cap e_j$ from $extent$
        \IF {the evaluation of $e$ is not empty}
            \STATE add an axis edge between $s_i$ and $c_j$ to the corresponding {\tt axisMap}
        \ENDIF
    \ENDFOR

\end{algorithmic}
}
\end{algorithm}
\end{figure}

\newpage

In this chapter, we presented an implementation of the \dx\
framework that supports the interactive creation and
refinement/stabilization of AxPRE SDs for XML collections. We
introduced two strategies for locally updating an SD: one based on
materializing the SD partitions (extents), the other relies on a
novel virtual approach based on XPath expressions. The next
chapter presents experimental results that demonstrate the
scalability of our strategies, even to multi gigabyte web
collections.

\chapter{Experimental results} \label{Section:Experiments}

We present here the results of an extensive empirical study we
conducted using the \dx\ framework introduced in this thesis.

The first part of our study evaluates the performance of the
initial SD construction and the feasibility of the different
approaches (materialized, virtual, edges, etc.) to \dx\ main
exploration operations: refinement and stabilization. The
objective here is twofold. First, to understand how key parameters
(e.g., extent size, number of documents involved, and number of SD
nodes and edges affected) impact on each operation. Second, to
determine what method performs better under what kind of
conditions.

The goal of the second part of our experimental evaluation is to
study the impact of various summaries on XPath query processing
performance. This part also provides a comparison with variations
of incoming and outgoing path summaries capturing existing
proposals like 1-index, APEX, A(k)-index, D(k)-index, and
F+B-Index. We want to emphasize that query evaluation times on
collections the size of Wikipedia are rarely reported in the
literature. In fact, XML DB systems (and not just research
prototypes) become challenged when working with collections at
this scale. The experiments demonstrate that \dx\ easily scales up
to gigabyte sized XML collections with important performance
results.

\begin{table}
\centering \caption{Test collections} {\small
\begin{tabular}{|l|r|r|r|r|r|r|}
\hline
\textbf{Collection} & \multicolumn{1}{c|}{\textbf{Size}} & \textbf{\#Docs} & \multicolumn{2}{c|}{\textbf{\#Nodes}} & \multicolumn{2}{c|}{\textbf{Load Time (s)}} \\
\cline{4-7}
                    & \multicolumn{1}{c|}{\textbf{(MB)}}    &                 & \multicolumn{1}{c|}{$p^*$ SD} & \multicolumn{1}{c|}{label SD} & \multicolumn{1}{c|}{$p^*$ SD} & \multicolumn{1}{c|}{label SD} \\
\hline \hline
RSS2                & 210                & 9600            & 1058  & 301  & 64.2   & 41.4    \\
PSIMI2              & 234                & 156             & 199   & 54   & 93.1   & 81.7    \\
Wiki5              & 545                & 30000           & 15602 & 259  & 438.6  & 175.7   \\
Wiki45             & 4520               & 659388          & 66073 & 1245 & 8089.1 & 6201.2  \\
\hline
\end{tabular}
} \label{tab:TestCollections}
\end{table}

\section{Initial SD construction}

Our experiments were conducted over four collections of documents.
Table~\ref{tab:TestCollections} summarizes the size and number of
documents in each collection, and the number of nodes and load
times for the $p^*$ and label SDs, which includes computing the SD
graph and the partitions, and storing the extents in the {\tt
ElemDB} table.

For measuring times, we conducted five separate runs starting with
a cold Java Virtual Machine (JVM) for each query. The best and
worst times were ignored and the reported runtime is the average
of the remaining three times. The experiments were carried out on
a Windows XP machine with a 2.4GHz Intel Core 2 Quad processor,
and the JVM was allocated 1~GB of RAM.

The selected collections have different characteristics, namely
total size, size and number of individual documents, and document
heterogeneity. The first collection (RSS2) was obtained by
collecting RSS feeds from thousands of different sites. The second
collection (PSIMI2) is a fragment of the IntAct PSI-MI
dataset\footnote{\small
\tt{http://psidev.sourceforge.net/mi/xml/doc/user/} }. The third
and fourth collections (Wiki5 and Wiki45, respectively) were
created from the Wikipedia XML Corpus provided in INEX 2006
\cite{wikipediaxml:2006}. PSIMI2 is a very small collection in
terms of number of documents (only 156 in total) but a
medium-sized collection with respect to total size (about 234 MB).
In contrast, Wiki5 is about twice the PSIMI2 size but has almost
200 times the number of documents. Consequently, the average
document size in both collections ranges from 1.5 MB in PSIMI2 to
18 KB in Wiki 5. Documents in RSS2 are similar in size to Wiki5.
The largest collection (Wiki45 with 4.5GB spanning 660 thousand
files) is also the one with the smallest average document size
(only 6.8 KB).

The number of nodes in both $p^*$ and label SDs provide a measure
of heterogeneity and structural complexity. PSIMI2 is the most
homogeneous of our collections, with only 54 different element
names and 199 different label paths from the root. In contrasts,
the most heterogenous one is Wiki45 with over one thousand
different labels and over 66 thousand different label paths from
the root.

\section{Refinements} \label{sec:refResults}

We tested the performance of two types of SD updates: refinements
and stabilization. In this section we discuss the results for
refinements and we provide the stabilization results in the next
one.

Tables~\ref{tab:RSSEEs}, \ref{tab:PSIMIEEs} and \ref{tab:WikiEEs}
show the SIDs and EEs of the selected $p^*$ SD nodes in our test
collections. These are the nodes we use for refinements and edge
stabilization in our experiments reported below. For instance,
$r_{468}$ corresponds to the $p^*$ SD node that has
/rss/channel/image as its EE in the RSS2 collection. Our benchmark
refinements were selected with scalability in mind: smallest and
largest extents and number of documents involved are three orders
of magnitude apart, ranging from 4 documents in the $p_{193}$
refinement (Table \ref{tab:PSIMI2RefTimes}) to 6509 documents in
the $r_{449}$ refinement (Table \ref{tab:RSS2RefTimes}).

\begin{table}
\centering \caption{Selected $p^*$ SD nodes and EEs from RSS2}
\small{
\begin{tabular}{|l|l|}
\hline

\textbf{Node} & \textbf{Extent Expression (EE)} \\
\hline \hline
$r_{468}$ & /rss/channel/image \\
$r_{449}$ & /rss/channel/item \\
$r_{653}$ & /rss/channel/item/body \\
$r_{452}$ & /rss/channel/item/description \\
\hline
\end{tabular}
} \label{tab:RSSEEs}
\end{table}

\begin{table}
\centering \caption{Selected $p^*$ SD nodes and EEs from PSIMI2}
\small{
\begin{tabular}{|l|l|}
\hline

\textbf{Node} & \textbf{Extent Expression (EE)} \\
\hline \hline
$p_{59}$ & /entrySet/entry/interactorList/interactor/organism \\
$p_{18}$ & /entrySet/entry/experimentList/experimentDescription/bibref/xref \\
$p_{24}$ & /entrySet/entry/experimentList/experimentDescription\\
         & /hostOrganismList/hostOrganism \\
$p_{193}$ & /entrySet/entry/interactorList/interactor/organism/cellType \\
\hline
\end{tabular}
} \label{tab:PSIMIEEs}
\end{table}

\begin{table}
\centering \caption{Selected $p^*$ SD nodes and EEs from Wiki5 and
Wiki45} \small{
\begin{tabular}{|l|l|}
\hline

\textbf{Node} & \textbf{Extent Expression (EE)} \\
\hline \hline
$w_{372}$ & /article/body/section/section/section/figure \\
$w_{199}$ & /article/body/section/p/sub \\
$w_{333}$ & /article/body/section/section/section/section \\
$w_{967}$ & /article/body/template/template/wikipedialink \\
\hline
\end{tabular}
} \label{tab:WikiEEs}
\end{table}

We evaluated two different types of refinements, one given by a
generic AxPRE ($p^*|c^*$) and the other defined by a very specific
one. Tables~\ref{tab:RSS2RefTimes} through
\ref{tab:Wiki45RefTimes} report $p^*|c^*$ refinement times for the
selected SD nodes. We choose the $p^*|c^*$ refinement to show the
performance with AxPREs involving common axes used throughout the
summary literature.

\begin{table}
\centering  \caption{RSS2 $p^*|c^*$ refinements} {\small
\begin{tabular}{|l|r|r||r|r|r|r|r|}
\hline
\multicolumn{3}{|c||}{\textbf{$p^*$ SD}} &  \multicolumn{5}{c|}{\textbf{$p^*|c^*$ Refinement}}  \\
\hline
\textbf{Node}  &  \multicolumn{2}{c||}{\textbf{Extent Size}}    &  \multicolumn{1}{|c|}{\textbf{\#}}     & \multicolumn{4}{c|}{\textbf{Times (s)}} \\
\cline{2-3} \cline{5-8}
               &  \textbf{\#Docs}  & \textbf{\#Elems}       &  \textbf{EEs}    & \textbf{V} &  \textbf{M} & \textbf{P} & \textbf{X} \\
\hline \hline
$r_{468}$      &  3296   &     3296    &  7  &    219.2    &  101.8   &  100.1 & 185.7  \\
$r_{449}$      &  6509   &    90583    & 201 &    14786.2  &  598.2   &  575.2 & 14235.2  \\
$r_{653}$      &    18   &      320    &  42 &    51.5     &    4.5   &    3.7 & 145.2  \\
$r_{452}$      &  6253   &    82022    &   3 &    358.4    &  189.6   &  185.1 & 332.7  \\
\hline
\end{tabular}
} \label{tab:RSS2RefTimes}
\end{table}

\begin{table}
\centering  \caption{PSIMI2 $p^*|c^*$ refinements} {\small
\begin{tabular}{|l|r|r||r|r|r|r|r|}
\hline
\multicolumn{3}{|c||}{\textbf{$p^*$ SD}} &  \multicolumn{5}{c|}{\textbf{$p^*|c^*$ Refinement}}  \\
\hline
\textbf{Node}  &  \multicolumn{2}{c||}{\textbf{Extent Size}}    &  \multicolumn{1}{|c|}{\textbf{\#}}     & \multicolumn{4}{c|}{\textbf{Times (s)}} \\
\cline{2-3} \cline{5-8}
               &  \textbf{\#Docs}  & \textbf{\#Elems}       &  \textbf{EEs}    & \textbf{V} &  \textbf{M} & \textbf{P} & \textbf{X}   \\
\hline \hline
$p_{59}$      &  156   &   24256    &  3  &     42.8   &  29.8   &   28.2 &  41.1 \\
$p_{18}$      &  156   &    2072    &  2  &     32.1   &  26.1   &   23.7 &  29.7 \\
$p_{24}$      &  156   &    2072    &  8  &    732.4   &  157.6  &  149.8 &  603.4 \\
$p_{193}$     &  4     &     28     &  1  &      3.9   &  3.5    &    2.0 &  2.8 \\
\hline
\end{tabular}
} \label{tab:PSIMI2RefTimes}
\end{table}

\begin{table}
\centering  \caption{Wiki5 $p^*|c^*$ refinements} {\small
\begin{tabular}{|l|r|r||r|r|r|r|r|}
\hline
\multicolumn{3}{|c||}{\textbf{$p^*$ SD}} &  \multicolumn{5}{c|}{\textbf{$p^*|c^*$ Refinement}}  \\
\hline
\textbf{Node}  &  \multicolumn{2}{c||}{\textbf{Extent Size}}    &  \multicolumn{1}{|c|}{\textbf{\#}}     & \multicolumn{4}{c|}{\textbf{Times (s)}} \\
\cline{2-3} \cline{5-8}
               &  \textbf{\#Docs}  & \textbf{\#Elems}       &  \textbf{EEs}    & \textbf{V} &  \textbf{M} & \textbf{P} & \textbf{X}   \\
\hline \hline
$w_{372}$      &  252   &     522    &  16  &     295.8    &  26.3   &  25.7 &  10.1 \\
$w_{199}$      &  463   &    2194    &   4  &     448.4    &  33.8   &  29.9 &  3.4 \\
$w_{333}$      &  128   &     500    &  61  &    2138.9    &  87.8   &  79.1 &  308.2 \\
$w_{967}$      &  155   &     241    &   6  &     235.9    &  12.9   &  10.4 &  2.3 \\
\hline
\end{tabular}
} \label{tab:Wiki5RefTimes}
\end{table}

\begin{table}
\centering  \caption{Wiki45 $p^*|c^*$ refinements} {\small
\begin{tabular}{|l|r|r||r|r|r|r|r|}
\hline
\multicolumn{3}{|c||}{\textbf{$p^*$ SD}} &  \multicolumn{5}{c|}{\textbf{$p^*|c^*$ Refinement}}  \\
\hline
\textbf{Node}  &  \multicolumn{2}{c||}{\textbf{Extent Size}}    &  \multicolumn{1}{|c|}{\textbf{\#}}     & \multicolumn{4}{c|}{\textbf{Times (s)}} \\
\cline{2-3} \cline{5-8}
               &  \textbf{\#Docs}  & \textbf{\#Elems}       &  \textbf{EEs}    & \textbf{V} &  \textbf{M} & \textbf{P} & \textbf{X}   \\
\hline \hline
$w_{372}$      &  898    &   2166    &   37  &     1449.1   &  455.2  &  446.1  & 144.1   \\
$w_{199}$      &  1479   &   6963    &   14  &     2493.5   &  773.2  &  748.7  & 64.8 \\
$w_{333}$      &  736    &   3714    &  203  &    12813.4   &  574.7  &  573.5  & 6602.3 \\
$w_{967}$      &  2330   &   3662    &    8  &     1835.9   &  569.4  &  552.3  & 20.6  \\
\hline
\end{tabular}
} \label{tab:Wiki45RefTimes}
\end{table}

\newpage

Tables~\ref{tab:RSS2RefTimes} through \ref{tab:Wiki45RefTimes} are
divided into two parts, the first half provides information on the
number of documents and elements in the extent of the $p^*$ SD
nodes being refined (\textbf{\# Docs} and \textbf{\# Elems}
columns, respectively) , and the second half contains numbers
relative to the $p^*|c^*$ refinement itself. The numbers under
\textbf{\# Docs} indicate how many documents need to be opened to
evaluate the refinement. The number of new SD nodes created by the
refinements (which is the same as the number of EEs evaluated) are
reported in the \textbf{\# EEs} columns. For instance, the
$p^*|c^*$ refinement partitions node $r_{449}$ into 201 new SD
nodes, which means that 201 XPath expressions have to be evaluated
in 6509 documents in order to obtain the $p^*|c^*$ of node
$r_{449}$. In general, refinement times increase proportionally to
the number of documents that need to be opened for computing the
refinement.

We consider two scenarios, one in which extents are materialized
in the {\tt ElemDB} table (reported under columns \textbf{V},
\textbf{M} and \textbf{P}), and another in which the extents are
virtual and are thus represented only by the EEs (reported under
columns \textbf{X}). Times reported in \textbf{V}, \textbf{M} and
\textbf{P} columns comprise locating the affected files using the
SD, opening them and evaluating the EE in order to update the
materialized extent information in the {\tt ElemDB} table. In
addition to extent updates, columns \textbf{V} and \textbf{M}
times include edge computations using Algorithms
\ref{alg:refineVirtual} and \ref{alg:refineMaterialized},
respectively. (The labels \textbf{V} and \textbf{M}, which stand
for ``virtual'' and ``materialized'', refers only to the different
approaches to \emph{edge} computation). In contrast, times under
the \textbf{P} column correspond to extent computation only
(without edges). Comparing column \textbf{P} against columns
\textbf{V} and \textbf{M} gives us an idea of how much overhead
\dx\ incurs on the edges. Finally, the \textbf{X} column displays
how long it takes just to obtain the expressions for both edges
and extents under the virtual approach. Thus, the \textbf{X}
column corresponds to a ``purely virtual'' approach in which no
materialization is used for either edges nor extents. Since edges
are computed from the EEs, the SD graph is still maintained.

The time differences between the \textbf{V} and \textbf{M} columns
come from the fact that computing the edges between the new SD
nodes using XPath is usually more costly than computing them from
the information stored in the {\tt ElemDB} table. However, we are
not aware of any technique for computing general XPath expressions
from the region encodings in the {\tt ElemDB} table, so using just
the materialized extents is not always possible.

\begin{table}
\centering  \caption{RSS2 AxPRE refinements} {\small
\begin{tabular}{|l||l|r|r|r|}
\hline
\multicolumn{1}{|c||}{\textbf{$p^*$ SD}}&  \multicolumn{1}{|c|}{\textbf{Refining}} &  \multicolumn{2}{c|}{\textbf{Resulting Extent}}   &  \multicolumn{1}{c|}{\textbf{Times (s)}} \\
\cline{3-5}
\multicolumn{1}{|c||}{\textbf{Node}}    &   \multicolumn{1}{|c|}{\textbf{AxPRE}}    &  \textbf{\#Docs}  & \textbf{\#Elems}      &  \multicolumn{1}{|c|}{\textbf{P}}   \\
\hline \hline
$r_{468}$      & c[title].fs[url].fs[link].fs[width]      &     172    &  172   &    3.9     \\
               & .fs[height].fs[description]              &            &        &            \\
\hline
$r_{449}$      & c[enclosure].fs[enclosure].fs[enclosure] &       9    &  37    &    10.8    \\
\hline
$r_{653}$      & fc[p].ns[p].ns[img]                      &       6    &  26    &     0.4    \\
\hline
$r_{452}$      & fs[link]                                 &     688    &  13885 &    10.1    \\
\hline
\end{tabular}
} \label{tab:RSS2AxPRERefTimes}
\end{table}

\begin{table}
\centering  \caption{PSIMI2 AxPRE refinements} {\small
\begin{tabular}{|l||l|r|r|r|}
\hline
\multicolumn{1}{|c||}{\textbf{$p^*$ SD}}&  \multicolumn{1}{|c|}{\textbf{Refining}} &  \multicolumn{2}{c|}{\textbf{Resulting Extent}}   &  \multicolumn{1}{c|}{\textbf{Times (s)}} \\
\cline{3-5}
\multicolumn{1}{|c||}{\textbf{Node}}    &   \multicolumn{1}{|c|}{\textbf{AxPRE}}    &  \textbf{\#Docs}  & \textbf{\#Elems}      &  \multicolumn{1}{|c|}{\textbf{P}}   \\
\hline \hline
$p_{59}$      & c[name].fs[cellType]           &      2    &  14   &    27.3    \\
\hline
$p_{18}$      & c[primaryRes].fs[secondaryRef] &     12    &  20   &    29.9    \\
\hline
$p_{24}$      & c[name].ns[cellType].ns[tissue]&      4    &  4    &    27.1    \\
\hline
$p_{193}$     & c[name].ns[xref]               &      4    &  28   &     3.4    \\
\hline
\end{tabular}
} \label{tab:PSIMI2AxPRERefTimes}
\end{table}

\begin{table}
\centering  \caption{Wiki5 AxPRE refinements} {\small
\begin{tabular}{|l||l|r|r|r|}
\hline
\multicolumn{1}{|c||}{\textbf{$p^*$ SD}}&  \multicolumn{1}{|c|}{\textbf{Refining}} &  \multicolumn{2}{c|}{\textbf{Resulting Extent}}   &  \multicolumn{1}{c|}{\textbf{Times (s)}} \\
\cline{3-5}
\multicolumn{1}{|c||}{\textbf{Node}}    &   \multicolumn{1}{|c|}{\textbf{AxPRE}}    &  \textbf{\#Docs}  & \textbf{\#Elems}      &  \multicolumn{1}{|c|}{\textbf{P}}   \\
\hline \hline
$w_{372}$      & c[caption].c[collectionLink]   &     2      &  2     &    1.9     \\
               &   .fs[br].fs[collectionLink]   &            &        &            \\
\hline
$w_{199}$      & c[sub].c[sub].fs[sub]          &     1      &  1     &    2.1    \\
\hline
$w_{333}$      & c[title].fs[p].fs[p].fs[p]     &     39     &  79    &    1.1    \\
\hline
$w_{967}$      & c[br]$|$fs[collectionLink]     &     4      &  6     &    1.2    \\
               &   .fs[collectionLink]          &            &        &            \\
\hline
\end{tabular}
} \label{tab:Wiki5AxPRERefTimes}
\end{table}

\begin{table}
\centering  \caption{Wiki45 AxPRE refinements} {\small
\begin{tabular}{|l||l|r|r|r|}
\hline
\multicolumn{1}{|c||}{\textbf{$p^*$ SD}}&  \multicolumn{1}{|c|}{\textbf{Refining}} &  \multicolumn{2}{c|}{\textbf{Resulting Extent}}   &  \multicolumn{1}{c|}{\textbf{Times (s)}} \\
\cline{3-5}
\multicolumn{1}{|c||}{\textbf{Node}}    &   \multicolumn{1}{|c|}{\textbf{AxPRE}}    &  \textbf{\#Docs}  & \textbf{\#Elems}      &  \multicolumn{1}{|c|}{\textbf{P}}   \\
\hline \hline
$w_{372}$      & c[caption].c[collectionLink]   &     3      &  3     &    33.1    \\
               &   .fs[br].fs[collectionLink]   &            &        &            \\
\hline
$w_{199}$      & c[sub].c[sub].fs[sub]          &     3      &  3     &    39.0    \\
\hline
$w_{333}$      & c[title].fs[p].fs[p].fs[p]     &     155    &  320   &    28.2    \\
\hline
$w_{967}$      & c[br]$|$fs[collectionLink]     &     9      &  11    &    57.3    \\
               &   .fs[collectionLink]          &            &        &            \\
\hline
\end{tabular}
} \label{tab:Wiki45AxPRERefTimes}
\end{table}

Tables~\ref{tab:RSS2AxPRERefTimes} through
\ref{tab:Wiki45AxPRERefTimes} report refinements that were chosen
to study SDs involving novel axes (e.g., $\mathit{fc}$,
$\mathit{fs}$, $ns$) and more expressive AxPREs with label
predicates. The tables show the refining AxPRE for each $p^*$ SD
node, the number of documents and elements that contain
neighbourhoods matching the entire AxPRE (\textbf{\# Docs} and
\textbf{\# Elems} columns, respectively), together with how long
it takes to compute the extent (\textbf{Times} column). For any
given expression, the number of elements with either empty
neighbourhoods or matching prefixes of the AxPRE is the complement
of the number reported under \textbf{\# Elems}. For instance, the
$r_{449}$ row of Table \ref{tab:RSS2AxPRERefTimes} indicates that
37 elements in 9 documents have exact
$c[enclosure].fs[enclosure].fs[enclosure]$ neighbourhoods and
obtaining them from the $r_{449}$ extent takes 10.8 seconds. In
addition, we know that the number of elements either matching
prefixes or with empty neighbourhoods is 90546, which comes from
the number in column \textbf{\# Elems} and row $r_{449}$ in Table
\ref{tab:RSS2RefTimes} (90583) minus the number in column
\textbf{\# Elems} and row $r_{449}$ in Table
\ref{tab:RSS2AxPRERefTimes} (37). Such subtraction would not be
meaningful for the \textbf{\# Docs} columns because the same
document may contain elements in different extents (remember that
an SD contains a partition of elements, not documents, so document
extents may overlap).

These results suggest that, even though computing generic
refinements like $p^*|c^*$ may be expensive, more specific
refinements can be performed in less than a minute and many of
them in just a few seconds for the smaller test collections.

\section{Edge stabilization} \label{sec:stabResults}

In this section, we report experimental results for stabilization
of SD edges from our selected $p^*$ nodes.

Tables~\ref{tab:RSS2StabilizationTimes} through
\ref{tab:Wiki45StabilizationTimes} report edge stabilization times
and extent sizes for the selected SD nodes. The edge stabilized is
indicated in the tables by an AxPRE containing the axis and the
label of the target node. The four \textbf{Resulting Extents}
columns show the number of document and elements that do contain
the edge and the number of those that do not. The times reported
under columns \textbf{V }and \textbf{M} correspond to the
materialized extent approach with edge computation using EEs (the
former) and the {\tt ElemDB} table (the latter), as explained in
the previous section for refinements.

\begin{table}
\centering  \caption{RSS2 edge stabilization} {\small
\begin{tabular}{|l|l||r|r|r|r|r|r|}
\hline
\multicolumn{1}{|c|}{\textbf{$p^*$ SD}}  &  \multicolumn{1}{|c||}{\textbf{Edge}}    &  \multicolumn{4}{c|}{\textbf{Resulting Extents}}                  &  \multicolumn{2}{c|}{\textbf{Times}} \\
\cline{3-6}
\multicolumn{1}{|c|}{\textbf{Node}}  &  \multicolumn{1}{|c||}{\textbf{Stabilized}}  &  \multicolumn{2}{c|}{\textbf{With Edge}} &  \multicolumn{2}{c|}{\textbf{Without Edge}}   & \multicolumn{2}{|c|}{\textbf{(s)}}                    \\
\cline{3-8}
                                      &                                             &  \textbf{\#Docs}  & \textbf{\#Elems}     &  \textbf{\#Docs}  & \textbf{\#Elems}           & \multicolumn{1}{|c|}{\textbf{V}}  & \multicolumn{1}{|c|}{\textbf{M}}   \\
\hline \hline
$r_{468}$      & c[description] &     492  &  492   &  2804  &  2804  &  4.1    &  0.5    \\
$r_{468}'$     & c[link]        &    2792  &  2792  &  12    &  12    &  4.5    &  0.2    \\
\hline
$r_{449}$      & ps[item]       &    6263  &  84063 &  6509  &  6520  &  12.5   &  2.9    \\
$r_{449}'$     & c[body]        &    15    &  15    &  6494  &  6505  &  10.9   &  3.7    \\
\hline
$r_{653}$      & d[img]         &    12    &  12    &  10    &  201   &  0.5    &  0.3    \\
$r_{653}'$     & d[table]       &     7    &   7    &  3     &  14    &  0.4    &  0.2    \\
\hline
$r_{452}$      & c[br]          &     12   &  12    &  6249  &  81968 &  12.4   &  5.9    \\
\hline
\end{tabular}
} \label{tab:RSS2StabilizationTimes}
\end{table}

\begin{table}
\centering  \caption{PSIMI2 edge stabilization} {\small
\begin{tabular}{|l|l||r|r|r|r|r|r|}
\hline
\multicolumn{1}{|c|}{\textbf{$p^*$ SD}}  &  \multicolumn{1}{|c||}{\textbf{Edge}}    &  \multicolumn{4}{c|}{\textbf{Resulting Extents}}                  &  \multicolumn{2}{c|}{\textbf{Times}} \\
\cline{3-6}
\multicolumn{1}{|c|}{\textbf{Node}}  &  \multicolumn{1}{|c||}{\textbf{Stabilized}}  &  \multicolumn{2}{c|}{\textbf{With Edge}} &  \multicolumn{2}{c|}{\textbf{Without Edge}}   & \multicolumn{2}{|c|}{\textbf{(s)}}                    \\
\cline{3-8}
                                      &                                             &  \textbf{\#Docs}  & \textbf{\#Elems}     &  \textbf{\#Docs}  & \textbf{\#Elems}           & \multicolumn{1}{|c|}{\textbf{V}}  & \multicolumn{1}{|c|}{\textbf{M}}   \\
\hline \hline
$p_{59}$       & c[cellType]    &     4    &  28  &  156    &  24228  &  38.2    &  9.8    \\
\hline
$p_{18}$       & c[secondaryRef]&    12    &  20  &  148    &  2052   &  25.6    &  1.2    \\
\hline
$p_{24}$       & c[tissue]      &     8    &  84  &  156    &  1988   &  25.4    &  1.3    \\
$p_{24}'$      & c[cellType]    &     8    & 548  &  156    &  1440   &  23.4    &  1.2    \\
\hline
\end{tabular}
} \label{tab:PSIMI2StabilizationTimes}
\end{table}

We stabilize two different edges for some $p^*$ SD nodes. After
one edge stabilization, the resulting SD node that does not have
the stabilized edge is indicated by the SID with an apostrophe.
The second edge stabilized always corresponds to a node with an
apostrophe from the previous stabilization. For instance, the
first edge stabilized from node $r_{449}$ (Table
\ref{tab:RSS2StabilizationTimes}) was the $ps$ edge to an $item$
node, which resulted in two SD nodes: one containing a
\emph{stable} $ps$ edge with 84063 elements in its extent, and
another one ($r_{449}'$) with \emph{no edge} and 6520 elements.
From node $r_{449}'$ we stabilize then the $c$ edge to a $body$
node obtaining again two nodes: one with a stable $c$ edge with 15
elements in its extent, and the other one with 6505 elements and
no edge. The time for computing the $ps$ edge stabilization is
12.5 seconds when computing the edges with EEs, and 2.9 seconds
when using the {\tt ElemDB} table. The times for the $c$ edge
stabilization are 10.9 and 3.7 seconds respectively.

\begin{table}
\centering  \caption{Wiki5 edge stabilization} {\small
\begin{tabular}{|l|l||r|r|r|r|r|r|}
\hline
\multicolumn{1}{|c|}{\textbf{$p^*$ SD}}  &  \multicolumn{1}{|c||}{\textbf{Edge}}    &  \multicolumn{4}{c|}{\textbf{Resulting Extents}}                  &  \multicolumn{2}{c|}{\textbf{Times}} \\
\cline{3-6}
\multicolumn{1}{|c|}{\textbf{Node}}  &  \multicolumn{1}{|c||}{\textbf{Stabilized}}  &  \multicolumn{2}{c|}{\textbf{With Edge}} &  \multicolumn{2}{c|}{\textbf{Without Edge}}   & \multicolumn{2}{|c|}{\textbf{(s)}}                    \\
\cline{3-8}
                                      &                                             &  \textbf{\#Docs}  & \textbf{\#Elems}     &  \textbf{\#Docs}  & \textbf{\#Elems}           & \multicolumn{1}{|c|}{\textbf{V}}  & \multicolumn{1}{|c|}{\textbf{M}}   \\
\hline \hline
$w_{372}$      & d[collectionLink] &  335  &  592   &  695  &  1574  &  34.7  &  2.3    \\
$w_{372}'$     & d[small]          &    3  &  5     &  694  &  1569  &  33.9  &  2.2    \\
\hline
$w_{199}$      & c[sub]            &    28  &  33    &  1469  &  6930 &  41.7  &  5.4    \\
$w_{199}'$     & c[small]          &    18  &  83    &  1454  &  6847 &  43.7  &  5.4    \\
\hline
$w_{333}$      & c[outsideLink]    &    34  &  83    &  724  &  3631  &  31.9  &  3.6    \\
$w_{333}'$     & c[unknownLink]    &    68 &  131    &  705  &  3500  &  29.6 &  3.8    \\
\hline
$w_{967}$      & c[template]       &    26  &  27     &  2304  &  3635  &  61.2  &  3.5    \\
$w_{967}'$     & c[sup]            &    174 &  246   &   2130  &  3389  &  60.2  &  3.4    \\
\hline
\end{tabular}
} \label{tab:Wiki5StabilizationTimes}
\end{table}

\begin{table}
\centering  \caption{Wiki45 edge stabilization} {\small
\begin{tabular}{|l|l||r|r|r|r|r|r|}
\hline
\multicolumn{1}{|c|}{\textbf{$p^*$ SD}}  &  \multicolumn{1}{|c||}{\textbf{Edge}}    &  \multicolumn{4}{c|}{\textbf{Resulting Extents}}                  &  \multicolumn{2}{c|}{\textbf{Times}} \\
\cline{3-6}
\multicolumn{1}{|c|}{\textbf{Node}}  &  \multicolumn{1}{|c||}{\textbf{Stabilized}}  &  \multicolumn{2}{c|}{\textbf{With Edge}} &  \multicolumn{2}{c|}{\textbf{Without Edge}}   & \multicolumn{2}{|c|}{\textbf{(s)}}                    \\
\cline{3-8}
                                      &                                             &  \textbf{\#Docs}  & \textbf{\#Elems}     &  \textbf{\#Docs}  & \textbf{\#Elems}           & \multicolumn{1}{|c|}{\textbf{V}}  & \multicolumn{1}{|c|}{\textbf{M}}   \\
\hline \hline
$r_{372}$      & d[collectionLink] &   125 &  207   &  169  &  315  &  2.4  &  0.6    \\
$r_{372}'$     & d[small]          &    2  &  4     &  169  &  311  &  2.2  &  0.5    \\
\hline
$r_{199}$      & c[sub]            &    3  &  3     &  462  &  2191 &  2.7  &  0.8    \\
$r_{199}'$     & c[small]          &    5  &  35    &  458  &  2156 &  2.6  &  0.8    \\
\hline
$r_{333}$      & c[outsideLink]    &    7  &  12    &  126  &  488  &  1.7  &  0.7    \\
$r_{333}'$     & c[unknownLink]    &    10 &  14    &  123  &  474  &  1.4  &  0.6    \\
\hline
$r_{967}$      & c[template]       &    4  &  5     &  151  &  236  &  1.2  &  0.6    \\
$r_{967}'$     & c[sup]            &    66 &  123   &   85  &  113  &  1.4  &  0.5    \\
\hline
\end{tabular}
} \label{tab:Wiki45StabilizationTimes}
\end{table}

Our results show that \dx\ can provide interactive response times
(from sub second to just a few seconds) for all edge
stabilizations tested when using the materialized approach for
both extents and edges. Moreover, when using the more expensive
EE-based approach for finding the SD edges, we still obtain
response times in the order of a minute in the vast majority of
test cases. This is compelling evidence that \dx\ can be used in
scenarios in which SDs need to be manipulated interactively in
order to selectively explore the structure of an XML collection
(e.g., aggregating thousands of RSS feed from dozens of content
providers).

\section{XPath query evaluation using SDs}
\label{sec:queryevalwithSDs}

In this section, we provide performance results for obtaining
answer documents for several XPath queries using a variety of SDs.
These results considerably expand the preliminary study presented
in \cite{CR07}.

Tables~\ref{tab:RSSQueries}, \ref{tab:PSIMIQueries}, and
\ref{tab:WikiQueries} show the twelve queries in our benchmark
(the structural subqueries appear in black, the non-structural
predicates are in grey). These queries were selected to show how
the system scales with respect to key query parameters like answer
size and number of candidate documents (those that provide a
non-empty answer for the structural subquery). Our benchmark
focuses on the navigational features of XPath, following the
approach of the MemBeR XQuery Micro-Benchmark \cite{AMM05}, which
provides a form of standardization for studying XQuery evaluation.

\begin{table}
\centering \caption{RSS collection queries} \small{
\begin{tabular}{|c|l|l|}
\hline

\textbf{Query} &  \textbf{XPath Expression} \\
\hline \hline
R1 & \texttt{/rss/channel/image[title/following-sibling::url/following-sibling::} \\
   & \texttt{link/following-sibling::width/following-sibling::height} \\
   & \texttt{/following-sibling::description]\textcolor[rgb]{0.50,0.50,0.50}{[width $<$ height]}} \\
\hline
R2 &
\texttt{/rss/channel/item[enclosure][enclosure/following-sibling::enclosure} \\
   & \texttt{/following-sibling::enclosure]\textcolor[rgb]{0.50,0.50,0.50}{[enclosure/@type='audio/mpeg']}} \\
\hline
R3 & \texttt{/rss/channel/item/body[child::*[1][self::p]/following-sibling::*[1]} \\
   & \texttt{[self::p]/following-sibling::*[1][self::img]]\textcolor[rgb]{0.50,0.50,0.50}{[img[width=height]]}} \\
\hline
R4 & \texttt{/rss/channel/item/description[following-sibling::link]} \\
   & \texttt{\textcolor[rgb]{0.50,0.50,0.50}{[contains(.,'2005')]} }\\
\hline
\end{tabular} } \label{tab:RSSQueries}
\end{table}

\begin{table}
\centering \caption{PSIMI collection queries} \small{
\begin{tabular}{|c|l|l|}
\hline
\textbf{Query} &  \textbf{XPath Expression} \\
\hline \hline
P1 & \texttt{/entrySet/entry/interactorList/interactor/organism[names} \\
   & \texttt{/following-sibling::cellType]\textcolor[rgb]{0.50,0.50,0.50}{[contains(.,'Cercopithecus')]}} \\
\hline P2 &
\texttt{/entrySet/entry/experimentList/experimentDescription/bibref/xref} \\
   & \texttt{[primaryRef/following-sibling::secondaryRef]} \\
   & \texttt{\textcolor[rgb]{0.50,0.50,0.50}{[secondaryRef/@refType='method reference']}} \\
\hline
P3 & \texttt{/entrySet/entry/experimentList/experimentDescription} \\
   & \texttt{/hostOrganismList/hostOrganism[child::names/following-sibling::*[1]}  \\
   & \texttt{[self::cellType]/following-sibling::*[1][self::tissue]]} \\
   & \texttt{\textcolor[rgb]{0.50,0.50,0.50}{[tissue[contains(.,'endothelium')]]}} \\
\hline
P4 & \texttt{/entrySet/entry/interactorList/interactor/organism/cellType[names} \\
   & \texttt{/following-sibling::*[1][self::xref]]\textcolor[rgb]{0.50,0.50,0.50}{[contains(.,'Cercopithecus')]}} \\
\hline
\end{tabular}
} \label{tab:PSIMIQueries}
\end{table}

\begin{table}
\centering \caption{Wikipedia collections queries} \small{
\begin{tabular}{|c|l|l|}
\hline

\textbf{Query} &  \textbf{XPath Expression} \\
\hline \hline
W1 & \texttt{/article/body/section/section/section/figure[caption/collectionlink} \\
   & \texttt{/following-sibling::br/following-sibling::collectionlink]} \\
   & \texttt{\textcolor[rgb]{0.50,0.50,0.50}{[contains(.,'Loutherbourg')]}} \\
\hline
W2 & \texttt{/article/body/section/p/sub[child::sub/child::sub} \\
   & \texttt{/following-sibling::sub]\textcolor[rgb]{0.50,0.50,0.50}{[sub/sub='2']}} \\
\hline
W3 & \texttt{/article/body/section/section/section/section[child::title}  \\
   & \texttt{/following-sibling::p/following-sibling::p/following-sibling::p]} \\
   & \texttt{\textcolor[rgb]{0.50,0.50,0.50}{[contains(.,'extinction')]}} \\
\hline
W4 & \texttt{/article/body/template/template/wikipedialink[following-sibling::} \\
   & \texttt{collectionlink]\textcolor[rgb]{0.50,0.50,0.50}{[contains(.,'William de Longespee')]}} \\
\hline
\end{tabular}
} \label{tab:WikiQueries}
\end{table}

Tables~\ref{tab:RSS2QueryTimes} through \ref{tab:Wiki45QueryTimes}
show the times for obtaining the answer documents and evaluating
the queries in our collections using a variety of SDs. The
\textbf{SD} column indicates the type of SD used to obtain the
candidate documents (next column) on which the entire query is
evaluated. The three columns under \textbf{Answer} show the time
it takes to evaluate the query in the candidate documents, and the
number of documents and elements in the final answer. Since these
are XPath queries, the number of documents and elements returned
by each query are independent of the SD used for evaluation.

\begin{table}
\centering  \caption{RSS2 query results and times} {\small
\begin{tabular}{|c|l|r||r|r|r|}
\hline
\multicolumn{1}{|c|}{\textbf{Query}}  & \multicolumn{1}{|c|}{\textbf{SD}} & \multicolumn{1}{|c||}{\textbf{Candidate}} & \multicolumn{3}{c|}{\textbf{Answer}} \\
\cline{3-6}
                                      &                                   & \multicolumn{1}{|c||}{\textbf{\# Docs}}      & \multicolumn{1}{c|}{\textbf{Times (s)}} & \multicolumn{1}{c|}{\textbf{\# Docs}}  &  \multicolumn{1}{c|}{\textbf{\# Elems}} \\

\hline \hline
     & label                 &  3518   & 7.7     &      &   \\
\cline{2-4}
R1   & $p^*$ ($r_{468}$)     &  3296   & 7.4     & 79   & 79 \\
\cline{2-4}
     & $p^*|c^*$             &  387    & 1.3     &      &   \\
\cline{2-4}
     & specific              &  172    & 0.6     &      &   \\
\hline \hline
     & label                 &  8122   & 19.9    &      &   \\
\cline{2-4}
R2   & $p^*$ ($r_{449}$)     &  6509   & 15.1    & 6    & 32   \\
\cline{2-4}
     & $p^*|c^*$             &  181    & 1.2     &      &   \\
\cline{2-4}
     & specific              &  9      & 0.1     &      &   \\
\hline \hline
     & label                 &  31   &  0.4    &      &    \\
\cline{2-4}
R3   & $p^*$ ($r_{653}$)     &  18   &  0.3    & 6    & 26 \\
\cline{2-4}
     & $p^*|c^*$             &  15   &  0.3    &      &    \\
\cline{2-4}
     & specific              &  6    &  0.1    &      &    \\
\hline \hline
     & label                 &  8221  & 19.7   &      &    \\
\cline{2-4}
R4   & $p^*$ ($r_{452}$)     &  6253  & 14.1   & 241  & 1344 \\
\cline{2-4}
     & $p^*|c^*$             &  6253  & 14.6   &      &    \\
\cline{2-4}
     & specific              &  688   & 2.0    &      &    \\
\hline
\end{tabular}
} \label{tab:RSS2QueryTimes}
\end{table}

\begin{table}
\centering  \caption{PSIMI2 query results and times} {\small
\begin{tabular}{|c|l|r||r|r|r|}
\hline
\multicolumn{1}{|c|}{\textbf{Query}}  & \multicolumn{1}{|c|}{\textbf{SD}} & \multicolumn{1}{|c||}{\textbf{Candidate}} & \multicolumn{3}{c|}{\textbf{Answer}} \\
\cline{3-6}
                                      &                                   & \multicolumn{1}{|c||}{\textbf{\# Docs}}      & \multicolumn{1}{c|}{\textbf{Times (s)}} & \multicolumn{1}{c|}{\textbf{\# Docs}}  &  \multicolumn{1}{c|}{\textbf{\# Elems}} \\

\hline \hline
     & label                 &  156    & 45.9 &      &   \\
\cline{2-4}
P1   & $p^*$ ($p_{59}$)      &  156    & 45.7 &  2   &  14 \\
\cline{2-4}
     & $p^*|c^*$             &  156    & 45.7 &      &   \\
\cline{2-4}
     & specific              &    2    & 2.5  &      &   \\
\hline \hline
     & label                 &  156   & 45.7  &      &   \\
\cline{2-4}
P2   & $p^*$ ($p_{18}$)      &  156   & 45.5  &  4   & 8 \\
\cline{2-4}
     & $p^*|c^*$             &  12    & 17.1  &      &   \\
\cline{2-4}
     & specific              &  12    & 17.1  &      &   \\
\hline \hline
     & label                 &  156   & 45.2  &      &    \\
\cline{2-4}
P3   & $p^*$ ($p_{24}$)      &  156   & 44.9  &  4   &  4 \\
\cline{2-4}
     & $p^*|c^*$             &  6     & 6.5   &      &    \\
\cline{2-4}
     & specific              &  4     & 5.8   &      &    \\
\hline \hline
     & label                 &  8    & 9.8   &      &   \\
\cline{2-4}
P4   & $p^*$ ($p_{193}$)     &  4    & 4.9   & 1    & 1 \\
\cline{2-4}
     & $p^*|c^*$             &  4    & 4.8   &      &   \\
\cline{2-4}
     & specific              &  4    & 4.9   &      &   \\
\hline
\end{tabular}
} \label{tab:PSIMI2QueryTimes}
\end{table}

Each row of \textbf{SD}, \textbf{\# Candidate Docs} and
\textbf{Times} corresponds to a different SD used for evaluating
the query. The ``label'' row in each section shows the evaluation
times when using the label SD node corresponding to the element
returned by the query. For instance, query R2 returns ``item''
elements, so the extent documents used are those from the ``item''
node in the label SD (8122 documents in total), taking 19.9
seconds to evaluate the query on them. The $p^*$ rows report the
respective numbers when using the $p^*$ node whose AxPRE contains
the query (note that the SIDs from Tables \ref{tab:RSSEEs},
\ref{tab:PSIMIEEs}, and \ref{tab:WikiEEs} are indicated). For
instance, for query R2 we use node $r_{449}$ from the $p^*$ SD,
taking 15.1 seconds to evaluate the query on the 6509 documents in
the extent of $r_{449}$. Similarly, $p^*|c^*$ rows show the
evaluation times when using $p^*|c^*$ SD nodes (there may be more
than one containing the query). For instance, the $p^*|c^*$
node(s) used for query R2 have 181 documents and evaluating R2 on
them takes 1.2 seconds. Finally, the last row in each section
labeled ``specific'' shows \dx\ performance when using an AxPRE
refinement obtained from the structural subquery. For instance,
for query R2 the refining AxPRE would be
$c[enclosure].fs[enclosure].fs[enclosure]$ (row $r_{449}$ in Table
\ref{tab:RSS2AxPRERefTimes}) which has 9 documents in its extent
and evaluating R2 on them takes just 0.1 seconds. This is the
AxPRE we obtain by adapting the SD to R2.

\begin{table}
\centering  \caption{Wiki5 query results and times} {\small
\begin{tabular}{|c|l|r||r|r|r|}
\hline
\multicolumn{1}{|c|}{\textbf{Query}}  & \multicolumn{1}{|c|}{\textbf{SD}} & \multicolumn{1}{|c||}{\textbf{Candidate}} & \multicolumn{3}{c|}{\textbf{Answer}} \\
\cline{3-6}
                                      &                                   & \multicolumn{1}{|c||}{\textbf{\# Docs}}      & \multicolumn{1}{c|}{\textbf{Times (s)}} & \multicolumn{1}{c|}{\textbf{\# Docs}}  &  \multicolumn{1}{c|}{\textbf{\# Elems}} \\

\hline \hline
     & label                 &  13288  &  54.3  &      &    \\
\cline{2-4}
W1   & $p^*$ ($w_{372}$)     &  242    &  2.5   &   1  &  1  \\
\cline{2-4}
     & $p^*|c^*$             &  5      &  0.2   &      &    \\
\cline{2-4}
     & specific              &  2      &  0.2   &      &    \\
\hline \hline
     & label                 &  1336   &  5.9   &      &   \\
\cline{2-4}
W2   & $p^*$ ($w_{199}$)     &  463    &  2.2   &  1   & 1  \\
\cline{2-4}
     & $p^*|c^*$             &  1      &  0.2   &      &   \\
\cline{2-4}
     & specific              &  1      &  0.2   &      &   \\
\hline \hline
     & label                 &  25192  &  97.8   &      &   \\
\cline{2-4}
W3   & $p^*$ ($w_{333}$)     &  128    &   1.4   & 1    &  1 \\
\cline{2-4}
     & $p^*|c^*$             &  92     &   1.1   &      &    \\
\cline{2-4}
     & specific              &  39     &   0.6   &      &    \\
\hline \hline
     & label                 &  5370   & 25.7  &      &   \\
\cline{2-4}
W4   & $p^*$ ($w_{967}$)     &  155    & 1.4   & 1    & 1 \\
\cline{2-4}
     & $p^*|c^*$             &  155    & 1.5   &      &   \\
\cline{2-4}
     & specific              &  4      & 0.2   &      &   \\
\hline
\end{tabular}
} \label{tab:Wiki5QueryTimes}
\end{table}

Not surprisingly, our results indicate that query evaluation
performance gains are heavily dependant on both the query and the
collection. In some cases, just having the label SD is description
enough and provides good performance, whereas the label SD is not
of much help in others. For instance, using the most specific SD
for PSIMI2 query P4 (Table \ref{tab:PSIMI2QueryTimes}) only
reduces query evaluation time by less than 50\% over the label SD.
At the other end of the spectrum, using the most specific SD for
query W1 on Wiki45 (Table \ref{tab:Wiki45QueryTimes}) produces a
performance improvement of almost four orders of magnitude, going
from half an hour (label SD) to sub-second (specific SD)
evaluation time. In that same table, there are also cases (like
query W3) in which a $p^*$ by itself provides a big gain, whereas
the most specific SD only brings a modest further improvement. In
contrast, query W4 gets the greatest gain from the most specific
SD (over two orders of magnitude against both the $p^*$ and the
$p^*|c^*$ SDs).

\begin{table}
\centering  \caption{Wiki45 query results and times} {\small
\begin{tabular}{|c|l|r||r|r|r|}
\hline
\multicolumn{1}{|c|}{\textbf{Query}}  & \multicolumn{1}{|c|}{\textbf{SD}} & \multicolumn{1}{|c||}{\textbf{Candidate}} & \multicolumn{3}{c|}{\textbf{Answer}} \\
\cline{3-6}
                                      &                                   & \multicolumn{1}{|c||}{\textbf{\# Docs}}      & \multicolumn{1}{c|}{\textbf{Times (s)}} & \multicolumn{1}{c|}{\textbf{\# Docs}}  &  \multicolumn{1}{c|}{\textbf{\# Elems}} \\

\hline \hline
     & label                 &  182598  & 1775.8  &     &   \\
\cline{2-4}
W1   & $p^*$ ($w_{372}$)     &  898     & 30.9    & 1   & 1 \\
\cline{2-4}
     & $p^*|c^*$             &  7       & 0.3     &     &   \\
\cline{2-4}
     & specific              &  3       & 0.2     &     &   \\
\hline \hline
     & label                 &  7341  & 131.5   &     &   \\
\cline{2-4}
W2   & $p^*$ ($w_{199}$)     &  1479  &  37.8   & 1   & 1 \\
\cline{2-4}
     & $p^*|c^*$             &  13    &  0.8    &     &   \\
\cline{2-4}
     & specific              &  3     &  0.3    &     &   \\
\hline \hline
     & label                 &  459296 & 3541.0 &     &   \\
\cline{2-4}
W3   & $p^*$ ($w_{333}$)     &  736    & 25.8   & 1   & 1 \\
\cline{2-4}
     & $p^*|c^*$             &  442    & 16.5   &     &   \\
\cline{2-4}
     & specific              &  155    & 5.4    &     &   \\
\hline \hline
     & label                 &  61183  & 872.9  &     &    \\
\cline{2-4}
W4   & $p^*$ ($w_{967}$)     &  2330   & 65.1   & 1   &  1 \\
\cline{2-4}
     & $p^*|c^*$             &  2330   & 67.3   &     &    \\
\cline{2-4}
     & specific              &  9      & 0.4    &     &    \\
\hline
\end{tabular}
} \label{tab:Wiki45QueryTimes}
\end{table}

These results show that, even though creating the most refined SD
is not always valuable, having the right SD for the right query
does have an important impact on the overall performance, and \dx\
provides a powerful mechanism for defining and creating them.

\subsection{Comparison with summary proposals}

The results in Tables~\ref{tab:RSS2QueryTimes} through
\ref{tab:Wiki45QueryTimes} also provide a comparison with the
summary literature. Proposals like 1-index \cite{MS99}, APEX
\cite{CMS02}, A(k)-index \cite{KSBG02}, and D(k)-index
\cite{QLO03} can provide, at best, a description equivalent to the
$p^*$ SD and thus a similar performance to that reported on the
first row of each query. The $p^*|c^*$ rows give an indication of
the performance provided by the F+B-Index \cite{KBNK02}. \dx\ can
create SDs tailored to a workload that yield query evaluation
times one to three orders of magnitude faster than these proposals
(last row of each query). Using a precise SD can have a
significant impact on both candidate and answer documents
selection, and thus on overall query evaluation. Note that no
summary in the literature (even recent proposals that cluster
together nodes with the same subtree structure \cite{BCF+05}) can
capture AxPREs such as $c|fs.fs$ or $fc.ns$.

In addition, we compared \dx's initial construction time against
an open-source XML summarization tool, XSum \cite{ABMP08}, which
constructs an annotated $p^*$ SD graph (a dataguide). Table
\ref{tab:Comparative} shows comparable results for SD graph
construction times between \dx\ and XSum. We restricted the
comparison to SD graph construction times because XSum does not
store either the materialized extents or the EEs; it only creates
a $p^*$ SD graph. To the best of our knowledge, this is the only
structural summarization system publicly available. Moreover, no
other work in the extensive literature on summaries \cite{GW97,
MS99,KBNK02,KSBG02,QLO03,BCF+05,PG06b} reports construction times
for their systems.

\begin{table}
\centering \caption{System comparison: SD graph construction times
(s)} {\small
\begin{tabular}{|l|r|r|r|}
\hline
\textbf{Collection} &  \multicolumn{1}{|c|}{\textbf{Size (MB)}}    & \multicolumn{1}{|c|}{\textbf{\dx\ }}   &  \multicolumn{1}{|c|}{\textbf{XSum}}      \\
\hline \hline
XMark1               & 115             & 17.3     &   12.8         \\
XMark5               & 580             & 60.8     &   62.2         \\
XMark10              & 1150            & 118.1    &   122.1         \\
\hline
\end{tabular}
} \label{tab:Comparative}
\end{table}

Since XSum can only summarize individual files, we were not able
to test it with our benchmark collections. Thus, we decided to do
the comparative evaluation using the XMark benchmark \cite{XMark},
which creates one single file of a chosen size.

These results show that \dx\ provides SD graph construction times
comparable to an open-source structural summarization tool that is
tailored to only one particular kind of SD ($p^*$).

\subsection{Comparison with XPath evaluators}

We performed a comparative analysis against two DB systems, one
commercial (X-Hive/DB\footnote{\small
\tt{http://www.x-hive.com/products/db/}}), and the other one open
source (XQuest DB\footnote{\small
\tt{http://www.axyana.com/xquest/}}). X-Hive/DB and XQuest DB were
selected because of their good performance in published XQuery
benchmarks \cite{AFM06}. In addition, a comparison against a
Saxon\footnote{{\small \tt{http://saxon.sourceforge.net/}}}
evaluation without summaries is provided. Saxon was selected for
being a popular processor that can also evaluate XQuery and XSLT
in a file-at-a-time fashion. Saxon is the XPath processor
integrated in the \dx' default implementation  (see
Chapter~\ref{Section:Implementation}), but for this comparison we
use the XPath processor stand-alone.

Keep in mind that the selected DB-like XML processors may have
additional functionality (such as transaction processing
capabilities). The comparison aims to show that the \dx\
architecture with the default implementation (combining summaries
with Saxon) can achieve results competitive with that of XML
indexing engines, even with gigabyte sized collections. In
addition, comparing against Saxon provides a performance base line
for a file-at-a-time evaluation when the collection is stored as
XML text files in the file system and no summary structures are
available. The results confirm that, without summaries, Saxon
itself lags by several orders of magnitude. We also tried to run
our queries on DB2
v9\footnote{\small\tt{http://www-306.ibm.com/software/data/db2/9/}},
but the version we currently have does not support
following-sibling or preceding-sibling axes, so our benchmark
queries could not be run on DB2.

\begin{table}
\centering \caption{RSS2 query evaluation comparative times (s)}
{\small
\begin{tabular}{|c|r|r|r|r|r|}
\hline
\textbf{Query} & \textbf{\dx} & \textbf{X-Hive} & \textbf{XQuest} & \textbf{Saxon}\\
\hline \hline
R1          & 0.6      & 7.9      &   3.1      & 91  \\
R2          & 0.1      & 7.4      &   2.9      & 93 \\
R3          & 0.1      & 7.5      &   2.0      & 92 \\
R4          & 2.0      & 7.6      &   2.6(*)  & 92 \\
\hline
\end{tabular}
} \label{tab:Comparative2}
\end{table}

\begin{table}
\centering \caption{Wiki5 query evaluation comparative times (s)}
{\small
\begin{tabular}{|c|r|r|r|r|r|}
\hline
\textbf{Query} & \textbf{\dx} & \textbf{X-Hive} & \textbf{XQuest} & \textbf{Saxon}\\
\hline \hline
W1          & 0.2      & 25.3      &    12.2     &   337 \\
W2          & 0.2      & 26.6      &    15.7     &   342 \\
W3          & 0.6      & 24.7      &    7.5(*)  &   354 \\
W4          & 0.2      & 25.8      &    6.5(*)   &   350 \\
\hline
\end{tabular}
} \label{tab:Comparative3}
\end{table}

\vspace{.2in}

Tables \ref{tab:Comparative2} and \ref{tab:Comparative3} report
the times for selecting answer documents using \dx, X-Hive/DB ,
XQuest DB , and Saxon (without summaries) on the RSS2 and Wiki5
collections, respectively. Comparative times for Wiki45 are not
reported because neither XHive/DB nor XQuest DB could load the
entire collection. XQuest DB returned an incorrect answer for some
of the queries, which are marked with an asterisk. \dx\ times span
selecting the answer documents and evaluating the entire query
using the most refined SD (i.e., the ``specific'' AxPRE
refinements reported in Tables \ref{tab:RSS2QueryTimes},
\ref{tab:PSIMI2QueryTimes}, and \ref{tab:Wiki5QueryTimes}). These
times are obtained by adding up the times for getting the
candidate documents and the times for evaluating the entire query
on them (using Saxon).

\newpage

The extensive empirical study presented here shows that \dx's
file-at-a-time XPath evaluation architecture can be a competitive
alternative (in terms of query response times) to DB-like XML
query engines, even on gigabyte sized collections. Our
experimental results also demonstrate that \dx's powerful
mechanism for adapting summaries to a workload can provide
speedups of one to three orders of magnitude compared to other
proposals.

\chapter{Conclusion} \label{section:conc}

This thesis focuses on addressing the need to describe the actual
heterogeneous structure of web collections of XML documents.
Understanding the metadata structure of such collections is
fundamental for writing meaningful XPath queries and evaluating
them efficiently. We propose a novel framework for describing the
structure of a web collection based on highly customizable
summaries that can be conveniently tailored by axis paths regular
expressions (AxPREs).

Our main results demonstrate the scalability of the AxPRE summary
refinement and stabilization (the key enablers for tailoring
summaries) using gigabyte XML collections. In addition, \dx's
powerful mechanism for adapting summaries to a workload can
provide speedups of one to three orders of magnitude compared to
other proposals. The experiments also show that \dx's
file-at-a-time XPath evaluation architecture (supporting fast
evaluation of complex XPath workloads over large web document
collections) can be a very competitive alternative (in terms of
query response times) to DB-like XML query engines, even on
gigabyte sized collections.

Familiar research issues can be re-visited in the context of AxPRE
summaries, such as providing guidelines for selecting good
summaries (similar to schema design) and inferring general and
succinct AxPRE expressions from an XML collection (similar to DTD
inference from instances). Developing tools for metadata
management is also addressed by a recent schema summarization
proposal \cite{YJ06}. In this direction, creating summaries that
describe how metadata labels (including some generated using
schema abstraction and summarization techniques) are used in a
given instance seems promising.

In the context of XML messaging, we came across the problem of
doing schema mapping when the schemas are too general and only
very small subsets are normally used. The schema mapping problem
consists of defining correspondences between two schemas in order
to translate data from one to the other \cite{PVMH+02}. If we need
to define a complete mapping between two very lax, broad schemas,
we will end up with a large number of correspondences that are
irrelevant for any single instance. An interesting research
direction would be to develop a strategy to do summary mapping in
the same spirit of schema mapping, perhaps using EEs definitions
to create the correspondences in XPath. Another option would be to
use \dx\ summaries to determine what schema elements do not apply
to a given collection and then only define correspondences for
those elements that are actually used. This would significantly
reduce the number of correspondences needed to define a meaningful
mapping hence simplifying the overall data translation process.

The notion of bisimulation originated in fields other than
databases (concurrency theory, verification, modal logic, set
theory), where it continues to find applications. It would be
interesting to explore whether the more flexible notion introduced
in this thesis (selective bisimilarity applied to subgraphs
described by AxPREs) can also find novel applications in such
areas.

Since this XPath-to-AxPRE syntactic translation can be applied to
any XPath query, it can also be used to translate \xp\ queries
\cite{CLR07} to AxPREs. \xp\ expressions have the same syntax as
XPath but a different semantics which provides an explanation in
the form of the intermediate nodes, a kind of data provenance of
the answer.

Open research issues also include creating AxPREs for the \xp\
expressions of a query, so that \dx\ can adapt SDs to accelerate
the retrieval of intermediate nodes. In addition, we plan to study
the impact of adjusting the workload (e.g, by finding frequent
patterns), and also how to optimize SD selection given budget
constraints. There are also opportunities for exploiting the
flexibility available in AxPRE summaries in the context of the
more traditional summary applications to indexing, selectivity
estimation, and query optimization.

\addcontentsline{toc}{chapter}{Bibliography}
\bibliographystyle{alpha}
\bibliography{biblio,books}

\appendix

\appendix

\chapter{XPath 1.0 formal semantics} \label{Section:XPathLanguage}

We provide in this appendix a concise definition of the formal
semantics of XPath 1.0 \cite{W3C:XPath/01}. Several semantic
characterizations of XPath 1.0 have been proposed recently
\cite{GKP03a,MdR05,BFK05}. As part of the foundation of \dx, we
have extended the XPath formalization given in \cite{GKP05} to
better capture all the relevant constructs in the standard. A
significant addition to the rules is the proper treatment of the
interaction of parentheses followed by predicates. Parenthesis use
in XPath does not just affect precedence and grouping of
operators, it does in fact change the semantics \cite{CLR07}.

Since XPath was designed to be embedded in other XML languages, it
provides information about the \emph{context} in which an
expression will be evaluated. Given that XPath manipulates node
sets, in addition to the node from which to start the evaluation,
the context has to contain the node's position relative to a node
set and the node set size. This node set could be the result of
the evaluation of another XPath expression or a construct of the
host language.

\begin{definition}[Context] \label{def:contex} Let
axis graph $\mathcal{A} = (\mathit{Inst},$ $Axes,$ $Label,$
$\lambda)$ be an axis graph, $S \subseteq \mathit{Inst}^*$ and $v
\in S$. The \emph{context} of $v$ in $S$ with respect to $axis$ is
defined as $t= \langle v, pos_{axis}(v, S), |S| \rangle$. We say
that $v$ is the \emph{context node}, $pos_{axis}(v, S)$ the
\emph{context position} of $v$ in $S$ w.r.t. $\prec_{axis}$, and
$|S|$ the \emph{context size}. \qed
\end{definition}

Each expression evaluates relative to a context and returns a
value of one of four types: \emph{number}, \emph{node set},
\emph{string} and \emph{boolean}. Other important XPath syntactic
constructs are \emph{location paths}, which are special cases of
expressions. Location paths come in two flavors: \emph{absolute}
and \emph{relative}. An absolute location path consists of $/$
optionally followed by a relative location path. A relative
location path consists of one or more \emph{location steps}
separated by $/$. (Since location steps are expressions, they also
evaluate relative to a context.)

We define next the formal semantics of XPath expression, location
paths and operator with functions $\mathcal{E}$, $\mathcal{L}$ and
$\mathcal{F}$.

\begin{definition}[Semantic Functions $\mathcal{E}$, $\mathcal{L}$ and $\mathcal{F}$] \label{def:semfunc}
Let $Op$ be a place holder for operators $ArithOp$ $\in \{+,$ $-,
*,$ $div, mod \}$, $RelOp$ $\in \{$ $=, \neq,$ $\leq, <,$ $\geq, >
\}$, $EqOp$ $\in \{=, \neq, \}$, and $GtOp$ $\in \{\leq, <, \geq,
> \}$. Let $e, e_1 \ldots e_m$ be expressions and $locpath,$
$locpath_1,$ $\ldots,$ $locpath_m$ location paths. The semantics
of XPath expressions are defined by semantics functions
$\mathcal{E}$ and $\mathcal{L}$ in Figure \ref{sem:expressions}
and \ref{sem:locationPaths}, and the semantics of operators are
defined by $\mathcal{F}$ in Figures \ref{fig:BasicOperators} and
\ref{fig:AdditionalOperators}. Function $\mathcal{E}$ defines the
semantics of expressions on a context, whereas function
$\mathcal{L}$ defines the semantics of locations paths on a node.
\qed
\end{definition}

\begin{figure}
\begin{equation}
\mathcal{E}[\![locpath]\!]( \langle v, k, n \rangle ) := \mathcal{D}[\![locpath]\!] (v)
\label{sem:locPath}
\end{equation}
\begin{equation}
\mathcal{E}[\![position()]\!]( \langle v, k, n \rangle ) := k \label{sem:position}
\end{equation}
\begin{equation}
\mathcal{E}[\![last()]\!]( \langle v, k, n \rangle ) := n \label{sem:last}
\end{equation}
\begin{equation}
\mathcal{E}[\![Op(e_1, \ldots, e_m)]\!]( \langle v, k, n \rangle ) :=
\mathcal{F}[\![Op]\!](\mathcal{E}[\![e_1]\!]( \langle v, k, n \rangle ), \ldots,
\mathcal{E}[\![e_m]\!]( \langle v, k, n \rangle )) \label{sem:op}
\end{equation}
\caption{Semantic definitions of XPath expressions}
\label{sem:expressions}
\end{figure}

\begin{figure*}
\begin{equation}
\mathcal{D}[\![locpath_1 | \ldots | locpath_m]\!](v):= \bigcup^m_{i=1}
\mathcal{L}[\![locpath_i]\!](v) \label{sem:union}
\end{equation}
\begin{equation}
\mathcal{L}[\![(locpath)[e_1]\ldots[e_m] \, ]\!](v) := \{ \, w \,
| \, w \in S \, \wedge \, S = \mathcal{D}[\![locpath]\!](v)
\label{sem:parenthesis}
\end{equation}
\[\bigwedge^m_{i=1}(\mathcal{E}[\![e_i]\!](w, pos_{doc}(w,S), |S|) =
true ) \} \]
\begin{equation}
\mathcal{L}[\![locpath_1 / locpath_2]\!](v) := \bigcup_{w \in
\mathcal{L}[\![locpath_1]\!](v)} \mathcal{L}[\![locpath_2]\!](w) \label{sem:composition}
\end{equation}
\begin{equation}
\mathcal{L}[\![/ locpath ]\!](v) := \mathcal{L}[\![locpath]\!](v_0) \label{sem:root}
\end{equation}
\begin{equation}
\mathcal{L}[\![axis::l[e_1]\ldots[e_m] \, ]\!](v) := \{ \, w \, |
\, w \in S \, \wedge \, S= \{v' \, | \, \langle v, v' \rangle \in
axis \, \wedge \, \lambda(v')=l \} \label{sem:axis}
\end{equation}
\[ \bigwedge^m_{i=1}(\mathcal{E}[\![e_i]\!](w, pos_{axis}(w,S), |S|)
= true ) \} \] \caption{Semantic definitions of XPath location
paths} \label{sem:locationPaths}
\end{figure*}

The distinction between context-based and node-based evaluation
comes from the fact that some functions like $position()$ and
$last()$ need to be evaluated on a context (they return the
context position and the context size respectively). The
evaluation of location paths, on the other hand, requires only the
context node.

Below we illustrate through a series of examples how these
semantic functions are used for evaluating XPath expressions. The
examples cover the following four expressions: find all expRoles,
find the last expRole, find the first expRole of each expRoleList,
and find the first expRole in the entire collection. For these
examples we use XPath abbreviated syntax and the XML axis graph
$\mathcal{A}$ of our running example.

Let us start with an expression with a single step that returns
all expRoles in the collection.

\begin{example}
Let $t_0$ be the context $\langle v_0, 1, 1 \rangle$ and let \[e_1
=descendant::expRole \] The evaluation of $e_1$ on $\mathcal{A}$
and $t_0$ returns all expRoles in the collection. In order to
evaluate $e_1$ on $t_0$ we apply the semantic rules from Figures
\ref{sem:expressions} and \ref{sem:locationPaths}. Since $e$ is an
expression containing a location path, the first rule we apply is
(\ref{sem:locPath}) obtaining
\[ \mathcal{E}[\![descendant::expRole]\!](t_0)
:= \mathcal{L}[\![descendant::expRole]\!] (v_0)\] Rule
(\ref{sem:locPath}) translates the evaluation on the entire
context $t_0 = \langle v_0, 1, 1 \rangle$ to an evaluation on just
the context node $v_0$. Since $e_1$ consists of only one location
step, we finish the evaluation by applying rule (\ref{sem:axis})
with no predicates $[e_1] \ldots [e_m]$ and get
\[\mathcal{L}[\![descendant::expRole]\!](v_0) := \] \[ \{ \, w \, |
\, w \in S \, \wedge \, S= \{v \, | \, \langle v_0, v \rangle \in
descendant \, \wedge \, \lambda(v)=expRole \} \} \] which returns
all $w$'s that are descendant expRoles of $v_0$. \qed
\end{example}

Now, we consider a single step expression with a predicate that
returns the last expRole in the collection.

\begin{example}
Let $t_0$ be the context $\langle v_0, 1,1 \rangle$ and let
\[ e_2=descendant::expRole[position()=last()] \] The evaluation of $e_2$ on $\mathcal{A}$
and $t_0$ returns the last expRole in the collection. As in the
previous example, the application of rule (\ref{sem:locPath})
transforms the evaluation on context $t_0 = \langle v_0, 1, 1
\rangle$ to an evaluation on node $v_0$. Since $e_2$ consists only
of a location step, we apply rule (\ref{sem:axis}) and get
\[ \mathcal{L}[\![descendant::expRole[position()=last()]]\!](v_0) := \] \[ \{ \,
w_1 \, | \, w_1 \in S \, \wedge \, S= \{v \, | \, \langle v_0, v
\rangle \in descendant \, \wedge \, \lambda(v)=expRole \} \,
\wedge \, \] \[t_1 = \langle w_1, pos_{descendant}(w_1,S), |S|
\rangle \, \wedge \, \mathcal{E}[\![position()=last()]\!](t_1) =
true \}
\] which returns all $w_1$'s that are descendant expRoles of
$v_0$ and satisfy the predicate $position()=last()$. In order to
evaluate this predicate on each $w_1$, we go back to function
$\mathcal{E}$ by invoking rule (\ref{sem:op}) of Figure
\ref{sem:expressions} with the ``='' operator, the new context
$t_1$ and $\mathcal{F}[\![\, = \,: num \times num \rightarrow bool
]\!]$ obtaining
\[ \mathcal{E}[\![(position()=last())]\!](t_1)
:= \] \[ \mathcal{F}[\![\, =
\,]\!](\mathcal{E}[\![position()]\!](t_1),
\mathcal{E}[\![last()]\!](t_1)) :=
\mathcal{E}[\![position()]\!](t_1) =
\mathcal{E}[\![last()]\!](t_1)
\] This step of the evaluation checks whether each $w_1$ is in
fact the last in the sequence of descendant expRoles by invoking
$\mathcal{E}[\![position()]\!](t_1)$ and
$\mathcal{E}[\![last()]\!](t_1)$ (rules (\ref{sem:position}) and
(\ref{sem:last}) respectively) and comparing their returned values
for equality. If they are equal for some $w_1$ the evaluation of
the location step returns the $w_1$ or else the empty node set.
This completes the evaluation of $e_2$. \qed
\end{example}

Next, we introduce composition with a more complex expression that
returns the first expRole of each expRoleList in the collection.

\begin{example}
Let $t_0$ be the context $\langle v_0, 1,1 \rangle$ and let \[
e_3= descendant::expRoleList/child::expRole[1] \]  The evaluation
of $e_3$ on $\mathcal{A}$ and $t_0$ returns the first expRole of
each expRoleList in the collection. As in the previous example,
the application of rule (\ref{sem:locPath}) transforms the
evaluation on context $t_0 = \langle v_0, 1, 1 \rangle$ to an
evaluation on node $v_0$. Since $e_3$ consists of a composition of
a location step and a location path, we apply rule
(\ref{sem:composition}) and get
\[ \mathcal{L}[\![descendant::expRoleList/child::expRole[1]]\!](v_0) := \] \[ \bigcup_{w_1 \in
\mathcal{L}[\![descendant::expRoleList]\!](v_0)}
\mathcal{L}[\![child::expRole[1]]\!](w_1)\] which entails
evaluating the location path on the union of all $w_1$'s that are
returned by the evaluation of the location step. In order to
obtain those $w_1$'s we apply rule (\ref{sem:axis}) to the
location step $descendant::expRoleList$ obtaining
\[ \mathcal{L}[\![descendant::expRoleList]\!](v_0) := \] \[ \{ \, w_1 \, | \, w_1 \in S
\, \wedge \, S= \{v \, | \, \langle v_0, v \rangle \in descendant
\, \wedge \, \lambda(v)=expRoleList \} \}
\] and finish with the evaluation of the first part of the
composition. Next we evaluate the location path by applying rule
(\ref{sem:axis}) and get \[
\mathcal{L}[\![child::expRole[1]]\!](w_1) := \{ \, w_2 \, | \, w_2
\in S \, \wedge \, S= \{v \, | \, \langle w_1, v \rangle \in child
\, \wedge \, \lambda(v)=expRole \} \, \wedge \,
\] \[t_2 = \langle w_2, pos_{child}(w_2,S), |S| \rangle \, \wedge \,
\mathcal{E}[\![position()=1]\!](t_2) = true \}
\] which returns all $w_2$'s that are child expRoles of the $w_1$'s and
satisfy the predicate $[1]$ (which is $[position()=1]$ in
unabbreviated syntax). In order to evaluate the predicate we
invoke rule (\ref{sem:op}) of Figure \ref{sem:expressions} with
the ``='' operator, the new context $t_2$ and $\mathcal{F}[\![\, =
\,: num \times num \rightarrow bool ]\!]$ and get
\[ \mathcal{E}[\![(position()=1)]\!](t_2) := \mathcal{F}[\![\, = \,]\!](\mathcal{E}[\![position()]\!](t_2),
1) :=  \mathcal{E}[\![position()]\!](t_2) = 1
\] This step of the evaluation checks whether the $w_2$ is in
fact the first in the sequence of child expRoles of $w_1$ by
invoking $\mathcal{E}[\![position()]\!](t_2)$ (rule
(\ref{sem:position})) and see if it returns $1$. Since the
predicate is part of the location path, it is evaluated on each of
the $w_2$'s. Thus, the evaluation of $e_3$ will return one $w_2$
(the first one in the sequence) for each $w_1$. \qed
\end{example}

Finally, let us illustrate the impact of parentheses by
considering an expression that returns the first figure in the
entire document.

\begin{example}
Let $t_0$ be the context $\langle v_0, 1, 1 \rangle$ and let
\[ e_4=(descendant::expRoleList/child::expRole)[1] \]
The evaluation of $e_4$ on $\mathcal{A}$ and $t_0$ returns the
first expRole in the entire collection. (Notice the difference
with the previous example which returns the first expRole of each
expRoleList.) As before, we begin the evaluation by invoking rule
(\ref{sem:locPath}). Then we apply rule (\ref{sem:parenthesis}) to
the parenthesized expression obtaining
\[\mathcal{L}[\![(descendant::expRoleList/child::expRole)[1] \, ]\!](v_0) := \]
\[ \{ \, w_2 \, | \, w_2 \in S \, \wedge \, S = \mathcal{L}[\![descendant::expRoleList/child::expRole]\!](v_0) \,
\wedge \, \] \[t_2 = \langle w_2, pos_{doc}(w_2,S), |S| \rangle \,
\wedge \, \mathcal{E}[\![position()=1]\!](t_2) = true ) \}\] Now
we have to evaluate the composition by invoking rule
(\ref{sem:composition}) and get
\[ \mathcal{L}[\![descendant::expRoleList/child::expRole]\!](v_0) := \]
\[ \bigcup_{w_1 \in \mathcal{L}[\![descendant::expRoleList]\!](v_0)}
\mathcal{L}[\![child::expRole]\!](w_1)\] which entails evaluating
the location path on the union of all $w_1$'s that are returned by
the evaluation of the location step. Notice that, in contrast with
the previous example, the predicate is applied to the result of
the composition instead of being part of the location path. We
continue by applying rule (\ref{sem:axis}) to location step
$descendant::expRoleList$ in order obtain the $w_1$'s to be used
by the location path obtaining \[
\mathcal{L}[\![descendant::expRoleList]\!](v_0) := \] \[ \{ \, w_1
\, | \, w_1 \in S \, \wedge \,  S= \{v \, | \, \langle v_0, v
\rangle \in descendant \, \wedge \, \lambda(v)=expRoleList \} \}
\] Next we invoke (\ref{sem:axis}) to evaluate the location step $child::expRole$ and get \[
\mathcal{L}[\![child::expRole]\!](w_1) := \{ \, w_2 \, | \, w_2
\in S \, \wedge \, S= \{v \, | \, \langle w_1, v \rangle \in
descendant \, \wedge \]\[ \lambda(v)=expRole \} \, \wedge \,t_1 =
\langle w_2, pos_{child}(w_2,S), |S| \rangle \}
\] which returns all $w_2$ that are child $expRole$ of the $w_1$.
We then evaluate the predicate
$\mathcal{E}[\![(position()=1)]\!](t_2)$ as before. However, one
difference with the previous example is that here there is only
one sequence of $w_2$'s (rather than one sequence for each $w_1$).
That is the reason why the context in the evaluation of
$position()=1$ changed from $t_1$ (based on the $descendant$ axis)
to $t_2$ (based on the entire axis graph). Thus, the evaluation of
$e_4$ returns only one node: the last expRole in the collection.
\qed
\end{example}

\begin{figure}
$\mathcal{F}[\![ArithOp : num \times num \rightarrow num ]\!](n_1, n_2) := n_1 \; ArithOp \; n_2$ \\
$\mathcal{F}[\![constant \ number \ n: \rightarrow num ]\!]():= n$ \\
$\mathcal{F}[\![count: nset \rightarrow num ]\!](S) := |S|$ \\
\\
$\mathcal{F}[\![and: bool \times bool \rightarrow bool ]\!](b_1, b_2) := b_1 \wedge b_2$ \\
$\mathcal{F}[\![or: bool \times bool \rightarrow bool ]\!](b_1, b_2) := b_1 \vee b_2$ \\
$\mathcal{F}[\![not: bool \rightarrow bool ]\!](b) := \neg b$ \\
$\mathcal{F}[\![true: \rightarrow bool ]\!]() := true $\\
$\mathcal{F}[\![false: \rightarrow bool ]\!]() := false $ \\
$\mathcal{F}[\![boolean: nset \rightarrow bool ]\!](S) :=$ if $S \neq \emptyset$ then $true$ else $false$ \\
$\mathcal{F}[\![boolean: str \rightarrow bool ]\!](s) :=$ if $s \neq$ ``'' then $ true $ else $ false$ \\
$\mathcal{F}[\![boolean: num \rightarrow bool ]\!](n) :=$ if $n \neq 0 $ and $n \neq NaN$ then $true$ else $false$ \\
\\
$\mathcal{F}[\![EqOp : bool \times (str \bigcup num \bigcup bool) \rightarrow bool ]\!](b, x) := b \; EqOp \; \mathcal{F}[\![boolean]\!](x)$ \\
$\mathcal{F}[\![EqOp : num \times (str \bigcup num) \rightarrow bool ]\!](n, x) := n \; EqOp \; \mathcal{F}[\![number]\!](x)$ \\
$\mathcal{F}[\![EqOp : str \times str \rightarrow bool]\!](s_1, s_2) := s_1 \; EqOp \; s_2 $ \\
$\mathcal{F}[\![RelOp : nset \times nset \rightarrow bool ]\!](S_1, S_2) := \exists v_1 \in S_1, v_2 \in S_2: strval(v_1) \; RelOp \; strval(v_2) $\\
$\mathcal{F}[\![RelOp : nset \times num \rightarrow bool ]\!](S, n) := \exists v \in S: to\_number(strval(v)) \; RelOp \; n$ \\
$\mathcal{F}[\![RelOp : nset \times str \rightarrow bool ]\!](S, s) := \exists v \in S: strval(v) \; RelOp \; s$ \\
$\mathcal{F}[\![RelOp : nset \times bool \rightarrow bool ]\!](S, b) := \mathcal{F}[\![boolean]\!](S) \; RelOp \; b$ \\
$\mathcal{F}[\![GtOp : (str \bigcup num \bigcup bool) \times (str \bigcup num \bigcup bool) \rightarrow bool ]\!](x_1, x_2) :=$ $\mathcal{F}[\![number]\!](x_1) $ $GtOp$ $\mathcal{F}[\![number]\!](x_2) $\\
\caption{Semantic definitions of XPath basic operators}
\label{fig:BasicOperators}
\end{figure}

\begin{figure}
$\mathcal{F}[\![id: nset \rightarrow nset ]\!](S) := \bigcup_{v \in S}\mathcal{F}[\![id]\!](strval(v))$\\
$\mathcal{F}[\![id: str \rightarrow nset ]\!](s) := deref\_ids(s) $\\
\\
$\mathcal{F}[\![number: str \rightarrow num ]\!](s) := to\_number(s)$ \\
$\mathcal{F}[\![number: bool \rightarrow num ]\!](b) :=$ if $b = true $ then $1$ else $0$ \\
$\mathcal{F}[\![number: nset \rightarrow num ]\!](S) := \mathcal{F}[\![number]\!](\mathcal{F}[\![string]\!](S)) $\\
$\mathcal{F}[\![sum: nset \rightarrow num ]\!](S) := \sum_{v \in S} to\_number(strval(v))$\\
\\
$\mathcal{F}[\![constant \ string \ s: \rightarrow str ]\!]():=s$ \\
$\mathcal{F}[\![string: num \rightarrow str ]\!](n) := to\_string(n)$ \\
$\mathcal{F}[\![string: nset \rightarrow str ]\!](S) := $ if $S = \emptyset$ then ``'' else $strval(first_{<_{doc}}(S))$ \\
$\mathcal{F}[\![string: bool \rightarrow str ]\!](b) := $ if $b = true$ then ``true'' else ``false'' \\
\caption{Semantic definitions of XPath additional operators}
\label{fig:AdditionalOperators}
\end{figure}

\chapter{Declarative debugging of XPath queries with \dx}
\label{sec:debugExample}

\begin{figure}
    \centering
        \includegraphics[width=.90\textwidth]{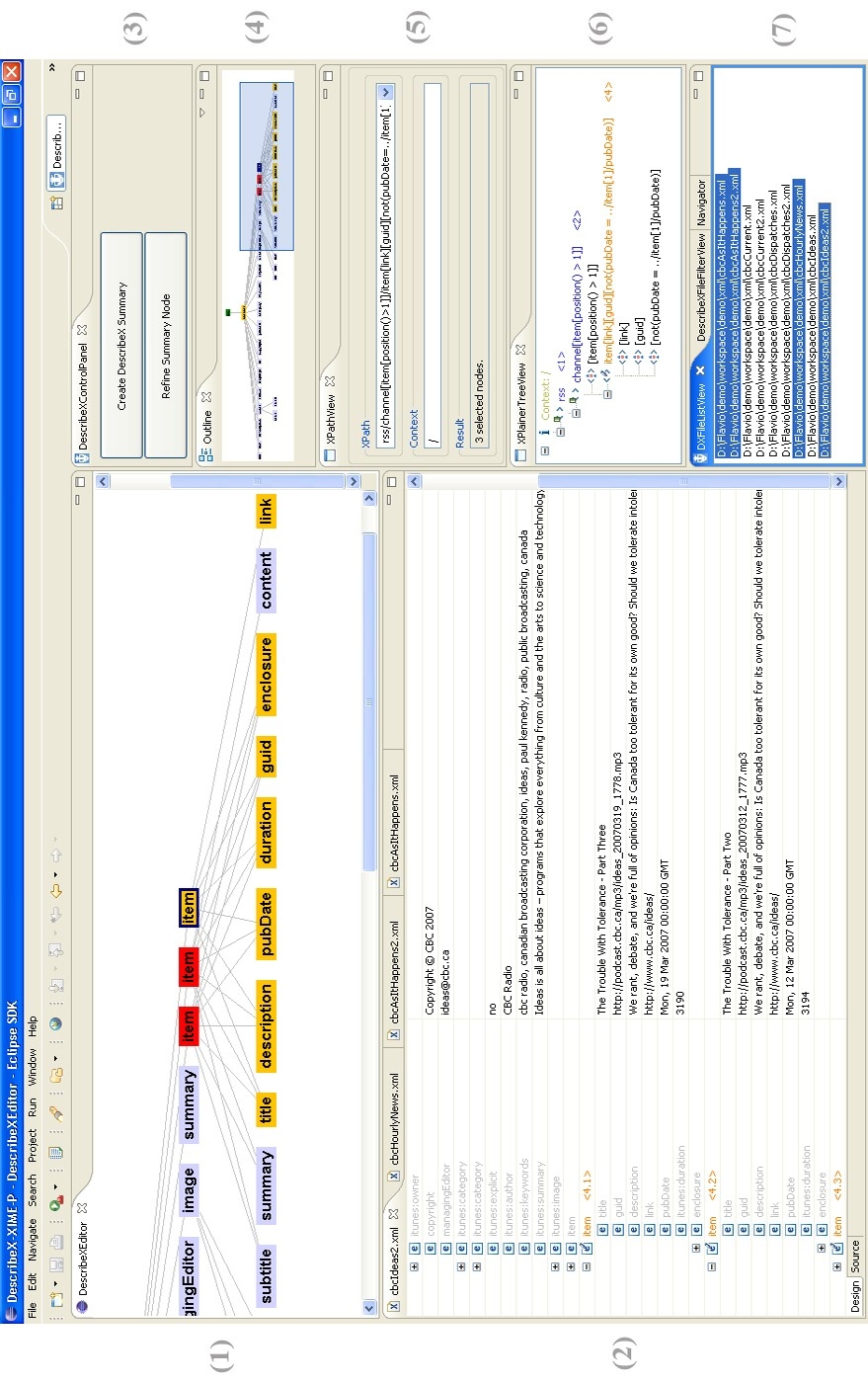}
  \caption{\dxp\ user interface}
    \label{fig:screen}
\end{figure}

In this appendix, we present \dxp, a visual interactive tool for
exploring XML collections and explaining XPath queries. \dxp\ is
built on top of the \dx\ engine implementation presented in
Chapter~\ref{Section:Implementation} and includes a GUI and
additional XML retrieval tools implemented by other colleagues
\cite{ACSR08}. This visual tool provides additional evidence of
the wide-range of applications the \dx\ framework has in the area
of XML processing.

\dxp\ is written in Java for the Eclipse\footnote{{\small
\tt{http://www.eclipse.org/}}} plug-in framework and its existing
tools, views, and editors. Eclipse is a popular open source
platform built by an open community of tool providers.  \dxp\ is
also integrated with the \xpe\ plug-in \cite{CLR07}, and fully
supports \emph{declarative debugging} of any arbitrary XPath
engine, including implementation dependant intermediate results.
\xpe\ extends the XML and XPath development facilities available
in the Eclipse environment with the ability to support explanation
queries. The \dxp\ tool extends \xpe\ with the structural
description capabilities of the \dx\ framework.

\dxp\ allows developers to navigate between different views of the
local and global structure of large (multi-gigabyte size)
collections in order to obtain enough structural information for
writing and debugging XPath queries. The graph based visualization
employed by \dxp\ makes it straightforward to see the different
path structures that are present in the collection. \dxp\
functionality helps a user in quickly understanding what parts of
a collection schema (if present) are used in practice.

In order to explain how \dxp\ works, let us go back to our
developer, Sue, who is trying to aggregate podcasts that are part
of a series. In a series feed, items are sorted from the newest
(the first) to the oldest (the last). The feed may span several
days or weeks, and there might be more than one item per day. In
particular, Sue is interested now in items containing pubDate,
link and enclosure elements. In addition, she aggregates the
item(s) of the latest day in the series separately from the rest.
For obtaining all items that do not belong to the latest day she
runs the query
\begin{center}
{\small \texttt{Q4 =
/rss/channel[item[position()>1]]/item[link][enclosure]}}

{\small \texttt{ [not(pubDate=../item[1]/pubDate)] } }
\end{center}
which returns all items containing link, enclosure and pubDate
from previous days.

Using \dxp\ Sue can create an SD like the one shown in the
screenshot of Figure \ref{fig:screen}. The screenshot shows seven
views, and the SD graph is displayed in the \dx Editor (view (1))
and outline (view (4)). The outline view shows the entire SD graph
with the fragment that appears in the \dx Editor highlighted in a
light blue box.

The SD of Figure \ref{fig:screen} has a node for each item that
has a different substructure. The edges represent $c$ axis
relations between elements. The fragment of the SD graph displayed
in view (1) tells the developer there are three kinds of items in
the collection, each one containing a different combination of
elements. For instance, the third item node in the SD (in yellow)
has title, description, pubDate, duration, guid, enclosure and
link (all in yellow) and represents all item elements in the
collection with that particular structure.

Since the behavior of a query is instance dependant, in order to
debug the query effectively Sue needs to run it on all
\emph{candidate documents}. A document is a candidate for a query
$Q$ when it returns a non-empty answer for the structural subquery
of $Q$.

A visual explanation, which shows the XPath result and
intermediate nodes, is provided by views (2) and (6). Given an
XPath query and an input XML document, an explanation of the query
gives as answer all the XPath result nodes together with
intermediate nodes. The intermediate nodes are those nodes
resulting from the partial evaluation of the subexpressions of the
original XPath query that contribute to the answer. Obtaining the
explanation of a complex XPath query can be challenging, as shown
in the example. Visual explanations provide a representation of
the basic mechanism at play during XPath processing.

Views (2) and (6), together with view (5), correspond to the \xp
-Eclipse plug-in \cite{CLR07}. The \xp\ Editor (view (2)) shows
one of the candidate documents with an explanation of Q4. View (5)
displays the query and the number of elements in the answer. View
(6) displays the \xp\ tree, a particular parse tree for the query
that provides an intuitive representation of its structure. Each
node in the \xp\ tree corresponds to one step or predicate in Q4.
The intermediate document nodes of each step are identified by the
same sequence number in both \xp\ tree and the \xp\ Editor. For
instance, item nodes $\langle 4.1 \rangle$, $\langle 4.2 \rangle$
and $\langle 4.3 \rangle$ (view (2)) are the answer of the query,
which corresponds to step $\langle 4 \rangle$ in the \xp\ tree
(view (6)).

Since current XPath query evaluation tools do not provide
intermediate nodes, the only available debugging techniques
involve either partial evaluation of subexpressions or evaluating
reversed axis. A partial evaluation cannot see beyond the current
evaluation step, so it has no way of filtering out nodes that will
have no effect in the final answer. For instance, a partial
evaluation of the $/rss/channel$ subexpression would return all
channels below rss elements, including those that do not satisfy
the $[item[position()>1]]$ predicate and the rest of the query. An
evaluation that reverses the axis will not necessarily give us
exactly the intermediate nodes either when recursive axes like
$descendant$ or $following$ are involved. Thus, visual
explanations are necessary in order to obtain the exact set of
intermediate nodes that contribute to the answer. An in-depth
study of visual explanations can be found in \cite{CLR07}.

The DXFileListView (view (7)) lists the documents in the extent of
the active node (the yellow item in the \dx Editor) that are also
candidates for the query shown in View (5). The documents
highlighted in the DXFileListView are the \emph{explanation
documents}, i.e. those candidates that satisfy the complete query.
The notions of candidate an explanation documents are key to the
integration of \xp\ into the \dx\ framework (see Chapter
\ref{sec:candidates}). The developer can then open any candidate
or explanation document in the DXFileListView with the \xp\ Editor
and obtain explanations of either Q4 or different
\emph{relaxations} of Q4.

A relaxation of a query is obtained by selectively collapsing
portions of the XPath expression to eliminate constraints. This is
useful when there are no answers to a complete query, but then
after removing constraints the relaxed query can be satisfied. A
very useful relaxation is the one that removes all non-structural
predicates from a query $Q$ thus obtaining its structural
subquery.

A very important property of \dxp\ is that it is not tied to any
particular XPath implementation. Instead, an arbitrary XPath
evaluator can be invoked through a standard interface. This is a
critical engineering decision that allows \dxp\ to provide
explanations for different XPath engines. Beyond differences in
the capabilities of the implementations, the XPath language itself
has several areas where the semantics are implementation defined.
This effectively means that only the original XPath engine can
explain one of its own implementation defined features.

\vspace{.2in}

With the \dxp\ tool, debugging and exploration complement each
other: Sue can decide interactively to get different descriptions
of the collection by changing the SD definition or obtain more or
less strict visual explanations by relaxing a query in different
ways. Thus, \dxp\ provides Sue with a flexible, integrated
environment for understanding a collection and the queries she
needs to run on it.


\end{document}